\DeclareMathAlphabet{\mathcal}{OMS}{cmsy}{m}{n}
\newcolumntype{L}[1]{>{\raggedright\arraybackslash}p{#1}}
\newcolumntype{C}[1]{>{\centering\arraybackslash}m{#1}}
\newcolumntype{R}[1]{>{\raggedleft\arraybackslash}p{#1}}
\newtheorem{theorem}{Theorem}[section]
\newtheorem*{claim*}{Claim}
\newtheorem{condition}[theorem]{Condition}
\newtheorem{fact}[theorem]{Fact}
\newtheorem{lemma}[theorem]{Lemma}
\newtheorem{corollary}[theorem]{Corollary}
\theoremstyle{definition}
\newtheorem{definition}[theorem]{Definition}
\newtheorem*{remark*}{Remark}
\def\^#1{\mathbb{#1}} 
\def\*#1{\mathbf{#1}} 
\def\+#1{\mathcal{#1}} 
\def\-#1{\mathrm{#1}} 
\def\=#1{\boldsymbol{#1}} 
\def\!#1{\mathtt{#1}}
\def\@#1{\mathscr{#1}}
\newcommand{\abs}[1]{\left | #1 \right |}
\newcommand{\id}[1]{\ensuremath{\mathbbm{1}}\left[#1\right]}
\renewcommand{\Pr}[2][]{ \ifthenelse{\isempty{#1}}
  {\mathop{\mathbf{Pr}}\left[#2\right]} {\mathop{\mathbf{Pr}}_{#1} \left[#2\right]} }
\newcommand{\E}[2][]{ \ifthenelse{\isempty{#1}}
  {\mathbf{E}\left[#2\right]}
  {\mathbf{E}_{#1}\left[#2\right]} }
\newcommand{\diag}{\mathsf{diag}}
\newcommand{\perman}{\mathsf{per}}
\newcommand{\norm}[1]{\left\Vert#1\right\Vert}
\newcommand{\poly}{\textnormal{\textsf{poly}}}
\title{Phase transition of the Sinkhorn-Knopp algorithm}
\author{Kun He}
\address[Kun He]{Renmin University of China. \textnormal{E-mail: \url{hekun.threebody@foxmail.com}}}
\begin{document}

\begin{abstract}
The matrix scaling problem, particularly the Sinkhorn–Knopp algorithm, has been studied for over 60 years. In practice, the algorithm often yields high-quality approximations within just a few iterations. Theoretically, however, the best known upper bound on its iteration count scales polynomially with the accuracy parameter $\varepsilon$, placing it in the class of pseudopolynomial-time approximation algorithms. Meanwhile, the lower-bound landscape remains largely unexplored. Two fundamental questions persist: what accounts for the algorithm’s strong empirical performance, and can a tight bound on its iteration count be established?

For an $ n \times n $ matrix, its normalized version is obtained by dividing each entry by the largest entry in the matrix. We say that a normalized matrix has a density $\gamma$ if there exists a constant $\rho > 0$ such that one row or column has exactly $\lceil \gamma n \rceil$ entries with values at least $\rho$, and every other row and column has at least $\lceil \gamma n \rceil$ such entries.

For the upper bound, we show that the Sinkhorn–Knopp algorithm produces a nearly doubly stochastic matrix in $O(\log n - \log \varepsilon)$ iterations and $\tilde{O}(n^2)$ time for all nonnegative square matrices whose normalized version has a density $\gamma > 1/2$. 
Such matrices cover both the algorithm’s principal practical inputs and its typical theoretical regime.
This $\tilde{O}(n^2)$ runtime is optimal, as merely reading the input requires $\Omega(n^2)$ time. 
In fact, the algorithm is optimal for nearly all nonnegative matrices with entries bounded above by a constant, demonstrating its practical efficiency.

For the lower bound, we establish a tight bound of $\tilde{\Omega}\left(\sqrt{n}/\varepsilon\right)$ iterations for positive matrices under the $\ell_2$-norm error measure. Moreover, for every $\gamma < 1/2$, there exists a matrix with density $\gamma$ for which the Sinkhorn–Knopp algorithm requires $\Omega\left(\sqrt{n}/\varepsilon\right)$ iterations.

In summary, our results reveal a sharp phase transition in the Sinkhorn–Knopp algorithm at the density threshold $\gamma = 1/2$. 
Moreover, we demonstrate that convergence improves as matrix density increases, thereby accounting for prior observations in entropically regularised optimal transport that larger inverse temperature slows convergence.
\end{abstract}

\maketitle

\setcounter{tocdepth}{1}

\tableofcontents
\pagebreak

\section{Introduction}\label{sec-intro}
\medskip
\noindent The matrix scaling problem entails finding diagonal matrices \(X\) and \(Y\) such that, when a nonnegative matrix \(A\) is transformed into \(XAY\), its row and column sums exactly match specified target vectors. This problem is central to numerous fields in both theory and practice. For instance, in solving linear systems, matrix scaling is employed as a preconditioning technique to enhance numerical stability~\cite{osborne1960pre}. In optimal transport, scaling adjusts probability distributions to meet prescribed marginal constraints, thereby facilitating the computation of transport distances~\cite{altschuler2017near}. Moreover, matrix scaling is integral to statistical data normalization~\cite{deming1940least}, image processing~\cite{rubner2000earth}, and various other applications~\cite{idel2016review}.

One of the most natural and classical approaches to matrix scaling is the Sinkhorn-Knopp algorithm~\cite{sinkhorn1967diagonal,sinkhorn1967concerning} (also known as the RAS method~\cite{bacharach1965estimating} or the Iterative Proportional Fitting Procedure~\cite{ruschendorf1995convergence}). In this iterative process, one alternates between normalizing the rows and the columns of \(A\), ensuring that the scaling progressively aligns the matrix with the prescribed targets. The simplicity of the algorithm, combined with its inherent parallelizability, makes it a popular choice in practice. 
One key issue is determining how fast the Sinkhorn-Knopp algorithm converges. In other words, when we choose an appropriate error measure to quantify the deviation from the target values and set an error threshold \(\varepsilon\), how quickly does the algorithm reduce the error below \(\varepsilon\)? Despite several significant convergence analyses, it remains somewhat surprising that the behavior of the algorithm under common error metrics, such as \(\ell_1\) or \(\ell_2\) errors, is not yet fully characterized.

A nonnegative matrix \(A \in \mathbb{R}_{\ge 0}^{n \times m}\) is \((\boldsymbol{r}, \boldsymbol{c})\)-scalable, if there exist diagonal matrices \(X\) and \(Y\) with strictly positive diagonal entries such that \(XAY\) has row sums equal to the vector \(\boldsymbol{r}\) and column sums equal to the vector \(\boldsymbol{c}\).
In the general setting, if a nonnegative matrix \(A \in \mathbb{R}_{\ge 0}^{n \times m}\) is \((\boldsymbol{r}, \boldsymbol{c})\)-scalable and \(\varepsilon > 0\) is given, the Sinkhorn-Knopp algorithm can produce, in time  
$t = O\Bigl(h^2 \varepsilon^{-2}\log\bigl(\Delta\,\mu/\nu\bigr)\Bigr)$,
a scaled matrix whose \(\ell_1\)-error is at most \(\varepsilon\), or in time  
$t = O\Bigl(\mu\,h\,\log\bigl(\Delta\,\mu/\nu\bigr)\Bigl(\varepsilon^{-1}+\varepsilon^{-2}\Bigr)\Bigr)$,
a scaled matrix whose \(\ell_2\)-error is at most \(\varepsilon\)~\cite{chakrabarty2021better}. Here, \(h\) is the sum of the target row (or column) values, \(\mu\) is the maximum target entry, \(\Delta\) is the maximum number of nonzero entries in any column of \(A\), and \(\nu\) is defined as the ratio of the smallest positive entry in \(A\) to its largest entry.
In the special case of \((\boldsymbol{1},\boldsymbol{1})\)-scaling, i.e., when \(A\) is an \(n \times n\) matrix that is to be made doubly stochastic, the above bounds simplify to  
$
t = O\Bigl(n^2\varepsilon^{-2}\,\log\bigl(\Delta/\nu\bigr)\Bigr)$
for achieving an \(\ell_1\)-error at most \(\varepsilon\), and  
$t = O\Bigl(n\,\log\bigl(\Delta/\nu\bigr)\Bigl(\varepsilon^{-1}+\varepsilon^{-2}\Bigr)\Bigr)$
for achieving an \(\ell_2\)-error at most \(\varepsilon\)~\cite{chakrabarty2021better}.
Finally, for strictly positive matrices (where every entry of \(A\) is nonzero), 
the Sinkhorn-Knopp algorithm can produce, in time  
$t = O\Bigl(\sqrt{n}\varepsilon^{-1}\,\log \nu\Bigr)$,
a scaled matrix whose \(\ell_2\)-error is at most \(\varepsilon\) for both the special case of \((\boldsymbol{1},\boldsymbol{1})\)-scaling~\cite{kalantari1993rate} and the general case~\cite{kalantari2008complexity}.
On the lower-bound side, a carefully constructed $2 \times 2$ matrix shows that the Sinkhorn–Knopp algorithm needs $\Omega(1/\varepsilon)$ iterations to reduce the $\ell_2$-error below $\varepsilon$ even for $(\mathbf{1},\mathbf{1})$-scaling; consequently, it is \emph{not} a polynomial-time approximation scheme~\cite{kalantari1993rate}.

The presented findings emphasize that the Sinkhorn-Knopp algorithm, despite its popularity and simplicity, depends polynomially on the error parameter \(\varepsilon\), categorizing it as a “pseudopolynomial” approximation scheme. In contrast, several advanced methods exhibit a \(\log(1/\varepsilon)\) dependence. For example, Kalantari and Khachiyan’s ellipsoid-based algorithm can run in \(O(n^4 \log(n/\varepsilon)\log(\nu))\) time~\cite{kalantari1996complexity}, 
while Nemirovskii and Rothblum’s approach has a complexity \(O(n^4 \log(n/\varepsilon)\log\log(\nu))\)~\cite{nemirovski1999complexity}. 
Linial, Samorodnitsky, and Wigderson proposed the first strongly polynomial-time solution with a running time of \(O(n^7 \log(h/\varepsilon))\),  with no dependence on $\nu$~\cite{linial1998deterministic}. 
An alternative line of work by Rote and Zachariasen reduces matrix scaling to a min-cost flow problem, yielding a complexity of \(O(n^4 \log(h/\varepsilon))\)~\cite{rote2007matrix}. 
Allen-Zhu et al.~\cite{allen2017much} designed several algorithms for the general scaling problem and its important special cases, showing that
if $A$ has $\ell$ nonzero entries and there exist $X$ and $Y$ with polynomially bounded entries such that
$XAY$ is doubly stochastic, then their algorithm can solve the problem in total complexity $O\left((\ell+n^{4/3})\poly(\log n\log (1/\varepsilon)\right)$; meanwhile, 
Cohen et al.~\cite{cohen2017matrix} provided algorithms running
in time $\tilde{O}(\ell\log \kappa \log^2 (1/\varepsilon))$ where $\kappa$ denotes the ratio between the largest and the smallest entries of the optimal scalings, with logarithmic factors in the matrix dimensions and entry sizes suppressed.

As noted above, the matrix-scaling problem, particularly the Sinkhorn–Knopp algorithm, has been extensively studied for more than 60 years. Empirically, the algorithm performs remarkably well, typically reaching a high-quality approximation in just a few iterations \cite{dufosse2022scaling}. The strongest theoretical guarantee, however, bounds its running time by a polynomial in $1/\varepsilon$, placing the method only in pseudopolynomial time.
Despite significant progress on upper-bound analyses, the lower-bound landscape remains largely unexplored: no non-trivial lower bound on the number of iterations is known that depends simultaneously on the matrix dimension $n$ and the accuracy parameter $\varepsilon$.
Consequently, the following two questions remain unresolved:
\begin{itemize}
\item Given that the Sinkhorn-Knopp algorithm is not a polynomial-time approximation scheme, what underlying factors account for its robust practical performance?
\item Can we establish a tight bound on the number of iterations of the algorithm?
\end{itemize}

\subsection{Main Results}
In this paper, we provide definitive answers to these open questions. In particular, we identify a sufficient condition under which the Sinkhorn-Knopp algorithm converges in \(O\left(\log n - \log \varepsilon\right)\) iterations for \((\boldsymbol{1},\boldsymbol{1})\)-scaling.
This condition captures both the primary practical applications of the algorithm and its typical theoretical behavior.
Consequently, for almost all nonnegative matrices whose entries are bounded above by a fixed constant, the Sinkhorn–Knopp algorithm computes the \((\boldsymbol{1},\boldsymbol{1})\)-scaling in optimal running time.
Furthermore, we establish a tight lower bound of \(\Omega\left(\sqrt{n}/\varepsilon\right)\) on the number of iterations required by the algorithm when applied to positive matrices, using the widely adopted \(\ell_2\)-norm for error measurement. 

Let \(\gamma\in [0,1]\), \(\rho \in (0,1]\). An \(n\times n\) matrix \(A\) with entries in \([0,1]\) is defined to be \((\gamma,\rho)\)-dense if there exists at least one row or column that contains exactly \(\lceil \gamma n \rceil\) entries with values at least \(\rho\), and every other row and column contains at least \(\lceil \gamma n \rceil\) entries with values at least \(\rho\). We say that a matrix \(A\) is \(\gamma\)-dense\footnote{We note that a matrix can be simultaneously \(\gamma_1\)-dense and \(\gamma_2\)-dense for some \(\gamma_1 \neq \gamma_2\).} (or has a density \(\gamma\)) if there exists a constant \(\rho > 0\) such that \(A\) is \((\gamma,\rho)\)-dense.
Finally, a matrix is called dense if it is \(\gamma\)-dense for some \(\gamma > 1/2\). Intuitively, a matrix with entries in \([0,1]\) is considered dense if most of the entries in each row and column exceed a certain positive threshold.

For each $i,j\in [n]$, let $A_{i,j}$ denote the element in row $i$ and column $j$ of $A$, $r_i(A)$ denote $\sum_{k\in [n]} A_{i,k}$,
and $c_j(A)$ denote $\sum_{k\in [n]} A_{k,j}$.
Let $r(A)$ denote the vector $(r_1(A),\cdots,r_n(A))$
and $c(A)$ denote the vector $(c_1(A),\cdots,c_n(A))$.

\vspace{0.3cm}
\noindent \underline{\textbf{Upper bounds.}}
Below is an upper bound on the number of iterations of the Sinkhorn–Knopp algorithm.

\begin{theorem}\label{thm-upper-bound-general}
Let $\gamma \in (1/2,1]$ and $n,\varepsilon>0$. 
Let $B$ be a nonnegative, nonzero $n\times n$ matrix
where $B/(\max_{i,j\in[n]} B_{i,j})$ is $\gamma$-dense.
With $B$ as input, the Sinkhorn-Knopp algorithm can output a nearly doubly stochastic matrix $A$ satisfying 
\[\norm{r\left(A\right)- \boldsymbol{1}}_{1} + \norm{c\left(A\right)- \boldsymbol{1}}_{1} \leq \varepsilon\]
in $O\left(\left(2\gamma - 1\right)^{-5}\left(\log n - \log \varepsilon\right)\right)$ iterations.
\end{theorem}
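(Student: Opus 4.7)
The plan is to establish geometric contraction of a well-chosen potential and to exploit the $\gamma > 1/2$ hypothesis to produce an iteration-independent contraction rate of $1 - \Omega((2\gamma-1)^5)$. Since Sinkhorn--Knopp is invariant under multiplication of $B$ by a positive scalar, I first rescale so that $\max_{i,j} B_{i,j} = 1$; the $\gamma$-dense condition then applies directly, yielding a ``heavy'' set $H \subseteq [n]\times[n]$ with $B_{i,j} \geq \rho$ for $(i,j)\in H$ and at least $\lceil \gamma n\rceil$ members of $H$ in every row and column.

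Let $A^{(t)}$ denote the iterate after $t$ half-steps of the algorithm. Define
\[
\Phi(A) \;=\; \log\!\Bigl(\max_{j}c_j(A)\Bigr) - \log\!\Bigl(\min_{j}c_j(A)\Bigr)
\]
when $A$ has row sums equal to $\boldsymbol{1}$, and analogously, with the roles of rows and columns exchanged, when $A$ has column sums equal to $\boldsymbol{1}$. The heart of the proof is a one-step contraction lemma of the form
\[
\Phi(A^{(t+2)}) \;\leq\; (1-\kappa)\,\Phi(A^{(t)}), \qquad \kappa = \Omega\bigl((2\gamma-1)^5\bigr).
\]
Because $\gamma > 1/2$, any two columns of $H$ share at least $(2\gamma-1)n$ rows in which \emph{both} entries are $\geq \rho$, and similarly for any pair of rows. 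After a row normalization, each such doubly-heavy row contributes comparable mass to both column sums, which in turn prevents a large spread in the column sums from surviving one further column--row cycle. The contraction rate $\kappa$ will be extracted by quantifying this comparability in terms of $2\gamma-1$ and the current spread.

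Starting from the initial bound $\Phi(A^{(1)}) = O(\log n)$---after the first row normalization every column sum lies in $[\rho\gamma,\,n]$, since all entries are $\leq 1$ and each column contains at least $\gamma n$ heavy contributions of size $\geq \rho/n$---the contraction lemma gives $\Phi(A^{(t)}) \leq \varepsilon/n$ after $t = O((2\gamma-1)^{-5}(\log n - \log\varepsilon))$ iterations. A short concluding step converts the small potential into a small $\ell_1$-error: using $\sum_j c_j(A^{(t)}) = n$, a bound $\Phi(A^{(t)}) \leq \varepsilon/n$ implies $\max_j|c_j(A^{(t)})-1| = O(\varepsilon/n)$ and hence $\norm{c(A^{(t)})-\boldsymbol{1}}_1 \leq \varepsilon$, while $\norm{r(A^{(t)})-\boldsymbol{1}}_1 = 0$ right after the preceding row normalization.

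The main obstacle will be the one-step contraction lemma itself, and in particular pinning down the fifth power in $(2\gamma-1)^{-5}$. The difficulty is that after repeated normalizations the entries of $H$ carry scaling factors $X_i,Y_j$ that vary across the matrix, so they cannot be regarded as ``$\geq \rho$'' uniformly throughout the iteration. I expect to argue instead in terms of ratios of column sums: for each pair $j,j'$, isolate a set of witness rows of size at least $(2\gamma-1)n$ inside the shared heavy support, derive an inequality linking $c_j$ and $c_{j'}$ whose multiplicative slack is controlled by $2\gamma-1$, and then thread the same comparison through both halves of one full iteration. The accumulation of several such multiplicative losses across the two normalization steps, combined with the conversion between the spread of column sums and the spread of row sums, is what produces the fifth power of $(2\gamma-1)$ in the final rate.
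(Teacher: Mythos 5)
The heart of your plan is a one-step contraction lemma $\Phi(A^{(t+2)}) \le (1-\kappa)\Phi(A^{(t)})$ with an \emph{iteration-independent} $\kappa = \Omega((2\gamma-1)^5)$, but you have not established that such a uniform rate holds from the start, and in fact this is exactly the point at which the paper's argument is forced into two phases. The contraction rate, in any version of this argument I can see, is governed by how large the considerable entries are \emph{in the current iterate} $A^{(k)}$, not in the original matrix $A$. As you yourself observe, the entries of $H$ carry scaling factors $X_i, Y_j$ that drift during the iteration, and the ``doubly-heavy row contributes comparable mass to both columns'' step silently assumes that the scaled entries $X_i A_{i,j} Y_j$ and $X_i A_{i,j'} Y_{j'}$ remain of the same order. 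The paper's \Cref{lem-lower-bound-elements} shows precisely that heavy entries of $A^{(k)}$ stay $\Omega(1/n)$ --- but only once the error $\alpha(A^{(k)})$ has dropped below $1-\tfrac{1}{2\gamma}$; its proof by contradiction hinges on the lower bound $\sum_{r\in R}\sum_{t\in C} B_{r,t} \ge (2\gamma(1-\alpha)-1)n$, which becomes vacuous exactly when $\alpha$ is large. Right after the first row normalization $\alpha$ can be on the order of $2(1-\rho\gamma)$, which is far above that threshold, so in the early rounds nothing you have written rules out that some heavy entry becomes $o(1/n)$ and the shared-row comparability argument collapses. (Working ``in terms of ratios of column sums'' does not dodge this: any quantitative comparison between $c_j$ and $c_{j'}$ mediated by a shared witness row still needs the witness row's two entries to carry comparable mass in $A^{(k)}$, which is the property that fails without a bound on $\alpha$.)

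The paper closes this gap with a separate Phase 1 (Theorem~\ref{thm-complexity-sinkhorn}): while the $\ell_1$-error exceeds a constant fraction of $n$, the permanent of the iterate multiplies by $e^{\Omega(n)}$ per round, and since $\perman(A^{(0)}) \ge \rho^n\lfloor\gamma n\rfloor!/n^n$ and $\perman(A^{(k)})\le 1$, only $O(\log n)$ such rounds can occur. Only after that does the density-preservation machinery (\Cref{lem-upper-bound-max-element}, \Cref{lem-lower-bound-elements}) kick in and produce the geometric decay at a rate of the form $1-\Omega((2\gamma-1)^5)$ via \Cref{lem-max-accuracy-decrease}. You would either need to import this permanent-based warm-up or exhibit a genuinely new argument that the Hilbert-type potential contracts at a $\gamma$-dependent rate even in the high-error regime; the proposal as written provides neither. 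A small secondary remark: after the first row normalization all column sums lie in $[\rho\gamma, 1/(\rho\gamma)]$ (not merely in $[\rho\gamma, n]$), so the initial potential is actually $O(1)$, not $O(\log n)$ --- this doesn't harm you, but it also doesn't buy you the missing phase.
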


One can verify that our theorem holds for other norms as well, since the number of iterations is the logarithm of the error.

Note that each iteration of the Sinkhorn-Knopp algorithm requires \(O(n^2)\) time. Consequently, \Cref{thm-upper-bound-general} demonstrates that if the scaled matrix \(B/\max_{i,j\in[n]} B_{i,j}\) is dense, then the algorithm runs in \(\tilde{O}(n^2)\) time, where the \(\tilde{O}\) notation suppresses logarithmic factors and constants that depend on \(\gamma\). This running time is optimal, as merely reading the input matrix already takes \(\Omega(n^2)\) time.

Our result establishes the first class of matrices for which the Sinkhorn-Knopp algorithm converges in \(O(\log n - \log \varepsilon)\) iterations for \((\boldsymbol{1},\boldsymbol{1})\)-scaling, which improves upon the previous upper bounds of \(\tilde{O}(\sqrt{n}/\varepsilon)\) for positive matrices and \(\tilde{O}(n^2/\varepsilon^2)\) for nonnegative matrices. This finding indicates that the Sinkhorn-Knopp algorithm is significantly more efficient on many important matrices than general upper bounds suggest.

We concentrate on the dense-matrix regime in \Cref{thm-upper-bound-general} because it captures both the algorithm’s primary practical applications and its typical theoretical behavior.
First, in many canonical applications of the Sinkhorn–Knopp algorithm, the input matrix is inherently dense. A prime example is entropically regularised optimal transport~\cite{cuturi2013sinkhorn}, where the cost matrix $C$ records distances for every source–target pair and is typically normalised. 
Normalising $C$ fixes its scale so that the regularisation parameter $\eta$ has a consistent meaning across datasets and prevents numerical pathologies in the Gibbs kernel (e.g., underflow from overly large costs or excessive flattening from overly small costs).
Such normalisation places most entries of $C$ in $[0,1]$.\footnote{Common choices include scaling $C$ so that its entries lie in $[0,1]$; dividing $C$ by its median; or dividing by a higher quantile such as the 75th or 95th percentile. Each yields a normalised $C$ whose bulk lies in $[0,1]$.} Consequently, for any fixed regularisation parameter $\eta>0$, most entries of the Gibbs kernel $K=\exp(-\eta C)$ lie in $[\exp(-\eta),1]$, and hence $K$ is effectively dense.
Second, density is not only prevalent in practice but generic in theory: with respect to the Lebesgue measure, almost every $n \times n$ non-negative matrix whose entries are bounded above by a fixed constant is dense. Consequently, \Cref{thm-upper-bound-general} shows that the Sinkhorn–Knopp algorithm achieves optimal complexity for almost all such matrices.

\begin{corollary}\label{cor-upper-bound-most}
Let \(n, \varepsilon > 0\). Consider the set of all \(n \times n\) nonnegative matrices with entries bounded above by a fixed constant. For a \(1 - \exp(-\Omega(n))\) fraction of such matrices (with respect to the Lebesgue measure), the Sinkhorn-Knopp algorithm, given the matrix as input,  can output a nearly doubly stochastic matrix $A$ satisfying 
\[\norm{r\left(A\right)- \boldsymbol{1}}_{1} + \norm{c\left(A\right)- \boldsymbol{1}}_{1} \leq \varepsilon\]
in $O\left(\log n - \log \varepsilon\right)$ iterations.
\end{corollary}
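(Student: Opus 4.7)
The plan is to reduce Corollary~\ref{cor-upper-bound-most} to Theorem~\ref{thm-upper-bound-general} by showing that, under normalized Lebesgue measure on $[0,C]^{n\times n}$ (where $C$ is the fixed upper bound on the entries), the normalized matrix $B/\max_{i,j}B_{i,j}$ is $\gamma$-dense for some constant $\gamma>1/2$, except on an event of probability $\exp(-\Omega(n))$. Once this is done, $(2\gamma-1)^{-5}$ is bounded by an absolute constant and the iteration bound from Theorem~\ref{thm-upper-bound-general} collapses to $O(\log n - \log \varepsilon)$.

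First I would fix the threshold $\rho = 1/4$ and target density $\gamma = 2/3$. Under the product uniform measure on $[0,C]^{n\times n}$, each entry independently satisfies $B_{i,j} \geq C/4$ with probability $3/4$, comfortably above $2/3$. A standard Chernoff bound shows that the number of such entries in any fixed row, a $\mathrm{Binomial}(n, 3/4)$ random variable, falls below $\lceil 2n/3 \rceil$ with probability only $\exp(-\Omega(n))$, and the same argument applies to each column. A union bound over the $2n$ rows and columns then yields that, with probability $1 - 2n\exp(-\Omega(n)) = 1 - \exp(-\Omega(n))$, every row and every column of $B$ contains at least $\lceil 2n/3 \rceil$ entries of value at least $C/4$.

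Next I would observe that $M := \max_{i,j} B_{i,j} \leq C$ deterministically, so $B_{i,j}/M \geq 1/4$ whenever $B_{i,j} \geq C/4$. Hence on the good event, the normalized matrix $B/M$ has at least $\lceil 2n/3 \rceil$ entries of value at least $1/4$ in every row and every column. Letting $k^*$ denote the minimum of these counts and setting $\gamma^* := k^*/n \geq 2/3$, we have $\lceil \gamma^* n \rceil = k^*$ since $k^*$ is an integer, so $B/M$ is $(\gamma^*, 1/4)$-dense and therefore $\gamma^*$-dense. Theorem~\ref{thm-upper-bound-general} then produces a nearly doubly stochastic $A$ meeting the required $\ell_1$-error in $O\bigl((2\gamma^*-1)^{-5}(\log n - \log \varepsilon)\bigr) = O(\log n - \log\varepsilon)$ iterations, since $(2\gamma^*-1)^{-5} \leq 3^5$.

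There is no real obstacle here—the argument is essentially a single Chernoff-plus-union-bound computation. The only subtlety worth flagging is that the definition of $(\gamma,\rho)$-density requires \emph{some} row or column to attain the count $\lceil \gamma n\rceil$ exactly; I side-step this by letting $\gamma^*$ equal the empirical minimum density $k^*/n$, which is achieved by the tightest line in $B/M$. The threshold $\rho=1/4$ and target $\gamma=2/3$ are essentially arbitrary: any $\rho \in (0,1)$ with $1-\rho/1 > 1/2$ (equivalently $\rho < 1/2$ after rescaling by the fixed constant $C$) would yield a fresh constant $\gamma>1/2$ and the same asymptotic conclusion.
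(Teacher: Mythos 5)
Your proposal is correct and takes essentially the same route as the paper, which reduces the corollary to Theorem~\ref{thm-upper-bound-general} by observing that under the uniform product measure on entries, a Chernoff plus union-bound argument shows the normalized matrix is $(\gamma,\rho)$-dense for constants $\gamma>1/2$ and $\rho>0$ outside a $\exp(-\Omega(n))$-probability event; you simply spell out the details (choosing $\rho=1/4$, $\gamma=2/3$ rather than the paper's $\rho=2/5$, $\gamma\geq 6/11$), including the minor technicality of setting $\gamma^*=k^*/n$ so that the ``exactly $\lceil\gamma n\rceil$'' clause of the density definition is met.
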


\Cref{cor-upper-bound-most} demonstrates the practical efficiency of the Sinkhorn-Knopp algorithm.
Intuitively, for almost all nonnegative matrices $B$, the normalized matrix $B/(\max_{i,j\in[n]} B_{i,j})$ is dense. 
Consequently, the Sinkhorn-Knopp algorithm achieves an optimal runtime of \(\tilde{O}(n^2)\) on these matrices.

The proofs of \Cref{thm-upper-bound-general} and \Cref{cor-upper-bound-most} are provided in \Cref{sec-upper-bounds}.

\vspace{0.3cm}
\noindent \underline{\textbf{Lower bounds.}}
The next theorem provides a tight lower bound on the iteration complexity of the Sinkhorn–Knopp algorithm for positive matrices, with error measured in the widely used $\ell_{2}$-norm.

\begin{theorem}\label{thm-lower-bound-positive}
Let $\varepsilon < 10^{-3}$. 
There exists a \((\boldsymbol{1}, \boldsymbol{1})\)-scalable, positive matrix of size $n\times n$ with entries in $[\varepsilon^{8}/(100 n^{61}),1]$ such that with the matrix as input, 
the Sinkhorn-Knopp algorithm takes $\Omega(n/\varepsilon)$ iterations to output a matrix $A$ satisfying 
\[\norm{r\left(A\right)- \boldsymbol{1}}_{1} + \norm{c\left(A\right)- \boldsymbol{1}}_{1} \leq \varepsilon.\]
If $\varepsilon < 1/(1000\sqrt{n})$, the Sinkhorn-Knopp algorithm takes $\Omega(\sqrt{n}/\varepsilon)$ iterations to output a matrix $A$ satisfying 
\[\norm{r\left(A\right)- \boldsymbol{1}}_{2} + \norm{c\left(A\right)- \boldsymbol{1}}_{2} \leq \varepsilon.\]
\end{theorem}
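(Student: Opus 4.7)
The strategy is to amplify the known $\Omega(1/\varepsilon)$ iteration lower bound on a single $2\times 2$ positive matrix (Kalantari~\cite{kalantari1993rate}) by embedding many disjoint copies of that hard instance on the diagonal of one $n\times n$ matrix, so that the Sinkhorn--Knopp algorithm is forced to make slow progress on all copies in parallel. Because \Cref{thm-lower-bound-positive} demands a strictly positive witness, we cannot use a genuine block-diagonal construction; instead, all off-diagonal positions are filled with the tiny constant $\eta := \varepsilon^{8}/(100\,n^{61})$, and the task is to show that this tiny perturbation does not speed up convergence on any block.

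I first fix a $2\times 2$ positive matrix $M$ (rescaled so its largest entry is $1$) on which Sinkhorn--Knopp, after $t$ iterations, retains at least one row-sum or column-sum error of magnitude $\Omega(1/t)$. I then construct $N\in\mathbb{R}_{>0}^{n\times n}$ by placing $\lfloor n/2\rfloor$ disjoint copies of $M$ on the block-diagonal and setting every remaining entry equal to $\eta$. The matrix $N$ is then strictly positive, hence $(\boldsymbol{1},\boldsymbol{1})$-scalable, and all of its entries lie in $[\eta,1]$, matching the entry range required by the theorem.

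The key technical step is an \emph{approximate decoupling lemma}: for $T := c_0\sqrt{n}/\varepsilon$ iterations of Sinkhorn--Knopp on $N$, the row and column scaling factors agree, up to a factor of $1+o(1)$, with those produced by running Sinkhorn--Knopp independently on each $2\times 2$ diagonal block of $N$. The intuition is that the off-block contribution to any row or column sum is at most $(n-2)\eta$, which is super-polynomially smaller than the block's own row sum; an inductive multiplicative argument then gives a per-iteration deviation of $1 + O(n\eta)$, compounding to $(1+O(n\eta))^{T} = 1 + o(1)$ for our choice of $\eta$. I expect this to be the main obstacle, because the cross-block interactions are not zero and a naive bound would blow up over $\Omega(\sqrt{n}/\varepsilon)$ iterations; the polynomial exponent $61$ in the definition of $\eta$ is presumably what one needs in order to absorb such blow-ups across the full range of $T$.

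Combining the approximate decoupling with Kalantari's per-block estimate, each of the $\lfloor n/2\rfloor$ blocks contributes row/column deviations of size $\Omega(1/t)$ after $t\leq T$ iterations on $N$. Summing entrywise across blocks yields
\[
\norm{r(A)-\boldsymbol{1}}_{1} + \norm{c(A)-\boldsymbol{1}}_{1} = \Omega\!\left(\frac{n}{t}\right)
\quad\text{and}\quad
\norm{r(A)-\boldsymbol{1}}_{2}^{2} + \norm{c(A)-\boldsymbol{1}}_{2}^{2} = \Omega\!\left(\frac{n}{t^{2}}\right).
\]
Forcing the first quantity to be at most $\varepsilon$ yields $t = \Omega(n/\varepsilon)$, while forcing the square root of the second to be at most $\varepsilon$ yields $t = \Omega(\sqrt{n}/\varepsilon)$. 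The extra assumption $\varepsilon < 1/(1000\sqrt{n})$ in the $\ell_{2}$ statement guarantees that $T = \sqrt{n}/\varepsilon \geq 1000\,n$, which is comfortably within the regime where both the per-block Kalantari bound and the decoupling argument remain valid.
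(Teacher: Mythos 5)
Your proposal takes a genuinely different route from the paper: you amplify the known $\Omega(1/\varepsilon)$ lower bound for a $2\times 2$ matrix by placing roughly $n/2$ copies on the block diagonal and filling the remaining positions with the tiny constant $\eta=\varepsilon^{8}/(100n^{61})$, then invoke an approximate-decoupling lemma to argue that the cross-block coupling is negligible. The paper instead builds a single $n\times n$ matrix with a hand-crafted internal structure---two pivotal diagonal entries $A_{n/2,n/2}=A_{n/2+1,n/2+1}=1$, two circulant-like blocks, a band of intermediate entries of size $2\lceil\gamma n\rceil/n$, and $\beta$ elsewhere---and carries out a long induction tracking six scalar quantities $a_k,b_k,x_k,y_k,u_k,v_k$ via \Cref{con-lb-ak-bk-relation-positive} and its surrounding lemmas. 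The paper's argument is self-contained, is tied to the permanent/Frobenius--K\"onig viewpoint that also underlies the upper bound, and directly produces a witness of prescribed density $\gamma$, which is then reused for \Cref{thm-lower-bound-main}. Your approach is more modular but imports a $2\times 2$ lower bound as a black box and trades the paper's inductive bookkeeping for a perturbation-stability analysis.

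Two gaps remain. First, no \emph{fixed} positive $2\times 2$ matrix has $\ell_2$-error $\Omega(1/t)$ at every iteration $t$: Birkhoff's contraction theorem gives geometric convergence (the Hilbert-metric contraction factor $\tanh(\Delta/4)$ is strictly less than $1$) for any strictly positive matrix, so the error is eventually far smaller than $1/t$; and a $(\boldsymbol{1},\boldsymbol{1})$-scalable $2\times 2$ matrix with a zero entry is already doubly stochastic after one normalization, so there is no slow non-positive example either. The Kalantari witness must therefore be a \emph{family} $M_{\tau}$ whose small entry $\tau$ shrinks with the target accuracy, and in your construction you need $\tau$ tied to $\varepsilon/n$ so that the $\Omega(1/t)$ behavior persists out to $t=\Theta(n/\varepsilon)$. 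Your opening sentence (``fix a $2\times 2$ positive matrix $M$ $\dots$ error of magnitude $\Omega(1/t)$'') is therefore wrong as written, though repairable. Second, the decoupling lemma is the heart of the argument and is only sketched. The heuristic---per-iteration relative perturbation $O(n\eta)$ compounding over $T$ iterations to $(1+O(n\eta))^{T}=1+o(1)$---is plausible given how small $\eta$ is, but a rigorous version must bound the scaling factors $x_i^{(k)},y_j^{(k)}$ uniformly over $T=\Theta(n/\varepsilon)$ iterations (for $M_{\tau}$ with $\tau\approx\varepsilon/n$ they are polynomially large in $n/\varepsilon$, not $O(1)$), and must rule out faster-than-linear accumulation of the perturbation; neither step is automatic. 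The bookkeeping the paper devotes to \Cref{con-lb-ak-bk-relation-positive} and Lemmas \ref{lem-item-max-xyuv-boost}--\ref{lem-last-item} is a fair estimate of the work still to be done here.
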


It is well known that certain matrices cause the Sinkhorn–Knopp algorithm to \emph{diverge}, so the required number of iterations is infinite~\cite{sinkhorn1964relationship}. In \Cref{thm-lower-bound-positive} and later in \Cref{thm-lower-bound-main}, we focus exclusively on matrices for which the algorithm \emph{does converge}. For this convergent class, the best previously established lower bound on the iteration count is $\Omega(1/\varepsilon)$, achieved with a $2\times 2$ matrix~\cite{kalantari1993rate}. Beyond that, no non-trivial bound had been established that depends simultaneously on the matrix dimension $n$ and the accuracy parameter $\varepsilon$.
We close this gap by proving sharper bounds: $\Omega(n/\varepsilon)$ iterations are necessary to reach an $\ell_{1}$-error of at most $\varepsilon$, and $\Omega(\sqrt{n}/\varepsilon)$ iterations are required for the same $\ell_{2}$-error threshold.

Assume that the widely adopted \(\ell_2\)-norm is used for error measurement and that the input positive matrix \(B\) satisfies
\[
\frac{\max_{i,j\in [n]} B_{i,j}}{\min_{i,j\in [n]} B_{i,j}} = \operatorname{poly}(n,\varepsilon).
\]
It has been shown that the Sinkhorn-Knopp algorithm requires 
$
O\Bigl(\sqrt{n}(\log n-\log \varepsilon)/\varepsilon\Bigr)$
iterations on such matrices~\cite{kalantari1993rate} for both general \((\boldsymbol{r}, \boldsymbol{c})\)-scaling and specific \((\boldsymbol{1},\boldsymbol{1})\)-scaling. 
\Cref{thm-lower-bound-positive} demonstrates that there exists a matrix \(B\) for which the Sinkhorn-Knopp algorithm needs \(\Omega\Bigl(\sqrt{n}/\varepsilon\Bigr)\) iterations. This lower bound is tight up to logarithmic factors.

By \Cref{thm-lower-bound-positive}, one can establish the following lower bound for \(\gamma\)-dense matrices, where \(\gamma \in [0,1/2)\).
\begin{theorem}\label{thm-lower-bound-main}
Let $\gamma \in [0,1/2)$ and $\varepsilon < 10^{-3}$. 
There exists a $\gamma$-dense, \((\boldsymbol{1}, \boldsymbol{1})\)-scalable matrix of size $n\times n$ such that, with the matrix as input, 
the Sinkhorn-Knopp algorithm takes $\Omega(n/\varepsilon)$ iterations to output a matrix $A$ satisfying 
\[\norm{r\left(A\right)- \boldsymbol{1}}_{1} + \norm{c\left(A\right)- \boldsymbol{1}}_{1} \leq \varepsilon.\]
If $\varepsilon < 1/(2000\sqrt{n})$, the Sinkhorn-Knopp algorithm takes $\Omega(\sqrt{n}/\varepsilon)$ iterations to output a matrix $A$ satisfying 
\[\norm{r\left(A\right)- \boldsymbol{1}}_{2} + \norm{c\left(A\right)- \boldsymbol{1}}_{2} \leq \varepsilon.\]
\end{theorem}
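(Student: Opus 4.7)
The plan is to reduce \Cref{thm-lower-bound-main} to \Cref{thm-lower-bound-positive} by a block-diagonal embedding that preserves the Sinkhorn--Knopp dynamics. Set $m = \lceil n/2 \rceil$ and let $B$ be the positive $m \times m$ matrix with entries in $[\rho_B,1]$, where $\rho_B = \varepsilon^{8}/(100 m^{61})$, guaranteed by \Cref{thm-lower-bound-positive}; on this $B$ the Sinkhorn--Knopp algorithm needs $\Omega(m/\varepsilon) = \Omega(n/\varepsilon)$ iterations to drive the $\ell_{1}$ error below $\varepsilon$ and $\Omega(\sqrt{m}/\varepsilon) = \Omega(\sqrt{n}/\varepsilon)$ iterations to drive the $\ell_{2}$ error below $\varepsilon$. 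The hypothesis $\varepsilon < 1/(2000\sqrt{n})$ of \Cref{thm-lower-bound-main} implies $\varepsilon < 1/(1000\sqrt{m})$, which is the $\ell_{2}$ hypothesis required by \Cref{thm-lower-bound-positive}. The goal is then to embed $B$ into an $n \times n$ matrix $M$ that (i) is $\gamma$-dense, (ii) is $(\boldsymbol{1},\boldsymbol{1})$-scalable, and (iii) converges no faster than $B$ under Sinkhorn--Knopp.

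The construction is $M = B \oplus E$ with zero off-diagonal blocks, where $E$ is any $0/1$ matrix of size $(n-m) \times (n-m)$ whose row and column sums all equal $k := \lceil \gamma n \rceil$; for example, $E$ may be taken as a sum of $k$ pairwise support-disjoint permutation matrices, which exists whenever $k \le n-m$, a condition that holds for $\gamma < 1/2$ once $n$ is sufficiently large. By block decomposition $M$ is $(\boldsymbol{1},\boldsymbol{1})$-scalable, since $B$ is scalable and $E$ scales to $E/k$. Setting $\rho = \rho_B$, the top $m$ rows of $M$ each contain $m \ge k$ entries in $[\rho_B,1]$, while each of the bottom $n-m$ rows contains exactly $k = \lceil \gamma n \rceil$ entries equal to $1$; the same counts hold column-wise. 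Since $\max_{i,j} M_{i,j} = 1$, the normalized matrix equals $M$ itself and is $(\gamma,\rho)$-dense. Crucially, because the off-diagonal blocks of $M$ are zero, each row- and column-normalization step of Sinkhorn--Knopp acts on the $B$ block and the $E$ block independently; a straightforward induction shows that after $t$ iterations the state is $B_t \oplus E_t$, where $B_t$ and $E_t$ are the independent $t$-th Sinkhorn--Knopp iterates on $B$ and on $E$. Coordinate-wise monotonicity then yields $\norm{r(M_t)-\boldsymbol{1}}_p + \norm{c(M_t)-\boldsymbol{1}}_p \ge \norm{r(B_t)-\boldsymbol{1}}_p + \norm{c(B_t)-\boldsymbol{1}}_p$ for $p \in \{1,2\}$, so any iteration count that achieves error $\varepsilon$ on $M$ also achieves error $\varepsilon$ on $B$, and the lower bound from \Cref{thm-lower-bound-positive} transfers directly.

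The main obstacle I anticipate is the bookkeeping around the precise density definition, which requires at least one row or column to contain \emph{exactly} $\lceil \gamma n \rceil$ entries above $\rho$ while every other row and column contains at least this many. The $0/1$ structure of $E$ is tailored to this: each bottom row and column hits equality and the top rows and columns have a strictly larger count. Corner cases remain: when $\gamma = 0$ one needs a row with no entries above $\rho$, which cannot occur inside a uniformly positive block such as $B$; when $\gamma$ is close to $1/2$, $k$ nearly saturates $n-m$; and for small $n$ the ceilings can conflict. These are handled by minor adjustments, such as decreasing $m$ slightly, replacing a single $1$-entry of $E$ by a positive value below $\rho$, or a direct padding argument on a constant-sized prefix. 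Once these bookkeeping details are pinned down, the remainder of the argument is the routine transfer of the Sinkhorn--Knopp lower bound through the block-diagonal decomposition.
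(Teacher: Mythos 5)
Your block-diagonal embedding is the right high-level strategy and matches the paper's in spirit (Theorem~\ref{thm-lower-bound-construction}), but there is a genuine gap in how you certify that $M$ is $\gamma$-dense. You take $\rho = \rho_B = \varepsilon^{8}/(100 m^{61})$, which is not a constant — it tends to $0$ with $n$ and with $\varepsilon$. The definition of a $\gamma$-dense matrix in the paper requires that the threshold $\rho$ be a \emph{constant}, i.e.\ independent of $n$ and $\varepsilon$; this is not a pedantic point, since the iteration bound in Theorem~\ref{thm-main-ak-max-accuracy} scales like $\rho^{-18}$, so allowing $\rho$ to shrink with $n$ or $\varepsilon$ would make the density notion vacuous and destroy the phase-transition statement. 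Because you treat the slow block $B$ as a black box from Theorem~\ref{thm-lower-bound-positive}, the only uniform lower bound you have on its entries is $\rho_B$, and you are forced into this non-constant choice.

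If you instead pick a genuine constant $\rho$ (the matrix of Theorem~\ref{thm-lowerbound-positive} is in fact $(\lceil\gamma_Z n\rceil/n,\, 2\lceil\gamma_Z n\rceil/n)$-dense, so $\rho \approx 2\gamma_Z$ works), then each row of the slow block contributes only about $\gamma_Z m$ entries above the threshold, with $\gamma_Z \in (1/4,1/2)$. With your sizing $m = \lceil n/2\rceil$ this gives roughly $\gamma_Z n/2 < n/4$ entries per top row, which is less than $\lceil\gamma n\rceil$ whenever $\gamma \geq 1/4$. So the density claim fails precisely in the range $\gamma \in [1/4,1/2)$ — and your ``minor adjustments'' (decreasing $m$ slightly, tweaking one entry of $E$) do not repair this, since the deficit is linear in $n$. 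The paper avoids this by splitting into cases: for $\gamma \in (1/4,1/2)$ it observes that the construction of Theorem~\ref{thm-lowerbound-positive} itself already has density $\gamma$ and no padding is needed; for $\gamma \in [0,1/4]$ it makes the slow block larger, of size $n = m - \lceil\gamma m\rceil$, with internal density $1/2 - 1/n$, and verifies $\lceil(1/2 - 1/n)n\rceil \geq \lceil\gamma m\rceil$, which is what guarantees every row of the slow block has enough $1$-entries at a constant threshold. Your transfer argument from $M$ to $B$ through the block structure and the $\ell_p$-monotonicity is fine; the construction and the density bookkeeping are where the proposal needs to be reworked.
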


Theorems \ref{thm-upper-bound-general} and \ref{thm-lower-bound-main} together reveal a sharp phase transition in the Sinkhorn-Knopp algorithm at a density threshold of \(\gamma = 1/2\). Under the \(\ell_1\)-norm, the algorithm requires \(O(\log n - \log \varepsilon)\) iterations when the density exceeds \(1/2\), but this number sharply increases to \(\Omega(n/\varepsilon)\) when the density falls below \(1/2\).

Our phase-transition analysis shows that the Sinkhorn–Knopp algorithm converges faster as the matrix becomes denser. This trend has been observed empirically in prior work. In entropically regularised optimal transport~\cite{cuturi2013sinkhorn}, the Gibbs kernel $K_\eta=\exp(-\eta C)$ interpolates between a sharply peaked matrix and a dense one: when $\eta$ is large, most entries of $K_\eta$ are numerically negligible and convergence is slow; as $\eta$ decreases, $K_\eta$ becomes dense and convergence accelerates markedly. Our analysis pinpoints this transition and provides a clear theoretical explanation for the observed speedups.


The proofs of Theorems \ref{thm-lower-bound-positive} and \ref{thm-lower-bound-main} are provided in \Cref{sec-lower-bounds}.

\vspace{0.5cm}
\noindent \underline{\textbf{An application.}}
As an application of our results, we provide a fast approximation algorithm for the permanent of dense 0-1 matrices.
Given a 0-1 square matrix $A = (a_{i,j})_{n\times n}$, the permanent of $A$ is defined as 
\[\perman(A) \triangleq \sum_{\sigma }\prod_{i\in [n]}a_{i,\sigma(i)},\]
where the sum is over all permutations $\sigma$ of $[n]$.
Computing the permanent of a matrix is one of the first problems shown to be $\#$P-complete~\cite{valiant1979complexity}, even if the matrix is a 0-1 matrix where the row and column sums are at least $n/2$.

\begin{theorem}\label{thm-application}
Let $\gamma \in (1/2,1]$, $\delta \in (0,1]$ and $\varepsilon \in (0,1]$. 
There exists a randomized approximation algorithm such that given any $\gamma$-dense $0$-$1$ matrix $A$ of size $n\times n$ as input,
the algorithm outputs an approximation of $\perman(A)$
within a factor of $1+\varepsilon$ with probability at least $1 - \delta$ and expected running time $O\left(n^2 \left(2\gamma - 1\right)^{-5}\left(\log n - \log \varepsilon\right)+ n^{2+(1-\gamma)/(2\gamma - 1)}\varepsilon^{-2}\log (1/\delta)\right)$.
\end{theorem}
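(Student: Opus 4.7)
\textbf{Proof plan for Theorem~\ref{thm-application}.}
The plan is to reduce the permanent computation to approximating the permanent of a nearly doubly stochastic matrix and then to estimate the latter by an unbiased Monte Carlo sampler whose relative variance is controlled by the density $\gamma$.

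First, I would run the Sinkhorn--Knopp algorithm on $A$ with target error $\varepsilon'=\Theta(\varepsilon/n)$ in the $\ell_{1}$-norm sense of Theorem~\ref{thm-upper-bound-general}. This yields diagonal scalings $X=\diag(x)$ and $Y=\diag(y)$ such that $B=XAY$ is nearly doubly stochastic, and, by Theorem~\ref{thm-upper-bound-general}, the preprocessing stage costs $O\!\left(n^{2}(2\gamma-1)^{-5}(\log n-\log\varepsilon)\right)$ time, accounting for the first term of the stated complexity. The exact identity
\[\perman(A)=\frac{\perman(B)}{\prod_{i}x_{i}\prod_{j}y_{j}}\]
then reduces the task to approximating $\perman(B)$ to within a $(1\pm\Theta(\varepsilon))$ multiplicative factor, since the scaling factors are recovered explicitly by Sinkhorn--Knopp.

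Next, I would build a sequential importance-sampling estimator $Z$ for $\perman(B)$ in the spirit of Huber-type perfect-matching samplers: process rows $1,\ldots,n$ in order, draw $\sigma(i)$ among the still-unused columns with probability proportional to $B_{i,j}$ (outputting $0$ if row $i$ has no feasible positive entry left), and return the product $Z=\prod_{i}S_{i}(\sigma)$ of the residual row sums $S_{i}(\sigma)=\sum_{j\text{ unused}}B_{i,j}$. A direct calculation gives $\E[Z]=\perman(B)$, and each sample is produced in $O(n^{2})$ time. Averaging $N=O\!\left(n^{(1-\gamma)/(2\gamma-1)}\varepsilon^{-2}\log(1/\delta)\right)$ independent copies, combined via the median-of-means trick, and then dividing by $\prod_{i}x_{i}\prod_{j}y_{j}$ would yield a $(1\pm\varepsilon)$-approximation of $\perman(A)$ with probability at least $1-\delta$, contributing the second term of the claimed running time.

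The crux of the argument, and the step I expect to be the main obstacle, is establishing the relative variance bound $\operatorname{Var}(Z)/\E[Z]^{2}\leq n^{(1-\gamma)/(2\gamma-1)}$. The $\gamma$-density hypothesis is essential here: after scaling, each row of $B$ is a near-probability distribution supported on at least $\lceil\gamma n\rceil$ entries of magnitude $\Theta(1/n)$, so $\Omega(n)$ feasible columns remain at every step of the sequential sampler. Combining this observation with a Br\'egman-type upper bound and a van der Waerden / Schrijver-type lower bound on $\perman(B)$ for dense doubly stochastic matrices should yield the desired second-moment inequality, with the exponent $(1-\gamma)/(2\gamma-1)$ emerging from tracking, round by round, the fraction of the surviving row mass supported on columns that still extend to a perfect matching of the support of $A$. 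Verifying this step-by-step bookkeeping, and ensuring that the small $\ell_{1}$ error produced by the approximate Sinkhorn--Knopp scaling contributes only lower-order terms to the variance estimate, is where the bulk of the technical work will lie.
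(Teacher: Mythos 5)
Your high-level structure---scale $A$ into a nearly doubly stochastic $B$ via Sinkhorn--Knopp, then estimate $\perman(B)$ by sampling, and divide out the scaling factors---is the same two-stage strategy the paper uses, and your accounting of the first (scaling) stage is essentially correct. The divergence, and the genuine gap, is in the second stage. The paper does not design or analyze a sampler at all: it invokes as a black box a theorem adapted from Huber (stated as Theorem~\ref{thm-dense-huber} in the appendix), which says that \emph{given} any algorithm producing $B=XAY$ with maximum deviation $1/(10n^2)$ in time $O(T)$, there is an estimator running in expected time $O\bigl(T+n^{2+(1-\gamma)/(2\gamma-1)}\varepsilon^{-2}\log(1/\delta)\bigr)$. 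The whole of Theorem~\ref{thm-application} then follows in one line from Theorem~\ref{thm-upper-bound-general}. You instead propose to rebuild that second stage from scratch, and the construction you sketch is not the one whose complexity is known.

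Concretely, the estimator you describe---draw $\sigma(i)$ among unused columns proportionally to $B_{i,j}$ and return the product of residual row sums---is the classical sequential importance-sampling (SIS) estimator. Huber's algorithm is an \emph{acceptance/rejection} sampler: it uses a Br\'egman--Minc-type upper bound $U(B)\ge\perman(B)$ to propose matchings, accepts with probability controlled so that the output is exactly unbiased, and the expected number of trials per accepted sample is $U(B)/\perman(B)$. The exponent $(1-\gamma)/(2\gamma-1)$ comes precisely from bounding this ratio for a scaled dense matrix (combining the Br\'egman upper bound with the van der Waerden / Schrijver lower bound once each entry is shown to be $\Theta(1/n)$). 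That ratio is not the same quantity as $\operatorname{Var}(Z)/\E[Z]^2$ for SIS; the relative variance of SIS involves $\perman(B\circ B^{\top}\text{-type})$ quantities and can behave quite differently. You flag the variance bound as ``the crux'' and ``the main obstacle,'' and indeed, as written, that step is asserted rather than proved, and it is not obvious the same exponent governs it. To make your route work you would either have to prove the SIS second-moment bound directly for dense scaled matrices (a nontrivial new argument, not obviously true at the claimed exponent), or switch to Huber's acceptance/rejection scheme and verify his Br\'egman-to-permanent ratio analysis in the $(\gamma,\rho)$-dense setting---at which point you are reproving Theorem~\ref{thm-dense-huber} rather than citing it.

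One smaller point: you set the Sinkhorn--Knopp target error to $\Theta(\varepsilon/n)$. The paper instead targets a fixed maximum deviation $1/(10n^2)$, independent of $\varepsilon$, which is what Theorem~\ref{thm-dense-huber} requires; the Sinkhorn--Knopp cost is then $O\bigl(n^2(2\gamma-1)^{-5}\log n\bigr)$ and the $\log(1/\varepsilon)$ appearing in the stated bound is simply a harmless over-count. Coupling the scaling accuracy to $\varepsilon$ is not wrong, but it is unnecessary, and it obscures why the scaling error is already negligible once the deviation is $O(1/n^2)$.
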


For simplicity, let 
$R = n^{2+(1-\gamma)/(2\gamma-1)}\varepsilon^{-2}\log (1/\delta)$.
Previously, the best approximation algorithm for dense 0–1 matrices~\cite{huber2008fast} ran in 
$O\bigl(n^4(\log n-\log \varepsilon) + R\bigr)$.
In that approach, given a dense 0–1 matrix \(A\), the algorithm first scales \(A\) into a nearly doubly stochastic matrix \(B\), and then estimates \(\perman(A)\) using a novel sequential acceptance/rejection method based on the entries of \(B\). When \(A\) is scaled into \(B\) using the Sinkhorn-Knopp algorithm, the resulting approximation algorithm runs in 
$
O\bigl(n^{4.5}(\log n-\log \varepsilon) + R\bigr)$,
whereas employing the ellipsoid algorithm yields a running time of 
$O\bigl(n^4(\log n-\log \varepsilon) + R\bigr)$~\cite{huber2008fast}.
Alternatively, using the advanced scaling algorithm from~\cite{allen2017much} leads to a running time of 
$
O\bigl(n^2\,\poly(\log n,\log(1/\varepsilon)) + R\bigr)$.
However, both the ellipsoid algorithm and the method from~\cite{allen2017much} are complex. It remains a critical question whether a simple approximation algorithm based on the Sinkhorn-Knopp algorithm can achieve a fast running time.
\Cref{thm-application} provides an improved running time of 
$O\bigl(n^2(\log n-\log \varepsilon) + R\bigr)$
for the simple algorithm, which outperforms all previous approaches.

In particular, compared to the earlier bound 
$O\bigl(n^{4.5}(\log n-\log \varepsilon) + R\bigr) = \tilde{O}\bigl(n^{4.5} + n^{2+(1-\gamma)/(2\gamma-1)}\varepsilon^{-2}\bigr)$
for the approximation algorithm based on the Sinkhorn-Knopp algorithm, our result eliminates the \(n^{4.5}\) term, achieving a running time of 
$O\bigl(n^2(\log n-\log \varepsilon) + R\bigr) = \tilde{O}\bigl(n^{2+(1-\gamma)/(2\gamma-1)}\varepsilon^{-2}\bigr)$.
This new bound improves the previous running time when \(\gamma \geq 7/12\).

The proof of \Cref{thm-application} is provided in \Cref{appendix-application}.

\subsection{Technique Overview}\label{sec-technique-overview}
In this section, we outline our techniques for proving the upper and lower bounds separately. Our key insight is that when \(\gamma > 1/2\), the dense structure of the original matrix is preserved during the scaling process, enabling us to establish an upper bound. Conversely, when \(\gamma < 1/2\), the density profile may gradually evolve during scaling, which allows us to construct a matrix that yields a lower bound. Furthermore, the connection between the permanent and matrix scaling plays a key role in our proof; it not only facilitates the derivation of the upper bound but also inspires our construction of the counterexample for the lower bound.

In proving our upper bound, we in fact establish a constant upper bound on the condition number $\kappa$ of the optimal scaling matrices for any dense input matrix\footnote{This constant bound follows directly from Lemmas \ref{lem-upper-bound-max-element} and \ref{lem-lower-bound-elements}, although we did not spell it out explicitly in the proof.}.  Here, $\kappa$ is defined as the ratio between the largest to the smallest diagonal entry of the optimal scaling matrices.  Controlling $\kappa$ is pivotal: many accelerated scaling algorithms converge fast only when $\kappa$ is bounded~\cite{cohen2017matrix}. 
Until now, however, meaningful bounds on $\kappa$ were known only for strictly positive matrices or the pseudorandom instances of~\cite{kwok2018paulsen}. 
Our result fills this gap by showing that dense matrices likewise admit a small—indeed constant—condition number, and the structural ideas behind the proof suggest a combinatorial avenue toward similar bounds for broader classes of matrices.

\vspace{0.3cm}
\noindent \underline{\textbf{A combinatorial analysis exploiting the dense structure of matrices.}}
In contrast to prevalent methods rooted in continuous optimization, we establish the upper bound for the Sinkhorn-Knopp algorithm using a combinatorial approach that offers a novel perspective.
The strength of this combinatorial analysis lies in its ability to bridge structural properties of dense matrices with the iterative scaling process. 

Let \(\gamma \in \left(1/2, 1\right]\) and let \(A\) be an \(n \times n\) \((\gamma,\rho)\)-dense matrix. An entry of \(A\) is called \emph{considerable} if it exceeds \(\rho\). 
Assuming the \(\ell_1\)-norm is used, we define $t \triangleq 9n(2\gamma-1)/(20\gamma) $. 
Our analysis of the Sinkhorn-Knopp algorithm on \(A\) proceeds in two phases:

\begin{itemize}
\item \emph{Phase 1}: The Sinkhorn-Knopp algorithm reduces the error to $t$ within \(O(\log n)\) iterations.
\item \emph{Phase 2}: Once the scaled matrix has an error below $t$, the error decays almost exponentially, and the algorithm reaches an error of \(\varepsilon\) within an additional \(O(\log n - \log \varepsilon)\) iterations.
\end{itemize}

To upper bound the number of iterations in Phase 1, we build upon the framework of~\cite{linial1998deterministic}, with modifications tailored to our setting. If the error is at least $t = \Theta(n)$, then both the product of the row sums and the product of the column sums are upper bounded by \(\exp(-\Theta(n))\). Consequently, the permanent of the scaled matrix increases by a factor of \(\exp(\Theta(n))\) in each round. Since the permanent of the input dense matrix is at least \(n^{-n}\) and that of the scaled matrix is at most 1, the number of iterations in Phase 1 is bounded by \(O(\log n)\).

Let \(A^{(k)}\) denote the scaled matrix produced by the Sinkhorn-Knopp algorithm at its \(k\)-th iteration during Phase 2. Then \(A^{(k)}\) has an error of at most $t$. Without loss of generality, assume that every row of \(A^{(k)}\) sums to 1 and that \(A^{(k)} = XAY\), where \(X\) and \(Y\) are diagonal matrices.
To bound the number of iterations in Phase 2, it is crucial to show that every considerable entry in \(A\) is of order \(\Theta(1/n)\) in \(A^{(k)}\). Consequently, the normalized matrix \(A^{(k)}/\max_{i,j} A^{(k)}_{i,j}\) remains dense, thereby preserving the inherent dense structure of \(A\) throughout the scaling process.

We first show that the considerable entries in \(A\) become \(O(1/n)\) in \(A^{(k)}\) (see \Cref{lem-upper-bound-max-element}), following the ideas in~\cite{huber2008fast}. Moreover, one can further prove that these considerable entries remain \(\Omega(1/n)\) in \(A^{(k)}\) (see \Cref{lem-lower-bound-elements}). Suppose, for contradiction, that some considerable entry were \(o(1/n)\); then some scaling factor in \(X\) or \(Y\) would have to be extremely small. 
To ensure that the minimal row and column sums of \(A^{(k)}\) remain bounded below while capping the considerable entries at \(O(1/n)\), a delicate compensatory balance is required: some entries in \(X\) and \(Y\) must be assigned very large values, while most must be set to be very small. This necessary balancing act inevitably forces an \(n/2 \times n/2\) submatrix of \(A^{(k)}\) to consist of extremely small entries, thereby contradicting the fact that \(A^{(k)}\) has an error of at most $t$. Consequently, each considerable entry in \(A^{(k)}\) must be \(\Theta(1/n)\).

Accordingly, each row of \(A^{(k)}\) contains at least \(\gamma n > n/2\) entries on the order of \(\Theta(1/n)\). 
Since every row sums to 1, we can establish an upper bound \(\ell < 1\) on the sum of the largest \(n/2\) entries in each row. Additionally, \(\max_j c_j\left(A^{(k+2)}\right)\) can be bounded above by a linear combination of \(c_1\left(A^{(k)}\right), \dots, c_n\left(A^{(k)}\right)\), with weights given by the entries in a row of \(A^{(k)}\). A similar bound holds for \(1/\min_j c_j\left(A^{(k+2)}\right)\). 
Combined with the upper bound $\ell$ on the sum of the largest \(n/2\) entries, one can show that either
$\max_{j} c_j\left(A^{(k+2)}\right) - 1 \leq \ell \left(\max_{j} c_j\left(A^{(k)}\right) - 1\right)$
or
$1/\min_{j} c_j\left(A^{(k+2)}\right) - 1 \leq \ell \left(1/\min_{j} c_j\left(A^{(k)}\right) - 1\right)$
(see \Cref{lem-max-accuracy-decrease}). 
This implies that, approximately, the maximum deviation is reduced by a constant factor every two iterations.
Hence, a nearly doubly stochastic matrix with a maximum error of \(\varepsilon\) is achieved in an additional \(O(\log n - \log \varepsilon)\) iterations.

In summary, the Sinkhorn-Knopp algorithm requires \(O(\log n - \log \varepsilon)\) iterations to achieve a maximum error of \(\varepsilon\) for \(A\).

\vspace{0.3cm}
\noindent \underline{\textbf{Preserving matrix simplicity through balanced element dynamics.}} 
To establish the lower bounds and prove \Cref{thm-lower-bound-positive}, we construct a positive matrix for which the Sinkhorn-Knopp algorithm converges slowly. Our construction is inspired by the proof of the upper bound, where Lemmas \ref{lem-upper-bound-max-element} and \ref{lem-lower-bound-elements} play key roles by demonstrating that the considerable entries in the input matrix become \(\Theta(1/n)\) in the scaled matrix.
To establish the lower bound, we design a matrix whose scaled entries deviate from these bounds. Specifically, for even \(n\), we construct an \(n \times n\) matrix \(A\) such that \(A_{n/2,n/2} = A_{n/2+1,n/2+1} = 1\) while all other entries in the \((n/2+1) \times (n/2+1)\) submatrix \(B\) (located in the bottom-left corner) and the \((n/2-1) \times (n/2-1)\) submatrix \(C\) (located in the top-right corner) are set to be very small. 
Consequently, the entries \(A_{n/2,n/2}\) and \(A_{n/2+1,n/2+1}\) emerge as the pivotal elements of \(A\). The intuition is that these two entries are the only dominant elements within the submatrix \(B\) of size \((n/2+1) \times (n/2+1)\), making them critical for \(\perman(A)\) and the closely related matrix scaling problem (see \Cref{thm-Frobenius-König}). As the algorithm iterates, these two entries grow to \(\omega(1/n)\), while all other entries in the \(n/2\) and \(n/2+1\) rows and columns, excluding those in the submatrix \(B\), tend to \(o(1/n)\).

Our analysis of the Sinkhorn-Knopp algorithm applied to the matrix \(A\) focuses on the evolution of three distinct groups of entries:
\begin{itemize}
\item the pivotal entries \(A_{n/2,n/2}\) and \(A_{n/2+1,n/2+1}\);
\item the small entries in the submatrices \(B\) and \(C\),  excluding \(A_{n/2,n/2}\) and \(A_{n/2+1,n/2+1}\);
\item the large entries in the \(n/2\) and \(n/2+1\) rows and columns, excluding those within \(B\).
\end{itemize}
We demonstrate the slow convergence of the algorithm by showing that the large entries decay slowly. 
However, if the small entries in \(B\) and \(C\) grow significantly in \(A^{(k)}\), then \(A^{(k)}_{n/2,n/2}\) and \(A^{(k)}_{n/2+1,n/2+1}\) will no longer be the sole pivotal entries in \(A^{(k)}\), and the overall evolution of the entries becomes much more complicated. Therefore, to establish the algorithm's slow convergence, we must prove that the large entries diminish to very small values before the small entries become substantially large, ensuring that \(A^{(k)}\) always maintains a simple structure. In other words, besides deriving an upper bound on the decay rate of the large entries, we need to establish a corresponding lower bound on that decay rate and an upper bound on the growth rate of the small entries. These bounds are obtained through a precise characterization of the relationships among these entries (see \Cref{con-lb-ak-bk-relation-positive}). Thus, by precisely controlling the interaction between the entries of the matrix, our method preserves a simple, structured matrix, ensuring that the pivotal entries \(A^{(k)}_{n/2,n/2}\) and \(A^{(k)}_{n/2+1,n/2+1}\) remain dominant.
\newpage
\section{Preliminary}\label{sec-preliminary}
\noindent \underline{\textbf{Tools for the Permanent.}} Given each matrix $Z$ of size $n\times n$,
the permanent of $Z$ is defined as 
\[\perman(Z) \triangleq \sum_{\sigma }\prod_{i\in [n]}Z_{i,\sigma(i)},\]
where the sum is over all permutations $\sigma$ of $[n]$.
A necessary and sufficient condition for the permanent of a square non-negative matrix to be zero was provided independently by Frobenius and König~\cite{horn2012matrix}.
\begin{theorem}[Frobenius-König]\label{thm-Frobenius-König}
Let A be an $n \times n$ non-negative matrix. Then
$\perman(A)= 0$ if and only if $A$ contains an $s \times t$ zero submatrix such that $s + t = n + 1$.
\end{theorem}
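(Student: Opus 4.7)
The plan is to translate the permanent condition into a statement about perfect matchings in the associated bipartite graph and then invoke Hall's marriage theorem. Define the bipartite graph $G = (R \cup C, E)$, where $R$ and $C$ are disjoint copies of $[n]$ (indexing rows and columns) and $(i,j) \in E$ if and only if $A_{i,j} > 0$. Because $A$ is nonnegative, every term $\prod_i A_{i,\sigma(i)}$ in the permanent is nonnegative, so $\perman(A) = 0$ if and only if every permutation $\sigma$ has at least one index $i$ with $A_{i,\sigma(i)} = 0$. This is in turn equivalent to $G$ having no perfect matching.

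For the easy direction, suppose $A$ contains an $s \times t$ zero submatrix indexed by rows $S \subseteq [n]$ and columns $T \subseteq [n]$ with $|S| = s$, $|T| = t$, and $s + t = n + 1$. Then for any permutation $\sigma$, if no $(i,\sigma(i))$ lay in this submatrix, we would have $\sigma(S) \cap T = \emptyset$, i.e. $\sigma(S) \subseteq [n] \setminus T$, a set of size $n - t = s - 1$. Injectivity of $\sigma$ forces $|\sigma(S)| = s$, contradicting $s > s - 1$. Hence every permutation passes through a zero entry, so $\perman(A) = 0$.

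For the converse, suppose $\perman(A) = 0$, so $G$ has no perfect matching. By Hall's theorem there exists $S \subseteq R$ with $|N(S)| < |S|$, where $N(S)$ denotes the set of columns $j$ such that $A_{i,j} > 0$ for some $i \in S$. Let $s = |S|$ and $T = [n] \setminus N(S)$. Then $|T| = n - |N(S)| \geq n - s + 1$. Choosing any $t = n - s + 1$ columns inside $T$ produces an $s \times t$ zero submatrix on rows $S$ and these columns, with $s + t = n + 1$, as desired.

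The only nontrivial external input is Hall's theorem, which may be treated as a black box or, if self-containment is preferred, proved in a short appendix by the standard augmenting-path argument. No step presents a genuine obstacle; the key conceptual move is simply the reformulation $\perman(A) = 0 \iff G$ has no perfect matching, after which both directions are counting arguments on row/column index sets.
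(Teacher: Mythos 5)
Your proof is correct. Note that the paper states Frobenius--K\"onig as a background result with a citation and gives no proof of its own, so there is nothing in the paper to compare against line by line; you have simply supplied the standard argument. Your reduction to bipartite perfect matchings and the application of Hall's theorem is the canonical route. Both directions are handled cleanly: the easy direction is a correct pigeonhole count on $\sigma(S)$ versus $[n]\setminus T$, and the hard direction correctly converts a Hall violator $S$ with $|N(S)|<|S|$ into a zero block on rows $S$ and columns $[n]\setminus N(S)$ of the required combined size. The only cosmetic point is that in the converse you should briefly note $|N(S)| \le s-1$ so that $|T| = n - |N(S)| \ge n-s+1 \ge 1$ (which holds since $s\le n$), ensuring the chosen $t$ columns exist; this is implicit but worth stating. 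As a black-box dependency, Hall's theorem is a fair trade for Frobenius--K\"onig since the two are classically equivalent, and your framing makes the equivalence transparent.
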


\begin{lemma}[\cite{hall1948distinct}]\label{lemma-perm-lb-hall}
Let $A$ be an $n\times n$ 0-1 matrix with at least $\gamma n$ entries equal to 1 in every row and column. If $\perman(A)>0$, then $\perman(A)\geq \lfloor \gamma n \rfloor!$
\end{lemma}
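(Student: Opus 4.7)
The plan is to identify $\perman(A)$ combinatorially as the number of systems of distinct representatives (SDRs) of the row-sets $S_i := \{j \in [n] : A_{i,j} = 1\}$, and then to invoke Marshall Hall's 1948 counting refinement of the marriage theorem. Because the summand $\prod_i A_{i,\sigma(i)}$ in the definition of the permanent equals $1$ precisely when $\sigma$ is an SDR of $(S_1, \ldots, S_n)$ and equals $0$ otherwise, $\perman(A)$ counts exactly the number of SDRs. The hypothesis gives $|S_i| \geq \lceil \gamma n \rceil \geq r := \lfloor \gamma n \rfloor$, and the assumption $\perman(A) > 0$ combined with the marriage theorem yields Hall's condition $|\bigcup_{i \in I} S_i| \geq |I|$ for every $I \subseteq [n]$.

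It then suffices to prove the following counting statement: any family of $n$ sets of size at least $r \leq n$ satisfying Hall's condition admits at least $r!$ SDRs. I would prove this by induction on $n$, splitting the inductive step into two cases. In the \emph{critical} case, where some proper subset $T \subsetneq [n]$ is tight with $|\bigcup_{i \in T} S_i| = |T|$, every SDR matches $T$ to $N := \bigcup_{i \in T} S_i$ and matches the complement $[n] \setminus T$ inside $[n] \setminus N$. The first sub-problem is itself an instance of the theorem on $|T| < n$ sets with parameter $r$ (using that $S_i \subseteq N$ for $i \in T$ forces $r \leq |S_i| \leq |T|$), so induction gives at least $r!$ SDRs; the second sub-problem retains Hall's condition (apply Hall to $T \cup J$ for $J \subseteq [n] \setminus T$) and thus admits at least one SDR, yielding $\geq r!$ overall. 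In the \emph{non-critical} case, for every $x \in S_1$ the reduced family $(S_2 \setminus \{x\}, \ldots, S_n \setminus \{x\})$ still satisfies Hall's condition—a failure would manifest as a tight proper subset of $\{2,\ldots,n\}$, contradicting the case assumption—so induction with parameter $r - 1 \leq n - 1$ gives at least $(r-1)!$ SDRs, and summing over the $|S_1| \geq r$ disjoint choices of $x$ produces at least $r \cdot (r-1)! = r!$ SDRs.

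The main subtlety lies in the critical case: one must argue that the density hypothesis together with tightness of $T$ forces $r \leq |T|$, so that induction returns the full $r!$ rather than a weaker bound, and that the residual sub-problem on $[n] \setminus T$ cannot drop below one SDR despite its density parameter potentially collapsing to $r - |T| \leq 0$. Both points follow from careful applications of Hall's condition to $T$, to $T \cup J$, and to subsets of $\{2,\ldots,n\}$ obtained after deleting a representative. Substituting $r = \lfloor \gamma n \rfloor$ in the counting statement and noting $r \leq n$ yields $\perman(A) \geq \lfloor \gamma n \rfloor!$, as required.
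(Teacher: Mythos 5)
The paper does not prove this lemma; it simply cites Marshall Hall's 1948 paper for the result, so there is no internal argument to compare against. Your blind proof is correct and is in fact essentially Hall's own argument: translate $\perman(A)>0$ into the statement that the row-sets $S_i$ admit an SDR (so Hall's marriage condition holds), reduce to the counting claim that $n$ sets of size at least $r\le n$ satisfying Hall's condition admit at least $r!$ SDRs, and prove that claim by induction, splitting on whether some \emph{nonempty} proper subset $T$ is tight. Your case analysis is sound: in the critical case, tightness of $T$ with $N=\bigcup_{i\in T}S_i$, $|N|=|T|$, together with $S_i\subseteq N$ for $i\in T$ gives $r\le|S_i|\le|T|$, so the $T$-subfamily inherits parameter $r$ on fewer than $n$ sets and contributes $r!$, while the complement family $(S_i\setminus N)_{i\notin T}$ inherits Hall's condition via $|\bigcup_{i\in T\cup J}S_i|\ge|T|+|J|$ and contributes at least one SDR, so the product is $\ge r!$; in the non-critical case deleting any $x\in S_1$ preserves Hall's condition on $\{2,\dots,n\}$ because a violation would exhibit a tight nonempty proper subset, and $|S_1|\ge r$ disjoint choices each yielding $\ge(r-1)!$ completions gives $r\cdot(r-1)!=r!$. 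The only cosmetic issue is that you should say ``nonempty proper subset'' when defining the critical case (the empty set is always tight), and it is worth noting explicitly that $r\in\{0,1\}$ is handled trivially by the marriage theorem alone so the inductive appeals never need parameter $r-1<0$; neither point affects correctness.
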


The following corollary follows immediately from \Cref{lemma-perm-lb-hall}.
\begin{corollary}\label{cor-perm-lb-hall}
Let $A$ be a $(\gamma,\rho)$-dense $n\times n$ matrix. If $\perman(A)>0$, then $\perman(A)\geq \rho^n\cdot \lfloor \gamma n \rfloor!$
\end{corollary}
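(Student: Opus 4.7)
The plan is to reduce the statement to Lemma~\ref{lemma-perm-lb-hall} by thresholding $A$ at the level $\rho$. First I would define the $n \times n$ 0-1 matrix $A'$ by setting $A'_{i,j} = 1$ when $A_{i,j} \geq \rho$ and $A'_{i,j} = 0$ otherwise. Because $A$ is $(\gamma,\rho)$-dense, every row and every column of $A'$ contains at least $\lceil \gamma n \rceil \geq \gamma n$ ones, so $A'$ satisfies the row/column hypothesis of Lemma~\ref{lemma-perm-lb-hall} with the same density $\gamma$.

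Next I would exploit the entrywise inequality $A_{i,j} \geq \rho \cdot A'_{i,j}$ for all $i,j \in [n]$: when $A'_{i,j}=1$ this is the threshold condition, and when $A'_{i,j}=0$ it reduces to nonnegativity of $A_{i,j}$. Expanding the permanent as a sum over permutations, this inequality factors out a uniform $\rho^n$ and gives
\[
\perman(A) \;=\; \sum_{\sigma} \prod_{i \in [n]} A_{i,\sigma(i)} \;\geq\; \rho^{n} \sum_{\sigma} \prod_{i \in [n]} A'_{i,\sigma(i)} \;=\; \rho^{n}\,\perman(A').
\]

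The remaining step is to apply Lemma~\ref{lemma-perm-lb-hall} to $A'$, which requires $\perman(A') > 0$; this is the only point I expect to need any care, since the hypothesis supplies $\perman(A) > 0$ rather than $\perman(A') > 0$ directly (a contributing permutation for $A$ might pass through positive entries lying below the threshold $\rho$). My plan is to derive positivity of $\perman(A')$ from the density via Theorem~\ref{thm-Frobenius-König}: if $\perman(A') = 0$, then $A'$ would contain an $s \times t$ all-zero submatrix with $s+t = n+1$, forcing each of those $s$ rows to fit its at-least-$\lceil \gamma n \rceil$ ones into only the $s-1$ remaining columns; this forces $s+t \geq 2\lceil\gamma n\rceil + 2$, which clashes with $s+t = n+1$ precisely in the dense regime where this corollary is invoked. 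Given $\perman(A') > 0$, Lemma~\ref{lemma-perm-lb-hall} yields $\perman(A') \geq \lfloor \gamma n \rfloor!$, and multiplying through by $\rho^n$ delivers the stated bound $\perman(A) \geq \rho^n \cdot \lfloor \gamma n \rfloor!$.
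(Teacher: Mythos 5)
Your reduction is exactly the argument the paper has in mind when it says the corollary ``follows immediately'' from Lemma~\ref{lemma-perm-lb-hall}: threshold at $\rho$, factor out $\rho^n$ term-by-term in the permanent expansion, and invoke the lemma on the resulting $0$-$1$ matrix. The point you flag---that Lemma~\ref{lemma-perm-lb-hall} needs $\perman(A')>0$, while the hypothesis only hands you $\perman(A)>0$---is a genuine subtlety that the paper glosses over, and your Frobenius--K\"onig patch is the right fix: an $s\times t$ zero block in $A'$ with $s+t=n+1$ forces both $s\geq \lceil\gamma n\rceil+1$ and (by the symmetric column count, which you leave implicit but should state) $t\geq\lceil\gamma n\rceil+1$, hence $s+t\geq 2\lceil\gamma n\rceil+2>n+1$ once $\gamma>1/2$.

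Be aware, though, that this patch only closes the gap in the regime $\gamma>1/2$, exactly as you note. The corollary as literally written carries no constraint on $\gamma$, and in that generality it is actually false: take $n=2$, $\rho=1/2$, $\gamma=0$, and $A=\mathrm{diag}(0.01,0.01)$. This matrix is $(0,1/2)$-dense, has $\perman(A)=10^{-4}>0$, yet the claimed bound would be $\rho^n\lfloor\gamma n\rfloor!=1/4$. The corollary is only ever invoked in the paper (in the proof of \Cref{thm-complexity-sinkhorn}) with $\gamma\in(1/2,1]$, so nothing downstream breaks, but your proof should make the hypothesis $\gamma>1/2$ explicit rather than leaving it as a remark about ``the dense regime where this corollary is invoked.'' With that hypothesis stated, your argument is complete and is, if anything, more careful than the paper's one-line justification.
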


The following lower bound on the permanent of doubly stochastic
matrices was first conjectured by
Van der Waerden and later proved independently by Falikman~\cite{falikman1981proof} and Egorychev~\cite{egorychev1981solution}.

\begin{lemma}
For any doubly stochastic matrix $A$ of size $n\times n$, we have $\perman(A)\geq n!/n^n$.
\end{lemma}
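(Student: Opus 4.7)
The plan is to follow the Egorychev--Falikman strategy: cast the claim as a constrained minimization problem on the Birkhoff polytope and then show that the unique minimizer is the uniform matrix $J_n/n$. Concretely, let $\Omega_n$ denote the compact convex polytope of $n\times n$ doubly stochastic matrices; since $\perman$ is continuous, it attains its minimum on $\Omega_n$ at some matrix $A^{\star}$, and it suffices to prove that $A^{\star}=J_n/n$, whose permanent is $n!/n^n$ by direct expansion.

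First I would establish two structural properties of any minimizer $A^{\star}$. The easy property is that $\perman(A^{\star})>0$: by Birkhoff's theorem $A^{\star}$ is a convex combination of permutation matrices, so at least one of them is in its support and $\perman(A^{\star})\geq$ the minimum weight in the decomposition, which is positive for any doubly stochastic matrix not on a face whose permanent vanishes; by \Cref{thm-Frobenius-König} vanishing would require a zero submatrix of complementary size, and one checks this is incompatible with being a minimizer by perturbing a convex combination with $J_n/n$. Next, I would derive the first-order optimality condition: for Lagrange multipliers $u_i,v_j$ associated with the row and column sum constraints,
\[
\frac{\partial \perman(A^{\star})}{\partial A^{\star}_{i,j}} \;=\; u_i + v_j \quad \text{whenever } A^{\star}_{i,j}>0,
\]
with $\geq$ for entries where $A^{\star}_{i,j}=0$. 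Multiplying by $A^{\star}_{i,j}$ and summing yields $\sum_{i,j} u_i + v_j = n\cdot\perman(A^{\star})$, since each partial derivative equals the $(i,j)$-minor-permanent and the minor-permanents of a doubly stochastic matrix sum by rows to $\perman(A^{\star})$.

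The heart of the proof, and the main obstacle, is showing that the minorant permanents of $A^{\star}$ are all \emph{equal}; this is where the Alexandrov--Fenchel inequality for mixed discriminants (equivalently, the log-concavity of permanents of doubly stochastic convex combinations) enters. The key lemma to prove is: if $A$ is doubly stochastic and some entry $A_{i,j}$ satisfies $0<A_{i,j}$ with a strictly larger $(i,j)$-minor-permanent than some other $(i,k)$-minor-permanent, then one can construct a small perturbation within $\Omega_n$ that strictly decreases $\perman$. The classical derivation applies Alexandrov--Fenchel to columns of $A^{\star}$ viewed as vectors in the mixed-permanent form, yielding the inequality $\big(\partial_{i,j}\perman(A^{\star})\big)\big(\partial_{i,k}\perman(A^{\star})\big)\geq \perman(A^{\star})\cdot M$ for a suitable mixed quantity $M$, with equality conditions forcing proportionality of columns. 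Pushing these equality conditions through, combined with the first-order condition above, forces $u_i$ and $v_j$ to be constant, hence all $(i,j)$-minor permanents equal $\perman(A^{\star})$.

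Finally, from the equality of all minor permanents I would conclude $A^{\star}=J_n/n$: the relation $\sum_{j} A^{\star}_{i,j}\cdot\partial_{i,j}\perman(A^{\star}) = \perman(A^{\star})$ together with all minor permanents being equal forces row sums $\sum_j A^{\star}_{i,j}=1$ to distribute evenly, i.e.\ $A^{\star}_{i,j}=1/n$ for all $i,j$. A direct computation then gives $\perman(J_n/n) = n!/n^n$, completing the proof of the lemma. The genuinely hard step is the Alexandrov--Fenchel input proving strict log-concavity with the right equality characterization; the remaining steps are compactness, Lagrange multipliers, and the fact that Birkhoff-decomposable matrices with positive permanent have a regular structure that can be pinned down by the optimality conditions.
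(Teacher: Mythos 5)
The paper does not prove this lemma; it is the van der Waerden permanent conjecture, which the paper states and cites as a known theorem of Falikman and Egorychev and then uses as a black box. Your sketch outlines Egorychev's classical argument, and the overall strategy --- compactness on the Birkhoff polytope, first-order optimality conditions, an Alexandrov--Fenchel input to force all minor permanents equal, and finally identification of the minimizer with the uniform matrix --- is the standard and correct route. However, you name and then defer the central difficulty, namely the Alexandrov--Fenchel permanent inequality together with its sharp equality characterization, and that inequality \emph{is} the theorem; without proving it (or at least stating it precisely and invoking a reference) what you have written is an outline rather than a proof.

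Several intermediate steps are also imprecise or unnecessarily complicated. Positivity of $\perman(A^{\star})$ is immediate from Birkhoff's theorem: writing $A^{\star}=\sum_k \lambda_k P_k$ with $\lambda_1>0$ and $P_1$ the permutation matrix of some $\sigma$, one has $A^{\star}_{i,\sigma(i)}\geq\lambda_1$ for every $i$, hence $\perman(A^{\star})\geq\lambda_1^{n}>0$; neither \Cref{thm-Frobenius-König} nor a perturbation toward the uniform matrix is needed, and the bound you state (by $\lambda_1$ rather than $\lambda_1^{n}$) is not what the decomposition gives. The claim that the minor-permanents ``sum by rows to $\perman(A^{\star})$'' conflates $\sum_j \perman\bigl(A^{\star}(i|j)\bigr)$ with the weighted row expansion $\sum_j A^{\star}_{i,j}\,\perman\bigl(A^{\star}(i|j)\bigr)=\perman(A^{\star})$; only the latter holds and it is the one your derivation actually uses. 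Finally, the multiplier identity should read $\sum_i u_i + \sum_j v_j = n\,\perman(A^{\star})$, obtained by contracting the KKT relation $\perman\bigl(A^{\star}(i|j)\bigr)=u_i+v_j$ against $A^{\star}_{i,j}$ and summing over all $i,j$, not the ambiguous $\sum_{i,j} u_i+v_j$ as written.
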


\noindent \underline{\textbf{Tools about the Sinkhorn-Knopp algorithm.}} Given a non-negative $n \times n$ matrix, the Sinkhorn–Knopp algorithm, applied to achieve $(\mathbf{1},\mathbf{1})$-scaling, iteratively generates a sequence of matrices $A^{(0)}, A^{(1)}, \ldots$ as follows:
\begin{itemize}
\item For each $i,j\in [n]$, let $A^{(0)}_{i,j} = A_{i,j}/r_i(A)$;
\item For each integer $k>0$ and $i,j\in [n]$, if $k$ is odd, let $A^{(k)}_{i,j} = A^{(k-1)}_{i,j}/c_j(A^{(k-1)})$; 
otherwise, let $A^{(k)}_{i,j} = A^{(k-1)}_{i,j}/r_i(A^{(k-1)})$.
\end{itemize}

The following are some easy facts about $A^{(0)},A^{(1)},\cdots$ and the proof can be found in~\Cref{appendix-fact}.
\begin{fact}\label{fact-c-a-a0-a1}
Let $A$ be an $n\times n$ matrix. Then the following holds for $A^{(0)},A^{(1)},\cdots$:
\begin{enumerate}[(1)]
\item $A^{(k)}_{i,j} \in [0,1]$ for each $i\in [n],j\in[n]$ and $k\geq 0$.\label{item-first-fact-a0-a1}
\item Assume $A$ is $(\gamma,\rho)$-dense. Then $A^{(0)}_{i,j}\in (0,1/(\rho\gamma n)]$ for each $i,j\in[n]$, and every row and column of $A^{(0)}$ contains at least $\gamma n$ entries that are no less than $\rho/n$.\label{item-second-fact-c-a-a0-a1}
\item Assume $A$ is $(\gamma,\rho)$-dense. For each $k\geq 0$ and $i\in [n]$, if $k$ is even, then $c_i\left(A^{(k)}\right)\in [\rho\gamma ,1/(\rho\gamma)]$ and $r_i\left(A^{(k)}\right) = 1$. Otherwise, $r_i\left(A^{(k)}\right)\in [\rho\gamma ,1/(\rho\gamma)]$ and $c_i\left(A^{(k)}\right) = 1$.\label{item-forth-fact-c-a-a0-a1}
\end{enumerate}
\end{fact}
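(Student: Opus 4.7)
The plan is to prove the three parts sequentially, each by a short inductive argument rooted in the elementary fact that normalizing a non-negative vector by its sum produces a vector in $[0,1]$ summing to one.

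For part (1), I would proceed by induction on $k$. The base case $k=0$ follows because $A_{i,j}$ is a single summand of $r_i(A) = \sum_{k'} A_{i,k'}$, so $A_{i,j} \leq r_i(A)$ and hence $A^{(0)}_{i,j} \in [0,1]$. For the inductive step, $A^{(k)}_{i,j}$ is obtained from $A^{(k-1)}_{i,j}$ by dividing by either $c_j(A^{(k-1)})$ or $r_i(A^{(k-1)})$, each of which is a sum containing $A^{(k-1)}_{i,j}$ as one of its non-negative terms and so dominates it. Thus every entry remains in $[0,1]$.

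For part (2), the $(\gamma,\rho)$-density hypothesis guarantees that each row of $A$ contains at least $\lceil \gamma n \rceil \geq \gamma n$ entries of magnitude at least $\rho$, so $r_i(A) \geq \rho \gamma n$; since entries lie in $[0,1]$, we also have $r_i(A) \leq n$. Dividing gives $A^{(0)}_{i,j} \in (0, 1/(\rho \gamma n)]$. The density-preservation statement then follows directly: the at-least-$\gamma n$ entries of row $i$ with $A_{i,j} \geq \rho$ become entries of $A^{(0)}$ of size at least $\rho/r_i(A) \geq \rho/n$, and the column version is identical, using the column-side bound $c_j(A) \geq \rho \gamma n$ that also comes from the density hypothesis.

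For part (3), I would induct on $k$. As a base case, $r_i(A^{(0)}) = 1$ by construction, and $c_j(A^{(0)}) = \sum_i A_{i,j}/r_i(A)$; combining $r_i(A) \in [\rho \gamma n, n]$ from (2) with the column estimate $c_j(A) = \sum_i A_{i,j} \in [\rho \gamma n, n]$ yields $c_j(A^{(0)}) \in [\rho \gamma, 1/(\rho \gamma)]$. For the inductive step with $k$ odd, the algorithm normalizes columns, so $c_j(A^{(k)}) = 1$ is immediate; the row sums satisfy $r_i(A^{(k)}) = \sum_j A^{(k-1)}_{i,j}/c_j(A^{(k-1)})$, and sandwiching $c_j(A^{(k-1)}) \in [\rho \gamma, 1/(\rho \gamma)]$ against $r_i(A^{(k-1)}) = 1$ gives $r_i(A^{(k)}) \in [\rho \gamma, 1/(\rho \gamma)]$. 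The even case with $k>0$ is symmetric, swapping the roles of rows and columns.

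The argument is largely bookkeeping and no step poses a real obstacle. The only point requiring mild care is the base case of (3), where one needs both the row-side bound $r_i(A) \geq \rho \gamma n$ and its column-side counterpart $c_j(A) \geq \rho \gamma n$; both follow because the $(\gamma,\rho)$-density condition is imposed simultaneously on every row and every column.
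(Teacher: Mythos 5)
Your proof is correct. Parts (1) and (2) match the paper's reasoning (the paper marks (1) as immediate and handles (2) with the same bounds $r_i(A),c_j(A)\in[\rho\lceil\gamma n\rceil,n]$). The divergence is in part (3): the paper does not run a fresh induction. Instead it observes that part (2) already pins down the base case — the entries of $A^{(0)}$ lie in $(0,1/(\rho\gamma n)]$ and each row and column has at least $\gamma n$ entries $\geq \rho/n$, so $c_j(A^{(0)})\in[\rho\gamma,1/(\rho\gamma)]$ — and then invokes \Cref{lem-monotone-rsum-csum}, Sinkhorn's 1964 monotonicity result, which guarantees that the extreme row sums and column sums can only move toward $1$ as $k$ increases. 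Your approach replaces that citation with a direct induction: you sandwich $c_j(A^{(k-1)})$ or $r_i(A^{(k-1)})$ in $[\rho\gamma,1/(\rho\gamma)]$ against the constraint that the complementary sums equal $1$, and conclude the normalization step preserves the interval. Both are valid; the paper's version is a two-line corollary of an already-stated lemma, while yours is self-contained and does not rely on monotonicity. One small point of care that you did handle correctly: the paper's route as written still needs the $k=1$ base case $r_i(A^{(1)})\in[\rho\gamma,1/(\rho\gamma)]$ established by the same sandwiching computation you use, since the monotonicity lemma propagates in steps of two; your induction folds that in naturally.
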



The following lemma from~\cite{sinkhorn1964relationship} 
shows that the maximum (or minimal) row (or column) sum is monotonic in the Sinkhorn-Knopp algorithm. 

\begin{lemma}\label{lem-monotone-rsum-csum}
For any odd $k$, we have 
\[
\min_{i\in [n]}r_i\left(A^{(k)}\right) \leq  \min_{i\in [n]}r_i\left(A^{(k+2)}\right) \leq 1 \leq \max_{i\in [n]}r_i\left(A^{(k+2)}\right) \leq \max_{i\in [n]}r_i\left(A^{(k)}\right).
\]
Similarly, for any even $k$, we have
\[
\min_{j\in [n]}c_j\left(A^{(k)}\right) \leq  \min_{j\in [n]}c_j\left(A^{(k+2)}\right) \leq 1 \leq \max_{j\in [n]}c_j\left(A^{(k+2)}\right) \leq \max_{j\in [n]}c_j\left(A^{(k)}\right).
\]
\end{lemma}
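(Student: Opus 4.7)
The plan is to prove the row-sum statement for odd $k$; the column-sum statement for even $k$ is entirely symmetric (in fact it is the same statement with the roles of rows and columns swapped, after relabelling indices). Since $k$ is odd, $A^{(k)}$ is the result of a column-normalisation step, so $c_j(A^{(k)})=1$ for every $j$. Similarly, $k+2$ is odd, so $c_j(A^{(k+2)})=1$ for every $j$. Hence
\[
\sum_{i\in[n]} r_i\!\left(A^{(k+2)}\right)=\sum_{j\in[n]} c_j\!\left(A^{(k+2)}\right)=n,
\]
which already gives the two middle inequalities $\min_i r_i(A^{(k+2)})\le 1\le \max_i r_i(A^{(k+2)})$.

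For the outer inequalities, I would first write down $A^{(k+2)}$ explicitly in terms of $A^{(k)}$. Denote $B=A^{(k)}$ and $\rho_i=r_i(B)$. Step $k+1$ is a row-normalisation, producing $B'_{i,j}=B_{i,j}/\rho_i$; step $k+2$ is a column-normalisation, producing
\[
A^{(k+2)}_{i,j}=\frac{B_{i,j}/\rho_i}{\sum_{l\in[n]} B_{l,j}/\rho_l}.
\]
Summing this expression over $j$ gives the key identity
\[
r_i\!\left(A^{(k+2)}\right)=\frac{1}{\rho_i}\sum_{j\in[n]}\frac{B_{i,j}}{\sum_{l\in[n]} B_{l,j}/\rho_l}.
\]

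The point is now that the denominator $\sum_l B_{l,j}/\rho_l$ can be sandwiched using $c_j(B)=1$. Setting $\rho_{\max}=\max_i\rho_i$ and $\rho_{\min}=\min_i\rho_i$,
\[
\frac{1}{\rho_{\max}}=\frac{c_j(B)}{\rho_{\max}}\;\le\;\sum_{l\in[n]}\frac{B_{l,j}}{\rho_l}\;\le\;\frac{c_j(B)}{\rho_{\min}}=\frac{1}{\rho_{\min}}.
\]
Plugging the lower bound into the displayed formula upper-bounds each $r_i(A^{(k+2)})$ by $\rho_{\max}\sum_j B_{i,j}/\rho_i=\rho_{\max}$, whereas plugging the upper bound lower-bounds each $r_i(A^{(k+2)})$ by $\rho_{\min}$. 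Taking $\max_i$ and $\min_i$ respectively yields the two outer inequalities.

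There is really no serious obstacle in this lemma; the whole argument is mechanical once the explicit formula for $A^{(k+2)}$ in terms of $A^{(k)}$ is written down. The only item to be careful about is that the sandwich inequality on $\sum_l B_{l,j}/\rho_l$ uses $c_j(B)=1$, which is precisely the hypothesis that $k$ is odd; for the column-sum analogue one starts instead from even $k$ so that $r_i(A^{(k)})=1$, and the same calculation with rows and columns exchanged delivers the second displayed chain of inequalities.
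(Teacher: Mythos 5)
The paper simply cites this lemma from Sinkhorn (1964) and does not supply its own proof, so there is no internal argument to compare against. Your proof is correct and self-contained: for odd $k$, writing $B=A^{(k)}$ and $\rho_i=r_i(B)$, the explicit formula $A^{(k+2)}_{i,j}=\bigl(B_{i,j}/\rho_i\bigr)\big/\bigl(\sum_l B_{l,j}/\rho_l\bigr)$ together with the sandwich $1/\rho_{\max}\le\sum_l B_{l,j}/\rho_l\le 1/\rho_{\min}$ (which uses $c_j(B)=1$, i.e.\ that $k$ is odd) yields $\rho_{\min}\le r_i(A^{(k+2)})\le\rho_{\max}$ for every $i$, and $\sum_i r_i(A^{(k+2)})=n$ supplies the middle inequality. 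The even-$k$ column case is the same computation with rows and columns exchanged, as you say, and the implicit positivity of $\rho_i$ and of the column denominators is exactly the standing well-definedness assumption for the Sinkhorn iterates, so there is no hidden gap.
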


Given an $n\times n$ matrix $A$, 
The following facts are well-known in the literature ~\cite{linial1998deterministic}.

\begin{lemma}\label{lem-condition-converge}
If $\perman(A)>0$, then Sinkhorn-Knopp algorithm converges.
\end{lemma}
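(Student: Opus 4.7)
The plan is to use $\perman(A^{(k)})$ as a monovariant: I will show it is monotonically non-decreasing, bounded above by $1$, and strictly positive whenever $\perman(A)>0$, then use a compactness argument to upgrade convergence of the permanent into convergence of the row and column sums to $\mathbf{1}$.

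The first step is the permanent analysis. A single Sinkhorn--Knopp iteration divides every entry of each row (or column) by its current row (column) sum, so
\[
\perman\bigl(A^{(k+1)}\bigr) \;=\; \perman\bigl(A^{(k)}\bigr)\,\bigg/\prod_{j\in[n]}c_j\bigl(A^{(k)}\bigr)
\]
on column-normalization steps, with the analogous formula for rows. By \Cref{fact-c-a-a0-a1}(\ref{item-forth-fact-c-a-a0-a1}), the iterate $A^{(k)}$ always has either $r(A^{(k)})=\mathbf{1}$ or $c(A^{(k)})=\mathbf{1}$, so the coordinates of the other sum vector add up to exactly $n$; AM--GM then bounds the denominator by $1$, giving $\perman(A^{(k+1)})\geq\perman(A^{(k)})$. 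The upper bound $\perman(A^{(k)})\leq 1$ follows from the crude estimate $\perman(M)\leq\prod_i r_i(M)$, which is obtained by enlarging the sum over permutations to the sum over all functions $[n]\to[n]$, combined with the one-sided normalization. Since $\perman(A^{(0)})=\perman(A)\big/\prod_i r_i(A)>0$ inherits positivity from $\perman(A)>0$, the non-decreasing bounded sequence $\{\perman(A^{(k)})\}$ converges to a strictly positive limit; taking ratios of consecutive terms then yields $\prod_{j\in[n]}c_j(A^{(k)})\to 1$ along the even subsequence, and the analogous statement for rows along the odd one.

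The main obstacle is that convergence of $\prod_j c_j(A^{(k)})$ to $1$ does not by itself imply that each $c_j(A^{(k)})$ tends to $1$. My plan is to close this gap by a compactness argument: for even $k$ the vector $c(A^{(k)})$ lies in the compact simplex $\{c\geq\mathbf{0}:\sum_j c_j=n\}$, on which the continuous function $c\mapsto\prod_j c_j$ attains its maximum value $1$ \emph{uniquely} at $c=\mathbf{1}$, by the equality case of AM--GM. Every accumulation point of $\{c(A^{(k)})\}$ must therefore equal $\mathbf{1}$, so $\norm{c(A^{(k)})-\mathbf{1}}_1\to 0$; \Cref{lem-monotone-rsum-csum} yields the analogous convergence of row sums along the odd subsequence. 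Combined with the identity $r(A^{(k)})=\mathbf{1}$ for even $k$ (respectively $c(A^{(k)})=\mathbf{1}$ for odd $k$), this establishes convergence of the Sinkhorn--Knopp algorithm.
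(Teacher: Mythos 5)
The paper does not prove \Cref{lem-condition-converge}; it cites it as a known fact from Linial, Samorodnitsky, and Wigderson~\cite{linial1998deterministic}, and the machinery it does spell out (\Cref{lem-facts-ak}, in particular \eqref{eq-ub-ri}, \eqref{eq-increase-perm-ri}, \eqref{eq-ub-ci}, \eqref{eq-increase-perm-ci}) is precisely the permanent monovariant you re-derive. So your proposal is not a ``different route'' so much as a self-contained write-up of the standard one, with one genuine addition: the compactness argument that upgrades $\prod_j c_j\bigl(A^{(k)}\bigr)\to 1$ to $c\bigl(A^{(k)}\bigr)\to\mathbf{1}$ via the strict equality case of AM--GM on the simplex $\{c\geq\mathbf{0}:\sum_j c_j=n\}$. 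That step is the one most commonly glossed over, and it is the right way to close the gap.

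Your argument is correct. Two small remarks. First, you could have quoted \Cref{lem-facts-ak} directly for the monotonicity and the $\perman\leq 1$ bound instead of re-deriving them via the permutations-versus-all-functions estimate; what you wrote is fine, but it duplicates facts the paper already packages. Second, the appeal to \Cref{lem-monotone-rsum-csum} at the end is unnecessary and slightly off-target: for odd $k$ one has $c\bigl(A^{(k)}\bigr)=\mathbf{1}$ and $\sum_i r_i\bigl(A^{(k)}\bigr)=n$ with $\prod_i r_i\bigl(A^{(k)}\bigr)\to 1$, so the very same compactness argument applies verbatim to the row sums; no monotonicity lemma is needed.

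One point worth being explicit about in your own head, though it does not affect correctness here: what you prove is that the $\ell_1$ deviation of the row and column sums from $\mathbf{1}$ tends to zero, which is the notion of ``converges'' used throughout this paper (see the discussion around \Cref{thm-lower-bound-positive}, where ``diverges'' means the deviation stays bounded away from zero). The classical Sinkhorn--Knopp theorem makes the stronger claim that the matrix iterates $A^{(k)}$ themselves converge, and that stronger claim holds only under \emph{total} support (every nonzero entry lies on a positive diagonal), not under mere support $\perman(A)>0$. Your proof establishes the weaker statement, which is exactly what the paper needs, and you should not expect the compactness argument to give the stronger one without the extra hypothesis.
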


\begin{lemma}\label{lem-facts-ak}
For any $i\in [n]$, let $x^{(0)}_i = y^{(0)}_i = 1$.
For any $k > 0$ and $i\in [n]$, let
$x^{(k)}_i = 1/\prod_{j=0}^{k-1} r_i\left(A^{(j)}\right)$
and $y^{(k)}_i = 1/\prod_{j=0}^{k-1} c_i\left(A^{(j)}\right)$. Then we have the following facts:
\begin{itemize}
\item For any odd $k \geq 0$, we have 
\begin{align}\label{eq-ub-ri}
\prod_{i\in [n]} r_i\left(A^{(k)}\right) \leq 1
\end{align}
\begin{align}\label{eq-increase-perm-ri}
\perman\left(A^{(k+1)}\right) = \perman\left(A^{(k)}\right)\prod_{i\in [n]} r_i^{-1}\left(A^{(k)}\right). 
\end{align}
Similarly, for any even $k\geq 0$, we have 
\begin{align}\label{eq-ub-ci}
\prod_{i\in [n]} c_i\left(A^{(k)}\right) \leq 1
\end{align}
\begin{align}\label{eq-increase-perm-ci}
\perman\left(A^{(k+1)}\right) = \perman\left(A^{(k)}\right)\prod_{i\in [n]} c_i^{-1}\left(A^{(k)}\right). 
\end{align}
\item 
For any $k \geq 0$, 
\begin{align}\label{eq-ak-a0-relation}
A^{(k)} = \diag\left(x^{(k)}_1,\cdots,x^{(k)}_n\right)A^{(0)}\diag\left(y^{(k)}_1,\cdots,y^{(k)}_n\right), \quad \prod_{i\in [n]}x^{(k)}_i\geq 1, \quad \prod_{i\in [n]}y^{(k)}_i\geq 1.
\end{align}
\item 
For any $k \geq 0$, 
\begin{align}\label{eq-ak-aprime-relation}
A^{(k)} = \diag\left(\frac{x^{(k)}_1}{r_1(A)},\cdots,
\frac{x^{(k)}_n}{r_n(A)}\right)\cdot A\cdot\diag\left(y^{(k)}_1,\cdots,y^{(k)}_n\right).
\end{align}
\end{itemize}
\end{lemma}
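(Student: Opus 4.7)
The lemma bundles together several essentially structural identities, so my plan is to dispatch them in the order that the later items rely on the earlier ones. First I would verify the two permanent recursions \eqref{eq-increase-perm-ri} and \eqref{eq-increase-perm-ci}. These follow directly from the definition of the Sinkhorn--Knopp update: for odd $k$, the transition $A^{(k)}\mapsto A^{(k+1)}$ divides row $i$ by $r_i(A^{(k)})$, which is exactly multiplying $A^{(k)}$ on the left by $\diag(1/r_1(A^{(k)}),\dots,1/r_n(A^{(k)}))$; since the permanent is multilinear in the rows, it picks up a factor of $\prod_i r_i^{-1}(A^{(k)})$. The analogous statement for even $k$ follows from multilinearity in the columns.

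Next I would establish the AM--GM bounds \eqref{eq-ub-ri} and \eqref{eq-ub-ci}. The point is that for odd $k$, the immediately preceding step was a column normalization, so every column of $A^{(k)}$ sums to $1$; since the total mass of a matrix equals both the sum of row sums and the sum of column sums, we get $\sum_i r_i(A^{(k)}) = n$, and AM--GM on the nonnegative quantities $r_1(A^{(k)}),\dots,r_n(A^{(k)})$ yields $\prod_i r_i(A^{(k)}) \le 1$. The even case is symmetric, using that $A^{(k)}$ then has row sums equal to $1$.

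For the diagonal decomposition \eqref{eq-ak-a0-relation}, I would proceed by induction on $k$. The base case $k=0$ is immediate from $x^{(0)}_i=y^{(0)}_i=1$. For the inductive step, I would observe that for even $k$ we have $r_i(A^{(k)})=1$ for every $i$, so the definition of $x^{(k+1)}$ gives $x^{(k+1)} = x^{(k)}$; meanwhile the column-normalization step multiplies $A^{(k)}$ on the right by $\diag(1/c_j(A^{(k)}))$, which upgrades $y^{(k)}$ to $y^{(k+1)}$ exactly as defined. The odd case is dual. The product inequalities $\prod_i x^{(k)}_i \ge 1$ and $\prod_i y^{(k)}_i \ge 1$ then fall out by combining the definitions with the bounds \eqref{eq-ub-ri} and \eqref{eq-ub-ci}: odd-indexed factors in the product for $x$ are bounded above by $1$ by \eqref{eq-ub-ri}, while the even-indexed factors are exactly $1$ (since $r_i(A^{(j)})=1$ for even $j$), and taking reciprocals gives the lower bound. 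Finally, \eqref{eq-ak-aprime-relation} is obtained simply by substituting $A^{(0)}_{i,j} = A_{i,j}/r_i(A)$ into \eqref{eq-ak-a0-relation} and absorbing the factor $1/r_i(A)$ into the left diagonal matrix.

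No step here is a real obstacle; the only thing that needs a bit of care is aligning the index ranges in the definitions of $x^{(k)}_i$ and $y^{(k)}_i$ (products from $j=0$ to $k-1$) with the parity of $k$, so that the trivial factors (where $r_i(A^{(j)})=1$ for even $j$, or $c_i(A^{(j)})=1$ for odd $j$) are correctly identified and dropped during the induction. Once the parity bookkeeping is pinned down, the rest is essentially a matter of unwinding definitions.
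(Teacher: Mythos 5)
Your proof is correct, and since the paper cites these facts as well-known from Linial–Samorodnitsky–Wigderson without proof, your argument reconstructs the standard derivation: multilinearity of the permanent in rows/columns for \eqref{eq-increase-perm-ri}–\eqref{eq-increase-perm-ci}, conservation of total mass plus AM–GM for \eqref{eq-ub-ri}–\eqref{eq-ub-ci}, induction on $k$ for the diagonal factorization, and factoring the telescoping product $\prod_i x^{(k)}_i = \prod_{j=0}^{k-1}\bigl(\prod_i r_i(A^{(j)})\bigr)^{-1}$ into trivial (even-$j$) and $\ge 1$ (odd-$j$) pieces. The parity bookkeeping you flag is exactly the only delicate point, and you have it aligned correctly with the paper's convention that $A^{(0)}$ is row-normalized.
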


\noindent \underline{\textbf{Definitions about accuracy and deviation.}}
The following are some key quantities used in our proof.
\begin{definition}
A matrix $Z$ of size $n\times n$ is called \emph{standardized} if either $r_{i}(Z) = 1$ for each $i\in [n]$ or $c_{i}(Z) = 1$ for each $i\in [n]$.
A matrix $Z$ has \emph{column-accuracy} $\boldsymbol{\alpha} = (\alpha_1,\cdots,\alpha_n)$ if $r_{i}(Z) = 1$ for each $i\in [n]$ and 
\begin{align}
\forall j\in [n], \quad \abs{c_j(Z) - 1}\leq \alpha_j.
\end{align}
The definition of the \emph{row-accuracy} is similar.
We say a matrix $Z$ has accuracy $\boldsymbol{\alpha}$ if $Z$ has column-accuracy $\boldsymbol{\alpha}$ or row-accuracy $\boldsymbol{\alpha}$.
Given a matrix $Z$ with accuracy $\boldsymbol{\alpha}$,
define
\begin{align}
\alpha(Z) &\triangleq \frac{2}{n}\cdot \sum_{i\in [n]}\alpha_i.\label{eq-def-alpha}
\end{align}
\end{definition}
Intuitively, $\alpha(Z)$ depicts how far $Z$ is from a doubly stochastic matrix. Given a matrix $Z$, when the notation $\alpha(Z)$ is used, we always assume that $Z$ is standardized.

We say that an $n\times n$ matrix $Z$ has a maximum deviation $t$ if 
$\abs{r_i(Z) - 1}\leq t$ and $\abs{c_i(Z) - 1}\leq t$ for each $i\in [n]$. 

\newpage

\section{Upper bounds}\label{sec-upper-bounds}

In this section, we prove \Cref{thm-upper-bound-general} and \Cref{cor-upper-bound-most}.



Let $n,t,\varepsilon>0$. 
Let \(B\) be an \(n \times n\) matrix whose entries are drawn independently and uniformly from the interval \([0,t]\).
By Chernoff's bound, one can verify that with probability $1  - (\exp(-\Omega(n)))$ there exists some $\gamma\geq 6/11$ such that
$B/(\max_{i,j\in [n]} B_{i,j})$ is $(\gamma,2t/5)$-dense. 
Thus, \Cref{cor-upper-bound-most} follows from \Cref{thm-upper-bound-general}.

Given any nonnegative, nonzero matrix $B$,
let $A = B/(\max_{i,j\in[n]} B_{i,j})$.
Let $A^{(0)},A^{(1)},\cdots$ be the sequence of matrices constructed by the Sinkhorn-Knopp algorithm with $A$ as input,
and $B^{(0)},B^{(1)},\cdots$ be the sequence of matrices constructed by the Sinkhorn-Knopp algorithm with $B$ as input.
It is straightforward to verify that 
$A^{(0)} = B^{(0)}, A^{(1)} = B^{(1)}, \cdots$. 
Therefore, to prove \Cref{thm-upper-bound-general}, we may assume without loss of generality that 
$\max_{i,j\in[n]} B_{i,j} = 1$ and that $B$ is $\gamma$-dense.
Hence, \Cref{thm-upper-bound-general} is immediate by the following theorem.

\begin{theorem}\label{thm-main-ak-max-accuracy}
Let $\gamma \in (1/2,1]$, $\rho\in (0,1]$ and $\varepsilon > 0$. Let $A$ be a $(\gamma,\rho)$-dense $n\times n$ matrix provided as input to the Sinkhorn-Knopp algorithm, and let $A^{(0)},A^{(1)},\cdots$ denote the sequence of matrices generated by the algorithm.
Then there exists some 
\begin{align}\label{eq-def-k}
k = O\left(\rho^{-18}\left(2\gamma - 1\right)^{-5}\left(\log n - \log \varepsilon - \log \rho\right)\right)
\end{align}
such that for any $\ell\geq k$, we have 
\begin{align}\label{eq-main-ak-max-accuracy-goal}
\norm{r\left(A^{(\ell)}\right)- \boldsymbol{1}}_{1} + \norm{c\left(A^{(\ell)}\right)- \boldsymbol{1}}_{1} \leq \varepsilon.
\end{align}
\end{theorem}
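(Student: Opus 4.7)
The plan is to carry out the two-phase strategy hinted at in the technique overview. Set the threshold $t = 9n(2\gamma-1)/(20\gamma)$ separating the two regimes. In \emph{Phase 1}, I will show that while the total deviation $\sum_i \abs{r_i(A^{(k)}) - 1}$ still exceeds $t$, each iteration multiplies $\perman(A^{(k)})$ by at least $\exp(\Omega((2\gamma-1)^2 n))$; in \emph{Phase 2}, once the deviation drops below $t$, I will show that $\max\{\max_j c_j - 1,\, 1/\min_j c_j - 1\}$ contracts by a constant factor depending only on $\gamma$ and $\rho$ every two iterations, so $O(\log n - \log \varepsilon)$ further iterations suffice.

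For Phase 1, \Cref{lem-facts-ak} supplies both the identity $\perman(A^{(k+1)}) = \perman(A^{(k)})/\prod_i r_i(A^{(k)})$ and the bound $\prod_i r_i(A^{(k)}) \leq 1$. Combined with the uniform box $r_i \in [\rho\gamma,1/(\rho\gamma)]$ from \Cref{fact-c-a-a0-a1}, a convexity argument applied to $\log r_i$ around $1$ yields $\prod_i r_i(A^{(k)}) \leq \exp(-\Omega(t^2/n)) = \exp(-\Omega((2\gamma-1)^2 n))$ whenever the deviation exceeds $t$. Since $\perman(A^{(0)}) \geq \rho^n \lfloor\gamma n\rfloor!/(\rho\gamma n)^n$ by \Cref{cor-perm-lb-hall} together with the entry bound of \Cref{fact-c-a-a0-a1}, while $\perman(A^{(k)}) \leq 1$ throughout (all entries in $[0,1]$ and every row or column summing to $1$), taking logarithms bounds Phase 1 by $O((2\gamma-1)^{-2} \log n)$ iterations up to $\rho$-dependent factors.

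The key to Phase 2 is a structural invariant: every \emph{considerable} entry of $A$ (one with $A_{i,j} \geq \rho$) remains of order $\Theta(1/n)$ in every iterate $A^{(k)}$ throughout Phase 2. The upper direction $A^{(k)}_{i,j} = O(1/n)$ follows from a Huber-style observation that if this entry were much larger than $1/n$, the corresponding row or column sum would exceed $1 + t$. The lower direction $A^{(k)}_{i,j} = \Omega(1/n)$ is the delicate step: if some considerable entry were $o(1/n)$, then one of the diagonal scaling factors producing $A^{(k)}$ from $A^{(0)}$ via \Cref{eq-ak-a0-relation} must be extremely small, which, to keep row and column sums near $1$, forces the remaining factors into a highly imbalanced configuration. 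I will argue that the only way to reconcile the row- and column-sum constraints with the $O(1/n)$ cap on considerable entries is for some $n/2 \times n/2$ submatrix of $A^{(k)}$ to consist almost entirely of negligible entries, which in turn pushes the total deviation above $t$ and contradicts being in Phase 2. This is precisely where $\gamma > 1/2$ enters: for $\gamma \leq 1/2$ the tiny-submatrix configuration is compatible with bounded deviation, which is consistent with the lower bound result of the paper.

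Granted the invariant, each row of $A^{(k)}$ contains at least $\gamma n > n/2$ entries of order $\Theta(1/n)$, so the sum of the top $n/2$ entries in any row is bounded above by some $\ell = 1 - \Omega((2\gamma-1)/\gamma) < 1$. Writing $c_j(A^{(k+2)})$ as a linear combination of $c_1(A^{(k)}),\ldots,c_n(A^{(k)})$ whose weights form a row of $A^{(k+1)}$, and handling $1/\min_j c_j(A^{(k+2)})$ symmetrically via reciprocals, the cap $\ell$ on the top $n/2$ weights yields the desired two-step contraction. Iterating drives the $\ell_\infty$-error of column and row sums to $O(\varepsilon/n)$ after $O((\log n - \log \varepsilon))$ iterations (absorbing a constant-in-$n$ prefactor depending on $\gamma$ and $\rho$), whence the $\ell_1$-error is at most $\varepsilon$. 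The main obstacle will be the lower half of the structural invariant: pinning down a quantitatively tight combinatorial statement ruling out the imbalanced scaling configuration, with enough slack to yield the $(2\gamma-1)^{-5}$ and $\rho^{-18}$ exponents claimed in the theorem. Both the Huber-style upper bound and the tiny-submatrix lower bound on considerable entries must be proved with explicit constants, and the contraction factor $\ell$ in Phase 2 must be tracked carefully through the chain of row/column comparisons.
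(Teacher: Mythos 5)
Your proposal matches the paper's own proof essentially step for step: the two-phase split at the same threshold, the permanent-growth argument for Phase 1 (Lemma~\ref{lemma-ub-ci-ri} and Theorem~\ref{thm-complexity-sinkhorn}), the structural invariant that considerable entries remain $\Theta(1/n)$ via the Huber-style upper bound (Lemma~\ref{lem-upper-bound-max-element}) and the tiny-submatrix contradiction for the lower bound (Lemma~\ref{lem-lower-bound-elements}), and the two-step contraction of the maximum deviation via a linear combination of column sums weighted by a row with a cap $\ell < 1$ on its top $n/2$ entries (Lemma~\ref{lem-max-accuracy-decrease}). You have correctly identified both the architecture and the delicate step (the quantitative lower-bound combinatorics), so filling in the details would reproduce the paper's argument.
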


In the following, we prove \Cref{thm-main-ak-max-accuracy}.

\subsection{Rapid Decay of Error in Phase 1}
In this section, we prove the following theorem.
\begin{theorem}\label{thm-complexity-sinkhorn}
Assume the condition of \Cref{thm-main-ak-max-accuracy}.
For any $t >0$, define 
\begin{align}\label{eq-def-K}
K = \left\{k\geq 0\mid \norm{r\left(A^{(k)}\right)- \boldsymbol{1}}_1 + \norm{c\left(A^{(k)}\right)- \boldsymbol{1}}_1 > tn\right\}.
\end{align}
Then we have 
$\abs{K} \leq 8t^{-2}(\log n - \log \rho)$.
\end{theorem}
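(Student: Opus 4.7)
The plan is to use $\perman(A^{(k)})$ as a potential function, in the spirit of \cite{linial1998deterministic}. By identities \eqref{eq-increase-perm-ri} and \eqref{eq-increase-perm-ci} of \Cref{lem-facts-ak}, each Sinkhorn--Knopp step multiplies the permanent by $\prod_i r_i^{-1}(A^{(k)})$ or $\prod_j c_j^{-1}(A^{(k)})$, and by \eqref{eq-ub-ri}/\eqref{eq-ub-ci} each such factor is at least $1$, so $\perman(A^{(k)})$ is monotonically non-decreasing. The argument reduces to three quantitative ingredients: a universal cap $\perman(A^{(k)}) \leq 1$; a lower bound $\perman(A^{(0)}) \geq \rho^n/n^n$; and a multiplicative gain of at least $e^{t^2 n/8}$ on every $k \in K$.

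The cap follows because each $A^{(k)}$ is row- or column-stochastic by \Cref{fact-c-a-a0-a1}(\ref{item-forth-fact-c-a-a0-a1}), and relaxing the sum over permutations to a sum over all functions $[n] \to [n]$ gives $\perman(A^{(k)}) \leq \prod_i r_i(A^{(k)}) = 1$ (or the column analogue). For the initial lower bound, $\gamma > 1/2$ rules out any $s \times t$ all-zero submatrix with $s + t = n + 1$ through a short counting argument against the dense rows, so $\perman(A) > 0$ by \Cref{thm-Frobenius-König}; \Cref{cor-perm-lb-hall} then gives $\perman(A) \geq \rho^n \lfloor \gamma n \rfloor! \geq \rho^n$, and dividing by $\prod_i r_i(A) \leq n^n$ via \eqref{eq-ak-aprime-relation} at $k = 0$ yields $\perman(A^{(0)}) \geq \rho^n/n^n$.

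The main technical step is the multiplicative gain, which I would derive from the Pinsker-style inequality: whenever $x_1,\dots,x_n > 0$ satisfy $\sum_j x_j = n$ and $\sum_j |x_j - 1| > tn$, $\prod_j x_j \leq e^{-t^2 n/8}$. Setting $\Delta_j := x_j - 1$, the constraint $\sum_j \Delta_j = 0$ forces $\sum_{\Delta_j > 0}\Delta_j = \sum_{\Delta_j < 0}|\Delta_j| \geq tn/2$. Combining $\log(1+x) \leq x$ for $x \geq 0$ with the sharper $\log(1+x) \leq x - x^2/2$ on $(-1, 0]$ and summing gives $\sum_j \log x_j \leq -\tfrac{1}{2}\sum_{\Delta_j<0}\Delta_j^2$; Cauchy--Schwarz on the negative-deviation coordinates then produces the desired $-t^2 n/8$. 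For $k \in K$ (WLOG even, so $\|r(A^{(k)}) - \boldsymbol{1}\|_1 = 0$ and the hypothesis forces $\sum_j |c_j(A^{(k)}) - 1| > tn$, while $\sum_j c_j(A^{(k)}) = n$ from row-stochasticity), this inequality applied to $x_j = c_j(A^{(k)})$ and plugged into \eqref{eq-increase-perm-ci} delivers the claimed gain.

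Combining the three ingredients and using monotonicity so that iterations outside $K$ cannot decrease the potential, the total log-permanent increase is at least $|K| \cdot t^2 n / 8$ and at most $\log 1 - \log(\rho^n/n^n) = n(\log n - \log \rho)$, whence $|K| \leq 8(\log n - \log \rho)/t^2$. The main obstacle I anticipate is keeping the Pinsker-style constant sharp enough that no factor of $\rho$ or $\gamma$ creeps in; the trick is to discard the positive-deviation $\Delta_j^2$ contribution entirely and apply the quadratic bound only on the negative-deviation side, where each $|\Delta_j| \leq 1$ automatically makes the Taylor estimate uniform---rather than invoking the wider range $[\rho\gamma, 1/(\rho\gamma)]$ from \Cref{fact-c-a-a0-a1}, which would contaminate the constant.
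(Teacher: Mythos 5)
Your proposal is correct and follows essentially the same route as the paper's proof: you use $\perman(A^{(k)})$ as a monotone potential with the same cap $\leq 1$ and the same $\rho^n/n^n$-type lower bound on $\perman(A^{(0)})$, and your Pinsker-style inequality is precisely the paper's \Cref{lemma-ub-ci-ri}, obtained by the identical device of dropping the quadratic term on the positive-deviation coordinates and applying Cauchy--Schwarz only on the negative-deviation side (where $|\Delta_j| < 1$ makes the Taylor bound $\log(1+x) \le x - x^2/2$ uniform).
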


The follow lemma is used in the proof of \Cref{thm-complexity-sinkhorn}. Our proof is inspired by Linial et al.~\cite{linial1998deterministic}, but it departs at a critical technical juncture. The original analysis simply neglects the cubic error term, whereas in our setting this term actually dominates the linear and quadratic contributions. To tame it, we refrain from expanding the product of all row (or column) sums at once; instead, we split the expansion into two parts—the product of sums greater than 1 and the product of sums less than 1—so that the series needs to be carried only up to the quadratic terms.

\begin{lemma}\label{lemma-ub-ci-ri}
Under the condition of \Cref{thm-complexity-sinkhorn},
we have for each even $k\in K$, 
\begin{align*}
\prod_{i\in [n]}c_i\left(A^{(k)}\right)\leq \exp\left(- \frac{nt^2}{8}\right).
\end{align*}
Similarly, for each odd $k\in K$, 
\begin{align*}
\prod_{i\in [n]}r_i\left(A^{(k)}\right)\leq \exp\left(- \frac{nt^2}{8}\right).
\end{align*}
\end{lemma}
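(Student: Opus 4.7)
\textbf{Proof proposal for Lemma \ref{lemma-ub-ci-ri}.}
I would treat only the even case; the odd case follows by the analogous argument with rows and columns swapped. Fix an even $k \in K$. Because $k$ is even, Fact \ref{fact-c-a-a0-a1}(3) gives $r_i(A^{(k)}) = 1$ for every $i$, hence $\|r(A^{(k)}) - \boldsymbol{1}\|_1 = 0$ and the definition of $K$ reduces to $\|c(A^{(k)}) - \boldsymbol{1}\|_1 > tn$. Moreover, since all rows sum to $1$, the total mass satisfies $\sum_{j} c_j(A^{(k)}) = n$, so $\sum_j \bigl(c_j(A^{(k)}) - 1\bigr) = 0$, which means the positive and negative deviations are equal in magnitude; each therefore exceeds $tn/2$.

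Write $P_+ = \{j : c_j(A^{(k)}) > 1\}$ and $P_- = \{j : c_j(A^{(k)}) \le 1\}$. For $j \in P_+$ set $u_j = c_j(A^{(k)}) - 1 > 0$, and for $j \in P_-$ set $v_j = 1 - c_j(A^{(k)}) \in [0,1)$ (the strict upper bound coming from $c_j \ge \rho\gamma > 0$ in Fact \ref{fact-c-a-a0-a1}(3)). Following the hint, I split $\log \prod_j c_j(A^{(k)}) = \sum_{j \in P_+} \log(1 + u_j) + \sum_{j \in P_-} \log(1 - v_j)$ and apply a different bound to each piece. On $P_+$ I use the one-sided estimate $\log(1 + u_j) \le u_j$; on $P_-$, since every term of the Taylor expansion $\log(1 - v_j) = -\sum_{m \ge 1} v_j^m/m$ is nonpositive, I can keep the first two terms and discard the rest, giving $\log(1 - v_j) \le -v_j - v_j^2/2$. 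This is exactly the asymmetric truncation the lemma statement alludes to: the cubic and higher terms, which would be problematic on $P_+$ where $u_j$ can approach $1/(\rho\gamma) - 1$, are sidestepped because on $P_+$ I never go past the linear term.

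Summing the two bounds and using $\sum_{j \in P_+} u_j = \sum_{j \in P_-} v_j$ collapses the linear contributions, leaving
\begin{equation*}
\log \prod_{j \in [n]} c_j\bigl(A^{(k)}\bigr) \le -\tfrac{1}{2}\sum_{j \in P_-} v_j^2.
\end{equation*}
A Cauchy--Schwarz (or power-mean) step finishes the calculation: $\sum_{j \in P_-} v_j^2 \ge \bigl(\sum_{j \in P_-} v_j\bigr)^2 / |P_-| \ge (tn/2)^2 / n = t^2 n / 4$, so the right-hand side is at most $-t^2 n/8$, and exponentiation yields the claimed bound $\prod_j c_j(A^{(k)}) \le \exp(-nt^2/8)$.

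I do not foresee a serious obstacle; the only delicate point is justifying that one may retain only linear terms on $P_+$ while retaining quadratic terms on $P_-$. This is exactly the asymmetry enabled by the fact that $\log(1+\cdot)$ is concave but $\log(1-\cdot)$ is such that all Taylor coefficients (after negation) have the same sign, so truncation produces a clean upper bound. With this asymmetric truncation in place, the identity $\sum u_j = \sum v_j$ cancels the linear pieces without any need to control $u_j$ beyond positivity, and Cauchy--Schwarz completes the proof. The odd case is identical after interchanging the roles of $r_i$ and $c_i$ and invoking the corresponding parts of Fact \ref{fact-c-a-a0-a1}(3).
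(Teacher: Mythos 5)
Your proof is correct and is essentially identical to the paper's: both split the indices by sign of $c_j - 1$, bound the positive-deviation factors by $\exp(\text{linear term})$, bound the negative-deviation factors by $\exp(\text{linear} - \text{quadratic}/2)$ using $\ln(1+x)\le x - x^2/2$ for $x\le 0$, cancel the linear pieces via zero-sum, and finish with Cauchy--Schwarz on the quadratic piece. The only cosmetic difference is that you apply $\log(1+u)\le u$ termwise on $P_+$ while the paper first invokes AM--GM and then $1+x\le e^x$, but these yield the same bound $\exp(L)$.
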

\begin{proof}
Assume \emph{w.l.o.g.} that $k$ is even.
For simplicity, let $x_i = c_i\left(A^{(k)}\right) - 1$ for each $i\in [n]$.
By \Cref{fact-c-a-a0-a1}, we have $r_i\left(A^{(k)}\right) = 1$ for each $i\in [n]$.
Combined with \eqref{eq-def-K}, we have 
\begin{align}\label{eq-sum-absx}
\sum_{i\in [n]}\abs{x_i} = \norm{c\left(A^{(k)}\right)- \boldsymbol{1}}_1 > tn.
\end{align}
Moreover, by 
\[\sum_{i\in [n]}c_i\left(A^{(k)}\right) = \sum_{i\in [n]}r_i\left(A^{(k)}\right) = n,\]
we have 
\begin{align}\label{eq-sum-x}
\sum_{i\in [n]}x_i = 0.
\end{align}
Let $R \triangleq \{i\in [n]\mid x_i>0\}$ and $L \triangleq \sum_{i\in R} x_i$.
Together with the AM–GM inequality and the elementary bound $1+x \le e^{x}$ valid for all real $x$, we have
\begin{align}\label{eq-prod-x-upperbound}
\prod_{i\in R} (1+x_i)\leq \left(1+\frac{L}{\abs{R}}\right)^{\abs{R}} \leq \exp(L).
\end{align}
By \eqref{eq-sum-absx} and \eqref{eq-sum-x}, we have 
\begin{align}\label{eq-sum-negx}
\sum_{i \not\in R} x_i = -\sum_{i\in R} x_i = -L < -\frac{tn}{2}.
\end{align}
Combined with the Cauchy-Schwarz inequality, we have
\begin{align}\label{eq-sum-x2}
\sum_{i\not\in R}x^2_i \geq \frac{L^2}{n- \abs{R}}.
\end{align}
In addition, for each $i\not\in R$, we have $-1< x_i = c_i\left(A^{(k)}\right) - 1 \leq 0$.
Thus,
\[\ln (1+x_i) \leq x_i - \frac{x_i^2}{2}.\]
Therefore, 
\[1+x_i \leq \exp\left(x_i - \frac{x_i^2}{2}\right).\]
Combined with \eqref{eq-sum-x2} and \eqref{eq-sum-negx}, we have
\begin{align*}
\prod_{i\not\in R}c_i\left(A^{(k)}\right)&= \prod_{i\not\in R}(1+ x_i) \leq \prod_{i\not\in R}\exp\left(x_i - \frac{x_i^2}{2}\right) = \exp\left(\sum_{i\not\in R}\left(x_i - \frac{x_i^2}{2}\right)\right)\\
&\leq \exp\left(-L- \frac{L^2}{2(n-\abs{R})}\right).
\end{align*}
Combined with \eqref{eq-prod-x-upperbound} and \eqref{eq-sum-negx}, we have
\begin{align*}
\prod_{i\in [n]}c_i\left(A^{(k)}\right)&= \prod_{i\in [n]}(1+ x_i) = \prod_{i\in R}(1+ x_i)\prod_{i\not\in R}(1+ x_i)\leq   \exp(L)\cdot\exp\left(-L- \frac{L^2}{2(n-\abs{R})}\right)\\
& \leq \exp\left(- \frac{L^2}{2n}\right)\leq \exp\left(- \frac{nt^2}{8}\right).
\end{align*}
\end{proof}

Now we can prove \Cref{thm-complexity-sinkhorn}.
\begin{proof}
Assume for contradiction that 
\begin{align}\label{eq-lb-K}
\abs{K}> \log_{\exp(nt^2/8)} (n^n\rho^{-n}) = \frac{n(\log n - \log \rho)}{n t^2/8} = 8t^{-2}(\log n - \log \rho).
\end{align}
Let $L$ be the maximum number in $K$. 
By \eqref{eq-increase-perm-ri} and \eqref{eq-increase-perm-ci}, we have 
\begin{align*}
\perman\left(A^{(L+1)}\right) &= \perman\left(A^{(0)}\right)\left(\prod_{2k\in [L]}\prod_{i\in [n]} c_i^{-1}\left(A^{(2k)}\right)\right)\left(\prod_{2k+1\in [L]}\prod_{i\in [n]} r_i^{-1}\left(A^{(2k+1)}\right)\right).
\end{align*}
Combined with \eqref{eq-ub-ri} and \eqref{eq-ub-ci},
we have 
\begin{align*}
\perman\left(A^{(L+1)}\right) &\geq \perman\left(A^{(0)}\right)\left(\prod_{2k\in K}\prod_{i\in [n]} c_i^{-1}\left(A^{(2k)}\right)\right)\left(\prod_{2k+1\in K}\prod_{i\in [n]} r_i^{-1}\left(A^{(2k+1)}\right)\right).
\end{align*}
Combined with \Cref{lemma-ub-ci-ri},
we have 
\begin{align}\label{eq-perm-alplusone}
\perman\left(A^{(L+1)}\right) &\geq \perman\left(A^{(0)}\right)\exp\left( \frac{n\abs{K}t^2}{8}\right).
\end{align}
Moreover, by \Cref{thm-Frobenius-König} and $\gamma\in (1/2,1]$, we have $\perman(A)>0$.
Combined with 
\Cref{cor-perm-lb-hall} we have 
$\perman(A) \geq \rho^n\cdot\lfloor \gamma n \rfloor!$.
Hence,
\[\perman\left(A^{(0)}\right) \geq \rho^n\cdot \frac{\lfloor \gamma n \rfloor!}{n^n}.\]
Combined with \eqref{eq-lb-K} and \eqref{eq-perm-alplusone},
we have 
\begin{align}
\perman\left(A^{(L+1)}\right) >\lfloor \gamma n \rfloor! \geq 1.
\end{align}
However, by \Cref{fact-c-a-a0-a1} we have either $c_i\left(A^{(L+1)}\right) = 1$ for each $i\in [n]$, or $r_i\left(A^{(L+1)}\right) = 1$ for each $i\in [n]$.
Hence, we have $\perman\left(A^{(L+1)}\right) \leq 1$, a contradiction.
Thus, we have $\abs{K}\leq 8t^{-2}(\log n - \log \rho)$.

\end{proof}




\subsection{Maintaining the Original Dense Structure During Phase 2}
In this subsection, we demonstrate that the dense structure of the original matrix \(A\) is preserved in the scaled matrix \(A^{(k)}\) throughout Phase 2; specifically, the normalized matrix \(A^{(k)}/\max_{i,j} A^{(k)}_{i,j}\) remains \(\gamma\)-dense.
Our main result is the following lemma, which shows that a majority of elements in $A^{(k)}$ are lower bounded if $\alpha\left(A^{(k)}\right)$ is upper bounded.

\begin{lemma}\label{lem-number-of-large-entries}
Assume the condition of \Cref{thm-main-ak-max-accuracy}. Given any $k\geq 0$ where $\alpha\left(A^{(k)}\right)< 1 - 1/(2\gamma)$, 
let 
\[\theta \triangleq \frac{1}{27}\cdot \rho^{15}\gamma^5\left(2\gamma\left(1 -\alpha\left(A^{(k)}\right)\right) - 1\right)\left(2\gamma - 1 - \alpha\left(A^{(k)}\right)\right)^3.\]
We have 
\begin{align*}
\forall i\in [n], \quad
\abs{\left\{t\mid A^{(k)}_{i,t}> \frac{\theta}{n}\right\}}\geq \lceil\gamma n \rceil,\quad 
\abs{\left\{t\mid A^{(k)}_{t,i}>\frac{\theta}{n}\right\}}\geq \lceil\gamma n \rceil.
\end{align*}
\end{lemma}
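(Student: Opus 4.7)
The plan is to reduce the cardinality statement to a per-entry bound: for every $(i,j)$ with $A_{i,j}\geq\rho$, I will show $A^{(k)}_{i,j}>\theta/n$. Once this holds, the $(\gamma,\rho)$-density of $A$ (every row and column contains at least $\lceil\gamma n\rceil$ entries $\geq\rho$) immediately yields both inequalities in the conclusion. Using the representation \eqref{eq-ak-aprime-relation}, I would write $A^{(k)}_{i,j}=u_i A_{i,j}v_j$ with $u_i\triangleq x^{(k)}_i/r_i(A)$ and $v_j\triangleq y^{(k)}_j$, so it suffices to establish $u_i v_j>\theta/(\rho n)$ for every considerable pair. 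Without loss of generality $k$ is even, so by \Cref{fact-c-a-a0-a1} we have $r_i(A^{(k)})=1$ for each $i$, while the column-accuracy hypothesis gives $\sum_j |c_j(A^{(k)})-1|\leq (n/2)\alpha(A^{(k)})$.

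I would then proceed by contradiction, assuming $u_{i_0}v_{j_0}\leq\theta/(\rho n)$ for some considerable pair $(i_0,j_0)$; say because $v_{j_0}$ is abnormally small. The row identity $u_{i_0}\sum_j A_{i_0,j}v_j=1$, together with the fact that $\geq\lceil\gamma n\rceil$ indices $j$ satisfy $A_{i_0,j}\geq\rho$, forces either $u_{i_0}$ itself or most of the $v_j$'s in the considerable support of row $i_0$ to be large. Either branch propagates the anomaly: a large $u_{i_0}$ inflates the whole of row $i_0$ in $A^{(k)}$, while the forced-large $v_j$'s push $c_j(A^{(k)})=v_j\sum_i u_i A_{i,j}$ above $1$. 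The hypothesis $\gamma>1/2$ now becomes decisive via a Hall-type majority: no set of fewer than $\lceil\gamma n\rceil$ columns can cover the considerable support of every row, and symmetrically for rows, so the anomaly cannot be confined to a small index set. Chaining the compensation through a row-to-column-to-row cycle funnels the deficit into an $(n-\lceil\gamma n\rceil)\times(n-\lceil\gamma n\rceil)$ rectangle of entries each bounded by $\theta/n$; the rows intersecting this rectangle must then draw essentially their entire unit mass from the remaining $\lceil\gamma n\rceil$ columns, driving $\sum_j|c_j(A^{(k)})-1|$ strictly above $(n/2)(1-1/(2\gamma))$ and contradicting $\alpha(A^{(k)})<1-1/(2\gamma)$.

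The main obstacle is the quantitative bookkeeping of this three-step compensation chain. Each propagation step loses a density-dependent factor; together these produce the cubic dependence $(2\gamma-1-\alpha(A^{(k)}))^3$ and the linear factor $(2\gamma(1-\alpha(A^{(k)}))-1)$ appearing inside $\theta$, while the prefactor $\rho^{15}\gamma^5/27$ absorbs the nested minimum-entry losses needed to keep each step nontrivial. The delicate regime is the double boundary $\alpha(A^{(k)})\uparrow 1-1/(2\gamma)$ and $\gamma\downarrow 1/2$: the bound must degenerate smoothly to zero there but not sooner, which requires exploiting the exact density condition rather than a weaker \emph{every row has $\Omega(n)$ large entries} statement, and tracking weighted (rather than unweighted) column-sum deviations throughout. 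Isolating a clean one-step compensation lemma that quantifies precisely how a single abnormally small $v_j$ converts into column-sum excess should make the overall reduction conceptually routine, leaving the tight constant-tracking as the genuine technical work.
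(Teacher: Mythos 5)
Your high-level strategy — reduce the cardinality claim to a per-entry lower bound, express $A^{(k)}=XAY$, and argue by contradiction that a single abnormally small scaled entry propagates through rows and columns into a contradiction with $\alpha(A^{(k)})<1-1/(2\gamma)$ — matches the paper's route in broad outline. However, there is a structural gap that your sketch does not acknowledge, and it is exactly where the real work in the paper's proof lies.

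The paper's argument is a genuine two-stage reduction. It first proves a separate \emph{upper bound} on every entry (Lemma~\ref{lem-upper-bound-max-element}: $A^{(k)}_{i,j}\le 3/(\rho^3\gamma(2\gamma-1-\alpha)n)$), obtained by a mass-counting argument over a $|C|\times|R|$ block and having nothing to do with a contradiction about small entries. Only then does it invoke a second lemma (Lemma~\ref{lem-lower-bound-elements}) that \emph{takes that upper bound $b/n$ as a hypothesis} and uses it at every stage of the propagation: a small $x_k$ forces some $y_j$ to be large \emph{because column $j$ has $\rho\gamma$ mass}; that large $y_j$ then forces all $x_i$ with $A^{(0)}_{i,j}\ge\rho/n$ to be small \emph{because those $B_{i,j}$ must not exceed $b/n$}; the product constraint $\prod x_i\ge1$ together with $|R|>n/2$ (a pigeonhole, not a Hall-type argument) forces some $x_{i'}$ with $i'\notin R$ to be large; and again $B_{i',t}\le b/n$ forces the $y_t$ in row $i'$'s considerable support to be small. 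The contradiction then comes from a \emph{large} $R\times C$ rectangle (both sets of size $\ge\gamma n$, not the $(n-\gamma n)$-sized complement you describe) whose required mass $(2\gamma(1-\alpha)-1)n$ cannot be met when all $x_r y_t$ factors are small. The exponent $3$ on $(2\gamma-1-\alpha)$ and the exponent $15$ on $\rho$ are not the result of three generic compensation steps; they come from substituting the explicit $a=1/(\rho\gamma)$ and $b=3/(\rho^3\gamma(2\gamma-1-\alpha))$ into $\theta=\rho^6\gamma^2(2\gamma(1-\alpha)-1)/(a^3b^3)$.

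Your proposal omits the upper-bound stage entirely, and its propagation step \textquotedblleft either $u_{i_0}$ itself or most of the $v_j$'s \dots must be large\textquotedblright\ is a disjunction the paper never faces, precisely because the pre-established entry cap $b/n$ pins down the sign of each implication and keeps the chain linear. Without that cap you have no leverage to turn a large $y_j$ into uniformly small $x_i$ on a set of size $\ge\gamma n$, which is the step that actually produces the doomed rectangle. As written, the chain does not close, and saying the constant-tracking is \textquotedblleft the genuine technical work\textquotedblright\ conflates bookkeeping with a missing lemma: the bookkeeping is easy once you have the entry upper bound, and hard-to-impossible without it.
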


The following lemma, adapted from \cite{huber2008fast},
provides an upper bound for the entries in the scaled matrix obtained in the Sinkhorn-Knopp algorithm.
While the original lemma applies only to 0–1 matrices, the modified lemma extends to nonnegative matrices.

\begin{lemma}\label{lem-upper-bound-max-element}
Let \( \rho \in (0,1] \) and \(\gamma \in (1/2,1]\). Let \( A \) be a $(\gamma,\rho)$-dense \( n \times n \) matrix, and let \( X \) and \( Y \) be diagonal matrices with positive diagonal entries. Suppose that 
\( B = XAY \) is satisfy the following conditions:
\begin{itemize}
\item $B$ is standardized and has entries in \([0,1]\);
\item $2\gamma - 1 - \alpha(B) > 0$;
\item $r_t(B)\in [\rho\gamma ,1/(\rho\gamma)]$ and $c_t(B)\in [\rho\gamma ,1/(\rho\gamma)]$ for each $t\in [n]$.
\end{itemize}
Then, for each \( i, j \in [n]\), we have
\[
B_{i,j} \leq \frac{3}{\rho^3\gamma\left(2\gamma - 1 - \alpha(B)\right)n}.
\]
\end{lemma}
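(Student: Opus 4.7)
My plan is to bound any entry $B_{i,j}$ by sandwiching the total mass of $B$ on a suitably chosen ``dense block'' of $A$. Without loss of generality assume $B$ is row-standardized, so $r_t(B)=1$ for every $t$ and $c_t(B)\in[\rho\gamma,1/(\rho\gamma)]$; the column-standardized case follows by applying the same argument to $B^{\top}$. Fix $i,j\in[n]$, write $X=\diag(x_1,\ldots,x_n)$, $Y=\diag(y_1,\ldots,y_n)$, and set $S\triangleq\{k : A_{i,k}\ge\rho\}$ and $T\triangleq\{\ell : A_{\ell,j}\ge\rho\}$; by $(\gamma,\rho)$-density one has $|S|,|T|\ge\lceil\gamma n\rceil\ge\gamma n$.

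The first step is to extract $\ell_1$-type bounds on the vectors $x$ and $y$ from the sum conditions on $B$. Since $1=r_i(B)=x_i\sum_k A_{i,k}y_k\ge\rho x_i\sum_{k\in S} y_k$, I will read off $\sum_{k\in S} y_k\le 1/(\rho x_i)$. Symmetrically, $c_j(B)=y_j\sum_\ell x_\ell A_{\ell,j}\ge\rho y_j\sum_{\ell\in T} x_\ell$ together with $c_j(B)\le 1/(\rho\gamma)$ gives $\sum_{\ell\in T} x_\ell\le 1/(\rho^2\gamma y_j)$. These two inequalities are the only place where the considerable entries of $A$ enter the argument.

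The second step is to sandwich the mass of $B$ on the block $T\times S$. For the upper bound, the inequality $A_{\ell,k}\le 1$ yields $\sum_{\ell\in T,\,k\in S} B_{\ell,k}\le\bigl(\sum_{\ell\in T}x_\ell\bigr)\bigl(\sum_{k\in S}y_k\bigr)\le 1/(\rho^3\gamma\, x_iy_j)\le 1/(\rho^3\gamma\, B_{i,j})$, where the last step uses $B_{i,j}=x_iA_{i,j}y_j\le x_iy_j$. For the matching lower bound, I will decompose $\sum_{\ell\in T,\,k\in S} B_{\ell,k}=|T|-\sum_{\ell\in T,\,k\notin S} B_{\ell,k}\ge|T|-\sum_{k\notin S} c_k(B)$, and then bound $\sum_{k\notin S} c_k(B)=n-\sum_{k\in S} c_k(B)\le n-|S|+n\alpha(B)/2$ using $\sum_k|c_k(B)-1|\le n\alpha(B)/2$. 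Because $|T|+|S|\ge 2\gamma n$, this delivers $\sum_{\ell\in T,\,k\in S} B_{\ell,k}\ge n\bigl(2\gamma-1-\alpha(B)/2\bigr)$.

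Chaining the upper and lower estimates gives $B_{i,j}\le 1/[\rho^3\gamma n(2\gamma-1-\alpha(B)/2)]$, and since $\alpha(B)/2\le\alpha(B)$ this is in particular at most $3/[\rho^3\gamma n(2\gamma-1-\alpha(B))]$, matching the lemma with room to spare. I expect the main technical point to be the lower bound on the block mass: only the combination $|T|+|S|-n\ge(2\gamma-1)n$ can compensate for the column deviation $n\alpha(B)/2$, so the density threshold $\gamma>1/2$ and the hypothesis $2\gamma-1-\alpha(B)>0$ are used in an essential way to keep the denominator bounded away from zero and to allow the inversion yielding the $O(1/n)$ bound on $B_{i,j}$.
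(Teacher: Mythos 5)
Your proof is correct, and it sharpens the stated bound to $1/\bigl[\rho^3\gamma n(2\gamma-1-\alpha(B)/2)\bigr]$. The core strategy is the same as the paper's: sandwich the mass of $B$ on a dense block (your $T\times S$; the paper's $C\times R$) between a combinatorial lower bound coming from the row- and column-sum constraints plus density, and an analytic upper bound coming from factoring the block sum through the scaling vectors $x,y$. The differences lie in the execution, and your version is cleaner. First, you keep the pivot indices $i,j$ inside $S,T$, which makes $|S|,|T|\geq\lceil\gamma n\rceil$ exact; the paper instead removes them (defining $R=\{\ell\neq j:A_{i,\ell}\geq\rho\}$ and $C=\{k\neq i:A_{k,j}\geq\rho\}$), so it must track the residual $B_{i,j}$ term explicitly in the four-block decomposition of $n=\sum B_{k,\ell}$. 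Second, for the analytic upper bound you use the crude control $c_j(B)\leq 1/(\rho\gamma)$ directly, whereas the paper uses the sharper $c_j(B)\leq 1+\alpha_j$ together with $A_{i,\ell},A_{k,j}\geq\rho$ on the block, ending up with a quadratic inequality in $B_{i,j}$ that it then has to crush with further estimates ($B_{i,j}\leq1$, $\alpha_j\leq1/(\rho\gamma)$, $\rho\gamma\leq1$), which is what produces the constant $3$. Your direct inversion avoids all of that. One small remark on rigor: you should note that when $B_{i,j}=0$ the claimed bound holds trivially, so the division by $B_{i,j}$ in the step $1/(x_iy_j)\leq1/B_{i,j}$ is legitimate in the remaining case $B_{i,j}>0$; and the ``column-standardized'' reduction to $B^{\top}=YA^{\top}X$ is fine because $(\gamma,\rho)$-density, the bounds on $r_t,c_t$, and $\alpha(\cdot)$ are all transpose-symmetric.
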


\begin{proof}
By \( \rho \in (0,1] \) and \(\gamma \in (1/2,1]\), we have
$\rho\gamma\leq 1.$
For each $t\in [n]$, by $r_t(B)\in [\rho\gamma ,1/(\rho\gamma)]$ and $\rho\gamma \leq 1$,
we have 
\[
\abs{r_t(B) - 1}\leq \max\left\{1 - \rho\gamma, \frac{1}{\rho\gamma} - 1\right\}\leq \max\left\{1, \frac{1}{\rho\gamma} \right\}\leq \frac{1}{\rho\gamma}.
\]
Similarly, we also have $\abs{c_t(B) - 1}\leq 1/(\rho\gamma)$.
Define $\alpha_t \triangleq \abs{c_t(B) - 1}$.
Since $B$ is standardized, we may assume \emph{w.l.o.g.} that 
\begin{align}\label{eq-rowsum-columnsum}
\forall t\in [n],\quad r_t(B) = 1, \quad \alpha_t \leq \frac{1}{\rho\gamma}.
\end{align}
For convenience, denote \(x_i \triangleq X_{i,i}\) and \(y_i \triangleq Y_{i,i}\).
Fix one index $i$ of row and another index $j$ of column.
Then $B_{i,j} = x_i A_{i,j} y_j$.
By \eqref{eq-rowsum-columnsum}
we have 
\begin{align}
\sum_{\ell\in [n]}x_i A_{i,\ell} y_{\ell} = 1, \quad \sum_{k\in [n]}x_k A_{k,j}y_j \leq 1+\alpha_j.
\end{align}
Hence,
\begin{align}\label{eq-upperbound-xkakj-ylail}
\sum_{\ell \neq j}y_{\ell}A_{i,\ell} = (1- B_{i,j})/x_i, \quad \sum_{k\neq i}x_k A_{k,j} \leq (1+\alpha_j - B_{i,j})/y_j.
\end{align}
Define
\begin{align}\label{eq-definition-R-C}
R \triangleq \{\ell \neq j | A_{i,\ell} \geq \rho\},\quad C \triangleq \{k \neq i | A_{k,j} \geq \rho\}.
\end{align}
By $A$ is $(\gamma,\rho)$-dense, 
we have 
\[\abs{R} \geq \lceil \gamma n \rceil - 1,\quad \abs{C} \geq \lceil \gamma n \rceil - 1, \quad \abs{C}+\abs{R} \geq 2(\gamma n - 1).\]
By \eqref{eq-rowsum-columnsum}
we have 
\begin{align*}
\sum_{k\in C}\sum_{\ell\in R}B_{k,\ell} + \sum_{k\in C}\sum_{\ell\not\in R}B_{k,\ell} = \abs{C}, \quad
\sum_{k\in C}\sum_{\ell\in R}B_{k,\ell} + \sum_{k\not\in C}\sum_{\ell\in R}B_{k,\ell} \geq \abs{R}-\sum_{\ell\in R}\alpha_\ell.
\end{align*}
Thus, we have 
\begin{equation}
\begin{aligned}\label{eq-n-largerthan-s1tos4}
n &= \sum_{k\in C}\sum_{\ell\in R}B_{k,\ell} + \sum_{k\in C}\sum_{\ell\not\in R}B_{k,\ell} + \sum_{k\not\in C}\sum_{\ell\in R}B_{k,\ell} + \sum_{k\not\in C}\sum_{\ell\not \in R}B_{k,\ell}\\
&\geq \left(\sum_{k\in C}\sum_{\ell\in R}B_{k,\ell} + \sum_{k\in C}\sum_{\ell\not\in R}B_{k,\ell}\right) + \left(\sum_{k\in C}\sum_{\ell\in R}B_{k,\ell}+\sum_{k\not\in C}\sum_{\ell\in R}B_{k,\ell}\right) +  B_{i,j} - \sum_{k\in C}\sum_{\ell\in R}B_{k,\ell} \\
&\geq  B_{i,j}+\abs{C} + \abs{R} -\sum_{\ell\in R}\alpha_\ell - \sum_{k\in C}\sum_{\ell\in R}B_{k,\ell}.
\end{aligned}
\end{equation}
In addition, by \eqref{eq-definition-R-C} and \eqref{eq-upperbound-xkakj-ylail} we have 
\begin{equation}\label{eq-upperbound-s1b}
\begin{aligned}
\sum_{k\in C}\sum_{\ell\in R}B_{k,\ell} &= \sum_{k\in C}\sum_{\ell \in R}x_kA_{k,\ell}y_{\ell}
      \leq \left(\sum_{k\in C}x_k\right) \left(\sum_{\ell \in R}y_{\ell}\right) \leq \frac{1}{\rho^2}\cdot \left(\sum_{k\in C}x_kA_{k,j}\right) \left(\sum_{\ell \in R}y_{\ell}A_{i,\ell}\right)\\
      &\leq \frac{1}{\rho^2}\cdot\left(\sum_{k\neq i}x_kA_{k,j}\right) \left(\sum_{\ell \neq j}y_{\ell}A_{i,\ell}\right)
      \leq \frac{(1+\alpha_j - B_{i,j})}{\rho y_j}\cdot\frac{(1-B_{i,j})}{\rho x_i}. 
\end{aligned}
\end{equation}
Recall that $\abs{C}+\abs{R} \geq 2(\gamma n - 1)$.
Combined with \eqref{eq-n-largerthan-s1tos4} and \eqref{eq-upperbound-s1b}, we have
\begin{align*}
n 
\geq B_{i,j} + 2(\gamma n - 1)-\sum_{\ell\in R}\alpha_\ell - \frac{(1+\alpha_j - B_{i,j})}{\rho y_j}\cdot\frac{(1-B_{i,j})}{\rho x_i}.
\end{align*}
Thus, we have
\begin{equation}\label{eq-nbij-geq-tgamman-bij}
\begin{aligned}
&\quad n B_{i,j}\\
&\geq B^2_{i,j} + \left(2(\gamma n - 1)-\sum_{\ell\in R}\alpha_\ell\right)B_{i,j} -  \frac{(1+\alpha_j - B_{i,j})(1 - B_{i,j}) \cdot B_{i,j}}{\rho^2 x_iy_j}\\
(\text{by $B_{i,j} = x_iA_{i,j}y_{j}\leq x_iy_i$})\quad &\geq B^2_{i,j} + \left(2(\gamma n - 1)-\sum_{\ell\in R}\alpha_\ell\right)B_{i,j} -\frac{(1+\alpha_j - B_{i,j})(1 - B_{i,j})}{\rho^2} \\
  &= \left(2\gamma n - 2 + \frac{2+\alpha_j}{\rho^2}-\sum_{\ell\in R}\alpha_\ell \right)B_{i,j} - \frac{1+\alpha_j}{\rho^2} + \frac{(\rho^2-1)B^2_{i,j}}{\rho^2}\\
\left(\text{by $B_{i,j} \leq 1$, $\alpha_j \leq \frac{1}{\rho\gamma}$, $\rho\gamma\leq 1$}\right)\quad &\geq \left(2\gamma n - 2 + \frac{2+\alpha_j}{\rho^2}-\sum_{\ell\in R}\alpha_\ell \right)B_{i,j} - \frac{3}{\rho^3\gamma}.
\end{aligned}
\end{equation}
In addition, by 
\eqref{eq-def-alpha} we have
\[\sum_{\ell\in R}\alpha_\ell \leq \sum_{\ell\in [n]}\alpha_\ell \leq n\alpha(B).\]
Combined with $2\gamma - 1 - \alpha(B)>0$ and $\rho\in (0,1]$, we have 
\[2\gamma n - 2 + \frac{2+\alpha_j}{\rho^2}-\sum_{\ell\in R}\alpha_\ell - n \geq 2\gamma n - 2 + 2+\alpha_j-\sum_{\ell\in R}\alpha_\ell - n \geq (2\gamma - 1 - \alpha(B))n>0.\]
Combined with \eqref{eq-nbij-geq-tgamman-bij}, we have
\begin{align*}
B_{i,j} \leq  \frac{3}{\rho^3\gamma(2\gamma n - 2 + \rho^{-2}(2+\alpha_j)-\sum_{\ell\in R}\alpha_\ell - n ) } \leq \frac{3}{\rho^3\gamma(2\gamma - 1 - \alpha(B))n }.
\end{align*}
\end{proof}

For each considerable entry in the original matrix,
the following lemma provides a lower bound for the corresponding scaled entry in the matrix obtained through the Sinkhorn-Knopp algorithm.

\begin{lemma}\label{lem-lower-bound-elements}
Suppose \(a\geq \rho>0 \),   \(b>1 \) and \(\gamma
\in (1/2,1]\). 
Let \( A \) be an \( n \times n \) matrix with entries in \([0,1]\).
Define the diagonal matrices $X \triangleq \diag(x_1,\cdots,x_n)$ and $Y \triangleq \diag(y_1,\cdots,y_n)$, where the entries are nonnegative and satisfy $\prod_{i\in[n]}x_i\geq 1$, $\prod_{i\in[n]}y_i \geq 1$.
Assume $A$ and $B = XAY$ satisfy the following conditions:
\begin{itemize}
\item $\forall i,j\in [n]$, $A_{i,j} \leq a/n$ and $B_{i,j} \leq b/n$;
\item $B$ is standardized with $\alpha(B)< 1- 1/(2\gamma)$;
\item the minimal row sum and column sum of $B$ are at least $\rho\gamma$.
\item each row and each column of $A$ contains at least \(\gamma n\) entries with values at least \(\rho/n\).
\end{itemize}
Then for any $i,j\in [n]$ where $A_{i,j}\geq \rho/n$, we have $B_{i,j} > \theta/n$ where 
\begin{align}\label{eq-def-theta-constant}
\theta \triangleq \frac{\rho^6\gamma^2(2\gamma(1 -\alpha(B)) - 1)}{a^3b^3}.
\end{align}
\end{lemma}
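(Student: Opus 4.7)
I would argue by contradiction, in the same spirit as the proof of \Cref{lem-upper-bound-max-element} but reversing the direction of the key inequality. Suppose $A_{i_0,j_0}\ge \rho/n$ but $B_{i_0,j_0}<\theta/n$; then $B_{i_0,j_0}=x_{i_0}A_{i_0,j_0}y_{j_0}$ immediately gives the smallness
\[
x_{i_0}\,y_{j_0}\;<\;\theta/\rho.
\]
Without loss of generality assume $B$ is row-standardized, so $r_i(B)=1$ for every $i$ and $\alpha(B)$ measures column deviations; the column-standardized case follows by transposition.

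The plan is to propagate the smallness of $x_{i_0}y_{j_0}$ to a contradiction by combining three families of constraints. Firstly, the entrywise bound $B_{i,k}\le b/n$ together with the density condition on $A$ at threshold $\rho/n$ yields the \emph{edge bound} $x_iy_k\le b/\rho$ whenever $A_{i,k}\ge\rho/n$. In particular, $x_i\le b/(\rho\, y_{j_0})$ holds on the set $C_{j_0}$ of at least $\gamma n$ good rows for column $j_0$, and $y_k\le b/(\rho\, x_{i_0})$ on the set $R_{i_0}$ of at least $\gamma n$ good columns for row $i_0$. Secondly, $A_{i,k}\le a/n$ converts each row-sum identity $x_i\sum_k A_{i,k}y_k=1$ into the lower bound $\sum_k y_k\ge n/(a\, x_i)$, and each column-sum lower bound $c_j\ge \rho\gamma$ into $\sum_i x_i\ge n\rho\gamma/(a\, y_j)$. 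Thirdly, the multiplicative constraints $\prod_i x_i\ge 1$ and $\prod_j y_j\ge 1$ from \Cref{lem-facts-ak} prevent the scaling factors from being uniformly tiny.

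Introducing the normalized variables $u_i=x_i/x_{i_0}$ and $v_j=y_j/y_{j_0}$, the row-$i_0$ identity rewrites as $\sum_j A_{i_0,j}\,v_j = 1/(x_{i_0}y_{j_0}) > \rho/\theta$, which is very large under the hypothesis. Splitting $j\in R_{i_0}$ versus $j\notin R_{i_0}$ and using the edge bound $v_j\le b/(\rho x_{i_0}y_{j_0})$ on the first group together with $A_{i_0,j}<\rho/n$ on the second, I would force $\sum_{j\notin R_{i_0}} v_j$ to be very large. Iterating the same row-identity computation for every $i\in C_{j_0}$, where additionally $u_i\le b/(\rho x_{i_0}y_{j_0})$, yields analogous lower bounds on the partial sums of $v_j$ concentrated on the $(1-\gamma)n$ columns outside $R_{i_0}$. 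Dually, summing $c_j=y_j\sum_i A_{i,j}x_i$ over $j\in R_{i_0}$ and using $\sum_j|c_j-1|\le n\alpha(B)/2$ together with the global balance $\sum_j c_j=n$ gives $\sum_{j\in R_{i_0}}c_j\ge |R_{i_0}|-n\alpha(B)/2\ge (\gamma-\alpha(B)/2)n$. Matching the upper and lower estimates for the bad-column contribution pins $x_{i_0}y_{j_0}$ from below and produces the contradiction.

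The main obstacle is the bookkeeping in the final step. Exact tracking of the six factors of $\rho$ (two from density in rows, two in columns, and two from the column-sum lower bound $\rho\gamma$), the three invocations of $A\le a/n$ (for row $i_0$, for a generic row $i\in C_{j_0}$, and for the column identity), and the three uses of the edge bound $b/\rho$ (on $R_{i_0}$, on $C_{j_0}$, and once more across their intersection) is what lands the threshold at
\[
\theta \;=\; \frac{\rho^6\,\gamma^2\,(2\gamma(1-\alpha(B))-1)}{a^3\, b^3}.
\]
The critical factor $2\gamma(1-\alpha(B))-1$ is precisely what renders the column-accuracy inequality $\sum_{j\in R_{i_0}} c_j \ge (\gamma - \alpha(B)/2)n$ non-vacuous, capturing the sharpness of the hypothesis $\alpha(B)<1-1/(2\gamma)$; past this density-accuracy threshold, the two chains of inequalities meet and the argument breaks.
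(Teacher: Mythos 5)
Your outline is in the right general neighborhood—contradiction argument, edge bound from $B_{i,k}\le b/n$ and density, row/column sum identities, $\alpha(B)$ controlling the bad-column mass—but the concrete chain that makes the paper's proof close does not appear in your sketch, and the one inequality you do try to set up does not hold in the direction you need.

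Two specific gaps. First, you go straight from $x_{i_0}y_{j_0}<\theta/\rho$ to normalized variables $u_i=x_i/x_{i_0}$, $v_j=y_j/y_{j_0}$, without first deciding which of $x_{i_0},y_{j_0}$ is the small one. The paper takes the minimum and WLOG assumes $x_{i_0}\le\sqrt{\theta/\rho}$; this choice determines the direction of every subsequent propagation step. Because you skipped it, your edge bounds such as $x_i\le b/(\rho y_{j_0})$ on $C_{j_0}$ are only useful if $y_{j_0}$ is large, which you have not established—indeed $y_{j_0}$ might be the tiny factor. Relatedly, you try to argue from the row-$i_0$ identity $\sum_j A_{i_0,j}v_j = 1/(x_{i_0}y_{j_0}) =: M$ by splitting $R_{i_0}$ versus its complement and using $v_j\le bM/\rho$ on $R_{i_0}$. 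But then $\sum_{j\in R_{i_0}}A_{i_0,j}v_j\le n\cdot\frac{a}{n}\cdot\frac{bM}{\rho}=\frac{ab}{\rho}M\ge M$ since $a\ge\rho$ and $b>1$, so you cannot force any mass onto $j\notin R_{i_0}$; the bound is vacuous as written.

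Second, and more fundamentally, the pivot via $\prod_i x_i\ge 1$ is absent. In the paper, after locating a \emph{different} column $j'$ with $y_{j'}\ge\frac{\rho\gamma}{a}\sqrt{\rho/\theta}$ (large, from the row-$i_0$ sum lower bound $\ge\rho\gamma$ and $A_{i_0,\cdot}\le a/n$), the edge bound makes $x_i\le\frac{ab}{\rho^2\gamma}\sqrt{\theta/\rho}<1$ for all $i\in R\triangleq\{i:A_{i,j'}\ge\rho/n\}$. Since $|R|\ge\gamma n > n/2$ and $\prod_i x_i\ge 1$, some $x_{i^*}$ with $i^*\notin R$ must exceed $\frac{\rho^2\gamma}{ab}\sqrt{\rho/\theta}$, and the edge bound then forces $y_t$ small for all $t\in C\triangleq\{t:A_{i^*,t}\ge\rho/n\}$. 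The contradiction is that $\sum_{r\in R}\sum_{t\in C}B_{r,t}$ is at most $\frac{a^3b^3\theta}{\rho^6\gamma^2}n=(2\gamma(1-\alpha(B))-1)n$ by these bounds and simultaneously at least $(2\gamma(1-\alpha(B))-1)n$ from the $\alpha(B)$ accuracy constraint together with $|R|+|C|\ge 2\gamma n$. You mention that the product constraints "prevent the scaling factors from being uniformly tiny" but never execute this pivot; without it, there is no large $x_{i^*}$ outside $R$, no small $y_t$ on $C$, and no two-sided estimate on the block $R\times C$ to contradict.
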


\begin{proof}
Assume for contradiction that there exists some $k,\ell\in [n]$ where $A_{k,\ell}\geq \rho/n$ and $B_{k,\ell} \leq \theta/n$.
Combined with $B_{k,\ell} = x_kA_{k,\ell}y_{\ell}$, 
we have $x_{k}y_{\ell} \leq \theta/\rho$.
Thus, we have either $x_{k} \leq \sqrt{\theta/\rho}$ or $y_{\ell} \leq \sqrt{\theta/\rho}$.
Suppose \emph{w.l.o.g.} $x_{k}\leq \sqrt{\theta/\rho}$.
By 
\[
\sum_{j\in [n]}B_{k,j} = \sum_{j\in [n]}x_kA_{k,j}y_{j}\geq \rho\gamma\]
and $A_{k,\ell} \leq a/n$,
we have there exists some $j\in [n]$ where 
\begin{align}\label{eq-upper-bound-yj-rhogammathetaa}
y_{j}\geq \frac{\rho \gamma}{a}\sqrt{\frac{\rho}{\theta}}.
\end{align}
Let $R = \{i| A_{i,j}\geq \rho/n\}$.
By the definition of $A$,
we have $\abs{R}\geq \gamma n$.
In addition,
\begin{align}\label{eq-upper-bound-bij}
\forall i\in [n], \quad  B_{i,j} = x_iA_{i,j}y_j\leq \frac{b}{n}.
\end{align}
Thus, for each $i\in R$, we have 
\begin{equation}
\begin{aligned}\label{eq-xi-small}
&\quad x_i\\
(\text{by \eqref{eq-upper-bound-yj-rhogammathetaa} and \eqref{eq-upper-bound-bij}})\quad &\leq \frac{ab}{\rho^2\gamma} \cdot \sqrt{\frac{\theta}{\rho}} \\
(\text{by \eqref{eq-def-theta-constant}})\quad & = \sqrt{\frac{(2\gamma(1 -\alpha(B)) - 1)\rho}{ab}}\\
(\text{by $\gamma\in (1/2,1]$})\quad &\leq \sqrt{\frac{\rho}{ab}}\\
(\text{by $a\geq \rho > 0$, $b>1$})\quad & <1.
\end{aligned}
\end{equation}
Moreover, by \Cref{lem-facts-ak} we have 
\[\prod_{i\in [n]}x_i = \prod_{i\in R}x_i\prod_{t\in [n]\setminus R}x_t\geq 1.\]
Combined with \eqref{eq-xi-small} and $\abs{R}\geq \gamma n> n/2$, we have there exists some $i\in [n]\setminus R$ such that 
\begin{align}\label{eq-lower-bound-xk}
x_i> \frac{\rho^2\gamma}{ab}\cdot \sqrt{\frac{\rho}{\theta}}.
\end{align}
Let $C = \{t\in [n]| A_{i,t}\geq \rho/n\}$.
Similar to \eqref{eq-upper-bound-bij}, we have
\begin{align}\label{eq-upper-bound-bit}
\forall t\in C, \quad  B_{i,t} = x_iA_{i,t}y_t\leq \frac{b}{n}.
\end{align}
By \eqref{eq-lower-bound-xk} and \eqref{eq-upper-bound-bit}, we have
\begin{align}\label{eq-yj-small}
\forall t\in C, \quad   y_{t}< \frac{ab^2}{\rho^3\gamma} \cdot \sqrt{\frac{\theta}{\rho}}.
\end{align}
Combining \eqref{eq-xi-small}, \eqref{eq-yj-small} with $A_{r,t}\leq a/n$ for each $r,t\in [n]$,
we have 
\begin{equation}\label{eq-upper-bound-rc}
\begin{aligned}
\sum_{r\in R}\sum_{t\in C}x_{r}A_{r,t}y_{t}&< n^2\cdot \frac{a}{n}\cdot \frac{ab}{\rho^2\gamma} \cdot \sqrt{\frac{\theta}{\rho}} \cdot \frac{ab^2}{\rho^3\gamma} \cdot \sqrt{\frac{\theta}{\rho}} = \frac{a^3b^3\theta n}{\rho^6\gamma^2} = (2\gamma(1 -\alpha(B)) - 1)n,
\end{aligned}
\end{equation}
where the last equality is by \eqref{eq-def-theta-constant}.
In addition, by \eqref{eq-def-alpha} and $\abs{R}\geq n/2$,
we have 
\[\abs{R}\alpha(B) \geq \frac{2\abs{R}}{n}\sum_{r\in [n]}\abs{\sum_{t\in [n]}B_{r,t} -1} \geq \sum_{r\in R}\abs{\sum_{t\in [n]}B_{r,t} -1} \geq \abs{R} - \sum_{r\in R}\sum_{t\in [n]}B_{r,t}.\]
Thus, 
we have
\[\sum_{r\in R}\sum_{t\in [n]}B_{r,t} \geq (1 -\alpha(B))\abs{R}.\]
Thus, by $B$ is standardized, we have
\[\sum_{r\not\in R}\sum_{t\in [n]}B_{r,t} \leq n - (1 -\alpha(B))\abs{R}.\]
Similarly, we also have 
\[\sum_{r\in [n]}\sum_{t\not\in C}B_{r,t} \leq n - (1 -\alpha(B))\abs{C}.\]
Thus, we have 
\begin{align*}
\sum_{r\in R}\sum_{t\in C} B_{r,t} \geq n - \sum_{r\not\in R}\sum_{t\in [n]}B_{r,t} - \sum_{r\in [n]}\sum_{t\not\in C}B_{r,t} \geq n - (2n - (1 -\alpha(B))(\abs{R} + \abs{C})).
\end{align*}
Combined with $\abs{R} + \abs{C} \geq 2\gamma n$, we have
\begin{align*}
\sum_{r\in R}\sum_{t\in C} B_{r,t}\geq n - (2n - (1 -\alpha(B))\cdot 2\gamma n) \geq (2\gamma(1 -\alpha(B)) - 1)n.
\end{align*}
This is contradictory with \eqref{eq-upper-bound-rc}.
The lemma is proved.
\end{proof}

Now we can prove \Cref{lem-number-of-large-entries}.
\begin{proof}[Proof of \Cref{lem-number-of-large-entries}]
Recall that $A$ is $(\gamma,\rho)$-dense where $\gamma \in (1/2,1]$.
By \ref{item-first-fact-a0-a1} of \Cref{fact-c-a-a0-a1},
we have $A^{(k)}_{i,j}\in [0,1]$ for each $i,j\in [n]$.
By \ref{item-forth-fact-c-a-a0-a1} of \Cref{fact-c-a-a0-a1},
we have $c_i\left(A^{(k)}\right)\in [\rho\gamma ,1/(\rho\gamma)]$ and $r_i\left(A^{(k)}\right)\in [\rho\gamma ,1/(\rho\gamma)]$ for each $i\in [n]$.
By \eqref{eq-ak-aprime-relation}, we have 
$A^{(k)} = XAY$ for some diagonal $X$ and $Y$.
In addition, by $\alpha\left(A^{(k)}\right)< 1 - 1/(2\gamma) $ and $\gamma\in (1/2,1]$, we have $\alpha\left(A^{(k)}\right) < 2\gamma - 1$.
Thus, one can apply \Cref{lem-upper-bound-max-element} to $A^{(k)}$. Hence, for each $i,j\in [n]$ we have 
\begin{align}\label{eq-akij-upperbound-threeoverrhosquare}
A^{(k)}_{i,j} \leq \frac{3}{\rho^3\gamma\left(2\gamma - 1 - \alpha\left(A^{(k)}\right)\right)n}.
\end{align}
In addition, 
by \ref{item-first-fact-a0-a1} of \Cref{fact-c-a-a0-a1}, we have $A^{(0)}_{i,j}\in (0,1]$ for each $i,j\in[n]$.
By \ref{item-second-fact-c-a-a0-a1} of \Cref{fact-c-a-a0-a1}, we have every row and column of $A^{(0)}$ has at least $\gamma n$ entries no less than $\rho/n$ and $A^{(0)}_{i,j}\in (0,1/(\rho \gamma n)]$.
By \ref{item-forth-fact-c-a-a0-a1} of \Cref{fact-c-a-a0-a1}, we have $c_i\left(A^{(k)}\right)\geq \rho\gamma$ and $r_i\left(A^{(k)}\right)\geq \rho\gamma$.
Moreover, by $\alpha\left(A^{(k)}\right)< 1 - 1/(2\gamma)$ and $\gamma\in (1/2,1]$, we have 
\begin{align*}
\frac{3}{\rho^3\gamma\left(2\gamma - 1 - \alpha\left(A^{(k)}\right)\right)} \geq \frac{3}{\rho^3\gamma} > 1.
\end{align*}
Combined with 
\eqref{eq-ak-a0-relation} and \eqref{eq-akij-upperbound-threeoverrhosquare},
one can apply \Cref{lem-lower-bound-elements} to $A^{(0)}$ and $A^{(k)}$, where $A^{(0)},A^{(k)}$ are substituted for the matrices $A,B$ in the lemma, respectively.
Thus, for any $i,j\in [n]$ where $A^{(0)}_{i,j}\geq \rho/n$, we have 
\begin{align*}
A^{(k)}_{i,j} &> \frac{\rho^6\gamma^2\left(2\gamma\left(1 -\alpha\left(A^{\left(k\right)}\right)\right) - 1\right)}{27 n(\rho\gamma)^{-3}\left(\rho^3\gamma\left(2\gamma - 1 - \alpha\left(A^{(k)}\right)\right)\right)^{-3}} \\
&= \frac{\rho^{18}\gamma^8\left(2\gamma\left(1 -\alpha\left(A^{(k)}\right)\right) - 1\right)\left(2\gamma - 1 - \alpha\left(A^{(k)}\right)\right)^3}{27n}. 
\end{align*}
Since every row and column of $A^{(0)}$ has at least $\gamma n$ entries no less than $\rho/n$, the lemma follows immediately.
\end{proof}

\subsection{Rapid Decay of Error in Phase 2}
In this subsection, we prove the following lemma.
Given any even $k$, the lemma demonstrates that if most entries of each row and each column of $A^{(k)}$ are lower bounded, 
then the column sums of $A^{(k+2)}$ become much closer to 1 compared to those of $A^{(k)}$.

\begin{lemma}\label{lem-max-accuracy-decrease}
Let $A$ be an $n\times n$ matrix provided as input to the Sinkhorn-Knopp algorithm, and let $A^{(0)},A^{(1)},\cdots$ denote the sequence of matrices generated by the algorithm.
Let $L > n/2$ be an integer and $\theta>0$. Define $\tau = 1 - \theta\left(L - n/2 \right)/n$.
Given any even $k\geq 0$ where 
\begin{align*}
\forall i\in [n], \quad
\abs{\left\{t\mid A^{(k)}_{i,t}> \frac{\theta}{n}\right\}}\geq L,\quad 
\abs{\left\{t\mid A^{(k)}_{t,i}>\frac{\theta}{n}\right\}}\geq L,
\end{align*} 
we have at least one of the following inequalities holds:
\begin{align*}
\max_{j\in [n]} c_j\left(A^{(k+2)}\right) - 1 &\leq
\tau\cdot\left(\max_{j\in [n]} c_j\left(A^{(k)}\right) - 1\right)\\
\left(\min_{j\in [n]} c_j\left(A^{(k+2)}\right)\right)^{-1} - 1 &\leq \tau\cdot\left(\left(\min_{j\in [n]} c_j\left(A^{(k)}\right)\right)^{-1} - 1\right).
\end{align*}
\end{lemma}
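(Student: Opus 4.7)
The plan is to derive closed-form, per-row bounds on $\max_j c_j(A^{(k+2)})$ and $1/\min_j c_j(A^{(k+2)})$ in terms of the column sums $c_t := c_t(A^{(k)})$, and then combine these with a structural inequality that the density-type hypothesis imposes on every row of $A^{(k)}$. Let $r_i := r_i(A^{(k+1)}) = \sum_t A^{(k)}_{i,t}/c_t$; since rows of $A^{(k)}$ sum to $1$ and columns of $A^{(k+1)}$ sum to $1$, we have $A^{(k+2)}_{i,j} = A^{(k+1)}_{i,j}/r_i$, so
\[
c_j(A^{(k+2)}) \;=\; \sum_i A^{(k+1)}_{i,j}\cdot \frac{1}{r_i},
\]
a convex combination of the $1/r_i$ with weights $A^{(k+1)}_{i,j}$ summing to $1$ in $i$, which gives $\max_j c_j(A^{(k+2)}) \leq 1/r_{i^{*}}$ for $i^{*} := \arg\min_i r_i$. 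A symmetric Cauchy--Schwarz bound, $1 = \bigl(\sum_i A^{(k+1)}_{i,j}\bigr)^{2} \leq \bigl(\sum_i A^{(k+1)}_{i,j}/r_i\bigr)\bigl(\sum_i A^{(k+1)}_{i,j} r_i\bigr)$, gives $1/c_j(A^{(k+2)}) \leq \sum_i A^{(k+1)}_{i,j} r_i \leq r_{i^{**}}$ for $i^{**} := \arg\max_i r_i$.

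Next I would apply the weighted AM--HM inequality (equivalently Jensen on the convex map $x \mapsto 1/x$) to each of $r_{i^{*}}$ and $r_{i^{**}}$, each expressed as $\sum_t A^{(k)}_{i,t}(1/c_t)$ with row weights summing to $1$. This gives
\[
\max_j c_j(A^{(k+2)}) - 1 \;\leq\; \frac{1}{r_{i^{*}}} - 1 \;\leq\; \sum_t A^{(k)}_{i^{*},t}\,(c_t - 1),
\]
and analogously
\[
\frac{1}{\min_j c_j(A^{(k+2)})} - 1 \;\leq\; r_{i^{**}} - 1 \;=\; \sum_t A^{(k)}_{i^{**},t}\,\frac{1 - c_t}{c_t}.
\]
Setting $U := \{t : c_t > 1\}$, I would split each right-hand side by the sign of $c_t - 1$: the contribution from $t \notin U$ to the first bound is nonpositive, and that from $t \in U$ to the second is nonpositive. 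With $M := \max_t c_t$ and $m := \min_t c_t$, this leaves
\[
\max_j c_j(A^{(k+2)}) - 1 \;\leq\; (M - 1)\sum_{t \in U} A^{(k)}_{i^{*},t}, \qquad \frac{1}{\min_j c_j(A^{(k+2)})} - 1 \;\leq\; \Bigl(\frac{1}{m} - 1\Bigr)\sum_{t \in U^c} A^{(k)}_{i^{**},t}.
\]

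The key structural fact---a direct consequence of the hypothesis---is that for every row $i$ and every set $S$ with $|S| \leq n/2$,
\[
\sum_{t \in S} A^{(k)}_{i,t} \;\leq\; \tau.
\]
Indeed, $|S^c| \geq n/2$, so by pigeonhole $S^c$ contains at least $L - n/2$ of the entries of row $i$ that exceed $\theta/n$; these contribute at least $(L - n/2)(\theta/n) = 1 - \tau$ to $\sum_{t \in S^c} A^{(k)}_{i,t}$, and the row-sum identity then gives the stated bound.

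Finally I would split on $|U|$. Since $|U| + |U^c| = n$, at least one of $|U| \leq n/2$ or $|U^c| \leq n/2$ must hold. In the first case the structural fact with $S = U$ yields $\sum_{t \in U} A^{(k)}_{i^{*},t} \leq \tau$, which proves the first claimed inequality; in the second case the structural fact with $S = U^c$ yields $\sum_{t \in U^c} A^{(k)}_{i^{**},t} \leq \tau$, which proves the second. The main obstacle I expect is bookkeeping: keeping the direction of Cauchy--Schwarz/AM--HM consistent across the max and min arguments, and verifying that the sign-splitting of the sums leaves exactly the tail over $U$ (or $U^c$) that the structural bound can control.
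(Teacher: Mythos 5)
Your proof is correct. The high-level strategy is the same as the paper's: bound $\max_j c_j(A^{(k+2)})$ by the reciprocal of the smallest row sum of $A^{(k+1)}$, bound $1/\min_j c_j(A^{(k+2)})$ by the largest row sum of $A^{(k+1)}$, push these through Jensen/Cauchy--Schwarz to get per-row weighted sums of the $c_t(A^{(k)})$, use the density hypothesis to cap the row mass carried by any set of at most $n/2$ columns by $\tau$, and then case-split. The technical execution differs, though. Where you introduce $U=\{t : c_t>1\}$, drop the nonpositive contribution, and split on $|U|\le n/2$ versus $|U^c|\le n/2$, the paper instead sorts both the column sums ($c'_1\ge\cdots\ge c'_n$) and the entries of each row, applies the rearrangement inequality to reduce to the sorted case, and splits on whether the median sorted column sum $c'_{\lceil n/2\rceil}$ is $<1$ or $\ge 1$. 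Your sign-set formulation avoids rearrangement entirely and makes the pigeonhole step (which you state as: $\sum_{t\in S}A^{(k)}_{i,t}\le\tau$ for all $|S|\le n/2$, for all rows $i$) usable for arbitrary subsets rather than only the top-$n/2$ sorted prefix, which is a mild strengthening that makes the two case endings symmetric; the paper's sorted version is arguably more visual but requires tracking the reverse-sorted indices $c'_{n-j+1}$ in the second case. Both give exactly the same contraction factor $\tau$.
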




\begin{proof}
Given any $k$,
let $c'_1\geq c'_2\geq \cdots \geq c'_n$ be the sorted sequence of $c_1\left(A^{(k)}\right), \cdots, c_n\left(A^{(k)}\right)$.
Given any $i\in [n]$, let $a_{i,1}\geq a_{i,2} \geq \cdots \geq a_{i,n}$ be the sorted sequence of $A^{(k)}_{i,1},A^{(k)}_{i,2},\cdots,A^{(k)}_{i,n}$.
In the following, we prove the lemma for even $k$ by considering two separate cases.
\begin{itemize}
\item $c'_{\lceil n/2\rceil} < 1$.
Recall that $k$ is even.
We have $A^{(k+1)}_{i,j} = A^{(k)}_{i,j}/c_j\left(A^{(k)}\right)$ and $r_i\left(A^{(k)}\right) = 1$
for each $i,j\in [n]$.
Thus, we have 
\begin{align*}
\forall i\in [n], \quad r_i\left(A^{(k+1)}\right) = \sum_{j\in [n]}A^{(k)}_{i,j}/c_j\left(A^{(k)}\right).
\end{align*}
Similarly, we also have  
\begin{align}\label{eq-cjakplus1}
\forall j\in [n], \quad c_j\left(A^{(k+2)}\right) = \sum_{i\in [n]}A^{(k+1)}_{i,j}/r_i\left(A^{(k+1)}\right), \quad \sum_{i\in [n]}A^{(k+1)}_{i,j} = c_j \left(A^{(k+1)}\right)= 1.
\end{align}
Therefore, we have 
\begin{align}\label{eq-upper-bound-maxcj-general}
\max_{j\in [n]} c_j\left(A^{(k+2)}\right) \leq \max_{i\in [n]} \frac{1}{r_i\left(A^{(k+1)}\right)} = \max_{i\in [n]} \left(\sum_{j\in [n]}\frac{A^{(k)}_{i,j}}{c_j\left(A^{(k)}\right)}\right)^{-1}.
\end{align}
Combined with $\sum_{j\in [n]}A^{(k)}_{i,j} = r_i \left(A^{(k)}\right)= 1$ and Jensen's inequality,
we have 
\begin{align}\label{eq-upper-bound-maxcj}
\max_{j\in [n]} c_j\left(A^{(k+2)}\right) \leq \max_{i\in [n]} \left(\sum_{j\in [n]}\frac{A^{(k)}_{i,j}}{c_j\left(A^{(k)}\right)}\right)^{-1} \leq \max_{i\in [n]} \sum_{j\in [n]}A^{(k)}_{i,j}c_j\left(A^{(k)}\right).
\end{align}
Moreover, by the definitions of $a_{i,j}$ and $c'_j$, we have 
\begin{align}\label{eq-sum-aijcprimej}
\forall i\in [n], \quad \sum_{j\in [n]}A^{(k)}_{i,j}c_j\left(A^{(k)}\right) \leq \sum_{j\in [n]}a_{i,j}c'_j
\end{align}
Combined with \eqref{eq-upper-bound-maxcj}, we have
\begin{align}\label{eq-relation-maxcj-maxsumaijcjprime}
\max_{j\in [n]} c_j\left(A^{(k+2)}\right) \leq \max_{i\in [n]}\sum_{j\in [n]}a_{i,j}c'_j.
\end{align}
Moreover, by $c'_{\lceil n/2\rceil} < 1$ and $c'_1 \geq \cdots\geq c'_n$,
we have $c'_{j} < 1$  for each $j> n/2$.
Hence, for each $i$,
\begin{align*}
\quad \quad \sum_{j\in [n]}a_{i,j}c'_j \leq \sum_{j\leq n/2}a_{i,j}c'_j + \sum_{j>n/2}a_{i,j}c'_j \leq \sum_{j\leq n/2}a_{i,j}c'_i+\sum_{j>n/2}a_{i,j} = \sum_{j\leq \frac{n}{2}}a_{i,j}(c'_i - 1) + \sum_{j\in [n]}a_{i,j}.
\end{align*}
In addition, we have 
\begin{align}\label{eq-sum-aij}
\sum_{j\in [n]}a_{i,j} = \sum_{j\in [n]}A^{(k)}_{i,j} = r_i\left(A^{(k)}\right) = 1.
\end{align}
Therefore, we have
\begin{align}\label{eq-upper-bound-sum-aijcjprime}
\sum_{j\in [n]}a_{i,j}c'_j &\leq \sum_{j\leq \frac{n}{2}}a_{i,j}(c'_i - 1) + \sum_{j\in [n]}a_{i,j} \leq 1 + \sum_{j\leq \frac{n}{2}}a_{i,j}(c'_i - 1) \leq 1 + (c'_1 - 1)\sum_{j\leq \frac{n}{2}}a_{i,j}.
\end{align}
Moreover, by the definitions of $a_{i,1},\cdots,a_{i,n}$, we have
\begin{align*}
a_{i,1} \geq a_{i,2} \geq \cdots \geq a_{i,L} \geq \frac{\theta}{n}.
\end{align*}
Combined with \eqref{eq-sum-aij} and the integer $L>n/2$, we have 
\begin{align}\label{eq-sum-aij-partial}
\sum_{j\leq \frac{n}{2}}a_{i,j} = 1 - \sum_{j>\frac{n}{2}}^{n} a_{i,j} \leq 1 - \sum_{j>\frac{n}{2}}^{L} a_{i,j} \leq  1 - \frac{\theta}{n}\cdot \left(L - \frac{n}{2} \right) =  \tau.
\end{align}
Combined with \eqref{eq-upper-bound-sum-aijcjprime}, we have
\begin{align*}
\sum_{j\in [n]}a_{i,j}c'_j &\leq 1 + (c'_1 - 1)\sum_{j\leq \frac{n}{2}}a_{i,j} \leq 1+\tau(c'_{1} - 1).
\end{align*}
Combined with \eqref{eq-relation-maxcj-maxsumaijcjprime}, we have 
\[
\max_{j\in [n]} c_j\left(A^{(k+2)}\right) \leq \max_{i\in [n]}\sum_{j\in [n]}a_{i,j}c'_j \leq 1+\tau(c'_{1} - 1).
\]
Therefore, 
\begin{align*}
\max_{j\in [n]} c_j\left(A^{(k+2)}\right) - 1 &\leq \tau(c'_{1} - 1)
= \tau\left(\max_{j\in [n]} c_j\left(A^{(k)}\right) - 1\right).
\end{align*}

\item $c'_{\lceil n/2\rceil} \geq 1$. Similar to \eqref{eq-upper-bound-maxcj-general}, one can also verify that
\begin{align}\label{eq-lower-bound-mincj-general}
\min_{j\in [n]} c_j\left(A^{(k+2)}\right) \geq \min_{i\in [n]} \left(\sum_{j\in [n]}\frac{A^{(k)}_{i,j}}{c_j\left(A^{(k)}\right)}\right)^{-1}.
\end{align}
In addition, by the definitions of $a_{i,j}$ and $c'_j$ we have 
\begin{align*}
\forall i\in [n], \quad \sum_{j\in [n]}\frac{A^{(k)}_{i,j}}{c_j\left(A^{(k)}\right)} \leq \sum_{j\in [n]}\frac{a_{i,j}}{c'_{n-j+1}}.
\end{align*}
Combined with \eqref{eq-lower-bound-mincj-general}, we have
\begin{align}\label{eq-lower-bound-mincj}
\min_{j\in [n]} c_j\left(A^{(k+2)}\right) \geq \min_{i\in [n]} \left(\sum_{j\in [n]}\frac{A^{(k)}_{i,j}}{c_j\left(A^{(k)}\right)}\right)^{-1} \geq \min_{i\in [n]}\left(\sum_{j\in [n]}\frac{a_{i,j}}{c'_{n-j+1}}\right)^{-1}.
\end{align}
In addition, by $c'_{\lceil n/2\rceil} \geq 1$
and $c'_1 \geq \cdots\geq c'_n$, 
we have $c'_{j} \geq 1$  for each $j\leq \lceil\frac{n}{2}\rceil$.
Thus, for each $i$,
\begin{align*}
\sum_{j\in [n]}\frac{a_{i,j}}{c'_{n-j+1}} 
\leq \sum_{j\leq \frac{n}{2}}\frac{a_{i,j}}{c'_n} + \sum_{j>\frac{n}{2}} \frac{a_{i,j}}{c'_{n-j+1}}
\leq \sum_{j\leq \frac{n}{2}}\frac{a_{i,j}}{c'_n} + \sum_{j>\frac{n}{2}} a_{i,j}
= \sum_{j\in [n]}a_{i,j} + \sum_{j\leq \frac{n}{2}}a_{i,j}\left(\frac{1}{c'_n} - 1\right).
\end{align*}
Combined with \eqref{eq-sum-aij} and \eqref{eq-sum-aij-partial}, we have
\begin{align*}
\sum_{j\in [n]}\frac{a_{i,j}}{c'_{n-j+1}} 
\leq &\sum_{j\in [n]}a_{i,j} + \sum_{j\leq \frac{n}{2}}a_{i,j}\left(\frac{1}{c'_n} - 1\right) 
\leq 1+\tau\left(\frac{1}{c'_{n}} - 1\right).
\end{align*}
Combined with \eqref{eq-lower-bound-mincj}, we have 
\begin{align*}
\min_{j\in [n]} c_j\left(A^{(k+2)}\right) &\geq  \left(1+\tau\left(\frac{1}{c'_{n}} - 1\right)\right)^{-1}.
\end{align*}
Therefore,
\begin{align*}
\left(\min_{j\in [n]} c_j\left(A^{(k+2)}\right)\right)^{-1} - 1 &\leq \tau\left(\frac{1}{c'_{n}} - 1\right) = \tau\left(\left(\min_{j\in [n]} c_j\left(A^{(k)}\right)\right)^{-1} - 1\right).
\end{align*}
\end{itemize}
\end{proof}


\subsection{Combining the Two Phases}

In this section, we complete the proof of \Cref{thm-main-ak-max-accuracy} by integrating the results from both phases.

\begin{proof}[Proof of \Cref{thm-main-ak-max-accuracy}]
By \Cref{lem-monotone-rsum-csum}, to prove this theorem, it is sufficient to prove the claim that there exist some 
even $k$ and odd $k'$ satisfying \eqref{eq-def-k} such that both $A^{(k)}$ and $A^{(k')}$ have a maximum deviation $\varepsilon$.
In the following, we prove this claim for even $k$.
The proof for odd $k'$ is similar.

Define 
\begin{align}
q &\triangleq 1 - \frac{8}{135}\cdot \rho^{18}\gamma^5\left(\gamma -\frac{1}{2}\right)^5\cdot\left(\gamma-\frac{9}{20}\right)^3,\label{eq-def-g1-gamma}\\
L &\triangleq \left\{\ell\geq 0\mid \norm{r\left(A^{(\ell)}\right)- \boldsymbol{1}}_1 + \norm{c\left(A^{(\ell)}\right)- \boldsymbol{1}}_1 > \frac{9n}{20}\cdot \left(1 - \frac{1}{2\gamma}\right)\right\}.\label{eq-def-L-set-large-error}\\
t & \triangleq (\log \varepsilon + \log \rho - \log n - 2)/\log q \label{eq-def-t}\\
k & \triangleq \min \left\{i \geq 0\mid i \text{ is even and } i\geq 2\abs{L}+ 4t + 2\right\}.\label{eq-def-k-upper-bound}
\end{align}
By \Cref{thm-complexity-sinkhorn}, we have 
\begin{align}\label{eq-size-l-upperbound}
\abs{L} =O((2\gamma - 1)^{-2}(\log n - \log\rho)).
\end{align}
One can also verify that 
\begin{align}\label{eq-log-ggamma}
-\log q \geq \frac{8}{135}\cdot \rho^{18}\gamma^5\left(\gamma -\frac{1}{2}\right)^5\cdot\left(\gamma-\frac{9}{20}\right)^3.
\end{align}
By \eqref{eq-def-t}, \eqref{eq-def-k-upper-bound}, \eqref{eq-size-l-upperbound} and \eqref{eq-log-ggamma},
we have
\eqref{eq-def-k} is satisfied.

Define 
$
T \triangleq \left\{0\leq j< k\ | j \text{ is even and } j\not\in L   \right\}.$
We have 
\begin{align}\label{eq-size-T}
\abs{T}\geq (k - 2)/2 - \abs{L} \geq 2t.
\end{align}
For each $j\in T$, we have $j$ is even and $j\not\in L$.
Thus, 
by \eqref{eq-def-L-set-large-error} we have 
\[\norm{r\left(A^{(j)}\right)- \boldsymbol{1}}_1 + \norm{c\left(A^{(j)}\right)- \boldsymbol{1}}_1 \leq \frac{9n}{20}\cdot \left(1 - \frac{1}{2\gamma}\right).\]
Combined with \eqref{eq-def-alpha} and \Cref{fact-c-a-a0-a1},
we have 
\begin{equation}
\begin{aligned}\label{eq-alpha-aell-leq-t}
&\quad\alpha\left(A^{(j)}\right) = \frac{2}{n}\sum_{i\in [n]} \abs{c_i\left(A^{(j)}\right) - 1}
=\frac{2}{n}\norm{c\left(A^{(j)}\right)- \boldsymbol{1}}_1
\\&= \frac{2}{n}\left(\norm{r\left(A^{(j)}\right)- \boldsymbol{1}}_1 + \norm{c\left(A^{(j)}\right)- \boldsymbol{1}}_1\right) \leq \frac{9}{10}\cdot \left(1 - \frac{1}{2\gamma}\right).
\end{aligned}
\end{equation}
For any matrix $B$,
define two functions $\theta(\cdot)$ and $\tau(\cdot)$ as follows:
\begin{align}
\theta\left(B\right) \triangleq \frac{1}{27}\cdot \rho^{18}\gamma^8\left(2\gamma\left(1 -\alpha\left(B\right)\right) - 1\right)\left(2\gamma - 1 - \alpha\left(B\right)\right)^3, \quad
\tau\left(B\right) \triangleq 1 - \theta\left(B\right) \left(\gamma - \frac{1}{2}\right)\label{eq-def-fun-tau}.
\end{align}
By \eqref{eq-alpha-aell-leq-t} and \eqref{eq-def-fun-tau} 
we have 
\begin{equation}
\begin{aligned}\label{eq-tau-aj-equal-q}
\tau\left(A^{(j)}\right) 
&\leq 1 - \frac{1}{27}\cdot \rho^{18}\gamma^8\left(\gamma -\frac{1}{2}\right)\cdot\left(2\gamma\left(\frac{1}{10} + \frac{9}{20\gamma}\right) - 1\right) \cdot \left(2\gamma - 1 - \frac{9}{10} + \frac{9}{20\gamma}\right)^3   \\
& = 1 - \frac{1}{27}\cdot \rho^{18}\gamma^8\left(\gamma -\frac{1}{2}\right)\cdot\left(\frac{1}{5}\left(\gamma - 
\frac{1}{2}\right) \right)\cdot\left(\frac{2}{\gamma}\left(\gamma-\frac{1}{2}\right)\left(\gamma-\frac{9}{20}\right)\right)^3\\
& = 1 - \frac{8}{135}\cdot \rho^{18}\gamma^5\left(\gamma -\frac{1}{2}\right)^5\cdot\left(\gamma-\frac{9}{20}\right)^3  = q.
\end{aligned}
\end{equation}
By \Cref{lem-number-of-large-entries} and \eqref{eq-alpha-aell-leq-t}, we have for each $j\in T$,
\begin{align*}
\forall i\in [n], \quad
\abs{\left\{\ell\mid A^{(j)}_{i,\ell}> \frac{\theta\left(A^{(j)}\right)}{n}\right\}}\geq \lceil\gamma n \rceil,\quad 
\abs{\left\{\ell\mid A^{(j)}_{\ell,i}>\frac{\theta\left(A^{(j)}\right)}{n}\right\}}\geq \lceil\gamma n \rceil.
\end{align*}
Combined with \Cref{lem-max-accuracy-decrease} and \eqref{eq-tau-aj-equal-q}, we have
one of the following two inequalities is true:
\begin{align}
\max_{i\in [n]} c_i\left(A^{(j+2)}\right) - 1 &\leq
\left(\max_{i\in [n]} c_i\left(A^{(j)}\right) - 1\right)\cdot \tau\left(A^{(j)}\right) = q\left(\max_{i\in [n]} c_i\left(A^{(j)}\right) - 1\right),\label{eq-maxcj-decrease}\\
\left(\min_{i\in [n]} c_j\left(A^{(j+2)}\right)\right)^{-1} - 1 &\leq \left(\left(\min_{i\in [n]} c_j\left(A^{(j)}\right)\right)^{-1} - 1\right)\cdot  \tau\left(A^{(j)}\right) = q\left(\left(\min_{i\in [n]} c_j\left(A^{(j)}\right)\right)^{-1} - 1\right).\label{eq-maxcj-inverse-decrease}
\end{align}

\vspace{0.2cm}

Let $S \triangleq \left\{j\in T\ | \ \eqref{eq-maxcj-decrease} \text{ holds for $j$}\right\}.$ 
At first, consider the case $\abs{S} \geq \abs{T}/2$. 
By \eqref{eq-size-T}, we have $\abs{S} \geq t$.
Let $\ell_0$ be the minimum element in $T$. We have
\begin{align*}
\max_{i\in [n]} c_i\left(A^{(k)}\right) - 1 = 
\left(\max_{i\in [n]} c_i\left(A^{(\ell_0)}\right) - 1\right)\cdot \prod_{j= \ell_0/2}^{k/2-1}\frac{\left(\max_{i\in [n]} c_i\left(A^{(2j+2)}\right) - 1\right)}{\left(\max_{i\in [n]} c_i\left(A^{(2j)}\right) - 1\right)}.
\end{align*}
By \Cref{lem-monotone-rsum-csum},
we have 
\[
\forall j\geq 0,\quad \frac{\left(\max_{i\in [n]} c_i\left(A^{(2j+2)}\right) - 1\right)}{\left(\max_{i\in [n]} c_i\left(A^{(2j)}\right) - 1\right)} \leq 1.
\]
Moreover, by the definitions of $T$ and $S$, one can verify that $j$ is even and $\ell_0 \leq j \leq k-2$ for each $j\in S$.
Combined with the above two inequalities, we have 
\begin{align*}
\max_{i\in [n]} c_i\left(A^{(k)}\right) - 1 \leq 
\left(\max_{i\in [n]} c_i\left(A^{(\ell_0)}\right) - 1\right)\cdot\prod_{j\in S}\frac{\left(\max_{i\in [n]} c_i\left(A^{(j+2)}\right) - 1\right)}{\left(\max_{i\in [n]} c_i\left(A^{(j)}\right) - 1\right)}.
\end{align*}
Combined with \eqref{eq-maxcj-decrease} and $\abs{S}\geq t$, we have
\begin{align*}
\max_{i\in [n]} c_i\left(A^{(k)}\right) - 1 \leq 
q^{\abs{S}}\cdot\left(\max_{i\in [n]} c_i\left(A^{(\ell_0)}\right) - 1\right) \leq q^{t}\cdot\left(\max_{i\in [n]} c_i\left(A^{(\ell_0)}\right) - 1\right).
\end{align*}
Meanwhile, by \ref{item-forth-fact-c-a-a0-a1} of \Cref{fact-c-a-a0-a1} and $\gamma \in (1/2,1]$, we have
$c_i\left(A^{(\ell_0)}\right)\leq 2/\rho$ for each $i\in [n]$.
Hence, 
\begin{align}\label{eq-maxci-upperbound-ggamma}
\max_{i\in [n]} c_i\left(A^{(k)}\right) - 1 \leq q^{t}(2\rho^{-1} -1)< 2q^{t}\rho^{-1} < \frac{\varepsilon}{2n},
\end{align}
where the last inequality is by \eqref{eq-def-t}.
Therefore, 
\begin{align*}
\norm{c\left(A^{(k)}\right)- \boldsymbol{1}}_1 &= \sum_{i\in [n]} \left(c_i\left(A^{(k)}\right) - 1\right)\cdot \id{c_i\left(A^{(k)}\right)>1} + \sum_{i\in [n]} \left(1 - c_i\left(A^{(k)}\right)\right)\cdot \id{c_i\left(A^{(k)}\right)<1}\\
&=2\sum_{i\in [n]} \left(c_i\left(A^{(k)}\right) - 1\right)\cdot \id{c_i\left(A^{(k)}\right)>1}\\
&\leq 2n\cdot\left(\max_{i\in [n]} c_i\left(A^{(k)}\right) - 1 \right)\\
&=\varepsilon.
\end{align*}
Moreover, by \ref{item-forth-fact-c-a-a0-a1} of \Cref{fact-c-a-a0-a1} and $k$ is even, we have $\norm{r\left(A^{(k)}\right)- \boldsymbol{1}}_1 = 0$.
Thus, \eqref{eq-main-ak-max-accuracy-goal} is proved.

At last, consider the other case $\abs{S} < \abs{T}/2$.
By \eqref{eq-size-T}, we have $\abs{T\setminus S}\geq t$.
Similar to \eqref{eq-maxci-upperbound-ggamma}, we have
\begin{align}
\left(\min_{i\in [n]} c_i\left(A^{(k)}\right)\right)^{-1} - 1 
< \frac{\varepsilon}{2n}.
\end{align} 
Thus, by $\min_{i\in [n]} c_i\left(A^{(k)}\right)\leq 1$ we have
\begin{align*}
1 - \min_{i\in [n]} c_i\left(A^{(k)}\right) = \min_{i\in [n]} c_i\left(A^{(k)}\right)\cdot\left(\left(\min_{i\in [n]} c_i\left(A^{(k)}\right)\right)^{-1} - 1\right)  \leq \left(\min_{i\in [n]} c_i\left(A^{(k)}\right)\right)^{-1} - 1 \leq \frac{\varepsilon}{2n}.
\end{align*}
Therefore, 
\begin{align*}
\norm{c\left(A^{(k)}\right)- \boldsymbol{1}}_1 &= \sum_{i\in [n]} \left(c_i\left(A^{(k)}\right) - 1\right)\cdot \id{c_i\left(A^{(k)}\right)>1} + \sum_{i\in [n]} \left(1 - c_i\left(A^{(k)}\right)\right)\cdot \id{c_i\left(A^{(k)}\right)<1}\\
&=2\sum_{i\in [n]} \left(1 - c_i\left(A^{(k)}\right)\right)\cdot \id{c_i\left(A^{(k)}\right)<1}\\
&\leq 2n\cdot\left(1 - \min_{i\in [n]} c_i\left(A^{(k)}\right)\right)\\
&=\varepsilon.
\end{align*}
Moreover, by \ref{item-forth-fact-c-a-a0-a1} of \Cref{fact-c-a-a0-a1} and $k$ is even, we have $\norm{r\left(A^{(k)}\right)- \boldsymbol{1}}_1 = 0$.
Thus, \eqref{eq-main-ak-max-accuracy-goal} is proved.

In summary, there exists an even $k$ satisfying \eqref{eq-def-k} such that $A^{(k)}$ has a maximum deviation $\varepsilon$.
The theorem is proved.

\end{proof}

\newpage
\section{Lower bounds}\label{sec-lower-bounds}
In this section, we prove Theorems \ref{thm-lower-bound-positive} and \ref{thm-lower-bound-main}.

\subsection{Tight Lower Bound for Positive Matrices}
In this subsection, we prove Theorem \ref{thm-lower-bound-positive} by constructing a positive matrix for which the Sinkhorn-Knopp algorithm converges slowly.
Theorem \ref{thm-lower-bound-positive} is immediate by the following result.

\begin{figure}[htbp]
  \centering
\[
\begin{tikzpicture}[%
  mtx/.style = {
    matrix of math nodes,
    nodes in empty cells,
    left delimiter = {[},
    right delimiter = {]},
    every node/.style = {
      anchor = base,
      text width = 1.5em,
      text height = 1.2ex,
      align = center,
      anchor = base east
    },
    row sep=0.5em,
  },
  hl/.style = {opacity = 0.3, rounded corners = 2pt, inner sep = -2pt},
  index/.style = {font=\small, anchor=center}, 
  txtup/.style = {rotate = 90, right},
  txtbt/.style = {yshift = -1ex}
]

\matrix (M) [mtx] {
  1	&	$\cdots$	&	1	&	$\beta$	&	$\cdots$	&	$\beta$	&	$\frac{2\lceil \gamma n \rceil}{n}$	&	$\frac{2\lceil \gamma n \rceil}{n}$	&	$\beta$	&	$\beta$	&		&	$\cdots$	&		&	$\beta$	\\
$\beta$	&	1	&	$\cdots$	&	1	&	$\cdots$	&	$\beta$	&	$\frac{2\lceil \gamma n \rceil}{n}$	&	$\frac{2\lceil \gamma n \rceil}{n}$	&	$\beta$	&	$\beta$	&	$\ddots$	&	$\ddots$	&	$\ddots$	&	$\beta$	\\
$\vdots$	&	$\ddots$	&	$\ddots$	&	$\ddots$	&	$\ddots$	&	$\vdots$	&	$\vdots$	&	$\vdots$	&	$\vdots$	&	$\vdots$	&	$\ddots$	&	$\ddots$	&	$\ddots$	&	$\vdots$	\\
1	&	$\cdots$	&	$\beta$	&	$\cdots$	&	$\beta$	&	1	&	$\frac{2\lceil \gamma n \rceil}{n}$	&	$\frac{2\lceil \gamma n \rceil}{n}$	&	$\beta$	&	$\beta$	&		&	$\cdots$	&		&	$\beta$	\\
$\beta$	&	$\beta$	&		&	$\cdots$	&	$\beta$	&	$\beta$	&	1	&	$\beta$	&	1	&	1	&		&	$\cdots$	&		&	1	\\
$\beta$	&	$\beta$	&		&	$\cdots$	&	$\beta$	&	$\beta$	&	$\beta$	&	1	&	1	&	1	&		&	$\cdots$	&		&	1	\\
$\beta$	&	$\beta$	&		&	$\cdots$	&		&	$\beta$	&	$\beta$	&	$\beta$	&	1	&	$\cdots$	&	1	&	$\beta$	&	$\cdots$	&	$\beta$	\\
$\beta$	&	$\beta$	&	$\ddots$	&	$\ddots$	&	$\ddots$	&	$\beta$	&	$\beta$	&	$\beta$	&	$\beta$	&	1& $\cdots$	&	1	&	 $\cdots$	&	$\beta$	\\
$\vdots$	&	$\vdots$	&	$\ddots$	&	$\ddots$	&	$\ddots$	&	$\vdots$	&	$\vdots$	&	$\vdots$	&	$\vdots$	&	$\ddots$	&	$\ddots$	&	$\ddots$	&	$\ddots$	&	$\vdots$	\\
$\beta$	&	$\beta$	&		&	$\cdots$	&		&	$\beta$	&	$\beta$	&	$\beta$	&	1	&	$\cdots$	&	$\beta$	&	$\cdots$	& $\beta$	&	1	\\
};

\node at ([yshift=1.5em]M-1-1.north) {$1$};
\node at ([yshift=1.5em]M-1-2.north) {$\cdots$};
\node at ([yshift=1.5em]M-1-3.north) {$\lceil \gamma n \rceil$};
\node at ([yshift=1.5em]M-1-5.north) {$\cdots$};
\node at ([yshift=1.5em]M-1-7.north) {$\tfrac{n}{2}$};
\node at ([yshift=1.5em]M-1-8.north) {$\tfrac{n}{2}+1$};
\node at ([yshift=1.5em]M-1-9.north) {$\cdots$};
\node at ([yshift=1.5em]M-1-11.north) {$\tfrac{n}{2}+\lceil \gamma n \rceil+1$};
\node at ([yshift=1.5em]M-1-13.north) {$\cdots$};
\node at ([yshift=1.5em]M-1-14.north) {$n$};

\node at ([xshift=-2em]M-1-1.west) {$1$};
\node at ([xshift=-2em]M-3-1.west) {$\vdots$};
\node at ([xshift=-2em]M-5-1.west) {$\tfrac{n}{2}$};
\node at ([xshift=-2.5em]M-6-1.west) {$\tfrac{n}{2}+1$};
\node at ([xshift=-2em]M-8-1.west) {$\vdots$};
\node at ([xshift=-2em]M-10-1.west) {$n$};

\begin{scope}[on background layer]
  \node[hl, fill = blue, fit=(M-1-1)(M-1-3)] {};
  \node[hl, fill = blue, fit=(M-2-2)(M-2-4)] {};
  \node[hl, fill = blue, fit=(M-4-1)(M-4-2)] {};
  \node[hl, fill = blue, fit=(M-4-6)(M-4-6)] {};
  
  \node[hl, fill = orange, fit=(M-1-4)(M-1-6)] {};
  \node[hl, fill = orange, fit=(M-2-1)(M-2-1)] {};
  \node[hl, fill = orange, fit=(M-2-5)(M-2-6)] {};
  \node[hl, fill = orange, fit=(M-4-3)(M-4-5)] {};
  
  \node[hl, fill = blue!20, fit=(M-7-9)(M-7-11)] {};
  \node[hl, fill = blue!20, fit=(M-8-10)(M-8-12)] {};
  \node[hl, fill = blue!20, fit=(M-10-9)(M-10-10)] {};
  \node[hl, fill = blue!20, fit=(M-10-14)(M-10-14)] {};

  \node[hl, fill = orange!50, fit=(M-7-12)(M-7-14)] {};
  \node[hl, fill = orange!50, fit=(M-8-9)(M-8-9)] {};
  \node[hl, fill = orange!50, fit=(M-8-13)(M-8-14)] {};
  \node[hl, fill = orange!50, fit=(M-10-11)(M-10-13)] {};
  
  \node[hl, fill = black, fit=(M-5-7)(M-5-7)] {};
  \node[hl, fill = black, fit=(M-6-8)(M-6-8)] {};
  
  \node[hl, fill = blue!65, fit=(M-5-8)(M-5-8)] {};
  \node[hl, fill = blue!65, fit=(M-6-7)(M-6-7)] {};
  
  \node[hl, fill = red, fit=(M-5-1)(M-6-6)] {};
  \node[hl, fill = green, fit=(M-5-9)(M-6-14)] {};
  
  \node[hl, fill = yellow, fit=(M-7-7)(M-10-8)] {};
  \node[hl, fill = gray!50, fit=(M-7-1)(M-10-6)] {};
  \node[hl, fill = yellow!50, fit=(M-1-9)(M-4-14)] {};
  \path[hl, fill = magenta] 
    ([xshift = -1em, yshift = 2.5ex] M-1-7) 
    rectangle 
    ([xshift = 1.2em, yshift = -1.3ex] M-4-8);
\end{scope}

\end{tikzpicture}
\]
\caption{the matrix $A$ in Theorem 4.1}
  \label{fig-1}
\end{figure}

\begin{theorem}\label{thm-lowerbound-positive}
Let $\gamma\in (1/4,1/2)$ and $\varepsilon < 1/1000$.
Moreover, let \(n > 10000\) be an even integer satisfying \(n - 2\lceil \gamma n \rceil \geq 2\) and define $\beta \triangleq \varepsilon^{8}/(100 n^{61})$. 
Let $A$ be an $n\times n$ matrix satisfying the following conditions:
\begin{itemize}
    \item $A_{i,j} = 1$ if and only if one of the following is true:
\begin{itemize} 
\item $i,j\leq n/2-1$ and $0 \leq (j - i) \mod (n/2 - 1) < \lceil\gamma n\rceil$;
\item $i,j> n/2+1$ and $0 \leq (j - i) \mod (n/2 - 1) < \lceil\gamma n\rceil$;

\item $n/2\leq i\leq n/2+1$ and $j> n/2+1$;
\item $i=j = n/2$;
\item $i=j = n/2+1$.
\end{itemize}
\item $A_{i,j} = 2\lceil \gamma n \rceil/n$ if and only if $i\leq n/2-1$ and $n/2 \leq j\leq n/2+1$.
\item Otherwise, $A_{i,j} = \beta$.
\end{itemize}
Then $A$ is a positive matrix with density $\gamma$, and the Sinkhorn-Knopp algorithm converges on this matrix.
Moreover, there exists some 
$t = \Omega(n/\varepsilon )$ such that for each $k \leq t$,
\[\norm{r\left(A^{(k)}\right)- \boldsymbol{1}}_1 + \norm{c\left(A^{(k)}\right)- \boldsymbol{1}}_1 \geq \varepsilon.\]
If $\varepsilon \leq 1/(1000\sqrt{n})$, there also exists some 
$t = \Omega(\sqrt{n}/\varepsilon )$ such that for each $k \leq t$,
\[\norm{r\left(A^{(k)}\right)- \boldsymbol{1}}_2 + \norm{c\left(A^{(k)}\right)- \boldsymbol{1}}_2 \geq \varepsilon.\]
\end{theorem}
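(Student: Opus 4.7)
\bigskip
\noindent\textbf{Proof proposal.} The plan is to track the Sinkhorn-Knopp iterates $A^{(0)},A^{(1)},\dots$ on the matrix $A$ constructed above, and to show that the algorithm preserves a highly structured invariant for $\Omega(n/\varepsilon)$ iterations, during which the $\ell_1$ error never drops below $\varepsilon$. As preliminaries I would first verify convergence and the density claim: every entry of $A$ is at least $\beta>0$, so $A$ is strictly positive, $\perman(A)>0$, and \Cref{lem-condition-converge} gives convergence; the circulant patterns in the top-left and bottom-right blocks, together with the all-ones rows/columns at indices $n/2$ and $n/2+1$, guarantee $\lceil\gamma n\rceil$ entries equal to $1$ in every row and column, so $A$ is $\gamma$-dense.

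The core of the argument is to show that at every iteration the matrix $A^{(k)}$ is completely described by a small constant number of scalar parameters. I would partition the entries of $A^{(k)}$ into several symmetry classes mirroring the symmetry of $A$: the two \emph{pivotal} entries $A^{(k)}_{n/2,n/2}$ and $A^{(k)}_{n/2+1,n/2+1}$; the two \emph{large} groups of entries in rows/columns $n/2,n/2+1$ outside the pivots (for instance the ones in positions $(n/2,j)$ with $j>n/2+1$, and the ones in positions $(i,n/2)$ with $i\le n/2-1$); and the \emph{small} entries originally equal to $\beta$ inside the blocks $B$ and $C$ indicated in Figure~\ref{fig-1}. The top-left and bottom-right circulant blocks evolve identically row by row. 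By induction on $k$ I would show that all entries within each symmetry class coincide, so that $A^{(k)}$ is determined by a constant-size state vector $(p_k,a_k,a_k',b_k,\dots)$ whose evolution under one Sinkhorn step is given by explicit rational recurrences obtained by computing the row/column sums in closed form. This reduction is what makes the problem tractable and is the direct analogue (for the lower-bound side) of \Cref{lem-number-of-large-entries} in the upper-bound argument.

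Once the invariant is established, I would derive the recurrences for the pivotal and large parameters. The pivotal rows $n/2,n/2+1$ have row sum $\approx p_k + (n/2-1)a_k$, and dividing produces a multiplicative update for $a_k$ that deviates from a fixed point by a factor $1-\Theta(1/n)$ per iteration. Denoting this deviation by $\delta_k$, the column sums at indices $n/2,n/2+1$ differ from $1$ by $\Theta(n\delta_k)$ and the column sums at other indices differ by $\Theta(\delta_k)$. Iterating gives $\delta_k = \delta_0(1-\Theta(1/n))^{k}$, so $\delta_k\ge \varepsilon/n$ for $k=\Omega(n/\varepsilon)$; this yields $\|r(A^{(k)})-\mathbf 1\|_1+\|c(A^{(k)})-\mathbf 1\|_1 = \Omega(n\delta_k/n\cdot n)=\Omega(\varepsilon)$ for the $\ell_1$ bound, while for the $\ell_2$ bound the $\Theta(n)$ columns contributing $\Omega(\delta_k)$ each give a norm of $\Omega(\sqrt n\,\delta_k)$, translating into $\Omega(\sqrt n/\varepsilon)$ iterations.

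The main obstacle is that the above analysis only works while the small entries in $B$ and $C$ remain truly negligible compared with the pivotal and large entries; once they grow to a non-negligible size, the matrix loses its simple structure, the recurrences above break, and the argument collapses. The key technical lemma will therefore be a careful inductive control of the small parameters $b_k$, showing that they grow by a factor of at most $1+O(1/n)$ per iteration, so that after $t=\Theta(n/\varepsilon)$ iterations they are still bounded by $\beta\cdot(1+O(1/n))^t\le \beta\cdot\exp(O(1/\varepsilon))$. The very small choice $\beta=\varepsilon^8/(100 n^{61})$ is tuned precisely so that even this worst-case growth keeps $b_k$ below every threshold that could disturb the pivotal dominance, throughout the entire $\Omega(n/\varepsilon)$-iteration window. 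Verifying this simultaneous upper bound on $b_k$ alongside the lower bound on $a_k$ is exactly the balanced element dynamics captured in Condition~\ref{con-lb-ak-bk-relation-positive} alluded to in the technique overview, and is where the bulk of the casework will go.
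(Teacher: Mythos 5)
Your high-level plan matches the paper's: establish the symmetry-class invariant (the analogue of Lemma~\ref{lemma-lb-equal-item-density-positive}), reduce the dynamics to a constant-size scalar recursion, show the pivotal parameters decay slowly while the small parameters stay negligible, and then convert the surviving gap between the two classes into an $\ell_1/\ell_2$ error via the row-minus-column identities of Lemma~\ref{lem-relation-rowsum-columnsum}. However, two of your quantitative steps are incorrect as stated and would break the argument.

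\textbf{The decay law is logistic, not geometric.} You write that the pivotal parameter $\delta_k$ satisfies $\delta_k = \delta_0(1-\Theta(1/n))^k$ and conclude $\delta_k \ge \varepsilon/n$ for $k=\Omega(n/\varepsilon)$. This is a non-sequitur: a geometric decay $\delta_0(1-c/n)^k$ with $\delta_0=\Theta(1/n)$ falls to $\varepsilon/n$ after only $k=\Theta\bigl(n\log(1/\varepsilon)\bigr)$ iterations, which is a much weaker lower bound than $\Omega(n/\varepsilon)$. The correct per-step contraction is not a fixed factor $1-\Theta(1/n)$ but rather $1-\Theta(b_k)$, where $b_k$ is the current (decaying) value of the parameter. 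This is exactly what the paper proves: $b_k \ge b_{k-1}(1-3b_{k-1})$ (see~\eqref{eq-bk-recursion-one-step}). Such a logistic recursion yields a harmonic decay $b_k\approx 1/(n/2+\Theta(k))$, and the time for $b_k$ to fall below $\delta$ is $\Theta(1/\delta)=\Theta(n/\varepsilon)$. The paper actually extracts this by restarting the clock at a moment $r$ with $b_r=\Theta(\delta)$ and showing $b_{r+k}\ge b_r(1-192\delta)^k$, so that $\Theta(1/\delta)$ more steps are required to halve. Without the self-referential form of the contraction factor you cannot recover the $\Omega(n/\varepsilon)$ iteration count.

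\textbf{The growth bound on the small entries is too crude.} You argue that after $t=\Theta(n/\varepsilon)$ iterations the small entries are at most $\beta\cdot(1+O(1/n))^t\le\beta\exp(O(1/\varepsilon))$. For $\varepsilon$ near its allowed maximum ($\varepsilon<10^{-3}$), $\exp(O(1/\varepsilon))$ is astronomically large (e.g., $e^{1000}$) whereas $\beta=\varepsilon^8/(100n^{61})$ is only polynomially small in $n$ and $1/\varepsilon$; the product is not small. The crude bound effectively freezes the per-step growth factor at its initial value $1+\Theta(1/n)$ for all $t$ steps, ignoring that the actual per-step growth shrinks as $b_k$ decays. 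The paper's key mechanism (Items~\ref{item-5-lemma-lb-ak-bk-relation-positive}--\ref{item-lemma-akbk-xkykukvk-relation-positive} of Condition~\ref{con-lb-ak-bk-relation-positive}) establishes a much sharper, \emph{telescoping} control: the per-step growth of $x_k,y_k,u_k,v_k$ is bounded by $(1+b_{k-2})^3$, and at the same time $1+b_{k-2}\le b_{k-2}/b_k$ once past a warm-up phase, so that the cumulative growth collapses to $(b_\ell/b_T)^3 = O\bigl((1/\varepsilon)^3\bigr)$, which is polynomial. This cubic coupling of the small-entry growth rate to the pivot-entry decay rate is the crux that makes the argument work across the entire range $\varepsilon<10^{-3}$; your proposal omits it.

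Both gaps share the same root cause: you are treating rates as constant over the iteration window when in fact they evolve as $b_k$ decays, and it is precisely that evolution --- captured by the coupled, telescoping estimates in Condition~\ref{con-lb-ak-bk-relation-positive} --- which delivers both the $\Omega(n/\varepsilon)$ lower bound and the simultaneous smallness of the off-pivot entries.
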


The matrix $A$ in \Cref{thm-lowerbound-positive} is as shown in \Cref{fig-1}.

The following lemma, whose proof follows immediately by induction (see \Cref{appendix-sec-lowerbound}), is used in the proof of \Cref{thm-lowerbound-positive}.


\begin{figure}[htbp]
  \centering
\[
\begin{tikzpicture}[%
  mtx/.style = {
    matrix of math nodes,
    nodes in empty cells,
    left delimiter = {[},
    right delimiter = {]},
    every node/.style = {
      anchor = base,
      text width = 1.5em,
      text height = 1.2ex,
      align = center,
      anchor = base east
    },
    row sep=0.5em,
  },
  hl/.style = {opacity = 0.3, rounded corners = 2pt, inner sep = -2pt},
  index/.style = {font=\small, anchor=center}, 
  txtup/.style = {rotate = 90, right},
  txtbt/.style = {yshift = -1ex}
]

\matrix (M) [mtx] {
  1	&	$\cdots$	&	1	&	$\beta$	&	$\cdots$	&	$\beta$	&	$a_k$	&	$a_k$	&	$u_k$	&	$u_k$	&	$\cdots$	&	$\cdots$	&	$\cdots$	&	$u_k$	\\
$\beta$	&	1	&	$\cdots$	&	1	&	$\cdots$	&	$\beta$	&	$a_k$	&	$a_k$	&	$u_k$	&	$u_k$	&	$\ddots$	&	$\ddots$	&	$\ddots$	&	$u_k$	\\
$\vdots$	&	$\ddots$	&	$\ddots$	&	$\ddots$	&	$\ddots$	&	$\vdots$	&	$\vdots$	&	$\vdots$	&	$\vdots$	&	$\vdots$	&	$\ddots$	&	$\ddots$	&	$\ddots$	&	$\vdots$	\\
1	&	$\cdots$	&	$\beta$	& $\cdots$ & $\beta$		&	1	&	$a_k$	&	$a_k$	&	$u_k$	&	$u_k$	&	$\cdots$	&	$\cdots$	&	$\cdots$	&	$u_k$	\\
$x_k$	&	$x_k$	&	$\cdots$	&	$\cdots$	&	$x_k$	&	$x_k$	&	1	&	$\beta$	&	$b_k$	&	$b_k$	&	$\cdots$	&	$\cdots$	&	$\cdots$	&	$b_k$	\\
$x_k$	&	$x_k$	&	$\cdots$	&	$\cdots$	&	$x_k$	&	$x_k$	&	$\beta$	&	1	&	$b_k$	&	$b_k$	&	$\cdots$	&	$\cdots$	&	$\cdots$	&	$b_k$	\\
$v_k$	&	$v_k$	&	$\cdots$	&	$\cdots$	&	$\cdots$	&	$v_k$	&	$y_k$	&	$y_k$	&	1	&	$\cdots$	&	1	&	$\beta$	&	$\cdots$	&	$\beta$	\\
$v_k$	&	$v_k$	&	$\ddots$	&	$\ddots$	&	$\ddots$	&	$v_k$	&	$y_k$	&	$y_k$	&	$\beta$	&	1	&	$\cdots$	& 1	&	$\cdots$	&	$\beta$	\\
$\vdots$	&	$\vdots$	&	$\ddots$	&	$\ddots$	&	$\ddots$	&	$\vdots$	&	$\vdots$	&	$\vdots$	&	$\vdots$	&	$\ddots$	&	$\ddots$	&	$\ddots$	&	$\ddots$	&	$\vdots$	\\
$v_k$	&	$v_k$	&	$\cdots$	&	$\cdots$	&	$\cdots$	&	$v_k$	&	$y_k$	&	$y_k$	&	1	&	$\cdots$	&	$\beta$	&	$\cdots$	& $\beta$	&	1	\\
};

\node at ([yshift=1.5em]M-1-1.north) {$1$};
\node at ([yshift=1.5em]M-1-2.north) {$\cdots$};
\node at ([yshift=1.5em]M-1-3.north) {$\lceil \gamma n \rceil$};
\node at ([yshift=1.5em]M-1-5.north) {$\cdots$};
\node at ([yshift=1.5em]M-1-7.north) {$\tfrac{n}{2}$};
\node at ([yshift=1.5em]M-1-8.north) {$\tfrac{n}{2}+1$};
\node at ([yshift=1.5em]M-1-9.north) {$\cdots$};
\node at ([yshift=1.5em]M-1-11.north) {$\tfrac{n}{2}+\lceil \gamma n \rceil+1$};
\node at ([yshift=1.5em]M-1-13.north) {$\cdots$};
\node at ([yshift=1.5em]M-1-14.north) {$n$};

\node at ([xshift=-2em]M-1-1.west) {$1$};
\node at ([xshift=-2em]M-3-1.west) {$\vdots$};
\node at ([xshift=-2em]M-5-1.west) {$\tfrac{n}{2}$};
\node at ([xshift=-2.5em]M-6-1.west) {$\tfrac{n}{2}+1$};
\node at ([xshift=-2em]M-8-1.west) {$\vdots$};
\node at ([xshift=-2em]M-10-1.west) {$n$};

\begin{scope}[on background layer]
  \node[hl, fill = blue, fit=(M-1-1)(M-1-3)] {};
  \node[hl, fill = blue, fit=(M-2-2)(M-2-4)] {};
  \node[hl, fill = blue, fit=(M-4-1)(M-4-2)] {};
  \node[hl, fill = blue, fit=(M-4-6)(M-4-6)] {};
  
  \node[hl, fill = orange, fit=(M-1-4)(M-1-6)] {};
  \node[hl, fill = orange, fit=(M-2-1)(M-2-1)] {};
  \node[hl, fill = orange, fit=(M-2-5)(M-2-6)] {};
  \node[hl, fill = orange, fit=(M-4-3)(M-4-5)] {};
  
  \node[hl, fill = blue!20, fit=(M-7-9)(M-7-11)] {};
  \node[hl, fill = blue!20, fit=(M-8-10)(M-8-12)] {};
  \node[hl, fill = blue!20, fit=(M-10-9)(M-10-10)] {};
  \node[hl, fill = blue!20, fit=(M-10-14)(M-10-14)] {};

  \node[hl, fill = orange!50, fit=(M-7-12)(M-7-14)] {};
  \node[hl, fill = orange!50, fit=(M-8-9)(M-8-9)] {};
  \node[hl, fill = orange!50, fit=(M-8-13)(M-8-14)] {};
  \node[hl, fill = orange!50, fit=(M-10-11)(M-10-13)] {};
  
  \node[hl, fill = black, fit=(M-5-7)(M-5-7)] {};
  \node[hl, fill = black, fit=(M-6-8)(M-6-8)] {};
  
  \node[hl, fill = blue!65, fit=(M-5-8)(M-5-8)] {};
  \node[hl, fill = blue!65, fit=(M-6-7)(M-6-7)] {};
  
  \node[hl, fill = red, fit=(M-5-1)(M-6-6)] {};
  \node[hl, fill = green, fit=(M-5-9)(M-6-14)] {};
  
  \node[hl, fill = yellow, fit=(M-7-7)(M-10-8)] {};
  \node[hl, fill = gray!50, fit=(M-7-1)(M-10-6)] {};
  \node[hl, fill = yellow!50, fit=(M-1-9)(M-4-14)] {};
  \path[hl, fill = magenta] 
    ([xshift = -1em, yshift = 2.5ex] M-1-7) 
    rectangle 
    ([xshift = 1.2em, yshift = -1.3ex] M-4-8);
\end{scope}

\end{tikzpicture}
\]
\caption{the matrix $A^{(k)}$ in Lemma 4.2}
  \label{fig-2}
\end{figure}

\newpage
\begin{lemma}\label{lemma-lb-equal-item-density-positive}
Under the condition of \Cref{thm-lowerbound-positive}, we have 
\begin{enumerate}
\item $A^{(k)}_{i,j} = A^{(k)}_{i',j'}$ for each $i,i',j,j'\leq n/2-1$ where $A_{i,j} = A_{i',j'}$
and each $k\geq 0$. \label{item-1-lemma-positive}
\item $A^{(k)}_{i,j} = A^{(k)}_{i',j'}$ for each $n/2+2\leq i,i',j,j' \leq n$ where $A_{i,j} = A_{i',j'}$ and each $k\geq 0$;  \label{item-2-lemma-positive}
\item $A^{(k)}_{i,j} = A^{(k)}_{i',j'}$ for each $k\geq 0$, $i,i'\geq n/2+2$ and $j,j'\leq n/2-1$; \label{item-3-lemma-positive}
\item $A^{(k)}_{i,j} = A^{(k)}_{i',j'}$ for each $k\geq 0$, $i,i'\leq n/2-1$, and $n/2+2\leq j,j' \leq n$;  \label{item-4-lemma-positive}
\item for each $k\geq 0$, we have 
\begin{align}\label{eq-equal-item-5}
A^{(k)}_{n/2,n/2} = A^{(k)}_{n/2+1,n/2+1},
\end{align}
\begin{align}\label{eq-equal-item-6}
A^{(k)}_{n/2,n/2+1} = A^{(k)}_{n/2+1,n/2}.
\end{align}
\begin{align}\label{eq-equal-item-1}
\forall i,j, \ell\leq n/2-1,\quad\qquad A^{(k)}_{i,n/2} = A^{(k)}_{j,n/2} = A^{(k)}_{\ell,n/2+1},
\end{align}
\begin{align}\label{eq-equal-item-2}
\forall i,j, \ell\leq n/2-1, \quad\qquad A^{(k)}_{n/2,i} = A^{(k)}_{n/2,j} = A^{(k)}_{n/2+1,\ell},
\end{align}
\begin{align}\label{eq-equal-item-3}
\forall n/2+2\leq i,j,\ell \leq n, \quad\quad A^{(k)}_{n/2,i} = A^{(k)}_{n/2,j} = A^{(k)}_{n/2+1,\ell}.
\end{align}
\begin{align}\label{eq-equal-item-4}
\forall n/2+2\leq i,j,\ell \leq n,\quad\quad A^{(k)}_{i,n/2} = A^{(k)}_{j,n/2} = A^{(k)}_{\ell,n/2+1}.
\end{align}
\label{item-5-lemma-positive}
\end{enumerate}
\end{lemma}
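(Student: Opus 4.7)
The plan is to prove \Cref{lemma-lb-equal-item-density-positive} by induction on $k$, establishing all five items simultaneously via a cleaner \emph{constancy} claim about the diagonal scalings. By \Cref{lem-facts-ak} we may write $A^{(k)}_{i,j} = u_i^{(k)} A_{i,j} v_j^{(k)}$ for suitable positive reals $u_i^{(k)}, v_j^{(k)}$ (explicitly, $u_i^{(k)} = x_i^{(k)}/r_i(A)$ and $v_j^{(k)} = y_j^{(k)}$). Partition $[n]$ into three \emph{row groups} $G_T = \{1, \dots, n/2-1\}$, $G_M = \{n/2, n/2+1\}$, $G_B = \{n/2+2, \dots, n\}$, and three \emph{column groups} defined analogously. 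The strengthened claim I would prove by induction is that $u^{(k)}$ is constant on each row group and $v^{(k)}$ is constant on each column group.

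All five items of the lemma follow from this constancy claim together with the evident symmetries of $A$. For items (1) and (2), if $A_{i,j} = A_{i',j'}$ with $(i,j)$ and $(i',j')$ lying in the same (row group, column group), then $A^{(k)}_{i,j} = u_i^{(k)} A_{i,j} v_j^{(k)} = u_{i'}^{(k)} A_{i',j'} v_{j'}^{(k)} = A^{(k)}_{i',j'}$. Items (3) and (4) follow because the off-diagonal blocks $(G_B, G_T)$ and $(G_T, G_B)$ of $A$ are entirely filled with $\beta$'s, so every scaled entry equals a common value. Item (5) uses the internal symmetries $A_{n/2, n/2} = A_{n/2+1, n/2+1}$, $A_{n/2, n/2+1} = A_{n/2+1, n/2}$, and $A_{n/2, j} = A_{n/2+1, j}$ for $j \in G_T \cup G_B$ (together with their columnwise analogues), combined with $u_{n/2}^{(k)} = u_{n/2+1}^{(k)}$ and $v_{n/2}^{(k)} = v_{n/2+1}^{(k)}$.

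The base case $k = 0$ reduces to checking that $r_i(A)$ is constant on each row group, which follows directly from the circulant pattern of the top-left and bottom-right diagonal blocks and the row-wise symmetries listed above. For the inductive step, suppose the iteration from $k$ to $k+1$ is a row normalization (the column case is symmetric). Then $u_i^{(k+1)} = u_i^{(k)}/r_i(A^{(k)})$ and $v_j^{(k+1)} = v_j^{(k)}$, so only the constancy of $u^{(k+1)}$ must be verified. Writing $r_i(A^{(k)}) = u_i^{(k)} \sum_{g \in \{T, M, B\}} v_g^{(k)} \bigl(\sum_{j \in G_g} A_{i,j}\bigr)$, where $v_g^{(k)}$ denotes the common value of $v_j^{(k)}$ on column group $g$ (by inductive hypothesis), it suffices to show that for each pair of groups $(G, G')$, the quantity $\sum_{j \in G'} A_{i,j}$ is constant over $i \in G$.

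This final check is the main technical obstacle and must be carried out for each of the nine combinations. Most are routine: when $G, G' \in \{G_T, G_B\}$ are diagonal blocks, the circulant pattern guarantees each row contains exactly $\lceil \gamma n \rceil$ ones; when $G = G_T, G' = G_M$, every entry is $2 \lceil \gamma n\rceil/n$; and the off-diagonal blocks $(G_B, G_T)$ and $(G_T, G_B)$ consist entirely of $\beta$. The most delicate case is $G = G_M$: rows $n/2$ and $n/2+1$ of $A$ differ entry-by-entry (e.g.\ $A_{n/2,n/2}=1$ while $A_{n/2+1,n/2}=\beta$), yet within each column group $G'$ the discrepancies cancel by the $n/2 \leftrightarrow n/2+1$ symmetry, so the partial sums $\sum_{j \in G'} A_{n/2, j}$ and $\sum_{j \in G'} A_{n/2+1, j}$ coincide (for $G' = G_M$ this uses $1 + \beta = \beta + 1$, while for $G' \in \{G_T, G_B\}$ the two rows agree entrywise). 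A parallel verification handles column normalizations, completing the induction and establishing items (1)--(5) for all $k$.
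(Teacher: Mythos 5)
Your proof is correct, and it takes a genuinely different (and in some ways cleaner) route from the paper. The paper proves the five equalities by direct induction on $k$, treating them as the inductive hypothesis and manipulating the recursive definition of $A^{(k)}_{i,j}$ term by term. You instead lift to a stronger invariant — that the accumulated diagonal scalings $u^{(k)}$ and $v^{(k)}$ are constant on each of the three row/column groups $G_T$, $G_M$, $G_B$ — and then observe that all five items of the lemma follow immediately from that constancy together with the block structure of $A$. The common technical core is identical in both proofs: the constancy of the block partial sums $\sum_{j \in G'} A_{i,j}$ (over $i$ within a row group) and dually $\sum_{i \in G} A_{i,j}$ (over $j$ within a column group). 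Your abstraction buys a more unified argument with fewer cases stated explicitly. One small caution: you dispatch the column-normalization step with ``a parallel verification,'' but $A$ is not symmetric (e.g.\ the $G_T \times G_M$ block is filled with $2\lceil\gamma n\rceil/n$ while the $G_M \times G_T$ block is all $\beta$), so the nine column-block sums must be checked independently rather than inferred by symmetry. They do in fact hold — each column of the top-left and bottom-right circulant blocks contains exactly $\lceil\gamma n\rceil$ ones, and the two rows of $G_M$ contribute symmetrically to every column group — so this is a matter of bookkeeping rather than a gap, but the phrasing slightly understates the work involved.
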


The matrix $A^{(k)}$ in \Cref{lemma-lb-equal-item-density-positive} is as shown in \Cref{fig-2}.

According to \Cref{lemma-lb-equal-item-density-positive}, the entries of \(A^{(k)}\) can be grouped into several distinct classes, with all entries within each class being identical. To capture the evolution of these groups, we define the following key entries:
\[a_k = A^{(k)}_{1,n/2},\quad b_k = A^{(k)}_{n/2,n}, \quad x_k = A^{(k)}_{n/2,1}, \quad y_k = A^{(k)}_{n,n/2},\quad u_k = A^{(k)}_{1,n}, \quad v_k = A^{(k)}_{n,1}.\]

Let
\begin{align}
&\frac{\varepsilon}{n}\leq \delta < \frac{1}{1000n},\label{eq-condition-delta} \\
&\rho \triangleq \frac{\delta^{2}}{100n^{2}}.\label{eq-def-rho-positive}
\end{align}
The following lemma plays a key role in establishing \Cref{thm-lowerbound-positive}. It shows that as \(k\) increases, the values \(a_k\) and \(b_k\) decrease slowly, whereas \(x_k\), \(y_k\), \(u_k\), and \(v_k\) increase slowly.

\begin{lemma}\label{lem-ab-bk-decay-slowly}
There exists some $t = \Omega(1/\delta )$ such that for each $k \leq t$,
\begin{align*}
\min\{a_{k},b_{k}\} \geq \delta, \quad \max\{x_{k},y_{k},u_{k},v_{k}\} < \frac{\rho}{4n}.
\end{align*}
\end{lemma}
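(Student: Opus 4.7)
The plan is to prove the lemma by induction on $k$, but rather than maintaining only the two-sided bounds stated in the lemma, I would carry a stronger inductive invariant that simultaneously controls all six parameters $a_k,b_k,x_k,y_k,u_k,v_k$ together with the auxiliary class-values produced by \Cref{lemma-lb-equal-item-density-positive} (namely the common values of the ``$1$-entries'' and ``$\beta$-entries'' in the top-left and bottom-right $(n/2-1)\times(n/2-1)$ blocks, the two pivotal entries $A^{(k)}_{n/2,n/2}=A^{(k)}_{n/2+1,n/2+1}$ and $A^{(k)}_{n/2,n/2+1}=A^{(k)}_{n/2+1,n/2}$, and the two off-diagonal $\beta$-blocks). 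The strengthened invariant precisely fixes orders of magnitude and ratios among these quantities (roughly: pivotal diagonal entries are $\Theta(1)$, the $1$-block entries $\Theta(1/n)$, the $\beta$-block entries tiny, $a_k$ and $b_k$ of order $1/n$ decaying monotonically, and $x_k,y_k,u_k,v_k$ of order $\beta$ times a bounded cumulative scaling). The bounds $\min\{a_k,b_k\}\geq\delta$ and $\max\{x_k,y_k,u_k,v_k\}<\rho/(4n)$ will then be quantitative corollaries of this invariant.

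For the base case $k=0$, using $A^{(0)}_{i,j}=A_{i,j}/r_i(A)$ together with the explicit row sums of $A$ (namely $r_i(A)\approx\lceil\gamma n\rceil$ for $i\leq n/2-1$ and $i\geq n/2+2$, and $r_{n/2}(A)=r_{n/2+1}(A)\approx n/2$), one gets $a_0,b_0\approx 2/n$ and $x_0,y_0,u_0,v_0=O(\beta/n)$; both are well inside the required bounds with room to spare. For the inductive step, I would use \Cref{lemma-lb-equal-item-density-positive} to reduce the row/column sums of $A^{(k)}$ to a handful of linear combinations of the invariant's parameters. For instance, when $k$ is odd (columns sum to $1$), the row sum of row~$1$ equals $\lceil\gamma n\rceil p_k+(n/2-1-\lceil\gamma n\rceil)q_k+2a_k+(n/2-1)u_k$, and the row sum of row~$n/2$ equals $(n/2-1)x_k+\alpha_k+\gamma_k+(n/2-1)b_k$. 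Dividing entries by the appropriate row sums then produces closed-form recurrences for all parameters at step $k+1$ and the invariant can be checked term by term.

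The main obstacle is controlling the cumulative drift of $a_k$ and $b_k$ across $\Omega(1/\delta)$ iterations. One step of row-normalization gives $a_{k+1}=a_k/r_1(A^{(k)})$; the inductive invariant keeps $|r_1(A^{(k)})-1|=O(\delta)$ because the only non-negligible contribution to the deviation comes from the $2a_k$ and $(n/2-1)u_k$ terms, both of which the invariant bounds. Iterating the one-step estimate $a_{k+1}\geq a_k(1-O(\delta))$ gives $a_k\geq a_0\exp(-O(\delta k))$, which stays above $\delta$ for $k\leq c\log(a_0/\delta)/\delta=\Omega(1/\delta)$; the analysis for $b_k$ is symmetric. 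Controlling $x_k,y_k,u_k,v_k$ is comparatively easy: each is of the form $\beta\cdot\prod_{j<k}s_j^{-1}$ where each $s_j$ is a row or column sum within $1\pm O(\delta)$ of $1$, so the product is $\Theta(1)$ and the quantities remain $O(\beta)\ll\rho/(4n)$ throughout, since $\beta=\varepsilon^8/(100n^{61})$ is vastly smaller than $\rho/(4n)=\delta^2/(400n^3)$. Putting these estimates together verifies the invariant up to some $t=\Omega(1/\delta)$, yielding the lemma.
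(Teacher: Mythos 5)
Your high-level plan—induction on $k$ with a strengthened invariant controlling all six class values and reducing the row/column sums to short linear combinations via \Cref{lemma-lb-equal-item-density-positive}—is exactly the paper's strategy (\Cref{con-lb-ak-bk-relation-positive} is precisely such an invariant). However, your quantitative model of the dynamics is wrong in a way that breaks the core step of the argument.

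You claim that $|r_1(A^{(k)})-1|=O(\delta)$ uniformly, hence $a_{k+1}\geq a_k(1-O(\delta))$ and $a_k\geq a_0\exp(-O(\delta k))$. This is internally inconsistent with your own invariant, which asserts $a_k=\Theta(1/n)$: by \Cref{lem-relation-rowsum-columnsum}, for odd $k$ one has $r_1(A^{(k)})-1 = 2(a_k-x_k)+(n/2-1)(u_k-v_k)=2a_k+O(\rho)$, so the deviation is $\Theta(a_k)$, which is $\Theta(1/n)$ at the start. Since $\delta<1/(1000n)$, this is three orders of magnitude larger than $O(\delta)$, so the one-step contraction factor is $1-\Theta(1/n)$ at first, not $1-O(\delta)$. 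The real recursion $a_{k+1}\approx a_k/(1+2a_k)$ produces harmonic decay ($1/a_{k}$ grows roughly linearly in $k$), not the exponential decay you posit, and this distinction is exactly where the difficulty lies. The paper handles it by proving the sharp two-sided recursions in \Cref{item-1-lemma-lb-ak-bk-relation-positive}--\ref{item-3-lemma-lb-ak-bk-relation-positive} of \Cref{con-lb-ak-bk-relation-positive} (which requires controlling the sign and size of $b_k-a_k$, an issue your sketch omits entirely), deriving $b_k\geq b_{k-1}(1-3b_{k-1})$, and then using a change-of-regime argument: locate $r$ with $b_r=\Theta(\delta)$ and show that from there the per-step contraction is $1-O(\delta)$, so $\Omega(1/\delta)$ further steps are needed. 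Without that two-phase structure, the ``$O(\delta)$ per step'' claim is simply false for the bulk of the iterations. A secondary, less fatal inaccuracy: the cumulative scaling $\prod_j s_j^{-1}$ applied to $x_k,\dots,v_k$ is not $\Theta(1)$; the paper's \Cref{item-lemma-akbk-xkykukvk-relation-positive} shows a per-step growth factor $(1+b_{k-2})^3$, and over $\Omega(1/\delta)$ iterations the cumulative factor can be polynomial in $n$ and $1/\delta$ (telescoping to roughly $(b_\ell/b_k)^{O(1)}$). Your final bound survives only because $\beta=\varepsilon^8/(100n^{61})$ is chosen small enough to absorb this, but the $\Theta(1)$ claim as stated is wrong and would need to be replaced by the telescoping estimate.
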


The proof of \Cref{lem-ab-bk-decay-slowly} is quite complicated. We must precisely characterize the decay rates of \(a_k\) and \(b_k\) by establishing a sharp upper bound (see \eqref{eq-bk-recursion-one-step}) as well as a corresponding lower bound (see \eqref{eq-bk-leq-bkminustwo-oneplusbkminustwo}). Additionally, we need to derive an upper bound on the growth rate of \(x_k\), \(y_k\), \(u_k\), and \(v_k\) (see \eqref{eq-relation-maxratioofxyuv-ratiob}). Together, these bounds imply that \(\max\{x_k,y_k,u_k,v_k\}\) cannot become excessively large before \(\min\{a_k,b_k\}\) has decreased significantly.  
As a result, \(x_k\), \(y_k\), \(u_k\), and \(v_k\) are \(O(\rho/n)\) and can be safely neglected in the calculation for each $k \leq t$.
Consequently, there exists a lower bound \(\delta\) for \(a_k\) and \(b_k\), and an upper bound of \(\rho/(4n)\) for \(x_k\), \(y_k\), \(u_k\), and \(v_k\).

The proof sketch of \Cref{lem-ab-bk-decay-slowly} is as follows. To establish lower bounds for \(a_k\) and \(b_k\), we first derive an iterative relationship between them (see \Cref{item-1-lemma-lb-ak-bk-relation-positive} in \Cref{con-lb-ak-bk-relation-positive}). Next, we prove simple upper bounds for \(a_k\) and \(b_k\), which serve as a foundation for the subsequent arguments (see \Cref{item-2-lemma-lb-ak-bk-relation-positive} in \Cref{con-lb-ak-bk-relation-positive}). Then we establish refined relationships between \(a_k\) and \(b_k\) that are crucial in precisely characterizing their decay rates (see \Cref{item-3-lemma-lb-ak-bk-relation-positive} in \Cref{con-lb-ak-bk-relation-positive}). As a result, we derive a lower bound on the decay rate of \(b_k\) (see \Cref{item-5-lemma-lb-ak-bk-relation-positive} in \Cref{con-lb-ak-bk-relation-positive}). Subsequently, we obtain an upper bound on the growth rates of \(x_k\), \(y_k\), \(u_k\), and \(v_k\), which turns out to be cubic in the lower bound of the decay rate of \(b_k\) (see \Cref{item-lemma-akbk-xkykukvk-relation-positive} in \Cref{con-lb-ak-bk-relation-positive}). This shows that the increases of \(x_k\), \(y_k\), \(u_k\), and \(v_k\) can be controlled by the decay of \(b_k\). Finally, we prove upper bounds for \(x_k\), \(y_k\), \(u_k\), and \(v_k\) (see \Cref{item-rho-lemma-lb-ak-bk-relation-positive} in \Cref{con-lb-ak-bk-relation-positive}). All these relationships and bounds are summarized in \Cref{con-lb-ak-bk-relation-positive}.

Define
\begin{align}
\ell &\triangleq 8n(\lceil\log n\rceil -1),\label{eq-def-ell-positive}\\
K &\triangleq \min\{k \geq 0\mid \min\{a_k,b_k\} < \delta \text{ or } k\geq 1/\delta \}\label{eq-def-K-positive}.
\end{align}

\begin{condition}\label{con-lb-ak-bk-relation-positive}
Given any $T< K$, consider the following conditions:
\begin{enumerate}
\item \label{item-1-lemma-lb-ak-bk-relation-positive}
For every odd $k$ with $k\leq T$, 
\begin{align}\label{eq-ak-recursion-positive}
a_{k+1} \in \left[\frac{a_k}{1+2a_k + \rho},\frac{a_k}{1+2a_k - \rho}\right],
\end{align}
\begin{equation}
\begin{aligned}\label{eq-bk-recursion-complex-positive}
b_{k+1} \in \left[\frac{b_k}{1 + \left(\frac{n}{2} - 1\right)(b_k-a_k)+ \rho}, \frac{b_k}{1 + \left(\frac{n}{2} - 1\right)(b_k-a_k)- \rho}\right];
\end{aligned}
\end{equation}
For every even $k$ with $k\leq T$,  
\begin{align}\label{eq-bk-recursion-positive}
b_{k+1} \in\left[ \frac{b_k}{1+2b_k+ \rho}, \frac{b_k}{1+2b_k- \rho}\right],
\end{align}
\begin{equation}
\begin{aligned}\label{eq-ak-recursion-complex-positive}
a_{k+1} \in \left[\frac{a_k}{1 + \left(\frac{n}{2} - 1\right)(a_k-b_k)+ \rho},\frac{a_k}{1 + \left(\frac{n}{2} - 1\right)(a_k-b_k)- \rho}\right].
\end{aligned}
\end{equation}

\vspace{0.2cm}

\item \label{item-2-lemma-lb-ak-bk-relation-positive}
If $k\leq T$, then
\begin{align}
\max\{a_k,b_k\} < 2/(n-1) - (1/\delta - (k-1))\rho < 2/(n-1)&< 1/10,\label{eq-ub-maxab-positive}\\
1+ \left(\frac{n}{2} - 1\right)(a_k - b_k) -\rho> 1- \left(\frac{n}{2} - 1\right)b_k -\rho >0&,\label{eq-lowerbound-one-minusnovertwobk}\\
1+ \left(\frac{n}{2} - 1\right)(b_k - a_k) -\rho> 1- \left(\frac{n}{2} - 1\right)a_k -\rho >0&.\label{eq-lowerbound-one-minusnovertwoak}
\end{align}
\item \label{item-3-lemma-lb-ak-bk-relation-positive}
For every odd $k$ with $k\leq T$,
\begin{align}
b_{k} &<  a_{k} + (6k+2)\rho b_{k},\label{eq-key-relation-positive-odd-1}\\
a_{k} &< b_{k}(1 + 2a_{k}) + 6(k+1)\rho b_{k},\label{eq-key-relation-positive-odd-2}\\
b_{k+1} &< a_{k} + (6k+4)\rho b_{k}.\label{eq-key-relation-positive-odd-3}
\end{align}
For every even $k$ with $k\leq T$,
\begin{align}
a_{k} &<  b_{k} + (6k+2)\rho a_{k},\label{eq-key-relation-positive-even-1}\\
b_{k} &< a_{k}(1 + 2b_{k}) + 6(k+1)\rho a_{k},\label{eq-key-relation-positive-even-2}\\
a_{k+1}  &< b_{k} + (6k+4)\rho a_{k}.\label{eq-key-relation-positive-even-3}
\end{align}

\item \label{item-5-lemma-lb-ak-bk-relation-positive}
For every even $k$ with $\ell+2 \leq k\leq T$,
\begin{align}
\frac{b_{k}}{b_{k-2}} &\leq \frac{1}{1+b_{k-2}}\label{eq-bk-leq-bkminustwo-oneplusbkminustwo}.
\end{align}

\item \label{item-lemma-akbk-xkykukvk-relation-positive}
For every even $k$ with $2\leq k\leq T$,
\begin{align}\label{eq-relation-maxratioofxyuv-ratiob}
\max\left\{\frac{x_{k}}{x_{k-1}},\frac{y_{k}}{y_{k-1}},\frac{u_{k}}{u_{k-1}},\frac{v_{k}}{v_{k-1}},\frac{x_{k-1}}{x_{k-2}},\frac{y_{k-1}}{y_{k-2}},\frac{u_{k-1}}{u_{k-2}},\frac{v_{k-1}}{v_{k-2}}\right\}\leq \left(1+b_{k-2}\right)^3.
\end{align}

\item \label{item-rho-lemma-lb-ak-bk-relation-positive} 
If $ k\leq \min\{T,\ell\}$, then
\begin{align}\label{eq-ub-xyuv-rho-kleql}
\max\{x_{k},y_{k},u_{k},v_{k}\} \leq \frac{\rho\delta^6}{n^{51}}\cdot \left(1+\frac{2}{n-1}\right)^{3k}.
\end{align}
For every odd $k$ with $\ell \leq k< T$,
\begin{align}\label{eq-ub-xyuv-rho-kgeql-even}
\max\{x_{k},y_{k},u_{k},v_{k}\} \leq 
\frac{\rho \delta^6}{n^2}\cdot \left(\frac{b_{\ell}}{b_{k-1}}\right)^6\cdot \left(\frac{b_{k-1}}{b_{k+1}}\right)^3.
\end{align}
For every even $k$ with $\ell \leq k\leq T$,
\begin{align}\label{eq-ub-xyuv-rho-kgeql-odd}
\max\{x_{k},y_{k},u_{k},v_{k}\} \leq \frac{\rho \delta^6}{n^2}\cdot \left(\frac{b_{\ell}}{b_{k}}\right)^6.
\end{align}
\end{enumerate}
\end{condition}

Next, we present the following lemma, which is central to the proof of \Cref{lem-ab-bk-decay-slowly}.
\begin{lemma}\label{lem-condition-holds}
\Cref{con-lb-ak-bk-relation-positive} holds for $T = K - 1$.
\end{lemma}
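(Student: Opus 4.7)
The plan is to prove Condition~\ref{con-lb-ak-bk-relation-positive} by strong induction on $k$, establishing all six sub-items simultaneously for every $k \leq K-1$. The base case is verified by direct computation from the initial matrix $A$ in Theorem~\ref{thm-lowerbound-positive}: its row sums yield $a_0 \approx b_0 \approx 2/n$, while $x_0, y_0, u_0, v_0 = O(\beta/n)$, which comfortably satisfy the required inequalities in items~\ref{item-2-lemma-lb-ak-bk-relation-positive} and~\ref{item-rho-lemma-lb-ak-bk-relation-positive}. The inductive step at index $k$ proceeds in the order of the sub-items~\ref{item-1-lemma-lb-ak-bk-relation-positive} through~\ref{item-rho-lemma-lb-ak-bk-relation-positive}, which reflects their dependency graph.

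For items~\ref{item-1-lemma-lb-ak-bk-relation-positive}--\ref{item-3-lemma-lb-ak-bk-relation-positive}, I compute the exact row/column sums of $A^{(k)}$ using the structural equalities provided by \Cref{lemma-lb-equal-item-density-positive}. For odd $k$, column $n/2$ sums to $(n/2-1)a_k + 2 A^{(k)}_{n/2,n/2} + (n/2-1)y_k$, with analogous formulas for the other columns. Normalizing in the next iteration yields the nominal recursions $a_{k+1} \approx a_k/(1+2a_k)$ and $b_{k+1} \approx b_k/(1+(n/2-1)(b_k - a_k))$; the $\pm\rho$ slack in item~\ref{item-1-lemma-lb-ak-bk-relation-positive} absorbs the contributions of $\beta$, $x_k, y_k, u_k, v_k$, which by the inductive hypothesis of item~\ref{item-rho-lemma-lb-ak-bk-relation-positive} are each $O(\rho/n)$. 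Item~\ref{item-2-lemma-lb-ak-bk-relation-positive} then follows by iterating~\ref{item-1-lemma-lb-ak-bk-relation-positive} and tracking how the $+\rho$ slack erodes the safety margin over at most $1/\delta$ steps, which by the definition of $K$ bounds the total erosion. Item~\ref{item-3-lemma-lb-ak-bk-relation-positive} is obtained by Taylor-expanding~\ref{item-1-lemma-lb-ak-bk-relation-positive} at step $k$ combined with the corresponding relation at step $k-1$, with the $6k$ coefficient reflecting the linear accumulation of $\rho$-slack.

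Item~\ref{item-5-lemma-lb-ak-bk-relation-positive} is the technical core: two consecutive iterations contract $b_k$ by approximately $1/\bigl((1+2b_{k-2})(1+(n/2-1)(a_{k-1} - b_{k-1}))\bigr)$, and by item~\ref{item-3-lemma-lb-ak-bk-relation-positive} the factor $(n/2-1)(a_{k-1} - b_{k-1})$ becomes nonnegative (with a small extra margin) once $k \geq \ell + 2$, yielding the stated bound $b_k/b_{k-2} \leq 1/(1+b_{k-2})$. For item~\ref{item-lemma-akbk-xkykukvk-relation-positive}, the per-iteration multiplicative factors acting on $x_k, y_k, u_k, v_k$ are reciprocal row/column sums of the form $1/(1+\Theta(b_{k-2}))$ up to negligible perturbation; two iterations combine to at most $(1+b_{k-2})^3$ after absorbing one extra factor from the interaction with the $n/2$ and $n/2+1$ rows/columns. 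Item~\ref{item-rho-lemma-lb-ak-bk-relation-positive} then follows by telescoping~\ref{item-lemma-akbk-xkykukvk-relation-positive}: for $k \leq \ell$ the crude uniform bound from item~\ref{item-2-lemma-lb-ak-bk-relation-positive} yields $(1 + 2/(n-1))^{3k}$, and for $k > \ell$, substituting~\ref{item-5-lemma-lb-ak-bk-relation-positive} into the telescoped product $\prod (1+b_{k'})^3$ converts it into $(b_\ell/b_k)^6$.

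The main obstacle is the bootstrap between items~\ref{item-5-lemma-lb-ak-bk-relation-positive} and~\ref{item-rho-lemma-lb-ak-bk-relation-positive}. To close~\ref{item-5-lemma-lb-ak-bk-relation-positive} at step $k$, the $\pm\rho$ slack in~\ref{item-1-lemma-lb-ak-bk-relation-positive} must be genuinely negligible, which needs the bound from~\ref{item-rho-lemma-lb-ak-bk-relation-positive}; conversely,~\ref{item-rho-lemma-lb-ak-bk-relation-positive} at large $k$ relies on~\ref{item-5-lemma-lb-ak-bk-relation-positive} so that the cumulative growth $\prod (1+b_{k'})^3$ is absorbed by the decay of $b_k$. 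The threshold $\ell = 8n(\lceil\log n\rceil - 1)$ is tuned precisely so that the pessimistic growth $(1 + 2/(n-1))^{3\ell}$ during the first $\ell$ iterations still leaves $x_k, y_k, u_k, v_k$ well below $\rho/n$, at which point the strong contraction~\ref{item-5-lemma-lb-ak-bk-relation-positive} kicks in and the bootstrap closes. Synchronizing all the exponents (the $\delta^6$ and $n^{51}$ in the initial bound of~\ref{item-rho-lemma-lb-ak-bk-relation-positive}, the cubic factor in~\ref{item-lemma-akbk-xkykukvk-relation-positive}, the threshold $\ell$, and the $\rho = \delta^2/(100 n^2)$ from \eqref{eq-def-rho-positive}) with this bootstrap is the most delicate bookkeeping in the proof.
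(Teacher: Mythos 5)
Your proposal matches the paper's proof in both structure and content: it is a complete induction on $k$ in which the base case is read off from the initial matrix (corresponding to \Cref{lem-base-case}), and the inductive step threads through the six sub-items of \Cref{con-lb-ak-bk-relation-positive} in dependency order (corresponding to \Cref{lem-iterative-relation}--\Cref{lem-last-item}), with the key bootstrap between \Cref{item-rho-lemma-lb-ak-bk-relation-positive} (which feeds the $\pm\rho$ slack in \Cref{item-1-lemma-lb-ak-bk-relation-positive} two steps later) and \Cref{item-5-lemma-lb-ak-bk-relation-positive}--\Cref{item-lemma-akbk-xkykukvk-relation-positive} (which close the loop via telescoping past the threshold $\ell$). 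One small inaccuracy in your sketch of \Cref{item-5-lemma-lb-ak-bk-relation-positive}: the paper does not argue that $(n/2-1)(a_{k-1}-b_{k-1})$ becomes nonnegative; rather it uses \Cref{item-3-lemma-lb-ak-bk-relation-positive} to show $b_{k-1}-a_{k-1}$ is only slightly negative, then combines the two single-step contractions into $b_k \le b_{k-2}/(1 + 2b_{k-2}(1 - (n-1)b_{k-2}/2))$, and finally needs a separate sub-argument (a potential-function contradiction tracking $L_k = n(1-(n-1)b_k/2)$ and showing $L_{8nr}\ge 2^r$) to establish that $b_\ell \le 1/(n-1)$, which makes the denominator $\ge 1+b_{k-2}$ for $k\ge\ell+2$. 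That intermediate step is the one real gap in your outline, but it belongs to the proof of the sub-lemma rather than the induction scaffolding itself, and the overall strategy you describe is the paper's.
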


The following lemma is used in the proof of \Cref{lem-condition-holds}.
It establishes relationships between row sums and column sums. In particular, when \(k\) is even, \Cref{lem-condition-holds} allows us to determine the corresponding column sums based on the condition that the row sums equal one. Conversely, when \(k\) is odd, \Cref{lem-condition-holds} enables us to compute the row sums based on the condition that the column sums equal one.
The lemma is immediate by induction and its proof is provided in ~\Cref{appendix-sec-lowerbound}.

\begin{lemma}\label{lem-relation-rowsum-columnsum}
For any $k\geq 0$, we have
\begin{align}\label{eq-relation-rnovertwo-cnovertwo-positive}
r_{n/2}\left(A^{(k)}\right) = c_{n/2}\left(A^{(k)}\right) + \left(\frac{n}{2} - 1\right)(b_k + x_k - a_k - y_k).   
\end{align}
\begin{align}\label{eq-relation-rnovertwoplusone-cnovertwoplusone-positive}
r_{n/2+1}\left(A^{(k)}\right) = c_{n/2+1}\left(A^{(k)}\right) + \left(\frac{n}{2} - 1\right)(b_k + x_k - a_k - y_k).   
\end{align}
\begin{align}\label{eq-relation-rone-cn-positive}
\forall i\leq n/2- 1, \quad r_{i}\left(A^{(k)}\right) = c_{i}\left(A^{(k)}\right) + 2(a_k-x_k) + \left(\frac{n}{2} - 1\right)(u_k- v_k).
\end{align}
\begin{align}\label{eq-relation-rn-cone-positive}
\forall i\geq n/2+2, \quad c_{i}\left(A^{(k)}\right) = r_{i}\left(A^{(k)}\right) + 2(b_k-y_k) + \left(\frac{n}{2} - 1\right)(u_k- v_k).
\end{align}
\end{lemma}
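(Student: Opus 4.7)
The plan is to prove each of the four identities by directly expanding the relevant row and column sums of $A^{(k)}$ using the explicit block structure provided by \Cref{lemma-lb-equal-item-density-positive} (which is summarized in \Cref{fig-2}), and then subtracting. The equal-entry assertions of that lemma turn every row sum and column sum into a linear combination of $a_k,b_k,x_k,y_k,u_k,v_k$ together with the two pivotal entries $A^{(k)}_{n/2,n/2}$ and $A^{(k)}_{n/2,n/2+1}$, so the argument is essentially bookkeeping rather than analysis; no genuinely difficult step appears.

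For \eqref{eq-relation-rnovertwo-cnovertwo-positive}, I would read off from \Cref{fig-2} that row $n/2$ of $A^{(k)}$ consists of $(n/2-1)$ copies of $x_k$ (columns $1,\dots,n/2-1$), the two pivotal entries $A^{(k)}_{n/2,n/2}$ and $A^{(k)}_{n/2,n/2+1}$, and $(n/2-1)$ copies of $b_k$ (columns $n/2+2,\dots,n$). Column $n/2$ decomposes analogously into $(n/2-1)$ copies of $a_k$, the two pivotal entries $A^{(k)}_{n/2,n/2}$ and $A^{(k)}_{n/2+1,n/2}$, and $(n/2-1)$ copies of $y_k$. The pivotal contributions in the two sums are identical by the pivot symmetries \eqref{eq-equal-item-5}--\eqref{eq-equal-item-6}, so they cancel, leaving exactly $(n/2-1)(b_k+x_k-a_k-y_k)$. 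The companion identity \eqref{eq-relation-rnovertwoplusone-cnovertwoplusone-positive} is obtained by the same computation applied to row/column $n/2+1$.

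For \eqref{eq-relation-rone-cn-positive}, fix $i\le n/2-1$. Row $i$ decomposes into three pieces: its restriction to the top-left $(n/2-1)\times(n/2-1)$ block, two entries equal to $a_k$ at columns $n/2$ and $n/2+1$ (using \eqref{eq-equal-item-1}), and $(n/2-1)$ copies of $u_k$ in columns $n/2+2,\dots,n$ (by item \ref{item-4-lemma-positive} of \Cref{lemma-lb-equal-item-density-positive}). Column $i$ decomposes analogously with $a_k$ replaced by $x_k$ (using \eqref{eq-equal-item-2}) and $u_k$ by $v_k$ (by item \ref{item-3-lemma-positive}). The crucial observation is that the contributions from the top-left block \emph{cancel} in the difference $r_i(A^{(k)})-c_i(A^{(k)})$: by item \ref{item-1-lemma-positive} of \Cref{lemma-lb-equal-item-density-positive}, every entry in that block takes one of only two values (one for the positions where $A_{i,j}=1$ and one for the positions where $A_{i,j}=\beta$), and the cyclic-shift rule defining the $1$-pattern of $A$ means that for each fixed $i$ the map $j\mapsto (j-i)\bmod (n/2-1)$ is a bijection on $\{1,\dots,n/2-1\}$. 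Hence every row and every column of the block contains exactly the same number of $1$-entries and hence the same row/column sum in $A^{(k)}$. What remains of $r_i-c_i$ is exactly $2(a_k-x_k)+(n/2-1)(u_k-v_k)$, which is \eqref{eq-relation-rone-cn-positive}.

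The final identity \eqref{eq-relation-rn-cone-positive} follows by the symmetric argument applied to the bottom-right $(n/2-1)\times(n/2-1)$ block, using item \ref{item-2-lemma-positive} of \Cref{lemma-lb-equal-item-density-positive} for the cancellation inside that block and items \ref{item-3-lemma-positive}--\ref{item-4-lemma-positive} to identify the off-diagonal bands as $u_k$ and $v_k$ copies. The only step that requires any real care is the bookkeeping of which entries fall into which equivalence class; everything else is addition, so I expect no substantive obstacle.
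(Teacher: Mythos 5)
Your proposal is correct and follows essentially the same route as the paper's own proof in the appendix: expand each row/column sum using the block structure established in \Cref{lemma-lb-equal-item-density-positive} and subtract, with the pivot symmetries \eqref{eq-equal-item-5}--\eqref{eq-equal-item-6} killing the central $2\times 2$ block for rows/columns $n/2,\,n/2+1$, and the equal-entry structure in the top-left (resp.\ bottom-right) block giving the cancellation $\sum_{t\le n/2-1}A^{(k)}_{i,t}=\sum_{t\le n/2-1}A^{(k)}_{t,i}$ for $i\le n/2-1$. The only cosmetic difference is that the paper achieves this last cancellation via an indicator-function rewrite (splitting the block sum over positions with $A_{i,j}=1$ versus $A_{i,j}=\beta$ and invoking item (1) of \Cref{lemma-lb-equal-item-density-positive} together with the fact that row $i$ and column $i$ contain the same number of $1$'s), whereas you spell out the counting explicitly through the cyclic-shift structure; these are the same argument phrased differently, and your version is, if anything, a bit more transparent about why the counts match.
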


Our proof of \Cref{lem-condition-holds} is by induction. 
The following lemma establishes the base case, and its proof can be found in~\Cref{appendix-sec-lowerbound}.

\begin{lemma}\label{lem-base-case} 
The following holds:
\begin{align}
&\max\{x_{0},y_{0},u_{0},v_{0}\} \leq \rho \delta^{6}/n^{51}.\label{eq-ub-xyuvzero}\\
&a_0 \in \left[ \frac{2}{n + 4 + \rho}, \frac{2}{n+ 4}\right], \quad b_0 \in \left[ \frac{2}{n + \rho}, \frac{2}{n}\right].\label{eq-a0-b0-range}\\
&a_0 < 2/(n-1) - \rho/\delta - \rho, \quad b_0 < 2/(n-1) - \rho/\delta -\rho \label{eq-a0-b0-less-twonminusone-minusnrho}.
\end{align}
\end{lemma}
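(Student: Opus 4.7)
The plan is to prove \Cref{lem-base-case} by direct computation: evaluate each of $a_0, b_0, x_0, y_0, u_0, v_0$ in closed form via $A^{(0)}_{i,j} = A_{i,j}/r_i(A)$, and then verify the three inequalities by elementary estimates that exploit the wide separation of scales between $\beta = \varepsilon^8/(100 n^{61})$, $\rho = \delta^2/(100 n^2)$, and $\delta \in [\varepsilon/n, 1/(1000n))$.

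First I would enumerate the three distinct row sums of $A$ that arise. By the symmetries recorded in \Cref{lemma-lb-equal-item-density-positive}, it suffices to examine rows $1$, $n/2$, and $n$; a careful case analysis of the defining rules of $A$, writing $m = \lceil \gamma n \rceil$, gives
\begin{align*}
r_1(A) &= \tfrac{m(n+4)}{n} + (n-2-m)\beta,\\
r_{n/2}(A) = r_{n/2+1}(A) &= \tfrac{n}{2}(1+\beta),\\
r_n(A) &= m + (n-m)\beta.
\end{align*}
Here $r_1$ picks up $m$ ones from the top-left band, two entries of $2m/n$ from the middle columns, and $n-2-m$ copies of $\beta$; $r_{n/2}$ picks up the diagonal $1$ together with the $n/2-1$ ones in columns $>n/2+1$, plus $n/2$ copies of $\beta$; and $r_n$ mirrors $r_1$ but without the middle-column entries. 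Reading off the six target quantities yields $a_0 = 2/(n+4 + \eta_a)$ with $\eta_a = n(n-2-m)\beta/m$, $b_0 = 2/(n+n\beta)$, and each of $x_0, y_0, u_0, v_0$ of the form $\beta$ divided by one of the row sums above, hence at most $4\beta/n$ since $\gamma > 1/4$ forces $m > n/4$.

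With these formulas in hand, each claim reduces to a scale comparison. For \eqref{eq-a0-b0-range} I need $\eta_a \leq \rho$ and $n\beta \leq \rho$; using $\delta \geq \varepsilon/n$ gives $\rho \geq \varepsilon^2/(100 n^4)$, which dwarfs both $\eta_a \leq 4n\beta = \varepsilon^8/(25 n^{60})$ and $n\beta = \varepsilon^8/(100 n^{60})$ whenever $\varepsilon < 1$. For \eqref{eq-ub-xyuvzero} I compare $4\beta/n = \varepsilon^8/(25 n^{62})$ with $\rho \delta^6/n^{51} = \delta^8/(100 n^{53}) \geq \varepsilon^8/(100 n^{61})$; their ratio is $4/n$, so the bound holds as soon as $n \geq 4$, which follows from $n > 10000$. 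For \eqref{eq-a0-b0-less-twonminusone-minusnrho} I combine $a_0 \leq 2/(n+4)$ and $b_0 \leq 2/n$ with the identities $2/(n-1) - 2/(n+4) = 10/((n-1)(n+4))$ and $2/(n-1) - 2/n = 2/(n(n-1))$; since $\delta < 1/(1000 n)$ forces $\rho/\delta + \rho < 2/(10^5 n^3)$, both gaps dominate and the strict inequalities follow.

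I do not expect a genuine obstacle. The only step requiring sustained care is the row-sum enumeration, where the defining cases of $A$ interact subtly at the boundary indices $n/2$ and $n/2+1$: one must verify, for instance, that $A_{n/2+1, n/2} = \beta$ rather than $1$ or $2m/n$, that $A_{n/2, j} = 1$ exactly for $j = n/2$ and for $j \in [n/2+2, n]$, and that the entry $2m/n$ occurs only in rows $i \leq n/2-1$ and columns $n/2, n/2+1$. Once the three row sums are established, everything else is routine algebraic bookkeeping made easy by the large separation between the relevant scales.
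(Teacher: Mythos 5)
Your proposal is correct and follows essentially the same approach as the paper: compute the relevant row sums of $A$ explicitly, read off the six initial quantities from $A^{(0)}_{i,j}=A_{i,j}/r_i(A)$, and then close each inequality by comparing the scales of $\beta = \varepsilon^8/(100 n^{61})$, $\rho = \delta^2/(100 n^2)$, and $\delta \ge \varepsilon/n$. You are slightly more explicit than the paper (you also work out $r_n(A) = m + (n-m)\beta$, while the paper computes $r_1$ and $r_{n/2}$ and handles $y_0,v_0$ with a ``similarly''), but the underlying argument is the same, and your numerical estimates all check out.
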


Lemmas \ref{lem-item-max-xyuv-boost}-\ref{lem-last-item} are employed in the inductive step of \Cref{lem-condition-holds}. Let \(0 \leq t < K\) and assume that \Cref{item-rho-lemma-lb-ak-bk-relation-positive} of \Cref{con-lb-ak-bk-relation-positive} holds for \(T = \max\{t-2, 0\}\). Then, Lemmas \ref{lem-item-max-xyuv-boost}-\ref{lem-growth-rate-xkykukvk} establish Items \ref{item-1-lemma-lb-ak-bk-relation-positive}-\ref{item-lemma-akbk-xkykukvk-relation-positive} of \Cref{con-lb-ak-bk-relation-positive} for \(T = t\). Furthermore, if Items \ref{item-1-lemma-lb-ak-bk-relation-positive}-\ref{item-lemma-akbk-xkykukvk-relation-positive} hold for \(T = t\), then \Cref{lem-last-item} establishes \Cref{item-rho-lemma-lb-ak-bk-relation-positive} of \Cref{con-lb-ak-bk-relation-positive} for \(T = t\). Consequently, to prove \Cref{con-lb-ak-bk-relation-positive} for \(T = t\), it suffices to prove \Cref{item-rho-lemma-lb-ak-bk-relation-positive} for \(T = 0\), which follows immediately from \Cref{lem-base-case}. The proofs of Lemmas \ref{lem-item-max-xyuv-boost}-\ref{lem-last-item} can be found in~\Cref{appendix-sec-lowerbound}.

The following lemma shows that if \Cref{item-rho-lemma-lb-ak-bk-relation-positive} of \Cref{con-lb-ak-bk-relation-positive} holds, then \(x_{k}\), \(y_{k}\), \(u_{k}\), and \(v_{k}\) are \(O(\rho/n)\) and can be safely neglected in the calculations.

\begin{lemma}\label{lem-item-max-xyuv-boost}
Given any $0\leq t< K$,
if \Cref{item-rho-lemma-lb-ak-bk-relation-positive} of \Cref{con-lb-ak-bk-relation-positive} holds for $T= \max\{t-2,0\}$, then we have 
\begin{align}\label{eq-max-xyuv-less-rho-over-fourn}
\forall k\leq t, \quad \max\{x_{k},y_{k},u_{k},v_{k}\} < \frac{\rho}{4n}.
\end{align}
\end{lemma}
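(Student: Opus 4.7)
The goal is to show $\max\{x_k,y_k,u_k,v_k\}<\rho/(4n)$ for every $k\leq t$. The hypothesis supplies Item~\ref{item-rho-lemma-lb-ak-bk-relation-positive} of \Cref{con-lb-ak-bk-relation-positive} only at $T=\max\{t-2,0\}$, so I plan to (i) verify directly that each of the three piecewise bounds stated there falls below $\rho/(4n)$ for every $k\leq T$, and (ii) propagate that guarantee across the remaining indices $k\in\{t-1,t\}$ using one or two Sinkhorn--Knopp updates.

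For step (i): in the small-$k$ regime $k\leq\min\{T,\ell\}$, I would use $\ln(1+x)\leq x$ together with $\ell=8n(\lceil\log n\rceil-1)$ to obtain $(1+2/(n-1))^{3k}\leq\exp(6\ell/(n-1))\leq n^{50}$ for $n>10000$; the prefactor $n^{-51}$ together with $\delta^{6}<1$ then pushes the bound safely below $\rho/(4n)$. In both large-$k$ regimes the key algebraic simplification
\[
(b_\ell/b_{k-1})^{6}(b_{k-1}/b_{k+1})^{3}\;=\;b_\ell^{6}\bigl/\bigl(b_{k-1}^{3}b_{k+1}^{3}\bigr)
\]
collapses the problem to estimating $b_\ell$ from above and the neighbouring $b_{j}$'s from below. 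The lower bound $b_{j}\geq\delta$ is immediate from $j<K$. For the upper bound on $b_\ell$, I would observe that $A$ is $(\gamma,\rho_{d})$-dense with $\rho_{d}=2\lceil\gamma n\rceil/n$, since every row and column of $A$ contains at least $\lceil\gamma n\rceil$ entries of value $\geq\rho_{d}$ by construction. Then \Cref{fact-c-a-a0-a1}(\ref{item-forth-fact-c-a-a0-a1}) bounds the row sum $r_{n/2}(A^{(\ell)})$ (or its column-sum analog, depending on the parity of $\ell$) by $1/(\rho_{d}\gamma)=O_{\gamma}(1)$, and \Cref{lemma-lb-equal-item-density-positive} places $n/2-1$ copies of $b_\ell$ in row (or column) $n/2$, giving $b_\ell\leq O_{\gamma}(1/n)$. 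Substituting produces a bound of order $\rho/n^{8}$, which is well below $\rho/(4n)$.

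For step (ii): the Sinkhorn--Knopp update gives $x_{k}=x_{k-1}/c_{1}(A^{(k-1)})$ on odd iterations and $x_{k}=x_{k-1}/r_{n/2}(A^{(k-1)})$ on even iterations, with analogous identities for $y_{k},u_{k},v_{k}$. Applying \Cref{fact-c-a-a0-a1}(\ref{item-forth-fact-c-a-a0-a1}) to the $(\gamma,\rho_{d})$-dense matrix $A$ places each such denominator in $[\rho_{d}\gamma,1/(\rho_{d}\gamma)]$, so every one of $x_{k},y_{k},u_{k},v_{k}$ can grow by at most $1/(\rho_{d}\gamma)=O_{\gamma}(1)$ per step. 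Two such updates therefore inflate the bound from step (i) by only an $O_{\gamma}(1)$ factor, which for $n>10000$ still lies far below $\rho/(4n)$.

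The main obstacle is producing the $O(1/n)$ upper bound on $b_\ell$ at a point in the induction where Items \ref{item-1-lemma-lb-ak-bk-relation-positive}--\ref{item-lemma-akbk-xkykukvk-relation-positive} of \Cref{con-lb-ak-bk-relation-positive} (which would yield such a bound routinely) are not yet available. The resolution I would adopt is to extract the bound purely from the density of $A$ together with the row/column symmetries of \Cref{lemma-lb-equal-item-density-positive}, reducing everything to the single structural inequality $(n/2-1)b_\ell\leq r_{n/2}(A^{(\ell)})$ controlled via \Cref{fact-c-a-a0-a1}(\ref{item-forth-fact-c-a-a0-a1}), so that no hypothesis beyond Item~\ref{item-rho-lemma-lb-ak-bk-relation-positive} is actually needed.
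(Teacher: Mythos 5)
Your overall strategy matches the paper's: evaluate the three piecewise bounds of Item~\ref{item-rho-lemma-lb-ak-bk-relation-positive} at $T=\max\{t-2,0\}$ and show they sit well below $\rho/(4n)$, then propagate over the last few indices using the fact that every Sinkhorn denominator lies in $[\rho_d\gamma_d,\,1/(\rho_d\gamma_d)]$ by \Cref{fact-c-a-a0-a1}(\ref{item-forth-fact-c-a-a0-a1}), and hence each step can inflate $x_k,y_k,u_k,v_k$ by at most $1/(2\gamma^2)\le 8$. The argument is sound. Two places where you diverge from the paper are worth flagging.

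First, the auxiliary bound $b_\ell=O(1/n)$ that you develop via the density of $A$ and the row symmetry $(n/2-1)b_\ell\le r_{n/2}(A^{(\ell)})$ is correct but unnecessary: the trivial bound $b_\ell\le 1$, which follows immediately from \Cref{fact-c-a-a0-a1}(\ref{item-first-fact-a0-a1}), already reduces the large-$k$ expressions to $\rho\delta^6/n^2\cdot\delta^{-6}=\rho/n^2$, which beats $\rho/(4n)$ with room to spare for $n>10^4$. So the ``main obstacle'' you identify in the final paragraph is not actually an obstacle; identifying that $b_\ell\le 1$ suffices would have shortcut the entire digression. (It would also relieve you of the burden of keeping track of whether $\ell$ itself lies inside the range where the bounds are supposed to hold.)

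Second, your propagation step needs a little more care than ``one or two'' Sinkhorn updates. Note that the odd-$k$ formula in Item~\ref{item-rho-lemma-lb-ak-bk-relation-positive} covers only odd $k<T$, so when $T=t-2$ is odd and exceeds $\ell$, the index $k=T$ itself is not directly covered, and you have to propagate up to three steps (from $t-3$ to $t$). The paper resolves this uniformly by proving the stronger threshold $\rho/(2048n)$ for all $k\le t-3$ and observing that $x_k\le 512\,x_{k-3}$ (and likewise for $y,u,v$), which cleanly handles the parity edge case. Your version can be patched the same way, but as written the step count is a loose end.
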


The following lemma establishes the iterative relationship between \(a_{k}\) and \(b_{k}\) based on \Cref{item-rho-lemma-lb-ak-bk-relation-positive} of \Cref{con-lb-ak-bk-relation-positive}. 

\begin{lemma}\label{lem-iterative-relation}
Given any $t< K$,
if \Cref{item-rho-lemma-lb-ak-bk-relation-positive} of \Cref{con-lb-ak-bk-relation-positive} holds for $T= \max\{t-2,0\}$, then \Cref{item-1-lemma-lb-ak-bk-relation-positive} of \Cref{con-lb-ak-bk-relation-positive} holds for $T= t$.
\end{lemma}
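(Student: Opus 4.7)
The plan is to derive the recursive formulas of Item \ref{item-1-lemma-lb-ak-bk-relation-positive} by a direct computation that uses only two ingredients: the smallness of $x_k, y_k, u_k, v_k$ supplied by the hypothesis, and the sum identities of Lemma \ref{lem-relation-rowsum-columnsum}.

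First I would apply Lemma \ref{lem-item-max-xyuv-boost} to the hypothesis, which immediately yields $\max\{x_k, y_k, u_k, v_k\} < \rho/(4n)$ for every $k \le t$. This is the only place the hypothesis is used; the rest is deterministic algebra. Next, fix $k \le t$ and split on parity. When $k$ is odd, Fact \ref{fact-c-a-a0-a1} forces $c_j(A^{(k)}) = 1$ for every $j$, so substituting into Lemma \ref{lem-relation-rowsum-columnsum} gives
\begin{align*}
r_1(A^{(k)}) &= 1 + 2a_k - 2x_k + (n/2-1)(u_k - v_k), \\
r_{n/2}(A^{(k)}) &= 1 + (n/2-1)(b_k - a_k) + (n/2-1)(x_k - y_k).
\end{align*}
Since the Sinkhorn--Knopp update at the next (even) step normalizes rows, one has $a_{k+1} = a_k / r_1(A^{(k)})$ and $b_{k+1} = b_k / r_{n/2}(A^{(k)})$. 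Using $\max\{x_k,y_k,u_k,v_k\} < \rho/(4n)$, the noise terms in each denominator are bounded in absolute value by $\rho/(2n) + (n-2)\rho/(4n) < \rho$, so absorbing them as $\pm\rho$ produces exactly the intervals \eqref{eq-ak-recursion-positive} and \eqref{eq-bk-recursion-complex-positive}.

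The even case is a mirror image: $r_i(A^{(k)}) = 1$ for every $i$, the next update normalizes columns, and the analogous rearrangement of Lemma \ref{lem-relation-rowsum-columnsum} yields explicit formulas for $c_{n/2}(A^{(k)})$ and $c_n(A^{(k)})$. Writing $a_{k+1} = a_k/c_{n/2}(A^{(k)})$ and $b_{k+1} = b_k/c_n(A^{(k)})$ and applying the same $\rho$-bound on the noise produces \eqref{eq-bk-recursion-positive} and \eqref{eq-ak-recursion-complex-positive}. The main obstacle here is purely bookkeeping rather than mathematical depth: at each parity one must pair the correct row or column with the correct tracked variable and recognise which of the four target formulas in Item \ref{item-1-lemma-lb-ak-bk-relation-positive} is being produced. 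Crucially, no additional control over the remaining ``$1$-type'' or ``$\beta$-type'' entry classes in Figure \ref{fig-2} is needed, because Lemma \ref{lem-relation-rowsum-columnsum} already packages all of their contributions into the single difference between a row sum and the corresponding column sum.
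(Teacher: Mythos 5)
Your proposal is correct and follows essentially the same route as the paper's proof: invoke Lemma~\ref{lem-item-max-xyuv-boost} to control $x_k,y_k,u_k,v_k$, use Fact~\ref{fact-c-a-a0-a1} to fix the trivial row or column sums according to parity, substitute into the identities of Lemma~\ref{lem-relation-rowsum-columnsum} to express $r_1$, $r_{n/2}$, $c_{n/2}$, $c_n$ in terms of $a_k,b_k$ plus noise, and absorb the noise as $\pm\rho$.
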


The following lemma establishes simple upper bounds for \(a_{k}\) and \(b_{k}\) based on \Cref{item-1-lemma-lb-ak-bk-relation-positive} of \Cref{con-lb-ak-bk-relation-positive}. These bounds will be used in the proofs of Items \ref{item-3-lemma-lb-ak-bk-relation-positive}-\ref{item-lemma-akbk-xkykukvk-relation-positive} of \Cref{con-lb-ak-bk-relation-positive}. 

\begin{lemma}\label{lem-upperbouns-akbk-simple}
Given any $t< K$,
if \Cref{item-1-lemma-lb-ak-bk-relation-positive} of \Cref{con-lb-ak-bk-relation-positive} holds for $T= t$, then \Cref{item-2-lemma-lb-ak-bk-relation-positive} of \Cref{con-lb-ak-bk-relation-positive} holds for $T= t$.
\end{lemma}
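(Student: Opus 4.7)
The plan is to prove both parts of \Cref{item-2-lemma-lb-ak-bk-relation-positive} by induction on $k$ for $0 \le k \le t$. For the base case $k=0$, \eqref{eq-a0-b0-less-twonminusone-minusnrho} of \Cref{lem-base-case} gives $\max\{a_0,b_0\} < 2/(n-1) - \rho/\delta - \rho$, which matches the required bound with $c_0 := 1/\delta-(0-1) = 1/\delta + 1$. The upper bound $2/(n-1) < 1/10$ follows from $n > 10000$. For the positivity inequalities, the just-established $\max\{a_0,b_0\} < 2/(n-1)$ yields $(n/2-1)\max\{a_0,b_0\} < (n-2)/(n-1)$, so $1-(n/2-1)b_0 - \rho > 1/(n-1) - \rho > 0$ since $\rho = \delta^2/(100n^2) \ll 1/(n-1)$; the leftmost inequality in each chain reduces to $a_0,b_0 > 0$, which is automatic.

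For the inductive step, I assume \Cref{item-2-lemma-lb-ak-bk-relation-positive} holds for all indices up to $k$, with $k+1 \le t$, and deduce it at $k+1$. By \Cref{item-1-lemma-lb-ak-bk-relation-positive}, one of $a_{k+1},b_{k+1}$ is updated by a \emph{simple} rule and the other by a \emph{complex} rule, the assignment depending on the parity of $k$. Without loss of generality take $k$ odd, so that \eqref{eq-ak-recursion-positive} is simple for $a_{k+1}$ and \eqref{eq-bk-recursion-complex-positive} is complex for $b_{k+1}$. Writing $M_k := 2/(n-1)-(1/\delta-(k-1))\rho$, one has $M_{k+1}=M_k+\rho$. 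Since $k<K$ forces $a_k \ge \delta \gg \rho$, the factor $1+2a_k-\rho$ exceeds $1$, hence $a_{k+1} \le a_k < M_k < M_{k+1}$. For $b_{k+1}$ in the easy sub-case $b_k \ge a_k$, the denominator in \eqref{eq-bk-recursion-complex-positive} is at least $1-\rho$, so $b_{k+1} \le b_k/(1-\rho) \le b_k(1+2\rho) < M_k+2M_k\rho \le M_k+\rho = M_{k+1}$, using $2M_k \le 1$.

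The main obstacle is the sub-case $b_k < a_k$ at odd $k$: \eqref{eq-bk-recursion-complex-positive} gives $b_{k+1} \le b_k/[1-(n/2-1)(a_k-b_k)-\rho]$, and the crude inductive bound $\max\{a_k,b_k\} < 2/(n-1)$ only shows the denominator to be $\Omega(1/n)$, yielding the useless estimate $b_{k+1}=O(1)$. To resolve this I perform a two-step comparison $b_{k-1} \to b_k \to b_{k+1}$. At the preceding even step $k-1$, \eqref{eq-bk-recursion-positive} gives $b_k \ge b_{k-1}/(1+2b_{k-1}+\rho)$ while \eqref{eq-ak-recursion-complex-positive} upper bounds $a_k$; combining these with the inductive hypothesis $a_{k-1},b_{k-1} < M_{k-1} = O(1/n)$, I will establish the pivotal cancellation that the product $(1+2b_{k-1}-\rho)\bigl(1+(n/2-1)(b_k-a_k)-\rho\bigr)$---whose reciprocal governs $b_{k+1}/b_{k-1}$---is at least $1-O(\rho)$, because the leading $+2b_{k-1}$ contribution from the simple factor absorbs the $-(n-2)b_{k-1}^2$ contribution from the complex factor. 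This yields $b_{k+1} \le b_{k-1}(1+O(\rho)) \le M_{k-1}+O(M_{k-1}\rho) < M_{k-1}+2\rho = M_{k+1}$. The positivity inequalities at $k+1$ then follow immediately from $\max\{a_{k+1},b_{k+1}\} < M_{k+1} < 2/(n-1)$, exactly as in the base case, and the even-$k$ analysis is symmetric with the roles of $a$ and $b$ swapped.
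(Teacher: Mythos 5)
Your base case and the easy sub-cases (the simple update $a_{k+1}\le a_k$, and the complex update when $b_k\ge a_k$ so the denominator exceeds $1-\rho$) are correct and match the paper. The issue is the hard sub-case, where the complex denominator $1+(\tfrac n2-1)(b_k-a_k)-\rho$ can be as small as $\Omega(1/n)$. You propose a two-step cancellation $b_{k-1}\to b_k\to b_{k+1}$, but the pivotal claim—that the $+2b_{k-1}$ term in the simple denominator cancels the deficit in the complex denominator, giving a product $\ge 1-O(\rho)$—is stated, not proved. That cancellation would require $a_k-b_k=\Theta(b_{k-1}^2)$, which is precisely the kind of refined $a$–$b$ relationship that \Cref{item-3-lemma-lb-ak-bk-relation-positive} of \Cref{con-lb-ak-bk-relation-positive} provides, and \Cref{lem-refine-relation-akbk} (which establishes that item) needs the present lemma as a prerequisite. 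So your plan essentially forward-references a later lemma and is not obviously circular-free without its own verification, which you do not supply.

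The paper handles the hard sub-case in one step by a factorization, sidestepping the two-step cancellation entirely. With $k$ odd and $b_k\le a_k$ (your sub-case), one writes
\[
b_{k+1}(1-\rho)-a_k \;\le\; \frac{b_k(1-\rho)-a_k\bigl(1+(\tfrac n2-1)(b_k-a_k)-\rho\bigr)}{1+(\tfrac n2-1)(b_k-a_k)-\rho}
= \frac{\bigl(1-(\tfrac n2-1)a_k-\rho\bigr)\,(b_k-a_k)}{1+(\tfrac n2-1)(b_k-a_k)-\rho}.
\]
The first factor in the numerator is positive by the inductive hypothesis \eqref{eq-lowerbound-one-minusnovertwoak}, the second is $\le 0$ since $b_k\le a_k$, and the denominator is positive by the same hypothesis; hence $b_{k+1}(1-\rho)\le a_k$. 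Using $b_{k+1}\le 1$ (Fact~\ref{fact-c-a-a0-a1}) this gives $b_{k+1}\le a_k+\rho < M_k+\rho = M_{k+1}$. The key observation you missed is that in the "bad" sub-case, the quantity that blows up the denominator is exactly what ties $b_{k+1}$ to $a_k$ rather than to $b_k$, so the bound transfers from $a_k$ via the inductive hypothesis. I'd recommend replacing the two-step sketch with this direct algebraic step.
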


The following lemma establishes refined relationships between \(a_{k}\) and \(b_{k}\) based on Items \ref{item-1-lemma-lb-ak-bk-relation-positive}-\ref{item-2-lemma-lb-ak-bk-relation-positive} of \Cref{con-lb-ak-bk-relation-positive}. These relationships play a key role in characterizing the decay rates of \(a_{k}\) and \(b_{k}\).

\begin{lemma}\label{lem-refine-relation-akbk}
Given any $t< K$,
if Items \ref{item-1-lemma-lb-ak-bk-relation-positive} and \ref{item-2-lemma-lb-ak-bk-relation-positive} of \Cref{con-lb-ak-bk-relation-positive} hold for $T= t$, then \Cref{item-3-lemma-lb-ak-bk-relation-positive} of \Cref{con-lb-ak-bk-relation-positive} holds for $T= t$.
\end{lemma}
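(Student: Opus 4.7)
The plan is to prove Item~\ref{item-3-lemma-lb-ak-bk-relation-positive} of \Cref{con-lb-ak-bk-relation-positive} by induction on $k$ from $0$ to $t$. The odd-$k$ and even-$k$ parts of Item~\ref{item-3-lemma-lb-ak-bk-relation-positive} are symmetric under the involution $a_k \leftrightarrow b_k$ (which corresponds to interchanging rows and columns of $A^{(k)}$), so it is enough to describe the transition from even $k-1$ to odd $k$: I will assume \eqref{eq-key-relation-positive-even-1}--\eqref{eq-key-relation-positive-even-3} at $k-1$ and deduce \eqref{eq-key-relation-positive-odd-1}--\eqref{eq-key-relation-positive-odd-3} at $k$. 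The base case $k=0$ (even) is settled by plugging the explicit intervals from \Cref{lem-base-case} into \eqref{eq-key-relation-positive-even-1} and \eqref{eq-key-relation-positive-even-2} and invoking Item~\ref{item-1-lemma-lb-ak-bk-relation-positive} once more to compute $a_1$ for \eqref{eq-key-relation-positive-even-3}; all three inequalities follow because $\rho \le \delta^2/(100n^2)$ is negligible compared with the characteristic gap $a_0 - b_0 \sim 1/n^2$ obtained from the initial data.

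For the inductive step, the even-$(k-1)$ recursion of Item~\ref{item-1-lemma-lb-ak-bk-relation-positive} reads $b_k = b_{k-1}/(1 + 2b_{k-1} \pm \rho)$ and $a_k = a_{k-1}/(1 + (n/2-1)(a_{k-1}-b_{k-1}) \pm \rho)$, with both denominators strictly positive by Item~\ref{item-2-lemma-lb-ak-bk-relation-positive}. For the ``loose'' inequality \eqref{eq-key-relation-positive-odd-2}, where $a_k$ may exceed $b_k$ by an amount of order $a_k b_k$, the key input is \eqref{eq-key-relation-positive-even-2}: substituting $b_{k-1} < a_{k-1}(1 + 2b_{k-1}) + 6k\rho\, a_{k-1}$ into the ratio $a_k/b_k$ causes the $(1 + 2b_{k-1})$ factor from the hypothesis to combine with the $(1 + 2b_{k-1})$ emerging from the $b_k$-recursion, leaving an additive residue of order $(k+1)\rho$ once one also incorporates the $(1+2a_k)$ factor on the right-hand side of \eqref{eq-key-relation-positive-odd-2}. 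For the ``tight'' inequality \eqref{eq-key-relation-positive-odd-1}, the inductive \eqref{eq-key-relation-positive-even-3} already places $a_k$ within $(6k-2)\rho\,a_{k-1}$ of $b_{k-1}$, which, combined with the upper bound on $b_k$ from the $b_k$-recursion and \eqref{eq-key-relation-positive-even-2}, yields $b_k - a_k < (6k+2)\rho b_k$ after Item~\ref{item-2-lemma-lb-ak-bk-relation-positive} is used to cap the cross-term $(n/2-1)(a_{k-1}-b_{k-1})$ via $a_{k-1}, b_{k-1} \le 2/(n-1)$. Finally, \eqref{eq-key-relation-positive-odd-3} follows from the just-proved \eqref{eq-key-relation-positive-odd-1} together with one more application of Item~\ref{item-1-lemma-lb-ak-bk-relation-positive} at odd $k$: the $b_{k+1}$-recursion contributes only an extra $O(\rho b_k)$, enough to advance the coefficient from $(6k+2)$ to $(6k+4)$.

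The main obstacle is the precise tracking of $\rho$-coefficients. Each inductive step must raise the coefficient in each of the three inequalities by exactly $6$ — from $(6(k-1)+c)$ to $(6k+c)$ for $c\in\{2,4,6\}$ — and this $6\rho$ budget has to absorb three distinct $O(\rho)$ errors per step: the $\pm\rho$ in each of the two recursion denominators and the $O(\rho)$ slack inherited from the inductive hypothesis. The potentially dangerous cross-terms $(n/2-1)(a_{k-1}-b_{k-1})$ and $a_{k-1}(b_{k-1}-a_{k-1})$ carry a factor of $n$, but \eqref{eq-key-relation-positive-even-1} bounds $a_{k-1}-b_{k-1}$ on one side by $O(\rho\,a_{k-1})$ and \eqref{eq-key-relation-positive-even-2} bounds $b_{k-1}-a_{k-1}$ on the other side by $O(a_{k-1}b_{k-1})$; combined with $k \le 1/\delta$ from \eqref{eq-def-K-positive} and $\rho = \delta^2/(100n^2)$, each such cross-term contributes only $O(\rho)$ per step, comfortably inside the budget. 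Carefully shepherding these cancellations uniformly across all three inequalities — and symmetrically across both parities — constitutes the technical heart of the argument.
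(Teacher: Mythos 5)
Your overall strategy — induction on $k$, treating the two parities via the $a\leftrightarrow b$ swap, and tracking a $6\rho$-per-step coefficient budget — matches the paper's. But the inductive step, as you sketch it, would not close: you have assigned the wrong inductive hypotheses to the wrong conclusions, and the bounds you cite run in the wrong direction.

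For \eqref{eq-key-relation-positive-odd-2} (which is an \emph{upper} bound on $a_k$ relative to $b_k$), you name \eqref{eq-key-relation-positive-even-2} as the key input. That hypothesis, $b_{k-1}<a_{k-1}(1+2b_{k-1})+6k\rho\,a_{k-1}$, \emph{lower}-bounds $a_{k-1}$ relative to $b_{k-1}$; composing it with the $b_k$-recursion $b_{k-1}\ge b_k(1+2b_{k-1}-\rho)$ gives an \emph{upper} bound on $b_k$ in terms of $a_{k-1}$, which is the opposite of what \eqref{eq-key-relation-positive-odd-2} needs. What the paper actually uses (after the symmetry swap) is the one-sided upper bound \eqref{eq-key-relation-positive-even-3}, $a_k<b_{k-1}+(6k-2)\rho\,a_{k-1}$, plugged into the telescoping identity $a_k=(a_k-b_{k-1})+(b_{k-1}-b_k)+b_k$ with the $b_k$-recursion in the form $b_k+2b_kb_{k-1}+\rho b_k\ge b_{k-1}$, and \eqref{eq-key-relation-positive-even-1} for the final coefficient bookkeeping (the odd-parity analog of \eqref{eq-fourkminustwosquareplusfour-upperbound}). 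A straight ratio argument of the kind you describe, even with the correct hypotheses, overshoots the allowable $(1+2a_k)$ factor by roughly an extra $(1+2b_{k-1})$, which is why the telescoping decomposition is essential rather than cosmetic.

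For \eqref{eq-key-relation-positive-odd-1} (a \emph{lower} bound on $a_k$), you cite \eqref{eq-key-relation-positive-even-3} as ``placing $a_k$ within $(6k-2)\rho a_{k-1}$ of $b_{k-1}$''. But \eqref{eq-key-relation-positive-even-3} is one-sided: it only caps $a_k$ from above, and supplies no lower bound, which is exactly what \eqref{eq-key-relation-positive-odd-1} demands. The lower bound on $a_k$ comes from the $a_k$-recursion with the cross-term $(n/2-1)(a_{k-1}-b_{k-1})$ controlled by \eqref{eq-key-relation-positive-even-1} (giving $O(k\rho)$), together with \eqref{eq-key-relation-positive-even-2}. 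Your plan to cap the cross-term solely with \Cref{item-2-lemma-lb-ak-bk-relation-positive} only yields $\abs{(n/2-1)(a_{k-1}-b_{k-1})}<1$, which is far too coarse — the role of Item~\ref{item-2-lemma-lb-ak-bk-relation-positive} is merely to keep the recursion denominators positive. Relatedly, your closing claim that ``each such cross-term contributes only $O(\rho)$ per step'' is false on the \eqref{eq-key-relation-positive-even-2} side, where $(n/2-1)(b_{k-1}-a_{k-1})$ can reach $O(b_{k-1})=O(1/n)\gg\rho$; that contribution is not absorbed by the $6\rho$ budget but is canceled exactly against the $(1+2b_{k-1})$ factor produced by the $b_k$-recursion, a cancellation the paper's telescoping step is designed to expose.
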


With Items \ref{item-1-lemma-lb-ak-bk-relation-positive}-\ref{item-3-lemma-lb-ak-bk-relation-positive} of \Cref{con-lb-ak-bk-relation-positive}, we derive a lower bound on the decay rate of \(b_k\).

\begin{lemma}\label{lem-decay-rate-bk}
Given any $t< K$,
if Items \ref{item-1-lemma-lb-ak-bk-relation-positive}-\ref{item-3-lemma-lb-ak-bk-relation-positive} 
of \Cref{con-lb-ak-bk-relation-positive} 
hold for $T= t$, then \Cref{item-5-lemma-lb-ak-bk-relation-positive} of \Cref{con-lb-ak-bk-relation-positive} holds for $T= t$.
\end{lemma}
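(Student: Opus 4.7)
The plan is to chain the two single-step recursions from \Cref{item-1-lemma-lb-ak-bk-relation-positive} of \Cref{con-lb-ak-bk-relation-positive} across the pair $k-2 \to k-1 \to k$ and use the refined relations from \Cref{item-3-lemma-lb-ak-bk-relation-positive} to control the ``bad'' odd step, which can strictly increase $b$. Specifically, I would write $b_k/b_{k-2} = (b_{k-1}/b_{k-2})\cdot(b_k/b_{k-1})$, then apply \eqref{eq-bk-recursion-positive} at even $k-2$ and \eqref{eq-bk-recursion-complex-positive} at odd $k-1$ to obtain
\[
\frac{b_k}{b_{k-2}} \;\le\; \frac{1}{(1+2b_{k-2}-\rho)\,(1-c-\rho)}, \qquad c \;:=\; \left(\tfrac{n}{2}-1\right)(a_{k-1}-b_{k-1}).
\]
Cross-multiplying, the target $b_k \le b_{k-2}/(1+b_{k-2})$ reduces after expansion (and discarding higher-order positive terms) essentially to the one-line condition $c + 2\rho \le b_{k-2}$, up to factors of $1+O(b_{k-2}+c+\rho)$. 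Note that $c$ is typically positive, since \eqref{eq-key-relation-positive-odd-2} forces $a_{k-1}\ge b_{k-1}$ up to lower-order terms, so this is not automatic from the even-step contraction alone.

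Next I would sharpen the bound on $c$. Rearranging \eqref{eq-key-relation-positive-odd-2}, $a_{k-1}< b_{k-1}(1+2a_{k-1})+6k\rho\,b_{k-1}$, gives $a_{k-1}\le b_{k-1}(1+6k\rho)/(1-2b_{k-1})$. Since $2b_{k-1}\le 4/(n-1)\le 1/2$ by \eqref{eq-ub-maxab-positive}, a first-order expansion yields the genuinely \emph{quadratic} estimate $a_{k-1}-b_{k-1}\le 4b_{k-1}^{2}+O(k\rho\,b_{k-1})$. Multiplying by $n/2-1\le n/2$ and using the monotonicity $b_{k-1}\le b_{k-2}$ from the even step then gives the crucial bound $c\le 2n\,b_{k-2}^{2}+O(kn\rho\,b_{k-2})$.

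Substituting this into $c+2\rho\le b_{k-2}$ reduces the target to $2nb_{k-2}+O(kn\rho)+2\rho/b_{k-2}\le 1-o(1)$. Each of the three terms is small: (i)~$2nb_{k-2}\le 1/2$ follows from the hypothesis $k\ge\ell+2$, since the even-step contraction \eqref{eq-bk-recursion-positive} forces at least a harmonic-type decay, and the burn-in period of length $\ell=8n(\log n-1)$ suffices to drive $b_{\ell}\ll 1/n$; (ii)~$kn\rho=O(1)$ because $k\le 1/\delta$ by \eqref{eq-def-K-positive} and $\rho=\delta^{2}/(100n^{2})$; (iii)~$\rho/b_{k-2}\le \rho/\delta = O(\delta/n^{2})$ since $b_{k-2}\ge\delta$ again from $k-2<K$. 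This closes the argument.

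The main obstacle is the factor of $n$ appearing in the bound $c\le 2nb_{k-2}^{2}$: without the quadratic cancellation supplied by \eqref{eq-key-relation-positive-odd-2}, the odd step would produce a $\Theta(b_{k-2})$ correction that competes with the linear decay and the claimed ratio $1/(1+b_{k-2})$ would fail. The burn-in length $\ell = 8n(\log n-1)$ is chosen precisely so that $b_{k-2}$ enters the regime $O(1/n)$ where $nb_{k-2}^{2}\ll b_{k-2}$; until then, only the cruder contraction $1/(1+2b_{k-2}-\rho)$ from \eqref{eq-bk-recursion-positive} can be used, which explains the $k\ge \ell+2$ hypothesis.
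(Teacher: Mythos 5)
Your high-level route is the same as the paper's: chain the even-step recursion \eqref{eq-bk-recursion-positive} and the odd-step recursion \eqref{eq-bk-recursion-complex-positive}, use \eqref{eq-key-relation-positive-odd-2} (and \eqref{eq-key-relation-positive-even-3}) to show the odd-step ``bad'' correction $(\tfrac{n}{2}-1)(a_{k-1}-b_{k-1})$ is only $O(n b_{k-2}^2)$, and then conclude that the two-step map contracts by a factor $\tfrac{1}{1+b_{k-2}}$ once $b_{k-2}$ is small enough. The paper's version of your combined bound is the intermediate inequality $b_k \leq b_{k-2}\bigl(1 - \tfrac{3}{2}b_{k-2}(1 - \tfrac{(n-1)b_{k-2}}{2})\bigr)$, from which the desired contraction follows whenever $b_{k-2}\le 1/(n-1)$. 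So up to this point your proposal matches the paper, modulo some sloppiness with constants (the correct threshold that emerges is $b_{k-2}\lesssim 1/n$, not $\leq 1/(4n)$).

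There is, however, a genuine gap in your item (i): the claim that $b_\ell$ falls below the needed threshold after the burn-in period is asserted, not proved, and it is the technical heart of the paper's argument. You say this ``follows \dots since the even-step contraction \eqref{eq-bk-recursion-positive} forces at least a harmonic-type decay,'' and later that ``only the cruder contraction $1/(1+2b_{k-2}-\rho)$ from \eqref{eq-bk-recursion-positive} can be used'' before the burn-in. This is flawed: the odd step can \emph{increase} $b$, and when $b_{k-2}$ is near its initial scale $\approx 2/n$ the odd-step increase $\approx 1 + n b_{k-2}^2$ nearly cancels the even-step decrease $\approx 1 - 2b_{k-2}$, so the net contraction factor is only $\approx 1 - 2b_{k-2}\bigl(1 - \tfrac{n}{2}b_{k-2}\bigr)$, which vanishes as $b_{k-2}\to 2/n$. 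You cannot ignore the odd step during burn-in. The paper handles this by reparametrizing $b_k = \tfrac{2}{n-1}(1-L_k/n)$ and proving by induction that $L_{8nr}\ge 2^r$ for $1\le r\le \lceil\log n\rceil-1$, which uses the combined two-step contraction and carefully exploits the fact that the contraction amount is proportional to $L_{k-2}$ (i.e., to how far $b$ already is below $2/(n-1)$). This is what forces the burn-in length to be $\ell = 8n(\lceil\log n\rceil - 1)$ rather than, say, $O(n)$. Without this argument—or an equivalent quantitative analysis of the early decay—the lemma is not proved: your derivation establishes that the contraction holds \emph{once} $b_{k-2}$ is small enough, but not that it ever becomes small enough.
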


Leveraging Items \ref{item-1-lemma-lb-ak-bk-relation-positive}-\ref{item-3-lemma-lb-ak-bk-relation-positive} and \ref{item-rho-lemma-lb-ak-bk-relation-positive} of \Cref{con-lb-ak-bk-relation-positive}, we establish an upper bound on the growth rates of \(x_k\), \(y_k\), \(u_k\), and \(v_k\), that is cubic in the lower bound of the decay rate of \(b_k\). Consequently, the increase of \(x_k\), \(y_k\), \(u_k\), and \(v_k\) can be bounded by the decay of \(b_k\).

\begin{lemma}\label{lem-growth-rate-xkykukvk}
Given any $t< K$,
if Items \ref{item-1-lemma-lb-ak-bk-relation-positive}-\ref{item-3-lemma-lb-ak-bk-relation-positive} 
of \Cref{con-lb-ak-bk-relation-positive} 
hold for $T= t$ and \Cref{item-rho-lemma-lb-ak-bk-relation-positive} of \Cref{con-lb-ak-bk-relation-positive} holds for $T= \max\{t-2,0\}$, then \Cref{item-lemma-akbk-xkykukvk-relation-positive} of \Cref{con-lb-ak-bk-relation-positive} holds for $T= t$.
\end{lemma}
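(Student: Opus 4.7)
The plan is to express each of the eight ratios in \eqref{eq-relation-maxratioofxyuv-ratiob} as the reciprocal of a single row or column sum of $A^{(k-1)}$ or $A^{(k-2)}$, rewrite those sums in closed form via \Cref{lem-relation-rowsum-columnsum}, and then bound each resulting denominator below by $(1+b_{k-2})^{-3}$. Since $k$ is even, the update $k-1\to k$ divides by row sums, so
\begin{align*}
\frac{x_k}{x_{k-1}}=\frac{1}{r_{n/2}\bigl(A^{(k-1)}\bigr)},\quad \frac{y_k}{y_{k-1}}=\frac{v_k}{v_{k-1}}=\frac{1}{r_n\bigl(A^{(k-1)}\bigr)},\quad \frac{u_k}{u_{k-1}}=\frac{1}{r_1\bigl(A^{(k-1)}\bigr)},
\end{align*}
and the update $k-2\to k-1$ divides by column sums, giving the remaining four ratios as $1/c_1(A^{(k-2)})$, $1/c_{n/2}(A^{(k-2)})$, and $1/c_n(A^{(k-2)})$. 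Using that column sums equal $1$ at the odd index $k-1$ and row sums equal $1$ at the even index $k-2$ (\Cref{fact-c-a-a0-a1}), \Cref{lem-relation-rowsum-columnsum} converts each denominator into an explicit affine function of the tracked entries at the relevant time.

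The hypothesis that \Cref{item-rho-lemma-lb-ak-bk-relation-positive} of \Cref{con-lb-ak-bk-relation-positive} holds at $T=\max\{t-2,0\}$ lets me invoke \Cref{lem-item-max-xyuv-boost}, yielding $\max\{x_j,y_j,u_j,v_j\}<\rho/(4n)$ for all $j\le t$; hence every occurrence of $x,y,u,v$ in any denominator contributes only $O(\rho)$, and each expression reduces to a clean function of $a,b$ plus negligible error. The eight ratios then fall into three families. First, denominators of the form $1+2a+O(\rho)$ or $1+2b+O(\rho)$ (namely $u_k/u_{k-1}$ and $u_{k-1}/u_{k-2}$) are automatically at least $1$, so the ratio is at most $1\le(1+b_{k-2})^3$ trivially. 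Second, denominators of the form $1-2b+O(\rho)$ or $1-2a+O(\rho)$ (covering $y_k/y_{k-1}$, $v_k/v_{k-1}$, $x_{k-1}/x_{k-2}$, and $v_{k-1}/v_{k-2}$) are controlled by \eqref{eq-bk-recursion-positive}, which gives $b_{k-1}<b_{k-2}$, and by \eqref{eq-key-relation-positive-even-1}, which gives $a_{k-2}\le b_{k-2}+O(k\rho\,a_{k-2})$. Third, denominators of the form $1+(n/2-1)(b-a)+O(\rho)$ arise for $x_k/x_{k-1}$ and $y_{k-1}/y_{k-2}$; since the sign of $b-a$ is uncertain, I invoke \eqref{eq-key-relation-positive-odd-2} at odd $k-1$ or \eqref{eq-key-relation-positive-even-2} at even $k-2$, which give $|a_j-b_j|\le 2a_jb_j+O(k\rho\,b_j)$, and combine with $a_j\le 2/(n-1)$ from \Cref{item-2-lemma-lb-ak-bk-relation-positive} of \Cref{con-lb-ak-bk-relation-positive} to deduce $(n/2-1)|a_j-b_j|\le 2b_j+O(b_j/n)+O(nk\rho)$, so the denominator is at least $1-2b_{k-2}-O(nk\rho)$.

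The closing step, identical across all three families, is the elementary inequality $(1-2b)(1+b)^3=1+b-3b^2-5b^3-2b^4\ge 1$ valid for $b\in(0,1/10)$, which in particular holds at $b=b_{k-2}\le 2/(n-1)$. The positive slack $b-3b^2-\cdots$ it leaves comfortably absorbs the accumulated error $O(nk\rho)$, since $\rho\le\delta^2/(100n^2)$, $k\le K\le 1/\delta$, and $b_{k-2}\ge\delta$ (the last because $k-2<K$ and $K$ is defined so that $\min\{a_j,b_j\}\ge\delta$ for $j<K$). Taking the maximum over the eight ratios gives $(1+b_{k-2})^3$, establishing \Cref{item-lemma-akbk-xkykukvk-relation-positive} of \Cref{con-lb-ak-bk-relation-positive} at $T=t$. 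The main obstacle is the third family: the prefactor $(n/2-1)$ amplifies $|a-b|$ by $\Theta(n)$, and only the second-order refinement $|a-b|\le 2ab+O(k\rho\,b)$ supplied by \Cref{item-3-lemma-lb-ak-bk-relation-positive} of \Cref{con-lb-ak-bk-relation-positive} prevents this correction from being $\Theta(1)$ rather than $\Theta(b)$; the cube in $(1+b_{k-2})^3$---as opposed to a square---supplies exactly the extra $b$ of slack needed to absorb both the $2b_{k-2}$ correction and the $O(nk\rho)$ tail.
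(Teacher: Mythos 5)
Your proposal is correct and follows essentially the same path as the paper's proof: both express the eight ratios as reciprocals of row/column sums of $A^{(k-1)}$ or $A^{(k-2)}$, rewrite those sums in closed form via \Cref{lem-relation-rowsum-columnsum}, use \Cref{lem-item-max-xyuv-boost} (from the hypothesis on \Cref{item-rho-lemma-lb-ak-bk-relation-positive}) to discard the $x,y,u,v$ contributions as $O(\rho)$, and then invoke the refined relations of \Cref{item-3-lemma-lb-ak-bk-relation-positive} together with $\max\{a,b\}\le 2/(n-1)$ to tame the $(n/2-1)$ prefactor. The only cosmetic difference is the closing inequality: you use $(1-2b)(1+b)^3\ge 1$ directly, whereas the paper passes through the intermediate bounds $1/(1+2b)>1/(1+b)^2$ and $1/(1+3b)>1/(1+b)^3$ before weakening to the common $(1+b_{k-2})^{-3}$; these are equivalent in substance.
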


Given that Items \ref{item-1-lemma-lb-ak-bk-relation-positive}-\ref{item-lemma-akbk-xkykukvk-relation-positive} of \Cref{con-lb-ak-bk-relation-positive} hold for \(T = t\), the following lemma establishes \Cref{item-rho-lemma-lb-ak-bk-relation-positive} for \(T = t\).

\begin{lemma}\label{lem-last-item}
Given any $t< K$,
if Items \ref{item-1-lemma-lb-ak-bk-relation-positive}-\ref{item-lemma-akbk-xkykukvk-relation-positive} 
of \Cref{con-lb-ak-bk-relation-positive} 
hold for $T= t$, then \Cref{item-rho-lemma-lb-ak-bk-relation-positive} of \Cref{con-lb-ak-bk-relation-positive} holds for $T= t$.
\end{lemma}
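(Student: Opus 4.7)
The plan is to prove Item 7 for $T = t$ by induction on $k$ (internal to the lemma), using the base case supplied by \Cref{lem-base-case} and propagating upward via Items 2, 5, and 6 of \Cref{con-lb-ak-bk-relation-positive}, all of which are given for $T = t$. At $k = 0$, \Cref{lem-base-case} yields $\max\{x_0, y_0, u_0, v_0\} \leq \rho\delta^6/n^{51}$, matching the Phase 1 bound since $(1 + 2/(n-1))^0 = 1$.

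For the Phase 1 regime $1 \leq k \leq \min\{t, \ell\}$, I would invoke Item 6, which bounds every single-step ratio (both the $x_k/x_{k-1}$ and $x_{k-1}/x_{k-2}$ patterns, and analogously for $y, u, v$) by $(1 + b_{k-2})^3$ at each even index $k \leq t$; this collectively covers every single-step transition whose upper index is $\leq t$. Combined with the crude bound $b_{k-2} < 2/(n-1)$ from Item 2, each step contributes at most a factor of $(1 + 2/(n-1))^3$, and telescoping from $k = 0$ yields the Phase 1 bound $\max\{x_k, y_k, u_k, v_k\} \leq (\rho\delta^6/n^{51})(1+2/(n-1))^{3k}$. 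The critical numerical checkpoint is $k = \ell$ (which is even by the definition $\ell = 8n(\lceil\log n\rceil-1)$): the estimate $(1 + 2/(n-1))^{3\ell} \leq \exp(6\ell/(n-1)) \leq n^{49}$, valid for $n > 10000$, gives $x_\ell \leq \rho\delta^6/n^2$, exactly matching the Phase 2 even bound at $k = \ell$.

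For Phase 2 with even $k \in [\ell, t]$, I would telescope two steps at a time: Item 6 gives $x_k/x_{k-2} \leq (1 + b_{k-2})^6$, and Item 5 upgrades each factor via $b_k/b_{k-2} \leq 1/(1 + b_{k-2})$ to obtain $(1 + b_{k-2})^6 \leq (b_{k-2}/b_k)^6$. Telescoping from $\ell$ to $k$, the intermediate $b_j$'s cancel and yield $x_k \leq x_\ell (b_\ell/b_k)^6 \leq (\rho\delta^6/n^2)(b_\ell/b_k)^6$; the same argument applies to $y_k, u_k, v_k$. For odd $k \in [\ell, t)$, one additional single step from the even bound at $k-1 \geq \ell$ closes the gap: Item 6 at the even index $k+1 \leq t$ supplies $x_k/x_{k-1} \leq (1 + b_{k-1})^3$, and Item 5 (applied to $b_{k+1}/b_{k-1}$) upgrades this to $(b_{k-1}/b_{k+1})^3$, producing the target $x_k \leq (\rho\delta^6/n^2)(b_\ell/b_{k-1})^6(b_{k-1}/b_{k+1})^3$.

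The main obstacle is the precise numerical calibration at $k = \ell$: the Phase 1 accumulated growth factor $(1 + 2/(n-1))^{3\ell}$ must not exceed $n^{49}$, so that Phase 2 can launch from a bound of $\rho\delta^6/n^2$ rather than something worse. This hinges on the specific choice $\ell = 8n(\lceil\log n\rceil-1)$, together with the assumption $n > 10000$ and the crude upper bound on $b_k$. A secondary subtlety is the parity bookkeeping at the odd-to-even and even-to-odd Sinkhorn transitions, but this is fully addressed by the fact that Item 6, though indexed by even $k$, provides ratios over both adjacent transitions simultaneously.
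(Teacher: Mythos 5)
Your proposal is correct and follows essentially the same route as the paper's proof: a base case at $k=0$ from the base-case lemma, a Phase~1 propagation up to $k=\ell$ using the crude bound $b_{k-2}<2/(n-1)$ from Item~\ref{item-2-lemma-lb-ak-bk-relation-positive}, the numerical checkpoint $(1+2/(n-1))^{3\ell}\le n^{49}$ bridging into the Phase~2 bound $\rho\delta^6/n^2$, and then propagation for $k>\ell$ using Item~\ref{item-lemma-akbk-xkykukvk-relation-positive} upgraded by Item~\ref{item-5-lemma-lb-ak-bk-relation-positive}. The only cosmetic difference is that you phrase Phase~2 as a two-step telescoping product over even indices, whereas the paper advances one Sinkhorn half-step at a time via the induction hypothesis at $k-1$; the two bookkeeping schemes are equivalent.
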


Now we can prove \Cref{lem-condition-holds}.
\begin{proof}[Proof of \Cref{lem-condition-holds}]
We prove this lemma by induction. 
For the base step,
by \eqref{eq-ub-xyuvzero}, we have \Cref{item-rho-lemma-lb-ak-bk-relation-positive} of \Cref{con-lb-ak-bk-relation-positive} holds for $T= 0$.
Combined with Lemmas \ref{lem-iterative-relation}-\ref{lem-growth-rate-xkykukvk}, we have Items \ref{item-1-lemma-lb-ak-bk-relation-positive}-\ref{item-lemma-akbk-xkykukvk-relation-positive} of \Cref{con-lb-ak-bk-relation-positive} for \(T = 0\).
In summary, we have \Cref{con-lb-ak-bk-relation-positive} holds for $T= 0$.

For the inductive step, assume that $0<t<K-1$ and
\Cref{con-lb-ak-bk-relation-positive} holds for all $T \leq t$.
Thus, we have \Cref{item-rho-lemma-lb-ak-bk-relation-positive} of \Cref{con-lb-ak-bk-relation-positive} holds for $T = t - 1$.
By Lemmas \ref{lem-iterative-relation}-\ref{lem-growth-rate-xkykukvk}, we have Items \ref{item-1-lemma-lb-ak-bk-relation-positive}-\ref{item-lemma-akbk-xkykukvk-relation-positive} of \Cref{con-lb-ak-bk-relation-positive} for \(T = t+1\).
Combined with \Cref{lem-last-item}, we also have \Cref{item-rho-lemma-lb-ak-bk-relation-positive} of \Cref{con-lb-ak-bk-relation-positive} holds for $T= t+1$.
In summary, we have \Cref{con-lb-ak-bk-relation-positive} holds for $T\leq t+1$.
This finishes the inductive step and the lemma is proved.
\end{proof}

By \Cref{lem-condition-holds}, we can prove \Cref{lem-ab-bk-decay-slowly}.
\begin{proof}[Proof of \Cref{lem-ab-bk-decay-slowly}]
By \eqref{eq-def-K-positive}, we have
\[\min\left\{a_{K -1},b_{K -1}\right\} \geq \delta.\]
Moreover, by Lemmas \ref{lem-condition-holds} and \ref{lem-item-max-xyuv-boost}, we have
\[\max\left\{x_{K-1},y_{K-1},u_{K-1},v_{K-1}\right\} < \frac{\rho}{4n}.\]
Thus, to prove this lemma, it is sufficient to prove $K = \Omega(1/\delta)$.
If $K \geq 1/\delta$, we have $K = \Omega(1/\delta)$ and then
the lemma is immediate.
In the following, we assume \emph{w.l.o.g.} $K < 1/\delta$ and prove $K = \Omega(1/\delta)$.

By $K < 1/\delta$, by \eqref{eq-def-K-positive} we have
\begin{align}\label{eq-minabigkbbigk-less-epsilon}
\min\{a_K,b_K\} < \delta.
\end{align}
In addition, by \eqref{eq-condition-delta}, \eqref{eq-def-rho-positive}, \eqref{eq-a0-b0-range} and $n>10^4$,
we have $a_0 \geq 500\delta$ and $b_0 \geq 500\delta$.
Combined with \eqref{eq-minabigkbbigk-less-epsilon}, we have $K \geq 1$.
Moreover, one can verify that the matrix $A$ in \Cref{thm-lowerbound-positive} is $(\lceil\gamma n\rceil/n, 2\lceil\gamma n\rceil/n)$-dense.
By \Cref{item-forth-fact-c-a-a0-a1} of \Cref{fact-c-a-a0-a1} and $\gamma\in (1/4,1/2)$, we have
\[\forall i\in [n],\quad c_{i}\left(A^{(K-1)}\right) \geq 2\gamma^2 \geq 1/8, \quad r_{i}\left(A^{(K-1)}\right)\geq 2\gamma^2\geq 1/8.\]
Combined with \eqref{eq-minabigkbbigk-less-epsilon}, we have 
\begin{align}\label{eq-min-akminusone-bkminusone-upperbound}
\min\{a_{K-1},b_{K-1}\} \leq \min\{a_K,b_K\} \cdot \max_{i\in [n]}\left\{\max\left\{c^{-1}_{i}\left(A^{(K-1)}\right), r^{-1}_{i}\left(A^{(K-1)}\right)\right\}\right\} \leq 8\delta.
\end{align}
If $K-1$ is odd, by \eqref{eq-key-relation-positive-odd-1}, \eqref{eq-def-rho-positive} and $K < 1/\delta$, we have 
\begin{align}\label{eq-bkminusone-less-two-akminusone}
b_{K-1} \leq \frac{a_{K-1}}{1-(6K-4)\rho} \leq 2a_{K-1}.
\end{align}
Combined with \eqref{eq-min-akminusone-bkminusone-upperbound}, we have
\begin{align}\label{eq-bkminusone-less-sixteendelta}
b_{K-1} \leq 16\delta.
\end{align}
If $K-1$ is even,
by \eqref{eq-key-relation-positive-even-2}, \eqref{eq-ub-maxab-positive}, \eqref{eq-def-rho-positive}, and $K < 1/\delta$, we also have
\[b_{K-1} \leq a_{K-1}(1+2b_{K-1}+6K\rho) \leq 2a_{K-1}.\]
Combined with \eqref{eq-min-akminusone-bkminusone-upperbound}, we also have 
\eqref{eq-bkminusone-less-sixteendelta}.

Given any even $2\leq k< K$, by \eqref{eq-def-K-positive} we have $b_{k-1}\geq \delta$, $k < 1/\delta$.
Combined with \eqref{eq-def-rho-positive} and $b_{k-1}\leq 1$, we have
\begin{align}\label{eq-bkminustwo-klessnovervarepsilon-rho-bk-decay}
\rho \leq  \frac{b_{k-1}}{100kn^{2}} \leq \frac{1}{100kn^{2}}.
\end{align}
Thus, for any even $2\leq k<K$, we have
\begin{equation*}
\begin{aligned}
&\quad b_{k}\\
(\text{by \eqref{eq-bk-recursion-complex-positive}})\quad & \geq \frac{b_{k-1}}{1 + \left(\frac{n}{2} - 1\right)(b_{k-1}-a_{k-1})+ \rho}\\
(\text{by \eqref{eq-lowerbound-one-minusnovertwoak} and \eqref{eq-key-relation-positive-odd-1}})\quad & \geq \frac{b_{k-1}}{1 + \left(\frac{n}{2} - 1\right)(6k-4)\rho b_{k-1}+ \rho}\\
(\text{by \eqref{eq-bkminustwo-klessnovervarepsilon-rho-bk-decay} })\quad &\geq \frac{b_{k-1}}{(1+b_{k-1}/n)(1+b_{k-1}/n^2)}\\
(\text{by \eqref{eq-ub-maxab-positive} and $n>10^4$})\quad &\geq b_{k-1}(1-b_{k-1})\\
(\text{by \eqref{eq-ub-maxab-positive}})\quad &> \frac{b_{k-1}}{2}.
\end{aligned}
\end{equation*}
Similarly, given any odd $1\leq k< K$, we also have \eqref{eq-bkminustwo-klessnovervarepsilon-rho-bk-decay}.
Hence,
\begin{equation*}
\begin{aligned}
&\quad b_{k}\\
(\text{by \eqref{eq-bk-recursion-positive}})\quad & \geq \frac{b_{k-1}}{1+2b_{k-1}+ \rho}\\
(\text{by \eqref{eq-bkminustwo-klessnovervarepsilon-rho-bk-decay}})\quad &\geq \frac{b_{k-1}}{(1+2b_{k-1})(1+b_{k-1}/n^2)}\\
(\text{by \eqref{eq-ub-maxab-positive} and $n>10^4$})\quad &\geq b_{k-1}(1-3b_{k-1})\\
(\text{by \eqref{eq-ub-maxab-positive}})\quad &> \frac{b_{k-1}}{2}.
\end{aligned}
\end{equation*}
In summary, for each $1\leq k<K$, we always have 
\begin{align}\label{eq-bk-recursion-one-step}
b_k\geq b_{k-1}(1-3b_{k-1})>\frac{b_{k-1}}{2}.
\end{align}
Combined with $b_{K-1}\leq 16\delta$ and $b_0 \geq 500\delta$, we have
there exists some $r$ with $0\leq r<K$ such that 
\begin{align}\label{eq-deltar-regime}
32\delta < b_{r} \leq 64\delta.
\end{align}
We claim that for each $0\leq k < K-r$, 
\begin{align}\label{eq-recursion-bk}
b_{r+k} \geq b_r(1 - 3b_r)^{k}.
\end{align}
Combining \eqref{eq-recursion-bk} with \eqref{eq-deltar-regime}, for each $0\leq k < K-r$ we have
\begin{align}\label{eq-bkplusr-lowerbound}
b_{r+k} > 32\delta(1 - 192\delta)^{k}.
\end{align}
Hence, we have 
\begin{align}\label{eq-bigkminusr-lowerbound}
K - r > \frac{\log 2}{\log (1/(1-192\delta))} = \Omega(1/\delta).
\end{align}
Otherwise, we have 
\[K - r \leq \frac{\log 2}{\log (1/(1-192\delta))} .\]
Combined with \eqref{eq-bkplusr-lowerbound}, we have
$b_{K-1} > 16\delta$, which is contradictory with \eqref{eq-bkminusone-less-sixteendelta}.
By \eqref{eq-bigkminusr-lowerbound}, we have
$K = \Omega(1/\delta )$.
Thus, to complete the proof of the lemma, it remains only to establish \eqref{eq-recursion-bk}.

Finally, we prove \eqref{eq-recursion-bk} by induction.
For the base step when $k = 0$, we have $b_{r} \geq b_r(1 - 3b_r)$ immediately.
For the inductive step when $0< k < K-r$, we have
\begin{align*}
&\quad b_{r+k} \\
(\text{by \eqref{eq-bk-recursion-one-step}})\quad &\geq b_{r+k-1}(1 - 3b_{r+k-1}) \\
(\text{by \eqref{eq-ub-maxab-positive} and the inductive assumption})\quad &\geq b_r(1 - 3b_r)^{k-1}\left(1 - 3b_r(1 - 3b_r)^{k-1}\right)\\
(\text{by \eqref{eq-ub-maxab-positive}})\quad &\geq b_r(1 - 3b_r)^{k},
\end{align*}
which finishes the induction.
Hence, \eqref{eq-recursion-bk} is proved and the lemma is immediate.
\end{proof}

Finally, we can prove \Cref{thm-lowerbound-positive}.
\begin{proof}[Proof of \Cref{thm-lowerbound-positive}]
It is easy to verify that \(A\) is a positive matrix with density \(\gamma\).  
Meanwhile, we have $\perman(A) \geq 1$, because $A_{i,i} = 1$ for each $i\in [n]$.
Together with \Cref{lem-condition-converge}, 
we have the Sinkhorn-Knopp algorithm converges on the matrix \(A\).

In the following, we prove that the Sinkhorn-Knopp algorithm converges slowly on the matrix $A$. 
Let $\delta \triangleq \varepsilon/n$.
Recall that $\varepsilon < 1/1000$.
Then one can verify that \eqref{eq-condition-delta} is satisfied.
By \Cref{lem-ab-bk-decay-slowly}, 
we have there exists some 
$t = \Omega(n/\varepsilon )$ such that 
$\min\{a_{k},b_{k}\} \geq \delta$ and $\max\{x_{k},y_{k},u_{k},v_{k}\} < \rho/(4n)$
for each $k \leq t$.
Combined with $n > 10^4$, $\delta = \varepsilon/n$ and $\rho = \delta^2/(100n^{2})$,
we have for each $k \leq t$,
\[\left(\frac{n}{2} - 1\right)\cdot\left(\left(2a_k-n\cdot \max\{x_k,y_k,u_k,v_k\}\right)+\left(2b_k-n\cdot \max\{x_k,y_k,u_k,v_k\}\right)\right)\geq (n-2)(2\delta -  \rho) \geq \varepsilon .\]
Combined with \Cref{lem-relation-rowsum-columnsum}, we have
\begin{align*}
&\quad \norm{r\left(A^{(k)}\right)- \boldsymbol{1}}_1 + \norm{c\left(A^{(k)}\right)- \boldsymbol{1}}_1 = \sum_{i\in [n]}\abs{r_{i}\left(A^{(k)}\right) - c_{i}\left(A^{(k)}\right)} 
\\&= \left(\frac{n}{2} - 1\right)\left(2\abs{b_k + x_k - a_k - y_k}+ \abs{2(a_k-x_k) + \left(\frac{n}{2} - 1\right)(u_k- v_k)} + \abs{2(b_k-y_k) + \left(\frac{n}{2} - 1\right)(u_k- v_k)}\right)\\
&\geq \left(\frac{n}{2} - 1\right)\cdot\left(\left(2a_k-n\cdot \max\{x_k,y_k,u_k,v_k\}\right)+\left(2b_k-n\cdot \max\{x_k,y_k,u_k,v_k\}\right)\right) \geq \varepsilon .
\end{align*}
Similarly, 
let $\delta \triangleq \varepsilon/\sqrt{n}$.
If $\varepsilon \leq 10^{-3}\cdot n^{-1/2}$,
then one can verify that \eqref{eq-condition-delta} is satisfied.
By \Cref{lem-ab-bk-decay-slowly}, 
we have there exists some 
$t = \Omega(\sqrt{n}/\varepsilon )$ such that 
$\min\{a_{k},b_{k}\} \geq \delta$ and $\max\{x_{k},y_{k},u_{k},v_{k}\} < \rho/(4n)$
for each $k \leq t$.
Combined with $n > 10^4$, $\delta = \varepsilon/\sqrt{n}$ and $\rho = \delta^2/(100n^{2})$,
we have for each $k \leq t$,
\[\left(\frac{n}{2} - 1\right)\cdot \left(\left(2a_k-n\cdot \max\{x_k,y_k,u_k,v_k\}\right)^2 + \left(\left(2b_k-n\cdot \max\{x_k,y_k,u_k,v_k\}\right)\right)^2\right) \geq \varepsilon^2.\]
Combined with \Cref{lem-relation-rowsum-columnsum}, we have
\begin{align*}
&\quad \norm{r\left(A^{(k)}\right)- \boldsymbol{1}}_2 + \norm{c\left(A^{(k)}\right)- \boldsymbol{1}}_2 = \sqrt{\sum_{i\in [n]}\abs{r_{i}\left(A^{(k)}\right) - c_{i}\left(A^{(k)}\right)}^2} \\
&\geq \left(\frac{n}{2} - 1\right)^{1/2}\cdot \left(\left(2(a_k-x_k) + \left(\frac{n}{2} - 1\right)(u_k- v_k)\right)^2 + \left(2(b_k-y_k) + \left(\frac{n}{2} - 1\right)(u_k- v_k)\right)^2\right)^{1/2}\\
&\geq \left(\frac{n}{2} - 1\right)^{1/2}\cdot \left(\left(2a_k-n\cdot \max\{x_k,y_k,u_k,v_k\}\right)^2 + \left(\left(2b_k-n\cdot \max\{x_k,y_k,u_k,v_k\}\right)\right)^2\right)^{1/2} \geq \varepsilon.
\end{align*}
Hence, the theorem is immediate.
\end{proof}

\subsection{Generalized Construction for Each $\gamma \in [0, 1/2)$.}
In this section, we prove \Cref{thm-lower-bound-main}. For each \(\gamma \in [0, 1/2)\), we construct a matrix of density \(\gamma\) for which the Sinkhorn-Knopp algorithm converges slowly. \Cref{thm-lowerbound-positive} provides the construction for \(\gamma \in (1/4, 1/2)\), while the following theorem presents the construction for \(\gamma \in [0, 1/4]\).

\begin{theorem}\label{thm-lower-bound-construction}
Let $\gamma\in [0,1/4]$ and $\varepsilon <1/1000$. 
If \(\gamma > 0\), let \(m > 20000\) be an integer chosen so that \(n = m - \lceil \gamma m \rceil\) is even, define \(Z\) as the \(n \times n\) matrix with density \(1/2 - 1/n\) as specified in \Cref{thm-lowerbound-positive}, and let \(B\) be the \(\lceil \gamma m \rceil \times \lceil \gamma m \rceil\) matrix with every entry equal to 1. Otherwise, let \(m > 20000\) be an odd integer, set \(n = m - 1\), define \(Z\) as the \(n \times n\) matrix with density \(1/2 - 1/n\) as specified in \Cref{thm-lowerbound-positive}, and let \(B\) be the \(1 \times 1\) matrix whose sole entry is \(1/m\).
Let \(A\) be an \(m \times m\) block diagonal matrix with two blocks, where the matrix in the top-left corner is \(Z\) and the matrix in the bottom-right corner is \(B\).
Then we have $A$ is a matrix with density $\gamma$, and
the Sinkhorn-Knopp algorithm converges on the matrix $A$.
Moreover, there exists some 
$t = \Omega(m/\varepsilon )$ such that for each $k \leq t$,
\[\norm{r\left(A^{(k)}\right)- \boldsymbol{1}}_1 + \norm{c\left(A^{(k)}\right)- \boldsymbol{1}}_1 \geq \varepsilon.\]
If $\varepsilon \leq 1/(2000\sqrt{m})$, there also exists some 
$t = \Omega(\sqrt{m}/\varepsilon )$ such that for each $k \leq t$,
\[\norm{r\left(A^{(k)}\right)- \boldsymbol{1}}_2 + \norm{c\left(A^{(k)}\right)- \boldsymbol{1}}_2 \geq \varepsilon.\]
\end{theorem}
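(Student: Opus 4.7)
The plan is to exploit the block-diagonal structure of $A$ and reduce everything to \Cref{thm-lowerbound-positive} applied to the inner block $Z$. The key observation is that each Sinkhorn--Knopp update only rescales rows and columns, so zeros are preserved and the block-diagonal form of $A$ is preserved throughout. Consequently, if we denote by $Z^{(k)}$ and $B^{(k)}$ the iterates obtained by running the algorithm on $Z$ and $B$ \emph{alone}, then $A^{(k)}$ is block-diagonal with blocks $Z^{(k)}$ and $B^{(k)}$ for every $k\geq 0$. This lets us import the lower bound for $Z$ directly to $A$.

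First I would verify that $A$ has density $\gamma$. For $\gamma\in(0,1/4]$, take the threshold $\rho=1$. Inspecting the construction in \Cref{thm-lowerbound-positive} shows that every row and column of $Z$ contains at least $n/2-1$ entries equal to $1$, while every row and column of $B$ contains exactly $\lceil\gamma m\rceil$ entries equal to $1$. The inequality $(m-\lceil\gamma m\rceil)/2-1\geq \lceil\gamma m\rceil$ holds because $\gamma\leq 1/4$ and $m>20000$, so every row and column of $A$ has at least $\lceil\gamma m\rceil$ considerable entries, and rows from the $B$ block attain this count exactly. For $\gamma=0$, I would instead choose $\rho\in(1/m,1]$; the unique entry of $B$ becomes sub-threshold, so its row/column carries $0=\lceil\gamma m\rceil$ considerable entries, while the $Z$ rows trivially meet the requirement. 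Convergence of the algorithm on $A$ is immediate from \Cref{lem-condition-converge}, since $\perman(A)=\perman(Z)\cdot\perman(B)>0$ (the diagonal of $Z$ is all-ones by \Cref{thm-lowerbound-positive}, and $\perman(B)$ is positive in either case).

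Next I would push through the iteration lower bound. The crucial point is that $B^{(k)}$ is doubly stochastic for every $k\geq 0$: in the all-ones case $B^{(0)}$ has every entry $1/\lceil\gamma m\rceil$ and its column sums are already one, while in the $1\times 1$ case $B^{(0)}=[1]$. Therefore $r_i(A^{(k)})=c_i(A^{(k)})=1$ for every index $i$ belonging to the $B$ block, and the $B$ block contributes nothing to the $\ell_p$ error. Hence for $p\in\{1,2\}$,
\[
\norm{r(A^{(k)})-\boldsymbol{1}}_p + \norm{c(A^{(k)})-\boldsymbol{1}}_p = \norm{r(Z^{(k)})-\boldsymbol{1}}_p + \norm{c(Z^{(k)})-\boldsymbol{1}}_p.
\]
Applying \Cref{thm-lowerbound-positive} to $Z$ (whose embedded density $\gamma_Z=1/2-1/n$ lies in $(1/4,1/2)$) yields iteration lower bounds of $\Omega(n/\varepsilon)$ and, under $\varepsilon\leq 1/(1000\sqrt{n})$, of $\Omega(\sqrt{n}/\varepsilon)$. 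Since $n\geq 3m/4-1=\Omega(m)$, these translate into the advertised $\Omega(m/\varepsilon)$ and $\Omega(\sqrt{m}/\varepsilon)$ bounds; and $\varepsilon\leq 1/(2000\sqrt{m})$ together with $n\leq m$ gives $\varepsilon\leq 1/(1000\sqrt{n})$, so the $\ell_2$ branch applies.

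The only delicate point I expect is the bookkeeping aligning $Z$ with the preconditions of \Cref{thm-lowerbound-positive}, namely the parity of $n$, the bound $n>10000$, and the identity $n-2\lceil\gamma_Z n\rceil=2$ for $\gamma_Z=1/2-1/n$ (which uses that $\lceil(1/2-1/n)n\rceil=n/2-1$ when $n$ is even). This is precisely why $m$ is chosen so that $n=m-\lceil\gamma m\rceil$ is even in the $\gamma>0$ case and $n=m-1$ with $m$ odd in the $\gamma=0$ case, together with $m>20000$. Beyond these parameter checks, no genuinely new technical work is required: the argument is essentially a careful verification that the block-diagonal reduction preserves the density, convergence, and error structures used in \Cref{thm-lowerbound-positive}.
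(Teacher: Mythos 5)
Your overall strategy matches the paper's: observe that the block‑diagonal form of $A$ is preserved by Sinkhorn--Knopp so that $A^{(k)}$ decomposes into $Z^{(k)}$ and $B^{(k)}$, note that $B^{(k)}$ is exactly doubly stochastic from $k=0$ on, and import the error lower bound from \Cref{thm-lowerbound-positive} applied to $Z$. That reduction, the convergence argument via the permanent, and the parameter bookkeeping ($n$ even, $n>10000$, $n - 2\lceil(1/2-1/n)n\rceil = 2$, $n=\Omega(m)$) are all as in the paper.

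There is, however, a concrete error in your density verification. You set $\rho=1$ and assert ``every row and column of $Z$ contains at least $n/2-1$ entries equal to $1$.'' This fails for columns $n/2$ and $n/2+1$ of $Z$: by the construction in \Cref{thm-lowerbound-positive}, the only entry of column $n/2$ equal to $1$ is the diagonal entry $Z_{n/2,n/2}$, while the entries $Z_{i,n/2}$ for $i\le n/2-1$ equal $2\lceil\gamma' n\rceil/n = 1-2/n < 1$ (with $\gamma' = 1/2-1/n$) and the remaining entries equal $\beta$, and symmetrically for column $n/2+1$. So with $\rho=1$ those two columns contribute only one considerable entry each, which is far below $\lceil\gamma m\rceil$ whenever $\gamma m>1$, and $A$ would not be $(\gamma,1)$-dense. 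The fix is simple: use the threshold at which $Z$ itself is dense, i.e.\ take $\rho \le 1-2/n$ (for instance $\rho = 1/2$). Then every row and every column of $Z$ has at least $\lceil\gamma' n\rceil = n/2-1 \ge \lceil\gamma m\rceil$ entries of value at least $\rho$, the all‑ones block $B$ contributes exactly $\lceil\gamma m\rceil$ such entries per row and column, and in the $\gamma=0$ case one should additionally require $\rho>1/m$ (so $\rho\in(1/m,\,1-2/n]$) so that the $1\times 1$ block carries zero considerable entries. With that correction the rest of your argument goes through exactly as in the paper.
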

\begin{proof}
We prove the theorem for the case \(\gamma > 0\); the proof for \(\gamma = 0\) is analogous.
Given any \(\gamma \in (0,1/4)\), one can verify the existence of an integer \(m > 20000\) such that \(n = m - \lceil \gamma m \rceil\) is even.
By $m>20000$ and $\gamma\in (0,1/4]$,
we have $n = m - \lceil\gamma m\rceil > 10000$.
By $n$ is even, we have
\[n - 2\left\lceil \left(\frac{1}{2} - \frac{1}{n}\right)\cdot n\right\rceil = 2.\]
In addition, we also have $1/2 - 1/n \in (1/4,1/2)$.
Thus, the size $n$ and the density $1/2 - 1/n$
satisfy the conditions in \Cref{thm-lowerbound-positive} and 
the matrix $Z$ is well-defined.
Furthermore, we have
\begin{align*}
&\quad \left\lceil \left(\frac{1}{2} - \frac{1}{n}\right)\cdot n\right\rceil \\
(\text{by $n>10000$})\quad &\geq \frac{4n}{9} \\
(\text{by $n = m - \lceil\gamma m\rceil$}) \quad &= \frac{4(m - \lceil\gamma m\rceil)}{9} \\
(\text{by $m>20000$ and $\gamma \in (0,1/4]$})\quad &\geq 
\frac{4}{9}\cdot \frac{2m}{3}\\
& = \frac{8m}{27}\\
(\text{by $m>20000$ and $\gamma \in (0,1/4]$})\quad & \geq \lceil \gamma m  \rceil.
\end{align*}
Hence, each row and each column of $Z$ contains at least $\lceil \gamma m  \rceil$ entries equal to 1.
Therefore, it is straightforward to verify that \(A\) is \(\gamma\)-dense.

One can verify that the Sinkhorn-Knopp algorithm converges on the matrix $B$ in one round.
In addition, by applying \Cref{thm-lowerbound-positive} to the matrix $Z$, we have the Sinkhorn-Knopp algorithm converges on $Z$, and 
there exists some 
$t = \Omega(n/\varepsilon) = \Omega(m/\varepsilon )$ such that for each $k \leq t$,
\[\norm{r\left(Z^{(k)}\right)- \boldsymbol{1}}_1 + \norm{c\left(Z^{(k)}\right)- \boldsymbol{1}}_1 \geq \varepsilon.\]
Thus, the Sinkhorn-Knopp algorithm converges on $A$,
and there exists some 
$t = \Omega(m/\varepsilon )$ such that for each $k \leq t$,
\[\norm{r\left(A^{(k)}\right)- \boldsymbol{1}}_1 + \norm{c\left(A^{(k)}\right)- \boldsymbol{1}}_1 = \norm{r\left(Z^{(k)}\right)- \boldsymbol{1}}_1 + \norm{c\left(Z^{(k)}\right)- \boldsymbol{1}}_1 \geq \varepsilon.\]
Similarly, one can also verify the result with the \(\ell_2\)-norm.
\end{proof}

\Cref{thm-lower-bound-main} is immediate by Theorems \ref{thm-lowerbound-positive} and \ref{thm-lower-bound-construction}.



\section*{Acknowledgement}
We thank Hu Ding, Qingyuan Li, Haoran Wang and Hongteng Xu for helpful discussion. We also thank the anonymous reviewers for their valuable suggestions.

\bibliographystyle{alpha}
\bibliography{refs}

\clearpage


\appendix
\section{Proof of Fact 2.5}\label{appendix-fact}
\Cref{item-first-fact-a0-a1} of this lemma is immediate.
\Cref{item-forth-fact-c-a-a0-a1} of this lemma is immediate by Item \ref{item-second-fact-c-a-a0-a1} and \Cref{lem-monotone-rsum-csum}.
Here, we only prove \Cref{item-second-fact-c-a-a0-a1}.
Since $A$ is $(\gamma,\rho)$-dense,
we have
\begin{align}\label{eq-lower-upper-bound-r-c}
\forall i,j\in [n], \quad \rho\lceil\gamma n\rceil \leq  r_i(A),c_i(A) \leq n.
\end{align}
Thus 
\begin{align*}
\forall i,j\in [n], \quad 0 \leq A^{(0)}_{i,j} = \frac{A_{i,j}}{r_i(A)} \leq \frac{1}{\rho\lceil\gamma n\rceil} \leq \frac{1}{\rho\gamma n}.
\end{align*}
In addition, by $r_i(A)\leq n$ for each $i\in [n]$,
we have $A^{(0)}_{i,j}\geq \rho/n$ if $A_{i,j} \geq \rho$.
Combined with the fact that $A$ is $(\gamma,\rho)$-dense, we have that each row and each column of $A^{(0)}$ contains at least $\gamma n$ entries no less than $\rho/n$.

\section{Missing proofs in Section 4}\label{appendix-sec-lowerbound}
\subsection*{Proof of \Cref{lemma-lb-equal-item-density-positive}}
We prove this lemma by reduction.
For the base step where $k=0$, all the conclusions are immediate.
For the inductive step, assume \emph{w.l.o.g.} $k$ is odd.

(\ref{item-1-lemma-positive})
For each $j,j'\leq n/2-1$ and each $k\geq 1$,
by the inductive assumption, we have 
\begin{equation}\label{eq-sum-atj-induction}
\begin{aligned}
\sum_{t\leq n/2-1}A^{(k-1)}_{t,j} &= \sum_{t\leq n/2-1}\left(A^{(k-1)}_{t,j}\cdot \id{A_{t,j} = 1} + A^{(k-1)}_{t,j}\cdot \id{A_{t,j} = \frac{2\lceil \gamma n \rceil}{n}} + A^{(k-1)}_{t,j}\cdot \id{A_{t,j} = \beta} \right)\\
& = \sum_{t\leq n/2-1}\left(A^{(k-1)}_{t,j'}\cdot \id{A_{t,j'} = 1} + A^{(k-1)}_{t,j'}\cdot \id{A_{t,j'} = \frac{2\lceil \gamma n \rceil}{n}} + A^{(k-1)}_{t,j'}\cdot \id{A_{t,j'} = \beta} \right) 
\\&= \sum_{t\leq n/2-1}A^{(k-1)}_{t,j'}.
\end{aligned}
\end{equation}
Thus, for each $i,i',j,j'\leq n/2-1$ where $A_{i,j} = A_{i',j'}$ and each $k\geq 1$,
\begin{align*}
A^{(k)}_{i,j} &= \frac{A^{(k-1)}_{i,j}}{c_{j}\left(A^{(k-1)}\right)} = \frac{A^{(k-1)}_{i,j}}{A^{(k-1)}_{n/2,j} + A^{(k-1)}_{n/2+1,j} + \left(\sum_{t\leq n/2-1}A^{(k-1)}_{t,j}\right)+\left(\sum_{t\geq n/2+2}A^{(k-1)}_{t,j}\right)}
\\&= \frac{A^{(k-1)}_{i',j'}}{A^{(k-1)}_{n/2,j'} + A^{(k-1)}_{n/2+1,j'} + \left(\sum_{t\leq n/2-1}A^{(k-1)}_{t,j'}\right)+\left(\sum_{t\geq n/2+2}A^{(k-1)}_{t,j'}\right)} = \frac{A^{(k-1)}_{i',j'}}{c_{j'}\left(A^{(k-1)}\right)} = A^{(k)}_{i',j'},
\end{align*}
where the third equality follows from the inductive assumption and \eqref{eq-sum-atj-induction}.

The proofs of (\ref{item-2-lemma-positive}), (\ref{item-3-lemma-positive}), and (\ref{item-4-lemma-positive}) are similar to that of (\ref{item-1-lemma-positive}).

(\ref{item-5-lemma-positive}) 
For each $i,j, \ell\leq n/2-1$ we have
\[
A^{(k)}_{i,n/2} = \frac{A^{(k-1)}_{i,n/2}}{c_{n/2}\left(A^{(k-1)}\right)} = \frac{A^{(k-1)}_{j,n/2}}{c_{n/2}\left(A^{(k-1)}\right)} = A^{(k)}_{j,n/2},
\]
where the second equality is by the inductive assumption. Similarly,  
\begin{align*}
A^{(k)}_{i,n/2} &= \frac{A^{(k-1)}_{i,n/2}}{c_{n/2}\left(A^{(k-1)}\right)} = \frac{A^{(k-1)}_{i,n/2}}{A^{(k-1)}_{n/2,n/2}+A^{(k-1)}_{n/2+1,n/2}+\sum_{t\leq n/2-1}A^{(k-1)}_{t,n/2} + \sum_{t\geq n/2+2}A^{(k-1)}_{t,n/2}} 
\\&= \frac{A^{(k-1)}_{\ell,n/2+1}}{A^{(k-1)}_{n/2+1,n/2+1}+A^{(k-1)}_{n/2,n/2+1}+\sum_{t\leq n/2-1}A^{(k-1)}_{t,n/2+1}+\sum_{t\geq n/2+2}A^{(k-1)}_{t,n/2+1}} = \frac{A^{(k-1)}_{\ell,n/2+1}}{c_{n/2+1}\left(A^{(k-1)}\right)} = A^{(k)}_{\ell,n/2+1},
\end{align*}
where the third equality is by the inductive assumption.
Thus, \eqref{eq-equal-item-1} is proved. Similarly, one can also prove \eqref{eq-equal-item-4}.

For each $n/2+2\leq i,j,\ell \leq n$, we have 
\begin{align*}
A^{(k)}_{n/2,i} &= \frac{A^{(k-1)}_{n/2,i}}{c_{i}\left(A^{(k-1)}\right)} = \frac{A^{(k-1)}_{n/2,i}}{A^{(k-1)}_{n/2,i}+A^{(k-1)}_{n/2+1,i} + \sum_{t\leq n/2 - 1}A^{(k-1)}_{t,i} + \sum_{n/2+2\leq t \leq n}A^{(k-1)}_{t,i}} 
\\&= \frac{A^{(k-1)}_{n/2,j}}{A^{(k-1)}_{n/2,j} + A^{(k-1)}_{n/2+1,j} + \sum_{t \leq n/2-1}A^{(k-1)}_{t,j} + \sum_{n/2+2\leq t \leq n}A^{(k-1)}_{t,j}} 
= \frac{A^{(k-1)}_{n/2,j}}{c_{j}\left(A^{(k-1)}\right)} = A^{(k)}_{n/2,j},
\end{align*}
where the third equality is by the inductive assumption. Similarly,  
\begin{align*}
A^{(k)}_{n/2,i} &= \frac{A^{(k-1)}_{n/2,i}}{c_{i}\left(A^{(k-1)}\right)} = \frac{A^{(k-1)}_{n/2,i}}{A^{(k-1)}_{n/2,i}+A^{(k-1)}_{n/2+1,i} +  \sum_{t\leq n/2 - 1}A^{(k-1)}_{t,i} + \sum_{n/2+2\leq t \leq n}A^{(k-1)}_{t,i}} 
\\&= \frac{A^{(k-1)}_{n/2+1,\ell}}{A^{(k-1)}_{n/2,\ell} + A^{(k-1)}_{n/2+1,\ell} +  \sum_{t\leq n/2 - 1}A^{(k-1)}_{t,\ell}+ \sum_{n/2+2\leq t \leq n}A^{(k-1)}_{t,\ell}} = \frac{A^{(k-1)}_{n/2+1,\ell}}{c_{\ell}\left(A^{(k-1)}\right)} = A^{(k)}_{n/2+1,\ell},
\end{align*}
where the third equality is by the inductive assumption.
Thus, \eqref{eq-equal-item-3} is proved. 
Similarly, one can also prove \eqref{eq-equal-item-2}.

Moreover, 
\begin{align*}
&\quad A^{(k)}_{n/2,n/2} = \frac{A^{(k-1)}_{n/2,n/2}}{c_{n/2}\left(A^{(k-1)}\right)} = \frac{A^{(k-1)}_{n/2,n/2}}{A^{(k-1)}_{n/2,n/2}+A^{(k-1)}_{n/2+1,n/2} + \sum_{t\leq n/2 - 1}A^{(k-1)}_{t,n/2} + \sum_{n/2+2\leq t \leq n}A^{(k-1)}_{t,n/2}} 
\\&= \frac{A^{(k-1)}_{n/2+1,n/2+1}}{A^{(k-1)}_{n/2+1,n/2+1} + A^{(k-1)}_{n/2,n/2+1} + \sum_{t \leq n/2-1}A^{(k-1)}_{t,n/2+1} + \sum_{n/2+2\leq t \leq n}A^{(k-1)}_{t,n/2+1}} 
= \frac{A^{(k-1)}_{n/2+1,n/2+1}}{c_{n/2+1}\left(A^{(k-1)}\right)} = A^{(k)}_{n/2+1,n/2+1},
\end{align*}
where the third equality is by the inductive assumption.
Thus, \eqref{eq-equal-item-5} is proved. 
Similarly, one can also prove \eqref{eq-equal-item-6}.
This finishes the proof of the lemma.

\subsection*{Proof of \Cref{lem-relation-rowsum-columnsum}}
By \Cref{lemma-lb-equal-item-density-positive}, we have 
\begin{equation}\label{eq-sum-rn/2-positive}
\begin{aligned}
r_{n/2}\left(A^{(k)}\right) &= A^{(k)}_{n/2,n/2} + A^{(k)}_{n/2,n/2+1} +\sum_{t\leq n/2-1} A^{(k)}_{n/2,t} + \sum_{t\geq n/2+2} A^{(k)}_{n/2,t} 
\\&= A^{(k)}_{n/2,n/2} + A^{(k)}_{n/2,n/2+1} + \left(\frac{n}{2} - 1\right)(b_k + x_k),   
\end{aligned}
\end{equation}
\begin{equation}
\begin{aligned}\label{eq-sum-cn/2-positive}
c_{n/2}\left(A^{(k)}\right) &= A^{(k)}_{n/2,n/2} +A^{(k)}_{n/2+1,n/2} + \sum_{t\leq n/2-1} A^{(k)}_{t,n/2} + \sum_{t\geq n/2+2} A^{(k)}_{t,n/2} 
\\&= A^{(k)}_{n/2,n/2} +A^{(k)}_{n/2,n/2+1} + \left(\frac{n}{2} - 1\right)(a_k + y_k).
\end{aligned}
\end{equation}
Thus, \eqref{eq-relation-rnovertwo-cnovertwo-positive} is immediate.
Similarly, we also have \eqref{eq-relation-rnovertwoplusone-cnovertwoplusone-positive}.
By \Cref{lemma-lb-equal-item-density-positive}, we also have
\begin{align*}
r_{1}\left(A^{(k)}\right) &= A^{(k)}_{1,n/2} + A^{(k)}_{1,n/2+1} + \sum_{t\leq n/2-1}A^{(k)}_{1,t} + \sum_{t\geq n/2+2}A^{(k)}_{1,t} = 2a_k +  \left(\frac{n}{2} - 1\right)u_k + \sum_{t\leq n/2-1}A^{(k)}_{1,t}.
\end{align*}
\begin{align*}
c_{1}\left(A^{(k)}\right) = A^{(k)}_{n/2,1} + A^{(k)}_{n/2+1,1} +  \sum_{t\leq n/2-1}A^{(k)}_{t,1} + \sum_{t \geq n/2+2}A^{(k)}_{t,1} = 2x_{k} + \left(\frac{n}{2} - 1\right) v_k + \sum_{t\leq n/2-1}A^{(k)}_{t,1}.
\end{align*}
\begin{align*}
\sum_{t\leq n/2-1}A^{(k)}_{1,t} &= 
\sum_{t\leq n/2-1}\left(A^{(k)}_{1,t}\cdot \id{A_{1,t} = 1} + A^{(k)}_{1,t}\cdot \id{A_{1,t} = \beta} \right) \\&=  \sum_{t\leq n/2-1}\left(A^{(k)}_{t,1}\cdot \id{A_{t,1} = 1} + A^{(k)}_{t,1}\cdot \id{A_{t,1} = \beta} \right) 
= \sum_{t\leq n/2-1}A^{(k)}_{t,1}.
\end{align*}
Thus, we have 
\begin{align*}
r_{1}\left(A^{(k)}\right) = c_{1}\left(A^{(k)}\right) + 2(a_k-x_k) + \left(\frac{n}{2} - 1\right)(u_k- v_k).
\end{align*}
Similarly, we also have \eqref{eq-relation-rone-cn-positive} and \eqref{eq-relation-rn-cone-positive}.

\subsection*{Proof of \Cref{lem-base-case}}
Note that
\begin{align}\label{eq-ronea-ub-positive}
r_1(A) = \lceil \gamma n \rceil + 4 \lceil \gamma n \rceil/n + (n-2 - \lceil \gamma n \rceil)\beta.
\end{align}
Thus, 
\[u_0 = \frac{A_{1,n}}{\lceil \gamma n \rceil + 4 \lceil \gamma n \rceil/n + (n-2 - \lceil \gamma n \rceil)\beta}.\]
Combined with $A_{1,n} =\beta = \varepsilon^{8}/(100 n^{61})$, \eqref{eq-def-rho-positive} and \eqref{eq-condition-delta}, we have
$u_0 < \beta \leq \rho \delta^6/n^{51}$.
Similarly, one can also verify that 
$\max\{y_{0},u_{0},v_{0}\} \leq \rho \delta^6/n^{51}$.
Thus, \eqref{eq-ub-xyuvzero} is immediate.
By \eqref{eq-ronea-ub-positive} and $\beta \leq \rho \delta^6/n^{51}$, we have
\[a_0 = \frac{2 \lceil \gamma n \rceil/n}{r_1(A)}\in \left[ \frac{2}{n + 4 + \rho}, \frac{2}{n+ 4}\right].\]
Moreover, by $\beta \leq \rho \delta^6/n^{51}$ we have 
\[r_{n/2}(A)  = A_{n/2,n/2} + A_{n/2,n/2+1} +\sum_{t\leq n/2-1} A_{n/2,t} + \sum_{t>n/2+1} A_{n/2,t} = \frac{n}{2}(1 + \beta) < \frac{n}{2} + \rho.\]
Therefore,
\[b_0 = \frac{1}{r_{n/2}(A)}\in \left[ \frac{2}{n + \rho}, \frac{2}{n}\right].\]
Thus, one can verify that 
\[\rho = \frac{\delta^2}{100n^{2}} < \frac{\delta^2}{n-1} - \frac{\delta^2}{n} \leq \frac{2/(n-1) - 2/n}{1+ 1/\delta} < \frac{2/(n-1) - b_0}{1+1/\delta}<\frac{2/(n-1) - a_0}{1+1/\delta}.\]
Hence,
\[a_0 < 2/(n-1) - \rho/\delta - \rho, \quad b_0 < 2/(n-1) - \rho/\delta -\rho.\]

\subsection*{Proof of \Cref{lem-item-max-xyuv-boost}}
Note that for each even $k\geq 3$, we have
\begin{align}\label{eq-xk-xkminustwo-conernover2}
x_k = \frac{x_{k-3}}{r_{n/2}\left(A^{(k-3)}\right)\cdot c_{1}\left(A^{(k-2)}\right)\cdot r_{n/2}\left(A^{(k-1)}\right)}.
\end{align}
Moreover, one can verify that the matrix $A$ in \Cref{thm-lowerbound-positive} is $(\lceil\gamma n\rceil/n, 2\lceil\gamma n\rceil/n)$-dense.
By \Cref{item-forth-fact-c-a-a0-a1} of \Cref{fact-c-a-a0-a1} and $\gamma\in (1/4,1/2)$, we have
\[c_{1}\left(A^{(k-2)}\right) \geq 2\gamma^2 \geq 1/8, \quad r_{n/2}\left(A^{(k-1)}\right)\geq 2\gamma^2\geq 1/8.\]
Combined \eqref{eq-xk-xkminustwo-conernover2},
we have 
\begin{align}\label{eq-xk-leq-sixtyfour-xkminustwo}
x_k \leq 512x_{k-3}.
\end{align}
Similarly, we also have \eqref{eq-xk-leq-sixtyfour-xkminustwo} for each odd $k\geq 3$.
In summary, we always have \eqref{eq-xk-leq-sixtyfour-xkminustwo} for each $k\geq 3$.
In addition, one can also verify that 
$y_k \leq 512y_{k-3}$, $u_k \leq 512u_{k-3}$ and $v_k \leq 512v_{k-3}$ for each $k\geq 3$. Thus, to prove \eqref{eq-max-xyuv-less-rho-over-fourn}, it is sufficient to prove that 
\begin{align}\label{eq-max-xyuv-less-rho-over-fourn-k-leq-tminustwo}
\forall k\leq t-3, \quad \max\{x_{k},y_{k},u_{k},v_{k}\} < \frac{\rho}{2048n}.
\end{align}
In the following, we prove \eqref{eq-max-xyuv-less-rho-over-fourn-k-leq-tminustwo}. Suppose \emph{w.l.o.g.} $t-3 \geq 0$.
For each $k\leq \min\{t-2,\ell\}$,
we have 
\begin{equation}
\begin{aligned}
&\quad \max\{x_{k},y_{k},u_{k},v_{k}\} \\
(\text{by $k\leq \min\{t-2,\ell\}$ and \eqref{eq-ub-xyuv-rho-kleql}})\quad &\leq \frac{\rho\delta^6}{n^{51}}\cdot \left(1+\frac{2}{n-1}\right)^{3k}\\
(\text{by $k\leq \ell$})\quad &\leq \frac{\rho\delta^6}{n^{51}}\cdot \left(1+\frac{2}{n-1}\right)^{3{\ell}} \\
(\text{by \eqref{eq-def-ell-positive}})\quad &=\frac{\rho\delta^6}{n^{51}} \cdot\left(1+\frac{2}{n-1}\right)^{24n(\lceil\log n\rceil -1)}\\
\left(\text{by $n>10^4$}\right)\quad &\leq \frac{\rho\delta^6}{n^{51}} \cdot\exp({49(\lceil\log n\rceil -1)})\\
&\leq \frac{\rho\delta^6}{n^2}\\
\left(\text{by \eqref{eq-condition-delta} and $n>10^4$}\right)\quad &< \frac{\rho}{2048n}.
\end{aligned}
\end{equation}
For each odd $\ell\leq k\leq t-3$, 
by $t < K$ and \eqref{eq-def-K-positive}, we have
$b_{k-1} \geq \delta$ and $b_{k+1} \geq \delta$.
In addition, by \Cref{item-first-fact-a0-a1} of \Cref{fact-c-a-a0-a1}, 
we have $b_{\ell}\leq 1$.
Thus, 
\begin{equation}
\begin{aligned}
&\quad \max\{x_{k},y_{k},u_{k},v_{k}\} \\
(\text{by $\ell\leq k\leq t-3$ and \eqref{eq-ub-xyuv-rho-kgeql-even}})\quad &\leq \frac{\rho\delta^6}{n^{2}}\cdot \left(\frac{b_{\ell}}{b_{k-1}}\right)^6\cdot \left(\frac{b_{k-1}}{b_{k+1}}\right)^3.\\
(\text{by $b_{\ell}\leq 1$})\quad &\leq \frac{\rho\delta^6}{n^{2}}\cdot \left(\frac{1}{b_{k-1}}\right)^3\cdot \left(\frac{1}{b_{k+1}}\right)^3 \\
(\text{by $b_{k-1}\geq \delta$ and $b_{k+1}\geq \delta$})\quad &\leq \frac{\rho\delta^6}{n^{2}}\cdot \delta^{-6}\\
\left(\text{by $n>10^4$}\right)\quad  &< \frac{\rho}{2048n}.
\end{aligned}
\end{equation}
Similarly, one can also verify that $\max\{x_{k},y_{k},u_{k},v_{k}\}< \rho/(2048n)$ for each even $\ell\leq k\leq t-3$.
In summary, \eqref{eq-max-xyuv-less-rho-over-fourn-k-leq-tminustwo} is proved, which finishes the proof of the lemma.

\subsection*{Proof of \Cref{lem-iterative-relation}}
By \Cref{lem-item-max-xyuv-boost}, we have \eqref{eq-max-xyuv-less-rho-over-fourn} holds.
If $k\leq t$ is odd, by \eqref{eq-relation-rone-cn-positive} we have
\begin{align*}
a_{k+1} = \frac{a_k}{r_{1}\left(A^{(k)}\right)} = \frac{a_k}{1+2(a_k - x_k) + (n/2-1)(u_k-v_k)}.
\end{align*}
Combined with \eqref{eq-max-xyuv-less-rho-over-fourn}, we have \eqref{eq-ak-recursion-positive}.
By \eqref{eq-relation-rnovertwo-cnovertwo-positive}, we have
\begin{equation*}
\begin{aligned}
b_{k+1} &= \frac{b_k}{r_{n/2}\left(A^{(k)}\right)}=\frac{b_k }{c_{n/2}\left(A^{(k)}\right) + \left(\frac{n}{2} - 1\right)(b_k+x_k - a_k - y_k)}=\frac{b_k}{1 + \left(\frac{n}{2} - 1\right)(b_k+x_k - a_k - y_k)},
\end{aligned}
\end{equation*}
Combined with \eqref{eq-max-xyuv-less-rho-over-fourn}, we have \eqref{eq-bk-recursion-complex-positive}.
Similarly, if $k\leq t$ is even, by \eqref{eq-relation-rn-cone-positive} we have
\begin{align*}
b_{k+1} = \frac{b_k}{c_{n}\left(A^{(k)}\right)} = \frac{b_k}{1+2(b_k - y_k) + (n/2-1)(u_k-v_k)}.
\end{align*}
Combined with \eqref{eq-max-xyuv-less-rho-over-fourn}, we have \eqref{eq-bk-recursion-positive}.
By \eqref{eq-relation-rnovertwo-cnovertwo-positive}, we have
\begin{equation*}
\begin{aligned}
a_{k+1} &= \frac{a_k}{c_{n/2}\left(A^{(k)}\right)}=\frac{a_k }{r_{n/2}\left(A^{(k)}\right) + \left(\frac{n}{2} - 1\right)(a_k+y_k - b_k - x_k)}=\frac{a_k}{1 + \left(\frac{n}{2} - 1\right)(a_k+y_k - b_k - x_k)}.
\end{aligned}
\end{equation*}
Combined with \eqref{eq-max-xyuv-less-rho-over-fourn}, we have \eqref{eq-ak-recursion-complex-positive}.

\subsection*{Proof of \Cref{lem-upperbouns-akbk-simple}}
We prove this lemma by induction.
For the base step where $k=0$, by \eqref{eq-a0-b0-less-twonminusone-minusnrho}, \eqref{eq-def-rho-positive} and $n>10000$, we have \eqref{eq-ub-maxab-positive}, \eqref{eq-lowerbound-one-minusnovertwobk} and \eqref{eq-lowerbound-one-minusnovertwoak} immediately.

For the inductive step where $1\leq k\leq t$, we assume \emph{w.l.o.g.} $k$ is even, since
the proof for odd $k$ is similar.
By the induction assumption, we have 
\begin{align}
&a_{k-1} < 2/(n-1) - (1/\delta - (k-2))\rho < 1/10, \label{eq-ak-less-twonminustwominusnvarepsilon}\\
&b_{k-1} < 2/(n-1) - (1/\delta - (k-2))\rho < 1/10,\label{eq-bk-less-twonminustwominusnvarepsilon}\\
&1 + \left(\frac{n}{2} - 1\right)(b_{k-1}-a_{k-1}) - \rho \geq 1 - \left(\frac{n}{2} - 1\right)a_{k-1} -\rho >0.\label{eq-one-plus-novertwominusone-bkminusak-larger-one}
\end{align}
In addition, by \Cref{item-first-fact-a0-a1} of \Cref{fact-c-a-a0-a1}, 
we have $b_{k}\leq 1$.
Thus, if $b_{k-1} \leq a_{k-1}$, we have 
\begin{align*}
&\quad b_{k}(1 - \rho) - a_{k-1} \\
(\text{by \eqref{eq-bk-recursion-complex-positive} })\quad &\leq\frac{b_{k-1}(1-\rho)}{1 + \left(\frac{n}{2} - 1\right)(b_{k-1}-a_{k-1}) -\rho} - a_{k-1}\\
&\leq \frac{b_{k-1}(1-\rho) - a_{k-1}(1 + \left(n/2 - 1\right)(b_{k-1}-a_{k-1})- \rho)}{1 + \left(\frac{n}{2} - 1\right)(b_{k-1}-a_{k-1}) - \rho}\\
\quad &= \frac{(1 - a_{k-1}\left(n/2 - 1\right) - \rho)(b_{k-1}-a_{k-1}))}{1 + \left(\frac{n}{2} - 1\right)(b_{k-1}-a_{k-1}) - \rho}\\
(\text{by $b_{k-1} \leq a_{k-1}$ and \eqref{eq-one-plus-novertwominusone-bkminusak-larger-one}})\quad &\leq 0.
\end{align*}
Hence, we have 
\begin{align*}
&\quad b_{k}\\
&\leq a_{k-1} + \rho b_{k}\\
(\text{by $b_{k}\leq 1$})\quad &\leq a_{k-1} +\rho \\
(\text{by \eqref{eq-ak-less-twonminustwominusnvarepsilon}})\quad &<  2/(n-1) - (1/\delta - (k-2))\rho + \rho \\
&= 2/(n-1) - (1/\delta - (k-1))\rho.
\end{align*}
If $b_{k-1} > a_{k-1}$, by \eqref{eq-condition-delta} and \eqref{eq-def-rho-positive}, one can verify that $\rho<1/100$.
Thus, 
we have
\begin{align*}
&\quad b_{k}\\
(\text{by \eqref{eq-bk-recursion-complex-positive} })\quad &\leq \frac{b_{k-1}}{1 + \left(\frac{n}{2} - 1\right)(b_{k-1}-a_{k-1})- \rho} \\
(\text{by $b_{k-1}>a_{k-1}$})\quad 
&< \frac{b_{k-1}}{1 - \rho}\\
(\text{by \eqref{eq-bk-less-twonminustwominusnvarepsilon} and $\rho<1/100$})\quad &\leq b_{k-1}+ \rho\\
(\text{by \eqref{eq-bk-less-twonminustwominusnvarepsilon}})\quad &\leq 2/(n-1) - (1/\delta - (k-2))\rho + \rho\\
&= 2/(n-1) - (1/\delta - (k-1))\rho.
\end{align*}
In summary, we always have 
\begin{align}\label{eq-bk-less-twonminusone-minusrho}
b_{k} < 2/(n-1) - (1/\delta - (k - 1))\rho.
\end{align}
In addition, by $k\leq t< K$ and \eqref{eq-def-K-positive},
we have $k < 1/\delta$. Thus,
\begin{align}\label{eq-nvarepsilon-minus-kminusone-larger-one}
1/\delta - (k - 1)>1. 
\end{align}
Combined with $n>10000$ and \eqref{eq-bk-less-twonminusone-minusrho}
we have 
\begin{align}\label{eq-bk-less-twonminusone}
b_{k} < 2/(n-1) - (1/\delta - (k - 1))\rho < 2/(n-1) < 1/10.
\end{align}
By \eqref{eq-bk-less-twonminusone-minusrho} and \eqref{eq-nvarepsilon-minus-kminusone-larger-one} and $n>10000$,
we also have
\begin{align}\label{lem-item-two-1}
1 + \left(\frac{n}{2} - 1\right)(a_{k}-b_{k}) - \rho \geq 1 - \left(\frac{n}{2} - 1\right)b_{k} -\rho \geq  \left(\frac{n}{2} - 1\right)\rho - \rho >0.
\end{align}

Moreover, by \eqref{eq-def-K-positive} and $k\leq t< K$,
we have $a_{k-1}\geq \delta$.
Combined with $\rho = \delta^{2}/(100n^{2})$,
we have $a_{k-1}\geq \delta>\rho$.
Thus,
\begin{equation}\label{lem-item-two-2}
\begin{aligned}
&\quad a_{k} \\
(\text{by \eqref{eq-ak-recursion-positive}})\quad &\leq \frac{a_{k-1}}{1+2a_{k-1} -\rho} \\
(\text{by $a_{k-1}> \rho$})\quad &\leq \frac{a_{k-1}}{1+a_{k-1}} \\
\quad &\leq a_{k-1}\\
(\text{by \eqref{eq-ak-less-twonminustwominusnvarepsilon}})\quad &\leq 2/(n-1) - (1/\delta - (k-2))\rho\\
\quad &\leq 2/(n-1) - (1/\delta - (k-1))\rho\\
(\text{by \eqref{eq-nvarepsilon-minus-kminusone-larger-one}})\quad &< 2/(n-1)\\
(\text{by $n>10^4$})\quad &< \frac{1}{10}.
\end{aligned}
\end{equation}
By \eqref{lem-item-two-2}, \eqref{eq-nvarepsilon-minus-kminusone-larger-one} and $n>10000$,
we also have
\begin{align}\label{lem-item-two-3}
1 + \left(\frac{n}{2} - 1\right)(b_{k}-a_{k}) - \rho \geq 1 - \left(\frac{n}{2} - 1\right)a_{k} -\rho \geq  \left(\frac{n}{2} - 1\right)\rho - \rho >0.
\end{align}
Combing \eqref{eq-bk-less-twonminusone}, \eqref{lem-item-two-1}, \eqref{lem-item-two-2} and \eqref{lem-item-two-3} together,
the induction step is complete and the lemma is proved.

\subsection*{Proof of \Cref{lem-refine-relation-akbk}}

We prove this lemma by induction.
For the base step where $k=0$, by \eqref{eq-a0-b0-range} and \eqref{eq-def-rho-positive}, we have
\begin{align*}
\frac{2}{n + 4 + \rho}\leq a_0\leq  \frac{2}{n + 4}<   \frac{2}{n + \rho}\leq b_0 \leq \frac{2}{n}.
\end{align*}
Hence, 
\begin{equation}
\begin{aligned}
&\quad a_{1} \\
(\text{by \eqref{eq-ak-recursion-complex-positive}, \eqref{eq-lowerbound-one-minusnovertwobk}})\quad &\leq \frac{2/(n + 4)}{1+(n/2-1)\left(2/(n + 4) - 2/n\right)- \rho} \\
&=\frac{2n}{n^2 - \rho n^2 -4n\rho +8}\\
(\text{by \eqref{eq-def-rho-positive}}) \quad &< \frac{2n}{n^2 + 7}\\
(\text{by \eqref{eq-def-rho-positive}}) \quad &< \frac{2}{n+\rho}.
\end{aligned}
\end{equation}
Hence, $a_0 < b_0, a_1 <b_0$.
In addition, by \eqref{eq-a0-b0-range} and $n>10000$, we also have 
\begin{align*}
a_0(1+2b_0) - b_0 + 6\rho a_0\geq \frac{2}{n+4+\rho}\left(1+\frac{4}{n} + 6\rho\right) - \frac{2}{n} =  \frac{2}{n}\cdot \frac{6\rho n - \rho}{n+4+\rho} >0.
\end{align*}
Thus, one can verify that \eqref{eq-key-relation-positive-even-1}, \eqref{eq-key-relation-positive-even-2} and \eqref{eq-key-relation-positive-even-3} are satisfied.
The base case is proved.

\vspace{0.5cm}

For the inductive step where $k> 0$,
we assume \emph{w.l.o.g.} that $k$ is even,
since the proof for odd $k$ is similar.
According to the inductive assumption for odd $k-1$, we have
\begin{align}
b_{k-1} &< a_{k-1} + (6(k-1) + 2)\rho b_{k-1},\label{eq-reduction-assumption-7-positive}\\
a_{k-1} &<  b_{k-1}(1 + 2a_{k-1}) + 6k\rho b_{k-1} ,\label{eq-reduction-assumption-1-positive}\\
b_{k} &< a_{k-1} + (6(k-1) + 4)\rho b_{k-1}.\label{eq-reduction-assumption-2-positive}
\end{align}
In addition, by $k\leq t<K$ and \eqref{eq-def-K-positive},
we have $k < 1/\delta$, $b_k \geq \delta$.
Combined with \eqref{eq-def-rho-positive}, we have 
\begin{align}
\rho &\leq 1/(100n^2k^2)\label{eq-klessdelta-rho}.
\end{align}

At first, we show $a_{k} <  b_{k} + (6k+2)\rho a_{k}$.
By \eqref{eq-lowerbound-one-minusnovertwoak} we have 
\begin{align*}
1 + \left(\frac{n}{2} - 1\right)(b_{k-1}-a_{k-1}) + \rho > 0.
\end{align*}
One can also verify that 
\begin{align}
1+2a_{k-1}-\rho &> 0,\label{eq-oneplustwoakminusone-large-one}\\
1 - (6(k-1)+2)\rho &> 1 - (6k+2)\rho > 0.\label{eq-oneminussixkplustworho-large-zero}
\end{align}
Thus, 
\begin{align}\label{eq-prod-oneplustwoakminusone-large-one}
(1+2a_{k-1}-\rho)\left(1 + \left(\frac{n}{2} - 1\right)(b_{k-1}-a_{k-1})+\rho\right) > 0.
\end{align}
Moreover, by \eqref{eq-reduction-assumption-7-positive} and \eqref{eq-oneminussixkplustworho-large-zero} we have
\begin{align}\label{eq-bkminus1-upper-bound-akminusone}
b_{k-1} \leq \frac{a_{k-1}}{1 - (6(k-1)+2)\rho}.
\end{align}
Thus, 
\begin{equation}
\begin{aligned}\label{eq-bkminus1-upper-bound}
&\quad 6k\rho  b_{k-1} + \rho b_{k-1} \\
(\text{by \eqref{eq-bkminus1-upper-bound-akminusone}})\quad &\leq \frac{6k + 1}{1 - (6(k-1)+2)\rho}\cdot \rho a_{k-1} \\
(\text{by \eqref{eq-klessdelta-rho}})\quad &< (6k + 2)\rho a_{k-1}.
\end{aligned}
\end{equation}
Therefore, 
if $\left(n/2 - 1\right)(b_{k-1}-a_{k-1})+\rho \geq 0$, we have
\begin{align*}
&\quad a_{k} - b_{k} - (6k+2)\rho a_{k}\\
(\text{by \eqref{eq-ak-recursion-positive}, \eqref{eq-bk-recursion-complex-positive}})\quad &<  \frac{(1-(6k+2)\rho)a_{k-1}}{1+2a_{k-1} - \rho} - \frac{b_{k-1}}{1 + \left(\frac{n}{2} - 1\right)(b_{k-1}-a_{k-1})+ \rho}\\
\left(\text{by $\left(\frac{n}{2} - 1\right)(b_{k-1}-a_{k-1})+\rho \geq 0$}\right)\quad &\leq  \frac{(1-(6k+2)\rho)a_{k-1}}{1+2a_{k-1} - \rho} - b_{k-1}\\
&= \frac{a_{k-1} - 2a_{k-1}b_{k-1} - b_{k-1} + \rho b_{k-1} - (6k+2)\rho a_{k-1}}{1+2a_{k-1}-\rho}\\
(\text{by \eqref{eq-reduction-assumption-1-positive}, \eqref{eq-oneplustwoakminusone-large-one}})\quad&<   \frac{6k\rho b_{k-1} +\rho b_{k-1}- (6k+2)\rho a_{k-1}}{1+2a_{k-1}-\rho}\\
(\text{by \eqref{eq-bkminus1-upper-bound}, \eqref{eq-oneplustwoakminusone-large-one}})\quad &< \frac{(6k+2)\rho  a_{k-1} - (6k+2)\rho a_{k-1}}{1+2a_{k-1}-\rho}\\
\quad &= 0.
\end{align*}
Otherwise, $\left(n/2 - 1\right)(b_{k-1}-a_{k-1})+\rho <0$. We also have
\begin{align*}
&\quad a_{k} - b_{k} - (6k+2)\rho a_{k}\\
(\text{by \eqref{eq-ak-recursion-positive}, \eqref{eq-bk-recursion-complex-positive}})\quad &\leq  \frac{(1 - (6k+2)\rho)a_{k-1}}{1+2a_{k-1} - \rho} - \frac{b_{k-1}}{1 + \left(\frac{n}{2} - 1\right)(b_{k-1}-a_{k-1})+ \rho}\\
\left(\text{by \eqref{eq-oneminussixkplustworho-large-zero}, $\left(\frac{n}{2} - 1\right)(b_{k-1}-a_{k-1})+\rho <0$}\right)\quad &< \frac{a_{k-1} - (6k+2)\rho a_{k-1} - 2a_{k-1}b_{k-1} - b_{k-1} +\rho b_{k-1} }{(1+2a_{k-1}-\rho)(1 + \left(\frac{n}{2} - 1\right)(b_{k-1}-a_{k-1})+ \rho)}\\
(\text{by \eqref{eq-reduction-assumption-1-positive}, \eqref{eq-prod-oneplustwoakminusone-large-one}})\quad&<   \frac{6k\rho b_{k-1} + \rho b_{k-1} - (6k+2)\rho a_{k-1} }{(1+2a_{k-1}-\rho)(1 + \left(\frac{n}{2} - 1\right)(b_{k-1}-a_{k-1})+ \rho)} \\
(\text{by \eqref{eq-bkminus1-upper-bound}, \eqref{eq-prod-oneplustwoakminusone-large-one}})\quad &< \frac{(6k+2)\rho a_{k-1} - (6k+2)\rho a_{k-1}}{(1+2a_{k-1}-\rho)(1 + \left(\frac{n}{2} - 1\right)(b_{k-1}-a_{k-1})+ \rho)}\\
\quad &= 0.
\end{align*}
In summary, we always have $a_{k}  < b_{k} + (6k+2)\rho a_{k}$.

\vspace{0.3cm}

Secondly, we show $a_{k+1} <  b_{k}+ (6k+4)\rho a_{k}$.
Note that
\begin{align*}
 1 - b_{k}\left(\frac{n}{2} - 1\right) -\rho - \left(1 - \left(\frac{n}{2} - 1\right)(b_{k} - a_{k}) - \rho\right)
= \left(\frac{n}{2} - 1\right)(b_{k} - a_{k} - b_{k})
\leq 0.
\end{align*}
Hence, we have 
\begin{align}\label{eq-one-minus-bk-prod-novertwominusone-upperbound}
1 - b_{k}\left(\frac{n}{2} - 1\right) -\rho \leq  \left(1 - \left(\frac{n}{2} - 1\right)(b_{k} - a_{k}) - \rho\right).
\end{align}
Moreover, by \Cref{item-first-fact-a0-a1} of \Cref{fact-c-a-a0-a1}, 
we have $b_{k}\leq 1$.
Thus,
\begin{equation}
\begin{aligned}\label{eq-one-minus-novertwominusone-bkminusak-lowerbound}
&\quad 1 - \left(\frac{n}{2} - 1\right)(b_{k} - a_{k}) - \rho \\
(\text{by \eqref{eq-ak-recursion-positive}})\quad &\geq 1 - \left(\frac{n}{2} - 1\right)\left(b_{k} - \frac{a_{k-1}}{1+2a_{k-1}+\rho}\right) - \rho \\
\quad &= 1 - \left(\frac{n}{2} - 1\right)\left( \frac{b_{k}  + 2a_{k-1}b_{k}+\rho b_{k}-a_{k-1}}{1+2a_{k-1}+\rho}\right) - \rho \\
\left(\text{by \eqref{eq-reduction-assumption-2-positive}}\right)\quad &\geq 1 - \left(\frac{n}{2} - 1\right)\left( \frac{(6(k-1) + 4)\rho b_{k-1}  + 2a_{k-1}b_{k}+\rho b_{k}}{1+2a_{k-1}+\rho}\right) - \rho \\
\left(\text{by \eqref{eq-klessdelta-rho}, \eqref{eq-ub-maxab-positive}, $n>10^4$}\right)\quad &> \frac{1}{2}.
\end{aligned}
\end{equation}
Thus,
\begin{align*}
&\quad a_{k+1} - b_{k} - (6k + 4)\rho a_{k} \\
(\text{by \eqref{eq-ak-recursion-complex-positive}})\quad &\leq \frac{a_{k}}{1 + \left(\frac{n}{2} - 1\right)(a_{k}-b_{k}) - \rho} - b_{k} - (6k + 4)\rho a_{k}\\
&=\frac{\left(1 - b_k\left(\frac{n}{2} - 1\right) -\rho\right)(a_{k} - b_k) + \rho a_{k}}{1 - \left(\frac{n}{2} - 1\right)(b_{k}-a_{k}) - \rho} - (6k + 4)\rho a_{k}\\
(\text{by \eqref{eq-one-minus-bk-prod-novertwominusone-upperbound} and \eqref{eq-one-minus-novertwominusone-bkminusak-lowerbound}})\quad &< \max\{a_k-b_k,0\} + 2 \rho a_{k} - (6k + 4)\rho a_k\\
\left(\text{by $a_{k}  < b_{k} + (6k+2)\rho a_{k}$}\right)\quad &< (6k+2)\rho a_k + 2\rho a_k - (6k + 4)\rho a_{k}\\
&\leq 0.
\end{align*}
Thus, we have $a_{k+1} < b_{k} + (6k+4)\rho a_{k}$.

\vspace{0.3cm}

Thirdly, we prove $b_{k} < a_{k}(1 + 2b_{k}) + 6(k+1)\rho a_{k}$.
By \eqref{eq-ak-recursion-positive},
we have 
\begin{align}\label{eq-relation-alplustwoakakminus1-akminusone}
a_{k} + 2a_{k}a_{k-1}+ \rho a_{k} \geq  a_{k-1}.
\end{align}
Thus,
\begin{align}\label{eq-relation-alplustwoakakminus1-akminusone-simple-form}
(1+ \rho) a_{k} \geq  (1-2a_k)a_{k-1}.
\end{align}
Hence, 
\begin{equation}\label{eq-fourkminustwosquareplusfour-upperbound}
\begin{aligned}
&\quad (6k+4)(1-2a_k)b_{k-1} \\
(\text{by \eqref{eq-bkminus1-upper-bound-akminusone}})\quad &\leq (6k+4)\cdot \frac{(1-2a_k)a_{k-1}}{1 - (6(k-1)+2)\rho}\\ 
(\text{by \eqref{eq-relation-alplustwoakakminus1-akminusone-simple-form}})\quad &\leq (6k+4)\cdot\frac{(1+\rho)a_k}{1 - (6(k-1)+2)\rho}\\
(\text{by \eqref{eq-klessdelta-rho}})\quad &< (6k+5)a_k.
\end{aligned}
\end{equation}
Moreover, by \eqref{eq-ub-maxab-positive} we have
$1 - 2a_{k}>0$.
Therefore,
\begin{align*}
 &\quad a_{k}(1 + 2b_{k}) + 6(k+1)\rho a_{k}- b_{k}  \\
&=  a_{k} - a_{k-1} + a_{k-1} - b_{k} + 2a_{k}b_{k} + 6(k+1)\rho a_{k}\\
(\text{by \eqref{eq-relation-alplustwoakakminus1-akminusone}}) \quad&\geq - 2a_{k}a_{k-1} - \rho a_{k}+ a_{k-1} - b_{k}  + 2a_{k}b_{k}  + 6(k+1)\rho a_{k}\\
&= (a_{k-1} - b_{k})(1 - 2a_{k}) - \rho a_{k} + 6(k+1)\rho a_{k}\\
(\text{by \eqref{eq-reduction-assumption-2-positive} and $1 - 2a_{k}>0$})\quad& \geq -(6k+4)\rho b_{k-1}(1-2a_k) 
 - \rho a_{k} +  6(k+1)\rho a_{k}\\
(\text{by \eqref{eq-fourkminustwosquareplusfour-upperbound}})\quad &>-(6k+5)\rho a_k- \rho a_{k} +  6(k+1)\rho a_{k}\\
&=0.
\end{align*}
Hence, $b_{k} < a_{k}(1 + 2b_{k}) + 6(k+1)\rho a_{k}$.
This finishes the induction and the proof of the lemma.

\subsection*{Proof of \Cref{lem-decay-rate-bk}}
For each $k\leq t< K$, by \eqref{eq-def-K-positive} we have $b_k\geq \delta$ and $k<1/\delta$.
Combined with \eqref{eq-def-rho-positive},
we have
\begin{align}\label{eq-twominusnminustwobk-largerthan-twominusnbk}
2 - \left(n - 2\right)(b_{k} + 4k\rho) \geq 2 - \left(n - 2\right)b_{k} - \delta > 2-(n-1) b_k.
\end{align}
In addition, for each even $2\leq k\leq t$, one can also verify that $2b_{k-2} - \rho >0$.
Combined with \eqref{eq-bk-recursion-positive},
we have $b_{k-1} < b_{k-2}$.
Thus, for each even $k\leq t$ we have
\begin{equation}\label{eq-prod-oneplustwobkminustwominusrho-larger-zero}
\begin{aligned}
&\quad (1+2b_{k-2} - \rho)\left(1 - \frac{\left(n - 2\right)(a_{k-1} + 3k\rho)b_{k-2}}{1+2b_{k-2}-\rho} - \rho\right)\\
&= 1+2b_{k-2} - \left(n - 2\right)(a_{k-1} + 3k\rho)b_{k-2} - (2+2b_{k-2}-\rho)\rho\\
(\text{by \eqref{eq-key-relation-positive-even-3}})\quad &\geq 1+2b_{k-2} - \left(n - 2\right)(b_{k-2} + (6k-8)\rho a_{k-2} + 3k\rho)b_{k-2} - (2+2b_{k-2}-\rho)\rho\\
(\text{by $2b_{k-2} > \rho$})\quad &> 1+2b_{k-2} - \left(n - 2\right)(b_{k-2} + (6k-8)\rho a_{k-2} + 3k\rho)b_{k-2} \\
(\text{by \eqref{eq-ub-maxab-positive}})\quad &> 1+2b_{k-2} - \left(n - 2\right)(b_{k-2} + 4k\rho)b_{k-2} \\
(\text{by \eqref{eq-twominusnminustwobk-largerthan-twominusnbk}})\quad &>1+2b_{k-2}\left(1 - \frac{(n-1) b_{k-2}}{2}\right)\\
(\text{by \eqref{eq-ub-maxab-positive}})\quad &>0.
\end{aligned}
\end{equation}
Thus, for each even $k\leq t$,
\begin{equation}\label{eq-bkplustwo-bk-recursion-positive-init}
\begin{aligned}
&\quad b_{k}\\ 
(\text{by \eqref{eq-bk-recursion-complex-positive}})\quad &\leq \frac{b_{k-1}}{1 + \left(\frac{n}{2} - 1\right)(b_{k-1}-a_{k-1}) - \rho} \\
(\text{by \eqref{eq-key-relation-positive-odd-2}})\quad &\leq \frac{b_{k-1}}{1 - \left(\frac{n}{2} - 1\right)(2a_{k-1} + 6k\rho)b_{k-1}- \rho} \\
&= \frac{b_{k-1}}{1 - \left(n - 2\right)(a_{k-1} + 3k\rho)b_{k-1}- \rho} \\
(\text{by \eqref{eq-bk-recursion-positive}, \eqref{eq-prod-oneplustwobkminustwominusrho-larger-zero}})\quad &\leq \frac{b_{k-2}}{(1+2b_{k-2} - \rho)(1 - \left(n - 2\right)(a_{k-1} + 3k\rho)b_{k-2}/(1+2b_{k-2}-\rho) - \rho)}\\
(\text{by \eqref{eq-prod-oneplustwobkminustwominusrho-larger-zero}})\quad & < \frac{b_{k-2}}{1+2b_{k-2}\left(1 - (n-1) b_{k-2}/2\right)}\\
(\text{by \eqref{eq-ub-maxab-positive}})\quad &\leq b_{k-2}\left(1 - \frac{3}{2}b_{k-2}\left(1 - \frac{(n-1) b_{k-2}}{2}\right)\right)\\
(\text{by \eqref{eq-ub-maxab-positive}})\quad&< b_{k-2}.
\end{aligned}
\end{equation}
In addition, by \eqref{eq-def-ell-positive} we have $\ell$ is even.
Thus, given any even $\ell+2 \leq k\leq t$, 
by \eqref{eq-bkplustwo-bk-recursion-positive-init} and $\ell, k - 2$ are even,
we have 
$b_{k - 2}\leq b_{\ell}$.
Hence,
if $b_{\ell}\leq 1/(n-1)$, we have $b_{k - 2}\leq 1/(n-1)$.
Combined with \eqref{eq-bkplustwo-bk-recursion-positive-init}, we have
\[b_{k} \leq \frac{b_{k-2}}{1+2b_{k-2}\left(1 - (n-1) b_{k-2}/2\right)} \leq  \frac{b_{k-2}}{1+b_{k-2}}.\]
Hence, to prove \eqref{eq-bk-leq-bkminustwo-oneplusbkminustwo} for $T =t$, it is sufficient to prove that $b_{\ell}\leq 1/(n-1)$.
In the following, we prove $b_{\ell}\leq 1/(n-1)$,
which finishes the prove of the lemma.

Assume that $b_{\ell}> 1/(n-1)$ for contradiction.
By \eqref{eq-ub-maxab-positive}, we can assume \emph{w.l.o.g.} that 
\begin{align}\label{eq-def-lk}
b_{k} = \frac{2}{n-1}\left(1 - \frac{L_k}{n}\right)
\end{align}
where $0<L_k\leq n$ for each $k\leq t$.
Combined with $b_{\ell}> 1/(n-1)$, we have $L_{\ell}< n/2$.
Moreover, we claim that $L_{8nr} \geq 2^{r}$ for each $1\leq r\leq \lceil\log n\rceil - 1 $ where the base of the logarithm is 2.
Combined with \eqref{eq-def-ell-positive}, we have
$L_{\ell} \geq 2^{\lceil\log n\rceil - 1 }\geq n/2$, which is contradictory with $L_{\ell}< n/2$.
Thus, we have $b_{\ell}\leq 1/(n-1)$.

In the following, we prove the claim that $L_{8nr} \geq 2^{r}$ for each $1\leq r\leq \lceil\log n\rceil - 1 $ by induction, which finishes the proof of the lemma.
For the base step where $r = 1$, by \eqref{eq-a0-b0-range} 
we have $L_0 \geq 1$.
In addition, by \eqref{eq-bkplustwo-bk-recursion-positive-init} we have $L_{k} > L_{k-2}$ for each even $k \leq \ell$.
Combined with $L_0 \geq 1$ and $L_{\ell}< \frac{n}{2}$,
we have
$1\leq L_k < n/2$ for each even $k \leq \ell$.
Hence, for each even $k \leq \ell$ we have
\begin{equation}\label{eq-bkplustwo-bk-recursion-positive}
\begin{aligned}
&\quad b_{k}\\ 
(\text{by \eqref{eq-bkplustwo-bk-recursion-positive-init}})\quad &\leq b_{k-2}\left(1 - \frac{3}{2}b_{k-2}\left(1 - \frac{(n-1) b_{k-2}}{2} \right)\right)\\
(\text{by \eqref{eq-def-lk}})\quad &\leq b_{k-2}\left(1 - \frac{3}{n-1}\left(1 - \frac{L_{k-2}}{n}\right)\left(1 - \left(1 - \frac{L_{k-2}}{n}\right)\right)\right)\\
&\leq b_{k-2}\left(1 - \frac{3}{n-1}\cdot \frac{n - L_{k-2}}{n}\cdot  \frac{L_{k-2}}{n}\right)\\
\left(\text{by $L_{k-2} < \frac{n}{2}$}\right)\quad & \leq b_{k-2}\left(1 - \frac{3}{2(n-1)}\cdot \frac{L_{k-2}}{n}\right).
\end{aligned}
\end{equation}
Thus,
\begin{align*}
&\quad b_{8n} \\
\left(\text{by \eqref{eq-bkplustwo-bk-recursion-positive} and $L_k\geq 1$}\right)\quad &\leq b_0\cdot \left(1 - \frac{3}{2n(n-1)}\right)^{4n}\\
(\text{by \eqref{eq-a0-b0-range}})\quad &\leq \frac{2}{n}\cdot \left(1 - \frac{3}{2n^2}\right)^{4n}\\
(\text{by $n>10000$})\quad &\leq \frac{2}{n}\cdot \left(1 - \frac{4}{n}\right)\\
&< \frac{2}{n-1}\cdot \left(1 - \frac{4}{n}\right).
\end{align*}
Hence, we have $L_{8n} >2$.
The base step is proved.
For the inductive step, 
we have
$L_{8n(r-1)} > 2^{r-1}$ by the inductive assumption. 
Combined with $L_{k-2} < L_{k}$ for each even $k \leq \ell$,
we have $L_{2k} > 2^{r-1}$ for each $k$ with
$4n(r-1) \leq k \leq \ell$.
Moreover, by \eqref{eq-bkplustwo-bk-recursion-positive-init} and \eqref{eq-a0-b0-range}, we have 
\begin{align}\label{eq-beightnt-lessthan-twoovern}
b_{8n(r-1)}<b_{0} < 2/n
\end{align}
Thus, we have
\begin{align*}
&\quad b_{8nr} \\
(\text{by \eqref{eq-bkplustwo-bk-recursion-positive}})\quad &\leq b_{8n(r-1)}\cdot \prod_{k= 4n(r-1)}^{4nr-1}\left(1 - \frac{3}{2(n-1)}\cdot \frac{L_{2k}}{n}\right)
\\
(\text{by $k\geq 4n(r-1)$ and then $L_{2k} > 2^{r-1}$})\quad &\leq b_{8n(r-1)}\cdot \left(1 - \frac{3\cdot 2^{r-1}}{2n^2}\right)^{4n}\\
(\text{by $n>10000$})\quad &\leq b_{8n(r-1)}\cdot \left(1 - \frac{4\cdot 2^{r-1}}{n}\right)\\
(\text{by \eqref{eq-beightnt-lessthan-twoovern}})\quad &\leq \frac{2}{n}\cdot \left(1 - \frac{4\cdot 2^{r-1}}{n}\right)\\
&< \frac{2}{n-1}\cdot \left(1 - \frac{4\cdot 2^{r-1}}{n}\right).
\end{align*}
Thus, we have $L_{8nr}> 4\cdot 2^{(r-1)} >2^{r}$.
The inductive step is finished and the claim $L_{8nr} \geq 2^{r}$ for each $1\leq r\leq \lceil\log n\rceil - 1 $ is proved.
Thus, the lemma is immediate.

\subsection*{Proof of \Cref{lem-growth-rate-xkykukvk}}
Given any even $2\leq k\leq t < K$, by \eqref{eq-def-K-positive} we have $b_{k-2}\geq \delta$, $k < 1/\delta$.
Combined with $\rho = \delta^2/(100n^{2})$, 
we have
\begin{align}
 \rho &< \frac{b_k}{100kn^2}, \quad  \rho < \frac{1}{100n^2}\label{eq-bkminustwo-klessnovervarepsilon-rho-1} \\
 \rho &< \frac{b_k^2}{100n^2}, \label{eq-bkminustwo-klessnovervarepsilon-rho-2} \\
\rho &< \frac{b_k}{100} \label{eq-bkminustwo-klessnovervarepsilon-rho-3}.
\end{align}
Thus, for each even $2\leq k\leq t$ we have
\begin{equation}
\begin{aligned}\label{eq-one-plusbkplusoneakplusone-lowerbound}
&\quad 1 + \left(\frac{n}{2} - 1\right)(b_{k-1}-a_{k-1})-\rho \\
(\text{by \eqref{eq-key-relation-positive-odd-2}}) \quad &\geq 1 - \left(\frac{n}{2} - 1\right)(2a_{k-1} + 6k\rho)b_{k-1}- \rho\\
&= 1 - \left(n-2\right)(a_{k-1} + 3k\rho)b_{k-1}- \rho \\
(\text{by \eqref{eq-bk-recursion-positive}})\quad &\geq 1 - \frac{\left(n-2\right)(a_{k-1} + 3k\rho)b_{k-2}}{1+2b_{k-2}-\rho}- \rho \\
(\text{by \eqref{eq-key-relation-positive-even-3}})\quad &\geq 1 - \frac{\left(n-2\right)(b_{k-2} + (6k-8)\rho a_{k-2} + 3k\rho)b_{k-2}}{1+2b_{k-2}-\rho}- \rho \\
(\text{by \eqref{eq-ub-maxab-positive} and $a_{k-2}\leq 1$})\quad & \geq 1 - \frac{\left(n-2\right)(b_{k-2} + 9k\rho)b_{k-2}}{1+2b_{k-2}-\rho} - \rho\\
\left(\text{by \eqref{eq-bkminustwo-klessnovervarepsilon-rho-1}}\right)\quad & \geq 1 - \frac{\left(n-2\right)b_{k-2}^2(1+1/(10n^2))}{(1+2b_{k-2})(1-1/(10n^2))} - \rho \\
\left(\text{by \eqref{eq-bkminustwo-klessnovervarepsilon-rho-2} and $n>10^4$}\right)\quad & \geq 1 - \frac{\left(n-2\right)b_{k-2}^2(1+1/(10n^2))}{(1+2b_{k-2})(1-1/(10n^2))}\cdot\left(1+\frac{1}{10n^2}\right) \\
(\text{by $n>10^4$})\quad &\geq 1-\frac{ (n-1)b^2_{k-2}}{1+2b_{k-2}}\\
(\text{by \eqref{eq-ub-maxab-positive}})\quad &\geq 1-\frac{2b_{k-2}}{1+2b_{k-2}}\\
\quad &> \frac{1}{(1+b_{k-2})^2}.
\end{aligned}
\end{equation}
In addition, by \Cref{lem-item-max-xyuv-boost} we have \eqref{eq-max-xyuv-less-rho-over-fourn} holds.
By \Cref{lem-relation-rowsum-columnsum} we have
\begin{align}
\quad x_{k} &= \frac{x_{k-1}}{r_{n/2}\left(A^{(k-1)}\right)} = \frac{x_{k-1}}{1+ \left(n/2- 1\right)(b_{k-1} + x_{k-1} - a_{k-1} - y_{k-1})}.
\label{eq-xk-odd-positive}
\end{align}
Thus, we have
\begin{align*}
&\quad\frac{x_{k-1}}{x_{k}} \\
(\text{by \eqref{eq-xk-odd-positive}})\quad &= 1+ \left(\frac{n}{2}- 1\right)(b_{k-1} + x_{k-1} - a_{k-1} - y_{k-1})\\
(\text{by \eqref{eq-max-xyuv-less-rho-over-fourn}})\quad &> 1+ \left(\frac{n}{2}- 1\right)(b_{k-1} - a_{k-1}) -\rho\\
(\text{by \eqref{eq-one-plusbkplusoneakplusone-lowerbound}})\quad &> \frac{1}{(1+b_{k-2})^3}.
\end{align*}

\vspace{0.5cm}

Similarly, we also have
\begin{equation}\label{eq-oneminustwobkplusoneminusrho-lowerbound}
\begin{aligned}
&\quad 1-2b_{k-1} - \rho\\
(\text{by \eqref{eq-bk-recursion-positive}})\quad &\geq 1-\frac{2b_{k-2}}{1+2b_{k-2} - \rho} - \rho\\
\left(\text{by \eqref{eq-bkminustwo-klessnovervarepsilon-rho-3}}\right)\quad & \geq 1 - 2b_{k-2} - \rho\\
\left(\text{by \eqref{eq-bkminustwo-klessnovervarepsilon-rho-3} and \eqref{eq-ub-maxab-positive}}\right)\quad & \geq \frac{1}{1+  3b_{k-2}}\\
\quad &> \frac{1}{(1+b_{k-2})^3}.
\end{aligned}
\end{equation}
In addition, by \Cref{lem-relation-rowsum-columnsum} we have
\begin{align}
\quad y_{k} &= \frac{y_{k-1}}{r_{n}\left(A^{(k-1)}\right)} = \frac{y_{k-1}}{1-\left(2(b_{k-1} - y_{k-1}) + (n/2-1)(u_{k-1}-v_{k-1})\right)}.
\label{eq-yk-odd-positive}
\end{align}
Thus, we have
\begin{align*}
&\quad\frac{y_{k-1}}{y_{k}} \\
(\text{by \eqref{eq-yk-odd-positive}})\quad &= 1-\left(2(b_{k-1} - y_{k-1}) + \left(\frac{n}{2}-1\right)(u_{k-1}-v_{k-1})\right)\\
(\text{by \eqref{eq-max-xyuv-less-rho-over-fourn}})\quad &> 1-2b_{k-1} -\rho\\
(\text{by \eqref{eq-oneminustwobkplusoneminusrho-lowerbound}})\quad &> \frac{1}{(1+b_{k-2})^3}.
\end{align*}

\vspace{0.5cm}

Moreover, one can also verify that 
\begin{align}
1+2a_{k-1} - \rho&>1,\label{eq-one-plus-twoakplusone-geone}
\end{align}
By \Cref{lem-relation-rowsum-columnsum} we have
\begin{align}
u_{k} &= \frac{u_{k-1}}{r_{1}\left(A^{(k-1)}\right)} = \frac{u_{k-1}}{1+\left(2(a_{k-1} - x_{k-1}) + (n/2-1)(u_{k-1}-v_{k-1})\right)}.
\label{eq-uk-odd-positive}
\end{align}
Thus, we have
\begin{align*}
&\quad\frac{u_{k-1}}{u_{k}} \\
(\text{by \eqref{eq-uk-odd-positive}})\quad &= 1+\left(2(a_{k-1} - x_{k-1}) + \left(\frac{n}{2}-1\right)(u_{k-1}-v_{k-1})\right)\\
(\text{by \eqref{eq-max-xyuv-less-rho-over-fourn}})\quad &> 1+2a_{k-1} -\rho\\
(\text{by \eqref{eq-one-plus-twoakplusone-geone}})\quad &> 1.
\end{align*}

\vspace{0.5cm}

At last, by \Cref{lem-relation-rowsum-columnsum} we have \begin{align}
v_{k} &= \frac{v_{k-1}}{r_{n}\left(A^{(k-1)}\right)} = \frac{v_{k-1}}{1-\left(2(b_{k-1} - y_{k-1}) + (n/2-1)(u_{k-1}-v_{k-1})\right)}.
\label{eq-vk-odd-positive}
\end{align}
Therefore, 
\begin{align*}
&\quad\frac{v_{k-1}}{v_{k}} \\
(\text{by \eqref{eq-vk-odd-positive}})\quad &= 1-\left(2(b_{k-1} - y_{k-1}) + \left(\frac{n}{2}-1\right)(u_{k-1}-v_{k-1})\right)\\
(\text{by \eqref{eq-max-xyuv-less-rho-over-fourn}})\quad &> 1-2b_{k-1} -\rho\\
(\text{by \eqref{eq-oneminustwobkplusoneminusrho-lowerbound}})\quad &> \frac{1}{(1+b_{k-2})^3}.
\end{align*}
In summary, we have 
\begin{align}\label{eq-ratio-xk-over-xkminusone}
\max\left\{\frac{x_{k}}{x_{k-1}},\frac{y_{k}}{y_{k-1}},\frac{u_{k}}{u_{k-1}},\frac{v_{k}}{v_{k-1}}\right\}\leq (1+b_{k-2})^3.
\end{align}

Similarly, for each even $2\leq k\leq t$, by \eqref{eq-key-relation-positive-even-1} and \eqref{eq-def-rho-positive},
one can also verify that $1-2a_{k-2} - \rho >(1+b_{k-2})^{-3}$
and $1+2b_{k-2} - \rho  > 1$.
Moreover, we have
\begin{equation}
\begin{aligned}
&\quad 1 + \left(\frac{n}{2} - 1\right)(a_{k-2}-b_{k-2})-\rho\\
(\text{by \eqref{eq-key-relation-positive-even-2}} )\quad &>1 - \left(\frac{n}{2} - 1\right)(2a_{k-2}b_{k-2} + 6(k-1)\rho a_{k-2})-\rho\\
(\text{by \eqref{eq-ub-maxab-positive}})\quad &>1 - \left(2b_{k-2} + 6\left(\frac{n}{2} - 1\right)(k-1)\rho a_{k-2}\right)-\rho\\
(\text{by \eqref{eq-def-rho-positive} and $b_{k-2} \geq \delta$})\quad &> (1+b_{k-2})^{-3}.
\end{aligned}
\end{equation}
Thus, by following the proof of \eqref{eq-ratio-xk-over-xkminusone},
we also have  
\[\max\left\{\frac{x_{k-1}}{x_{k-2}},\frac{y_{k-1}}{y_{k-2}},\frac{u_{k-1}}{u_{k-2}},\frac{v_{k-1}}{v_{k-2}}\right\}\leq (1+b_{k-2})^3.\]
Combined with \eqref{eq-ratio-xk-over-xkminusone}, \eqref{eq-relation-maxratioofxyuv-ratiob} is immediate.

\subsection*{Proof of \Cref{lem-last-item}}
We prove this lemma by induction. 
For the base step where $k = 0$, we have
\begin{align*}
\max\{x_{0},y_{0},u_{0},v_{0}\} \leq \frac{\rho\delta^6}{n^{51}}.
\end{align*}
immediately by \Cref{lem-base-case}.

For the inductive step where $0<k \leq t$,
it is sufficient to prove the following results about $x_k$:
\begin{itemize}
\item if $ k\leq \min\{t,\ell\}$, then
\begin{align}\label{eq-ub-xyuv-rho-kleql-xk}
x_k \leq \frac{\rho\delta^6}{n^{51}}\cdot \left(1+\frac{2}{n-1}\right)^{3k};
\end{align}
\item if $ \ell \leq k< t$ is odd, then
\begin{align}\label{eq-ub-xyuv-rho-kgeql-odd-xk}
x_k \leq \frac{\rho\delta^6}{n^{2}}\cdot \left(\frac{b_{\ell}}{b_{k-1}}\right)^6\cdot \left(\frac{b_{k-1}}{b_{k+1}}\right)^3;
\end{align}
\item if $\ell \leq k\leq t$ is even, then
\begin{align}\label{eq-ub-xyuv-rho-kgeql-even-xk}
x_k \leq \frac{\rho\delta^6}{n^{2}}\cdot \left(\frac{b_{\ell}}{b_{k}}\right)^6.
\end{align}
\end{itemize}
Similarly, one can also prove the same upper bounds for $y_{k},u_{k},v_{k}$.
Hence, the lemma is immediate.

We first assume $0< k \leq \min\{\ell,t\}$ and prove \eqref{eq-ub-xyuv-rho-kleql-xk}.
If $k$ is even, we have
\begin{align*}
&\quad x_{k} \\
(\text{by \eqref{eq-relation-maxratioofxyuv-ratiob}})\quad &= x_{k-1}\cdot (1+b_{k-2})^3 \\
(\text{by \eqref{eq-ub-maxab-positive}})\quad & \leq x_{k-1}\cdot \left(1+\frac{2}{n-1}\right)^3 \\
(\text{by the inductive assumption})\quad & \leq \frac{\rho\delta^6}{n^{51}}\cdot \left(1+\frac{2}{n-1}\right)^{3(k-1)}\cdot \left(1+\frac{2}{n-1}\right)^3\\
&\leq \frac{\rho\delta^6}{n^{51}}\cdot \left(1+\frac{2}{n-1}\right)^{3k}.
\end{align*}
If $k$ is odd, one can also verify \eqref{eq-ub-xyuv-rho-kleql-xk}.
Thus, \eqref{eq-ub-xyuv-rho-kleql-xk} always holds.

In the following, we assume $\ell \leq k \leq t$. 
If $k = \ell$, by \eqref{eq-def-ell-positive} we have $\ell$ is even. 
In addition,
\begin{align*}
&\quad x_{\ell} \\
(\text{by \eqref{eq-ub-xyuv-rho-kleql-xk}})\quad &= \frac{\rho\delta^6}{n^{51}}\cdot \left(1+\frac{2}{n-1}\right)^{3{\ell}} \\
(\text{by \eqref{eq-def-ell-positive}})\quad &=\frac{\rho\delta^6}{n^{51}} \cdot\left(1+\frac{2}{n-1}\right)^{24n(\lceil\log n\rceil -1)}\\
&\leq \frac{\rho\delta^6}{n^{51}} \cdot\exp({49(\lceil\log n\rceil -1)})\\
&\leq \frac{\rho\delta^6}{n^{2}}.
\end{align*}
Thus, we have \eqref{eq-ub-xyuv-rho-kgeql-even-xk} immediately.
If $k > \ell$, we prove \eqref{eq-ub-xyuv-rho-kgeql-odd-xk} and \eqref{eq-ub-xyuv-rho-kgeql-even-xk} by considering the following two separate cases:
\begin{itemize}
\item $\ell < k\leq t$ is even.
Recall that $\ell$ is even.
Combined with $k > \ell$,
we have $k-2 \geq \ell$. 
Hence, 
\begin{align*}
&\quad x_{k} \\
(\text{by \eqref{eq-relation-maxratioofxyuv-ratiob}})\quad &= x_{k-1}\cdot (1+b_{k-2})^3 \\
(\text{by $k\geq \ell + 2$ and \eqref{eq-bk-leq-bkminustwo-oneplusbkminustwo}})\quad & \leq x_{k-1}\cdot \left(\frac{b_{k-2}}{b_{k}}\right)^3 \\
(\text{by the inductive assumption})\quad & \leq \frac{\rho\delta^6}{n^{2}}\cdot \left(\frac{b_{\ell}}{b_{k-2}}\right)^6\cdot \left(\frac{b_{k-2}}{b_{k}}\right)^3 \cdot \left(\frac{b_{k-2}}{b_{k}}\right)^3 \\
&\leq \frac{\rho\delta^6}{n^{2}}\cdot \left(\frac{b_{\ell}}{b_{k}}\right)^6.
\end{align*}
\item $\ell < k< t$ is odd. 
We have $\ell + 2 \leq k+1\leq t$.
Hence, 
\begin{align*}
&\quad x_{k} \\
(\text{by $k+1\leq t$, $k+1$ is even and \eqref{eq-relation-maxratioofxyuv-ratiob}})\quad &= x_{k-1}\cdot (1+b_{k-1})^3 \\
(\text{by $k+1\geq \ell + 2$ and \eqref{eq-bk-leq-bkminustwo-oneplusbkminustwo}})\quad & \leq x_{k-1}\cdot \left(\frac{b_{k-1}}{b_{k+1}}\right)^3 \\
(\text{by the inductive assumption})\quad & \leq \frac{\rho\delta^6}{n^{2}}\cdot \left(\frac{b_{\ell}}{b_{k-1}}\right)^6\cdot \left(\frac{b_{k-1}}{b_{k+1}}\right)^3.
\end{align*}
\end{itemize}
Thus, \eqref{eq-ub-xyuv-rho-kgeql-even-xk} and \eqref{eq-ub-xyuv-rho-kgeql-odd-xk} are proved and the lemma is immediate.

\section{Proof of Theorem 1.5}\label{appendix-application}
In this section, we prove \Cref{thm-application}, 
which is an immediate corollary of \Cref{thm-upper-bound-general} and the following theorem adapted from \cite{huber2008fast}.

\begin{theorem}\label{thm-dense-huber}
Let $\gamma \in (1/2,1]$, $ \delta \in (0,1]$ and $\varepsilon\in (0,1]$. 
Let $A$ be any $\gamma$-dense $0$-$1$ matrix of size $n\times n$.
Assume there exists an algorithm that, given \(A\) as input, runs in expected time \(O(T)\) and outputs matrices \(B\), \(X\), and \(Y\) such that \(B = XAY\) has a maximum deviation of \(1/(10n^2)\), where \(X\) and \(Y\) are nonnegative diagonal matrices.
Then there exists a randomized approximation algorithm which outputs an approximation of $\perman(A)$ within a factor of $1+\varepsilon$ with probability at least $1 - \delta$ and expected running time $O(T+ n^{2+(1-\gamma)/(2\gamma - 1)}\varepsilon^{-2}\log (1/\delta))$.
\end{theorem}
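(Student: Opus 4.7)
The plan is to follow the sequential acceptance/rejection scheme of Huber~\cite{huber2008fast}, but with the scaling subroutine treated abstractly so that any algorithm achieving the stipulated maximum deviation $1/(10n^{2})$ can be substituted in. The first step is a direct reduction: given the output $B=XAY$ of the assumed scaling algorithm, one has
\[
\perman(B)\;=\;\perman(A)\cdot\prod_{i\in[n]}X_{i,i}\cdot\prod_{j\in[n]}Y_{j,j},
\]
so a multiplicative $(1+\varepsilon)$ estimate of $\perman(B)$ converts in $O(n)$ arithmetic operations into one for $\perman(A)$. Consequently the problem reduces to estimating $\perman(B)$ for a matrix that is essentially doubly stochastic.

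Next I would describe Huber's sequential sampler instantiated on $B$: process the columns $1,2,\ldots,n$ in order, and at step $i$ draw the row $\sigma(i)$ from the set of still-available rows with probability proportional to $B_{\cdot,i}$; abort the trial whenever an available row has weight zero or whenever a suitable upper bound on the partial permanent (inherited from Bregman--Minc applied row-wise to the residual submatrix) indicates the trial would contribute more than the accept threshold. The careful choice of this upper bound $U(B)$ is what makes each complete trial contribute an unbiased estimator of $\perman(B)/U(B)$, and averaging $N$ independent trials yields a $(1\pm\varepsilon)$-approximation with probability at least $1-\delta$ provided, by a standard Chernoff/Hoeffding argument,
\[
N\;=\;O\!\left(\frac{U(B)}{\perman(B)}\cdot\varepsilon^{-2}\log(1/\delta)\right).
\]
Each trial maintains the running column/row-sum state in $O(n^{2})$ time.

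The technical heart, and the step I expect to be the main obstacle, is bounding $U(B)/\perman(B)$ by $O\bigl(n^{(1-\gamma)/(2\gamma-1)}\bigr)$. The lower bound on $\perman(B)$ is supplied by Falikman--Egorychev: since $B$ has maximum deviation $1/(10n^{2})$, a short perturbation argument (replace $B$ by a truly doubly stochastic $B'$ with $\|B-B'\|_{\infty}=O(1/n^{2})$ and use continuity of $\perman$) gives $\perman(B)\ge (1-o(1))\,n!/n^{n}$. For the upper bound one applies Bregman--Minc to $A$, obtaining $\perman(A)\le\prod_{i}(r_{i}!)^{1/r_{i}}$ with $r_{i}\ge\lceil\gamma n\rceil$, and then transports this through the scaling; combined with bounds on $\prod X_{i,i}\prod Y_{j,j}$ derived from the fact that $B$ has row and column sums close to $1$ (so that $\prod X_{i,i}\prod Y_{j,j}\le (1+O(1/n))^{n}/\prod_{i}r_{i}(A)$), a calculation using Stirling reduces the ratio to a product involving $\bigl(n/(\gamma n)\bigr)^{n}$ against $e^{n}$. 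After taking logarithms and optimising over the row-sum profile subject to the density constraint $r_{i}\ge\gamma n$, one obtains the desired polynomial bound $U(B)/\perman(B)=O\bigl(n^{(1-\gamma)/(2\gamma-1)}\bigr)$; this is essentially Huber's density lemma and its proof in~\cite{huber2008fast} transfers verbatim.

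Combining: scaling costs $O(T)$, and the sampler runs in expected time $O(n^{2}\cdot N)=O\bigl(n^{2+(1-\gamma)/(2\gamma-1)}\varepsilon^{-2}\log(1/\delta)\bigr)$, yielding the claimed total $O\bigl(T+n^{2+(1-\gamma)/(2\gamma-1)}\varepsilon^{-2}\log(1/\delta)\bigr)$. The only deviation from the original Huber argument is that we invoke the scaling guarantee as a hypothesis rather than instantiating it with the ellipsoid method; checking that the looser deviation tolerance $1/(10n^{2})$ still suffices for both the van der Waerden perturbation step and for keeping the Bregman--Minc-based upper bound $U(B)$ valid requires only that the row and column sums of $B$ lie in $[1-1/(10n^{2}),1+1/(10n^{2})]$, which is immediate from the hypothesis.
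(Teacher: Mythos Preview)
Your proposal is correct and aligns with the paper's treatment: the paper does not actually prove this theorem but simply states it as ``adapted from~\cite{huber2008fast}'' and uses it as a black box. Your sketch faithfully reconstructs Huber's sequential acceptance/rejection argument---the reduction from $\perman(A)$ to $\perman(B)$ via the scaling factors, the unbiased estimator with acceptance probability $\perman(B)/U(B)$, the van der Waerden lower bound on $\perman(B)$, and the density-based upper bound yielding $U(B)/\perman(B)=O(n^{(1-\gamma)/(2\gamma-1)})$---which is precisely the source the paper defers to.
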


Note that $B$ immediately attains a maximum deviation $1/(10n^2)$ if 
\[\norm{r\left(B\right)- \boldsymbol{1}}_{1} + \norm{c\left(B\right)- \boldsymbol{1}}_{1} \leq \frac{1}{10n^2}.\]
Thus, by \Cref{thm-upper-bound-general}, we have the Sinkhorn-Knopp algorithm can output the matrices $B,X,Y$ in time $O(n^2(2\gamma - 1)^{-5}\log n)$ such that $B = XAY$ has a maximum deviation $1/(10n^2)$ and 
$X$ and $Y$ are nonnegative diagonal matrices.
Combined with \Cref{thm-dense-huber}, we have \Cref{thm-application} immediately.

\end{document}